\numberwithin{equation}{section}
\declaretheoremstyle[bodyfont=\it,qed=\qedsymbol]{noproofstyle}
\declaretheorem[numberlike=equation]{observation}
\declaretheorem[name=Observation,numbered=no]{observation*}
\declaretheorem[numberlike=equation]{fact}
\declaretheorem[numberlike=equation]{problem}
\declaretheorem[numberlike=equation]{theorem}
\declaretheorem[name=Theorem,numbered=no]{theorem*}
\declaretheorem[numberlike=equation]{lemma}
\declaretheorem[name=Lemma,numbered=no]{lemma*}
\declaretheorem[numberlike=equation]{corollary}
\declaretheorem[name=Corollary,numbered=no]{corollary*}
\declaretheorem[name=Proposition,numbered=no]{proposition*}
\declaretheorem[numberlike=equation]{claim}
\declaretheorem[name=Claim,numbered=no]{claim*}
\declaretheorem[numberlike=equation]{conjecture}
\declaretheorem[name=Conjecture,numbered=no]{conjecture*}
\declaretheorem[name=Question,numbered=no]{question*}
\declaretheoremstyle[bodyfont=\it,qed=$\lozenge$]{defstyle} 
\declaretheorem[numberlike=equation,style=defstyle]{definition}
\declaretheorem[unnumbered,name=Definition,style=defstyle]{definition*}
\declaretheorem[unnumbered,name=Example,style=defstyle]{example*}
\declaretheorem[unnumbered,name=Notation=defstyle]{notation*}
\declaretheorem[unnumbered,name=Construction,style=defstyle]{construction*}
\declaretheorem[numberlike=equation,style=defstyle]{remark}
\declaretheorem[unnumbered,name=Remark,style=defstyle]{remark*}
\newcommand{\shortECCC}[2]{\texttt{\href{http://eccc.hpi-web.de/report/\ifnumcomp{#1}{>}{93}{19}{20}#1/#2/}{eccc:TR#1-#2}}}
\newcommand{\parseECCC}[1]{
\StrSubstitute{#1}{TR}{}[\tmpstring]%
\IfSubStr{\tmpstring}{/}{ 
\StrBefore{\tmpstring}{/}[\ecccyear]%
\StrBehind{\tmpstring}{/}[\ecccreport]%
}{
\StrBefore{\tmpstring}{-}[\ecccyear]%
\StrBehind{\tmpstring}{-}[\ecccreport]%
}%
\shortECCC{\ecccyear}{\ecccreport}}
\algrenewcommand\algorithmicindent{1.0em}%
\newcommand{\vecalpha}{\boldsymbol{\alpha}}
\newcommand{\vecbeta}{\boldsymbol{\beta}}
\newcommand{\vecgamma}{\boldsymbol{\gamma}}
\newcommand{\eqdef}{\vcentcolon=}
\newcommand{\res}{\mathrm{Res}}
	\renewcommand{\vec}[1]{{\mathbf{#1}}}
	\newcommand{\va}{{\vec{a}}\@ifnextchar{^}{\!\:}{}}
	\newcommand{\vb}{{\vec{b}}\@ifnextchar{^}{\!\:}{}}
	\newcommand{\vc}{{\vec{c}}\@ifnextchar{^}{\!\:}{}}
	\newcommand{\vd}{{\vec{d}}\@ifnextchar{^}{\!\:}{}}
	\newcommand{\ve}{{\vec{e}}\@ifnextchar{^}{\!\:}{}}
	\newcommand{\vy}{{\vec{y}}\@ifnextchar{^}{\!\:}{}}
	\newcommand{\vs}{{\vec{s}}\@ifnextchar{^}{\!\:}{}}
	\newcommand{\vt}{{\vec{t}}\@ifnextchar{^}{\!\:}{}}
	\newcommand{\vx}{{\vec{x}}\@ifnextchar{^}{}{}}		
	\newcommand{\vz}{{\vec{z}}\@ifnextchar{^}{\!\:}{}}
	\newcommand{\vv}{{\vec{v}}\@ifnextchar{^}{\!\:}{}}
	\newcommand{\vu}{{\vec{u}}\@ifnextchar{^}{\!\:}{}}
	\newcommand{\vf}{{\vec{f}}\@ifnextchar{^}{\!\:}{}}
	\newcommand{\vg}{{\vec{g}}\@ifnextchar{^}{\!\:}{}}
	\newcommand{\vr}{{\vec{r}}\@ifnextchar{^}{\!\:}{}}
	\newcommand{\vw}{{\vec{w}}\@ifnextchar{^}{\!\:}{}}
	\newcommand{\vY}{{\vec{Y}}\@ifnextchar{^}{\!\:}{}}
	\newcommand{\vX}{{\vec{X}}\@ifnextchar{^}{}{}}		
	\newcommand{\vZ}{{\vec{Z}}\@ifnextchar{^}{\!\:}{}}
	\newcommand{\vG}{{\vec{G}}\@ifnextchar{^}{\!\:}{}}
	\newcommand{\vaa}{{\vecalpha}}
	\newcommand{\vbb}{{\vecbeta}}
	\newcommand{\vcc}{{\vecgamma}}
	\newcommand{\vxx}{\boldsymbol{\xi}}
\renewcommand{\C}{\mathbb{C}}
\renewcommand{\N}{\mathbb{N}}
\newcommand{\cA}{{\mathcal{A}}}
\newcommand{\cB}{{\mathcal{B}}}
\newcommand{\cC}{{\mathcal{C}}}
\newcommand{\cF}{{\mathcal{F}}}
\newcommand{\cI}{{\mathcal{I}}}
\newcommand{\cJ}{{\mathcal{J}}}
\newcommand{\cK}{{\mathcal{K}}}
\newcommand{\calL}{{\mathcal{L}}}
\newcommand{\cQ}{{\mathcal{Q}}}
\newcommand{\calP}{{\mathcal{P}}}
\newcommand{\cT}{{\mathcal{T}}}
\newcommand{\calS}{{\mathcal{S}}}
\newcommand{\ideal}[1]{\left \langle{#1}\right \rangle}
\newcommand{\MVar}[2]{{#1}_1,\ldots, {#1}_{#2}}
\newcommand{\PRing}[3]{\mathbb{#1}[ \MVar{#2}{#3}]}
\newcommand{\CRing}[2]{\PRing{C}{#1}{#2}}
\newcommand{\spn}[1]{\operatorname{span}\{{#1}\}}
\def\epsilon{\varepsilon} 
\newcommand{\shir}[1]{\newline{\color{blue} Shir Replay: {#1}}}
\newcommand{\rkq}{100}
\newcommand{\MS}{\text{Lin}}
\newif\ifEK
\date{}
\title{A generalized Sylvester-Gallai type theorem for quadratic polynomials}
\author{Shir Peleg\thanks{Department of Computer Science, Tel Aviv University, Tel Aviv, Israel, E-mail: \texttt{shirpele@mail.tau.ac.il, shpilka@tauex.tau.ac.il}. The research leading to these results has received funding from the  Israel Science Foundation (grant number 552/16) and from the Len Blavatnik and the Blavatnik Family foundation. Part of this work was done while the second author was a visiting professor at NYU.
}   \and  Amir Shpilka\footnotemark[1]}
\begin{document}
\maketitle
\begin{abstract}

In this work we prove a version of the Sylvester-Gallai theorem for quadratic polynomials that takes us one step closer to obtaining a deterministic polynomial time algorithm for testing zeroness of $\Sigma^{[3]}\Pi\Sigma\Pi^{[2]}$ circuits. Specifically, we prove that if a finite set of irreducible quadratic polynomials $\cQ$ satisfy that for every two polynomials $Q_1,Q_2\in \cQ$ there is a subset  $\cK\subset \cQ$, such that $Q_1,Q_2 \notin \cK$ and whenever $Q_1$ and $Q_2$ vanish then also $\prod_{i\in \cK} Q_i$ vanishes, then the linear span of the polynomials in  $\cQ$ has dimension $O(1)$. This extends the earlier result~\cite{DBLP:conf/stoc/Shpilka19}  that showed a similar conclusion when $|\cK| = 1$.

An important technical step in our proof is a theorem classifying all the possible cases in which  a product of quadratic polynomials can vanish when two other quadratic polynomials vanish. I.e., when the product is in the radical of the ideal generates by the two quadratics. This step extends a result from~\cite{DBLP:conf/stoc/Shpilka19} that studied the case when one quadratic polynomial is in the radical of two other quadratics.
\end{abstract}

\thispagestyle{empty}
\newpage

\tableofcontents

\thispagestyle{empty}
\newpage
\pagenumbering{arabic}

\section{Introduction}\label{sec:intro}

This paper studies a problem at the intersection of algebraic complexity, algebraic geometry and combinatorics that is motivated by the polynomial identity testing problem (PIT for short) for depth $4$ circuits. The question can also be regarded as an algebraic generalization and extension of the famous Sylvester-Gallai theorem from discrete geometry. We shall first describe the Sylvester-Gallai theorem and some of its many extensions and generalization and then discuss the relation to PIT.

\paragraph{Sylvester-Gallai type theorems:}
The Sylvester-Gallai theorem asserts that if a finite set of points in $\R^n$ has the property that every line passing through any two points in the set also contains a third point in the set then all the points in the set are colinear.
Kelly extended the theorem to points in $\C^n$ and proved that if a finite set of points satisfy the Sylvester-Gallai condition then  the points in the set are coplanar. Many variants of this theorem were studied: extensions to higher dimensions, colored versions, robust versions and many more. For a more on the Sylvester-Gallai theorem and some of its variants see \cite{BorwenMoser90,barak2013fractional,DSW12}. 

There are two extensions that are of specific interest for our work:
The \textbf{colored} version, proved by Edelstein and Kelly, states that if three finite sets of points satisfy that every line passing through points from two different sets also contains a point from the third set, then, all the points belong to a low dimensional space. This result was further extended to any constant number of sets.
The \textbf{robust} version, obtained in \cite{barak2013fractional,DSW12}, states that if a finite set of points satisfy that for every point $p$ in the set a $\delta$ fraction of the other points satisfy that the line passing through each of them and $p$ spans a third point in the set, then the set is contained in an $O(1/\delta)$-dimensional space. 

Although the Sylvester-Gallai theorem is formulated as a geometric question, it can be stated in  algebraic terms:
If a finite set of pairwise linearly independent vectors, $\calS \subset \C^n$, has the property that every two vectors span a third vector in the set then the dimension of $\calS$ is at most $3$. It is not very hard to see that if we pick a subspace $H$ of codimension $1$, which is in general position with respect to the vectors in the set, then the intersection points $p_i = H \cap \spn{s_i}$, for $s_i\in \calS$, satisfy the Sylvester-Gallai condition. Therefore, $\dim(S) \leq 3$.  
Another formulation is the following: If a finite set of pairwise linearly independent linear forms, $\calL \subset \CRing{x}{n}$, has the property that for every two forms $\ell_i,\ell_j \in \calL$ there is a third form $\ell_k \in \calL$, so that whenever $\ell_i$ and $\ell_j$ vanish then so does $\ell_k$,  then the linear dimension
of $\calL$ is at most $3$. To see this note that it must be the case that $\ell_k \in \spn{\ell_i,\ell_j}$ and thus the coefficient vectors of the forms in the set satisfy the condition for the (vector version of the) Sylvester-Gallai theorem, and the bound on the dimension follows.


The last formulation can now be extended to higher degree polynomials. In particular, the following question was asked by Gupta \cite{Gupta14}.
 
\begin{problem}\label{prob:sg-alg}
Can we bound the linear dimension or algebraic rank of a finite set $\calP$ of pairwise linearly independent irreducible polynomials of degree at most $r$ in $\CRing{x}{n}$, that has the following property: For any two distinct polynomials $P_1,P_2 \in \calP$ there is a third polynomial $P_3 \in \calP$, such that whenever $P_1,P_2$ vanish then so does $P_3$. 
\end{problem}

A robust or colored version of this problem can also be formulated.
As we have seen, the case  $r=1$, i.e when all the polynomials are linear forms, follows from the Sylvester-Gallai theorem. For the case of quadratic polynomials, i.e. $r=2$,  \cite{DBLP:conf/stoc/Shpilka19} gave a bound on the linear dimension for both the non-colored and colored versions. A bound for the robust version is still unknown for $r=2$ and the entire problem is open for $r \geq 3$.
Gupta \cite{Gupta14} also raised a more general question of the same form.
 
\begin{problem}\label{prob:sg-alg-prod}
Can we bound the linear dimension or algebraic rank of a finite set $\calP$ of pairwise linearly independent irreducible polynomials of degree at most $r$ in $\CRing{x}{n}$  that has the following property: For any two distinct polynomials $P_1,P_2 \in \calP$ there is a subset $\cI \subset \calP$, such that $P_1,P_2 \notin \cI$ and whenever $P_1,P_2$ vanish then so does $\prod_{P_i\in \cI} P_i$. 
\end{problem}

As before this problem can also be extended to robust and colored versions.
In the case of linear forms, the bound for \autoref{prob:sg-alg} carries over to \autoref{prob:sg-alg-prod} as well. This follows from the fact that the ideal generated by linear forms is prime (see \autoref{sec:prelim} for definitions). In the case of higher degree polynomials, there is no clear reduction. For example, let $r=2$ and 
$$P_1 = xy+zw \quad,\quad P_2 = xy-zw \quad,\quad P_3 = xw \quad,\quad P_4 = yz.$$
Then, it is not hard to verify that whenever $P_1$ and $P_2$ vanish then so does $P_3 \cdot P_4$, 
but neither $P_3$ nor $P_4$ always vanishes when $P_1$ and $P_2$ do. The reason is that the radical of the ideal generated by $P_1$ and $P_2$ is not prime. Thus it is not clear whether a bound for \autoref{prob:sg-alg} would imply a bound for \autoref{prob:sg-alg-prod}. The latter problem was
open, prior to this work, for any degree $r > 1$.


The Sylvester-Gallai theorem has important consequences for locally decodable and locally correctable codes \cite{barak2013fractional,DSW12}, for reconstruction of certain depth-$3$ circuits \cite{DBLP:journals/siamcomp/Shpilka09,DBLP:conf/coco/KarninS09,DBLP:conf/coco/Sinha16} and for the polynomial identity testing (PIT for short) problem, which we describe next.

\paragraph{Sylvester-Gallai type theorems and PIT:}

The PIT problem asks to give a deterministic algorithm that given an arithmetic circuit as input determines whether it computes the identically zero polynomial. This is a fundamental problem in theoretical computer science that has attracted a lot of attention  because of its intrinsic importance, its relation to other derandomization problems \cite{DBLP:journals/cc/KoppartySS15,Mulmuley-GCT-V,DBLP:conf/approx/ForbesS13,DBLP:journals/cacm/FennerGT19,DBLP:conf/stoc/GurjarT17,DBLP:conf/focs/SvenssonT17} and its connections to lower bounds for arithmetic circuits \cite{DBLP:conf/stoc/HeintzS80,DBLP:conf/fsttcs/Agrawal05,DBLP:journals/cc/KabanetsI04,DBLP:journals/siamcomp/DvirSY09,DBLP:journals/toc/ForbesSV18,DBLP:conf/coco/ChouKS18}. 
Perhaps surprisingly, it was shown that deterministic algorithms for the PIT problem for homogeneous depth-$4$ circuits or for depth-$3$ circuits would lead to deterministic algorithms for general circuits \cite{DBLP:conf/focs/AgrawalV08,DBLP:conf/focs/0001KKS13}. This makes small depth circuit extremely interesting for the PIT problem. We next explain how Sylcester-Gallai type questions are directly related to PIT for such low depth circuits.
For more on the PIT problem see  \cite{DBLP:journals/fttcs/ShpilkaY10,Saxena09,Saxena14, ForbesThesis}. 

The Sylvester-Gallai theorem is mostly relevant for the PIT problem in the  setting when the input is a depth-$3$ circuit with small top fan-in. Specifically, a homogeneous $\Sigma^{[k]}\Pi^{[d]}\Sigma$ circuit in $n$ variables computes a polynomial of the  form
\begin{equation}\label{eq:sps}
\Phi(x_1,\ldots,x_n) = \sum_{i=1}^{k}\prod_{j=1}^{d}\ell_{i,j}(x_1,\ldots,x_n)\;,
\end{equation}
where each $\ell_{i,j}$ is a linear form.
Consider the PIT problem for $\Sigma^{[3]}\Pi^{[d]}\Sigma$ circuits, i.e., $\Phi$ is given as in \autoref{eq:sps} and  $k=3$. In particular, 
\begin{equation}\label{eq:sps3}
\Phi(x_1,\ldots,x_n) = \prod_{j=1}^{d}\ell_{1,j}(x_1,\ldots,x_n)+\prod_{j=1}^{d}\ell_{2,j}(x_1,\ldots,x_n)+\prod_{j=1}^{d}\ell_{3,j}(x_1,\ldots,x_n)\;.
\end{equation}
If $\Phi$ computes the zero polynomial, then for every $j,j'\in[d]$.
$$\prod_{i=1}^{d}\ell_{1,i} \equiv 0 \mod \ideal{\ell_{2,j},\ell_{3,j'}}\;.\footnote{By $\ideal{\ell_{2,j},\ell_{3,j'}}$ we mean the ideal generated by $\ell_{2,j}$ and $\ell_{3,j'}$. See \autoref{sec:prelim}.}$$
This means that the sets $\cT_i = \{\ell_{i,1}, \ldots, \ell_{i,d}\}$ satisfy the conditions of the colored version of \autoref{prob:sg-alg-prod} for $r=1$, and therefore have a small linear dimension. Thus, if $\Phi\equiv 0$ then, assuming that no linear form belongs to all three sets, we can rewrite the expression for $\Phi$ using only constantly many variables (after a suitable invertible linear transformation). This gives an efficient PIT algorithms for such $\Sigma^{[3]}\Pi^{[d]}\Sigma$ identities. The case of more than three multiplication gates is more complicated but it also satisfies a similar higher dimensional condition. This rank-bound approach for PIT of $\Sigma\Pi\Sigma$ circuits was raised in \cite{DBLP:journals/siamcomp/DvirS07} and later carried out in \cite{DBLP:conf/focs/KayalS09,DBLP:journals/jacm/SaxenaS13}.\footnote{The best algorithm for PIT of $\Sigma^{[k]}\Pi^{[d]}\Sigma$  circuits was obtained through a different, yet related, approach in \cite{DBLP:journals/siamcomp/SaxenaS12}.}

\sloppy As such rank-bounds found important applications in studying PIT of depth-$3$ circuits it seemed that a similar approach could potentially work for depth-$4$ $\Sigma\Pi\Sigma\Pi$ circuits as well.\footnote{For multilinear $\Sigma\Pi\Sigma\Pi$ circuits Saraf and Volkovich obtained an analogous bound on the sparsity of the polynomials computed by the multiplication gates in a zero circuit \cite{DBLP:journals/combinatorica/SarafV18}.} In particular, it seemed most relevant for the case where there are only three multiplication gates and the bottom fan-in is two, i.e. for homogeneous  $\Sigma^{[3]}\Pi^{[d]}\Sigma\Pi^{[2]}$ circuits that compute polynomials of the form
\begin{equation}\label{eq:spsp}
\Phi(x_1,\ldots,x_n) = \prod_{j=1}^{d}Q_{1,j}(x_1,\ldots,x_n)+\prod_{j=1}^{d}Q_{2,j}(x_1,\ldots,x_n)+\prod_{j=1}^{d}Q_{3,j}(x_1,\ldots,x_n)\;.
\end{equation}
Both Beecken et al.  \cite{DBLP:journals/iandc/BeeckenMS13} and Gupta \cite{Gupta14} suggested an approach to the PIT problem of such identities  based on the colored version of \autoref{prob:sg-alg-prod} for $r=2$. Both papers described PIT algorithms for depth-$4$ circuits assuming a bound on the algebraic rank of the polynomials. 
In fact, Gupta conjectured that the algebraic rank of polynomials satisfying the conditions of \autoref{prob:sg-alg-prod} depends only on their degree (see Conjectures $1,2$ and $30$ in \cite{Gupta14}). 

\begin{conjecture}[Conjecture 1 in \cite{Gupta14}]\label{con:gupta-general}
Let $\cF_1,\ldots, \cF_k$ be finite sets of irreducible homogenous polynomials in $\C[x_1,\ldots, x_n]$ of degree $\leq r$ such that $\cap_i \cF_i = \emptyset$ and for every $k-1$ polynomials $Q_1,\ldots,Q_{k-1}$, each from a distinct set, there are $P_1,\ldots,P_c$ in the remaining set such that  whenever $Q_1,\ldots,Q_{k-1}$ vanish then also the product $\prod_{i=1}^{c}P_i$ vanishes. Then, $\text{trdeg}_\C(\cup_i \cF_i) \leq \lambda(k, r, c)$ for some function $\lambda$, where trdeg stands for the transcendental degree (which is the same as algebraic rank).
\end{conjecture}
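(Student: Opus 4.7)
The plan is to prove \autoref{con:gupta-general} by a double induction on the degree $r$ and the number of sets $k$, using the paper's classification theorem (the $k=2$, $r=2$ case) as the template that has to be extended at each step. The base case $r=1$ reduces to the colored/multicolored Sylvester--Gallai theorem: since the ideal generated by linear forms is prime, the product condition collapses to the statement that some single $P_i$ lies in $\spn{Q_1,\ldots,Q_{k-1}}$, and the Edelstein--Kelly theorem together with its extensions to $k$ colors gives a bound on $\dim_\C \spn{\cup_i \cF_i}$ that depends only on $k$, which in particular bounds the transcendence degree.

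For the inductive step at degree $r$, I would fix $Q_1,\ldots,Q_{k-2}$ in distinct sets $\cF_1,\ldots,\cF_{k-2}$ and, for every candidate $Q_{k-1}\in\cF_{k-1}$, record the subset $\{P_1,\ldots,P_c\}\subset\cF_k$ whose product lies in $\sqrt{\ideal{Q_1,\ldots,Q_{k-1}}}$. The main technical engine has to be an extension of the paper's classification: whenever a product of irreducibles of degree $\leq r$ lies in such a radical, each irreducible factor must fall into one of a bounded list of structural types --- for instance, $P$ is proportional to some $Q_j$, $P$ shares a nontrivial common factor with some $Q_j$ up to a bounded-dimensional pool, or $P$ restricts modulo a selected $Q_{j_0}$ to a product of factors of degree strictly smaller than $r$, controlled by the remaining $Q_j$'s. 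Polynomials of the first two types are absorbed into a bounded-dimensional pool, while those of the third type give rise, via passage to the quotient $\C[\vx]/\ideal{Q_{j_0}}$ or to carefully chosen residues, to a colored Sylvester--Gallai configuration on irreducibles of smaller degree, to which the inductive hypothesis on $r$ applies.

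To glue the layers without suffering a multiplicative blow-up, one passes from linear dimension to transcendence degree: if each structural piece has transcendence degree at most $\lambda(k,r-1,c)$ over $\C$, then taking generic linear combinations of coordinates and invoking the constructibility of the locus on which the transcendence degree drops shows that $\cup_i\cF_i$ is algebraically contained in a variety of the expected rank, yielding $\text{trdeg}_\C(\cup_i\cF_i)\leq\lambda(k,r,c)$ for some function $\lambda$. The combination of the classification with the Edelstein--Kelly-type colored bound then closes the induction on $k$ in the usual way: a polynomial of $\cF_k$ that meets a given $(k-1)$-tuple through two distinct product-witnesses is already constrained to a bounded-dimensional subspace, and what remains satisfies a genuine $k$-colored condition.

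The main obstacle is the classification step, which I expect to be essentially open for $r\geq 3$ even in the uncolored case $k=2$. The paper handles $k=2,r=2$ through a delicate analysis of the primary decomposition of $\ideal{Q_1,Q_2}$, exploiting both the low codimension of the defining variety and the specific geometry of pairs of quadrics (discriminant loci, singular conics, rank-stratification, and so on). For $k\geq 3$ the ideal $\ideal{Q_1,\ldots,Q_{k-1}}$ can have a primary decomposition with components of several codimensions and embedded primes, so even enumerating the list of structural types becomes delicate; for $r\geq 3$ the relevant varieties are in general neither rational nor of bounded singularity type, and the standard trick of solving one variable in terms of the others breaks down. My honest expectation is that substantive progress on the full \autoref{con:gupta-general} will require a genuinely new structural theorem in commutative algebra replacing the ad-hoc case analysis behind the $(k,r)=(2,2)$ classification by a decomposition uniform in $k$ and $r$.
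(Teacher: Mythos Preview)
This statement is a \emph{conjecture}, not a theorem: the paper does not prove it, and in fact explicitly says so (``While our result still does not resolve \autoref{con:gupta-general}, as we need a colorful version of it, we believe that it is a significant step towards solving the conjecture for $k=3$ and $r=2$''). The paper's actual contribution is \autoref{thm:main-sg-intro}, the \emph{uncolored} case with $r=2$, together with the structural \autoref{thm:structure-intro}; even the colored $k=3$, $r=2$ case is left as \autoref{thm:main-ek-conc} in the discussion section.

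Your proposal is therefore not a proof to be compared against the paper's proof --- there is none --- but a research program, and you essentially concede this yourself when you write that the classification step ``I expect to be essentially open for $r\geq 3$ even in the uncolored case $k=2$.'' That is accurate: the paper's \autoref{thm:structure-intro} is specific to two quadratics, relies on explicit resultant computations and the rank stratification of quadratic forms, and the authors themselves flag extending it to higher degree or to more generators as a major obstacle (see \autoref{sec:discussion}). Your inductive scheme hinges entirely on having such a classification uniformly in $k$ and $r$, so at present it is an outline of what a proof \emph{would} need rather than a proof.
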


Furthermore, using degree arguments Gupta showed  that in \autoref{prob:sg-alg-prod} we can restrict our attention to sets $\cI$ such that $|\cI|\leq r^{k-1}$.
In particular, if the circuit in Equation~\eqref{eq:spsp} vanishes identically, then for every $(j,j')\in[d]^2$ there are $i_{1,j,j'}, i_{2,j,j'},i_{3,j,j'}, i_{4,j,j'}\in [d]$ so that 
$$Q_{1,i_{1,j,j'}}\cdot Q_{1,i_{2,j,j'}} \cdot Q_{1,i_{3,j,j'}}\cdot Q_{1,i_{4,j,j'}} \equiv 0 \mod \ideal{Q_{2,j},Q_{3,j'}}.$$

\sloppy In \cite{DBLP:journals/iandc/BeeckenMS13} Beecken et al. conjectured that  the algebraic rank of simple and minimal $\Sigma^{[k]}\Pi^{[d]}\Sigma\Pi^{[r]}$ circuits (see their paper for definition of simple and minimal) is $O_k(\log d)$. We note that for $k=3$ this conjecture is weaker than  Conjecture~\ref{con:gupta-general} as every zero $\Sigma^{[3]}\Pi^{[d]}\Sigma\Pi^{[r]}$ circuit gives rise to a structure satisfying the conditions of \autoref{con:gupta-general}, but the other direction is not necessarily true.
Beecken et al. also showed how to obtain a deterministic PIT for $\Sigma^{[k]}\Pi^{[d]}\Sigma\Pi^{[r]}$ circuits, assuming the correctness of their conjecture.
 


\subsection{Our Result}
%

Our main result gives a bound on the linear dimension of polynomials satisfying the conditions of \autoref{prob:sg-alg-prod} when all the polynomials are irreducible of degree at most $2$. Specifically we prove the following theorem. 

\begin{theorem}\label{thm:main-sg-intro}
There exists a universal constant $c$ such that the following holds. 
Let $\tilde{\cQ} = \{Q_i\}_{i\in \{1,\ldots,m\}}\subset\C[x_1,\ldots,x_n]$ be a finite set of pairwise linearly independent irreducible polynomials of degree at most $2$. Assume  that, for every $i\neq j$, whenever $Q_i$ and $Q_j$ vanish then so does $\prod_{k\in  \{1,\ldots,m\} \setminus\{i,j\}} Q_k$. Then,  $\dim(\spn{\cQ})\leq c$.
\end{theorem}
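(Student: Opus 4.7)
The plan is to follow the overall strategy of \cite{DBLP:conf/stoc/Shpilka19}, whose main technical ingredient was a classification of all ways a single irreducible quadratic $Q_3$ can lie in $\sqrt{\ideal{Q_1,Q_2}}$ for irreducible quadratics $Q_1,Q_2$. The first step here is to prove the natural generalization needed for products: a structural classification of all tuples $Q_1, Q_2, P_1,\ldots,P_s$ of irreducible quadratics with $\prod_{i=1}^s P_i \in \sqrt{\ideal{Q_1,Q_2}}$ but no individual $P_i$ in this radical. Such a configuration forces $\sqrt{\ideal{Q_1,Q_2}}$ to be non-prime, equivalently $V(Q_1,Q_2)$ decomposes into several irreducible components, with different $P_i$'s responsible for covering different components (the example $Q_1=xy+zw$, $Q_2=xy-zw$ with $P_3=xw,P_4=yz$ given in the introduction is prototypical). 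To establish this classification, the plan is to diagonalize the pencil $\{\alpha Q_1+\beta Q_2\}_{\alpha,\beta}$ over $\C$, use the classical theory of pencils of quadrics to enumerate the possible decomposition patterns of $V(Q_1,Q_2)$ (each component being cut out by low-rank quadrics or linear forms), and then determine which irreducible quadrics can vanish on each type of component.

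With the classification in hand, the main proof proceeds by a case analysis applied to each pair $Q_i,Q_j\in\tilde{\cQ}$ and its witness product. In the \emph{generic} case, some single factor $Q_k$ in the witness product already lies in $\sqrt{\ideal{Q_i,Q_j}}$; these pairs collectively satisfy the hypothesis of the $|\cK|=1$ Sylvester--Gallai theorem for quadratics from \cite{DBLP:conf/stoc/Shpilka19}, which directly bounds the contribution of the corresponding polynomials to $\dim(\spn{\tilde\cQ})$. The remaining pairs fall into the \emph{degenerate} case where $V(Q_i,Q_j)$ is reducible; by the classification this forces $Q_i$ and $Q_j$ (and the $Q_k$'s covering the individual components) to admit a coordinated factorization structure built from a small common collection of linear forms.

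The main obstacle is handling the degenerate case coherently across many pairs. The plan is to extract from each such pair a constant-size set of linear forms parameterizing the reducible components of $V(Q_i,Q_j)$, show that these linear forms, aggregated across all degenerate pairs, themselves satisfy a Sylvester--Gallai-type incidence condition, and invoke the classical linear Sylvester--Gallai theorem to bound the dimension of their span by a constant. Once all degenerate pairs are captured by a bounded-dimensional space of linear forms, each $Q_i$ participating in the degenerate regime is expressible using only these forms, while the generic regime contributes only $O(1)$ additional dimensions via the \cite{DBLP:conf/stoc/Shpilka19} theorem; combining the two bounds yields $\dim(\spn{\tilde{\cQ}})=O(1)$. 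The subtle point will be ensuring that a quadratic appearing in both regimes is not double-counted and that the partition of pairs into generic and degenerate does not create a circular dependency, which we plan to address by iteratively peeling off polynomials that are already captured by a bounded-dimension subspace.
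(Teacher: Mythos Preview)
Your high-level plan---prove a structural classification, then case-split---matches the paper's, but there is a genuine gap in how you intend to handle the ``generic'' case. You write that the pairs for which some single $Q_k$ already lies in $\sqrt{\ideal{Q_i,Q_j}}$ ``collectively satisfy the hypothesis of the $|\cK|=1$ Sylvester--Gallai theorem'' of \cite{DBLP:conf/stoc/Shpilka19}, and that this ``directly bounds'' their contribution. This is not true: that theorem requires the single-witness condition to hold for \emph{every} pair in the set, and there is no reason the polynomials participating in your generic pairs should have only generic pairs among themselves. The paper explicitly remarks that it ``does not rely on any of the results proved'' in \cite{DBLP:conf/stoc/Shpilka19}; instead it uses the robust (fractional) Sylvester--Gallai theorem of \cite{DSW12}, applied after classifying \emph{polynomials} (not pairs) by which case of the structure theorem they satisfy with a $\delta$-fraction of the others.

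The second gap is in your degenerate case. The paper's structure theorem has three outcomes: (i) some $Q_k\in\spn{Q_i,Q_j}$; (ii) some $\alpha Q_i+\beta Q_j$ is reducible; (iii) $Q_i,Q_j\in\ideal{c,d}$ for linear $c,d$. The bulk of the paper (all of Section~5) is spent on the situation where some fixed $Q_o$ satisfies case~(ii) with almost all other polynomials, forcing $Q_i=Q_o+a_ib_i$ for most $i$; the hard work is then bounding $\dim\spn{\{a_i,b_i\}}$, via a delicate case analysis on $\rank_s(Q_o)$. Your plan---extract linear forms from each degenerate pair and argue they satisfy a linear SG condition---does not locate this key pivot polynomial $Q_o$, and the linear forms coming from \emph{different} pairs do not visibly satisfy any SG-type incidence among themselves. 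Without the ``$Q_o$-dominated'' reduction, it is unclear how to coherently aggregate the per-pair data. (Your proposed route to the classification via pencils of quadrics is plausible but also different: the paper works with the resultant $\res_{x_1}(A,B)$ instead, which keeps the case analysis shorter.)
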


While our result still does not resolve \autoref{con:gupta-general}, as we need a colorful version of it, 
we believe that it is a significant step towards solving the conjecture for $k=3$ and $r=2$, which will yield  a PIT algorithm for $\Sigma^{[3]}\Pi^{[d]}\Sigma\Pi^{[2]}$ circuits.  

An interesting aspect of our result is that while the conjectures of  \cite{DBLP:journals/iandc/BeeckenMS13,Gupta14} speak about the algebraic rank we prove a stronger result that bounds that linear dimension (the linear rank is an upper bound on the algebraic rank). As our proof is quite technical it is an interesting question whether one could simplify our arguments by arguing directly about the algebraic rank.

An important algebraic tool in the proof of  \autoref{thm:main-sg-intro} is the following result  characterizing the different cases in which a product of quadratic polynomials vanishes whenever two other quadratics vanish. 

\begin{theorem}\label{thm:structure-intro}
Let $\{Q_k\}_{k\in \cK},A$ and $B$ be $n$-variate, homogeneous, quadratic polynomials, over $\C$, satisfying that whenever $A$ and $B$ vanish then so does $\prod_{k\in \cK} Q_k$. Then, one of the following cases must hold:
\begin{enumerate}[label={(\roman*)}]
\item There is $k\in \cK$ such that $Q_k$ is in the linear span of $A$ and $B$.  \label{case:span-intro}
\item \label{case:rk1-intro}
There exists a non trivial linear combination of the form $\alpha A+\beta B = ab$ where  $a$ and $b$ are linear forms. 
\item There exist two linear forms $a$ and $b$ such that when setting $a=b=0$ we get that $A$ and $B$ vanish. 
\end{enumerate}
\end{theorem}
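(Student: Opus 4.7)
The condition $\prod_{k\in\cK}Q_k \in \sqrt{\ideal{A,B}}$ states geometrically that the variety $V(A)\cap V(B)\subseteq\C^n$ is covered by $\bigcup_{k\in\cK} V(Q_k)$. My approach is to reduce this product statement to the previously established single-polynomial structure theorem (the $|\cK|=1$ analog from \cite{DBLP:conf/stoc/Shpilka19}). The reduction hinges on the following key lemma: \emph{if $A$ and $B$ fail conditions (ii) and (iii), then $\sqrt{\ideal{A,B}}$ is a prime ideal.} Granted this lemma, the product lying in $\sqrt{\ideal{A,B}}$ forces some individual $Q_k$ to lie in it by primality; invoking the previous single-polynomial structure theorem on $Q_k, A, B$ and ruling out (ii) and (iii) by hypothesis leaves only case (i).

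To prove the key lemma, I would analyze the pencil $\{\alpha A+\beta B : [\alpha:\beta]\in\mathbb{P}^1\}$ of quadratic forms. A homogeneous quadratic factors as a product of two linear forms precisely when its associated symmetric matrix has rank at most $2$; thus the failure of (ii) is equivalent to every nonzero member of the pencil having rank $\ge 3$. In particular, $A$ and $B$ are both irreducible and share no common linear factor. Under the additional failure of (iii)---no pair of linear forms $a,b$ has both $A,B\in\ideal{a,b}$---I would argue that the projective base locus $V(A)\cap V(B)\subset\mathbb{P}^{n-1}$ is irreducible, and hence $\sqrt{\ideal{A,B}}$ has a unique minimal prime. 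This step would use the Kronecker--Weierstrass canonical form for pencils of symmetric bilinear forms over $\C$, combined with a dimension count: any decomposition of the complete intersection of two rank-$\ge 3$ quadrics into two proper subvarieties forces the pencil to contain a common codimension-two linear subspace in its base locus, i.e.\ a subspace $V(a,b)$ on which both $A$ and $B$ vanish. This is exactly configuration (iii), contradicting our assumption.

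The main obstacle is the classification step inside the key lemma: pinning down exactly which pencils of rank-$\ge 3$ quadrics have a reducible base locus and matching each remaining degenerate configuration to (iii). I expect particular care to be needed when the pencil itself is \emph{singular} (i.e., $\alpha A+\beta B$ is degenerate as a quadratic form for every $[\alpha:\beta]\in\mathbb{P}^1$), since the classical theory of regular pencils does not apply directly; my plan is to handle this by peeling off the common kernel of the pencil and inducting on the ambient dimension $n$, reducing in each inductive step either to a regular pencil on a quotient space or to an explicit low-dimensional configuration that can be checked by hand. Once the key lemma is in place, the passage from a single $Q_k$ in the radical back to the original product statement is immediate, and the full theorem follows.
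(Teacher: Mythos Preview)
Your approach is correct and takes a genuinely different route from the paper. The key lemma—that $\sqrt{\ideal{A,B}}$ is prime whenever (ii) and (iii) both fail—is true, and once it is in hand the reduction via primality to a single $Q_k\in\sqrt{\ideal{A,B}}$ and then to the $|\cK|=1$ theorem of \cite{DBLP:conf/stoc/Shpilka19} works exactly as you say (the older case~(ii) is $\alpha A+\beta B=a^2$, a special case of the present (ii), so it is ruled out along with (iii)). The paper instead proceeds by an elementary resultant computation: after normalizing $A=x_1^2-A'$ and $B=x_1 b-B'$, it shows by direct case analysis on the factorization of $\mathrm{Res}_{x_1}(A,B)=B'^2-b^2A'$ that reducibility of this degree-$4$ resultant forces (ii) or (iii), while irreducibility forces—via a degree count, since each factor $B'b_i-bQ'_i$ has degree at most $3$—one multiplicand to vanish identically, giving (i) directly without invoking the earlier theorem. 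A separate hands-on argument treats the remaining case where $A,B$ can be written as $x^2-A'(\vz)$ and $y^2-B'(\vz)$ in disjoint leading variables. In effect the paper's resultant analysis is an elementary, coordinate-dependent proof of (something equivalent to) your key lemma that bypasses both Kronecker--Weierstrass and the appeal to \cite{DBLP:conf/stoc/Shpilka19}; your formulation is conceptually cleaner, but the pencil-classification step you correctly flag as the main obstacle is genuinely the entire content, and fully executing it (especially for singular pencils) would likely be at least as much work as the paper's direct computation. If you pursue your route, the paper's resultant argument can be read as an alternative, self-contained proof of your key lemma.
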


The statement of the result is quite similar to Theorem 1.8 of  \cite{DBLP:conf/stoc/Shpilka19} that proved a similar result when $|\cK|=1$. Specifically, in \cite{DBLP:conf/stoc/Shpilka19} the second item reads ``There exists a non trivial linear combination of the form $\alpha A+\beta B = a^2$, where $a$ is a linear form.'' This ``minor'' difference in the statements (which is necessary) is also responsible for the much harder work we do in the paper. 


%

\subsection{Proof Idea}\label{sec:proof-idea}

Our proof has a similar structure to the proofs in \cite{DBLP:conf/stoc/Shpilka19}, but it does not rely on any of the results proved there. 

Our starting point is the observation that \autoref{thm:structure-intro} guarantees that unless one of $\{Q_k\}$ is in the linear span of $A$ and $B$ then $A$ and $B$ must satisfy a very strong property, namely, they must span a reducible quadratic or they have a very low rank (as quadratic polynomials). The proof of this theorem is based on analyzing the resultant of $A$ and $B$ with respect to some variable. We now explain how this theorem can be used to prove \autoref{thm:main-sg-intro}.

Consider a set of polynomials $\cQ=\{\MVar{Q}{m}\}$ satisfying the condition of \autoref{thm:main-sg-intro}. First, consider the case in which for every $Q\in\cQ$, at least, say, $(1/100)\cdot m$ of the polynomials  $Q_i\in \cQ$, satisfy that there is another polynomial in $\cQ$ in $\spn{Q,Q_i}$. In this case, we can use the robust version of the Sylvester-Gallai theorem \cite{barak2013fractional, DSW12} (see \autoref{thm:robustSG}) to deduce that the linear dimension of $\cQ$  is small.

The second case we consider is when every polynomial $Q\in\cQ$ that did not satisfy the first case now  satisfies that for at least, say, $(1/100)\cdot m$ of the polynomials  $Q_i\in \cQ$ there are linear forms $a_i$ and $b_i$ such that $Q,Q_i \in \ideal{a_i,b_i}$. We prove that if this is the case then there is a bounded dimensional linear space of linear forms, $V$, such that all the polynomials in $\cQ$ that are of rank $2$ are in $\ideal{V}$. Then we argue that the polynomials that are not in $\ideal{V}$ satisfy the robust version of the Sylvester-Gallai theorem (\autoref{thm:robustSG}). Finally we bound the dimension of $\cQ \cap \ideal{V}$.

Most of the work however (\autoref{sec:q-dom}) goes into studying what happens in the remaining case when there is some polynomial $Q_o\in\cQ$ for which at least $0.98m$ of the other polynomials in $\cQ$ satisfy \autoref{thm:structure-intro}\ref{case:rk1-intro} with $Q_o$. This puts a  strong restriction on the structure of these $0.98m$ polynomials. Specificity, each of them is of the form $Q_i = Q_o+a_ib_i$, where $a_i$ and $b_i$ are linear forms.
The idea in this case is to show that the set $\{a_i,b_i\}$ is of low dimension. This is done by again studying the consequences of \autoref{thm:structure-intro} for pairs of polynomials $Q_o+a_ib_i,Q_o+a_jb_j\in \cQ$.
After bounding the dimension of these $0.98m$ polynomials we bound the dimension of all the polynomials in $\cQ$. The proof of this case is much more involved than the cases described earlier, and in particular we handle differently the case where $Q_o$ is of high rank and the case where its rank is low.

\subsection{On the relation to the proof of \cite{DBLP:conf/stoc/Shpilka19}}

In \cite{DBLP:conf/stoc/Shpilka19} the following theorem was proved.

\begin{theorem}[Theorem 1.7 of \cite{DBLP:conf/stoc/Shpilka19}]\label{thm:shpilka-main}
Let $\{Q_i\}_{i\in [m]}$ be homogeneous quadratic polynomials over $\C$ such that each $Q_i$ is either irreducible or a square of a linear function. Assume further that for every $i\neq j$ there exists $k\not\in \{i,j\}$ such that whenever $Q_i$ and $Q_j$ vanish $Q_k$ vanishes as well. Then the linear span of the $Q_i$'s has dimension $O(1)$.
\end{theorem}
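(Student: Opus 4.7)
The plan is to run the three-case argument outlined in \autoref{sec:proof-idea}, but using the $|\cK|=1$ version of the structure theorem (\autoref{thm:structure-intro} with case~(ii) replaced by ``$\alpha A+\beta B=a^2$ for a linear form $a$,'' as noted right after the statement of \autoref{thm:structure-intro}) in place of the full \autoref{thm:structure-intro}. For every pair $Q_i,Q_j$ the hypothesis supplies some $Q_k\notin\{Q_i,Q_j\}$ with $Q_k\in\sqrt{\ideal{Q_i,Q_j}}$, and the structure theorem forces one of three local outcomes: (i) $Q_k\in\spn{Q_i,Q_j}$; (ii) a nontrivial combination $\alpha Q_i+\beta Q_j$ equals the square of a linear form; or (iii) $Q_i$ and $Q_j$ jointly vanish on a codimension-$2$ linear subspace. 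Pigeonholing on the partners of each $Q_i$ then lets us pass to a subfamily $\cQ'\subseteq\cQ$ of linear size in which a single case is ``dominant'' for every member.

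If case~(i) dominates on $\cQ'$, then viewing each element of $\cQ'$ as a vector in $\mathrm{Sym}^2(\C^n)$, a constant fraction of the other elements span a third one from $\cQ$, and the robust Sylvester--Gallai theorem of \cite{barak2013fractional,DSW12} directly gives $\dim(\spn{\cQ'})=O(1)$. If case~(iii) dominates, then every $Q_i\in\cQ'$ lies in $\ideal{a_{ij},b_{ij}}$ for a constant fraction of $j$'s, and the plan is to extract a bounded-dimensional space $V$ of linear forms such that every rank-$\geq 3$ polynomial of $\cQ'$ lies in $\ideal{V}$. Quadratics outside $\ideal{V}$ can then be pushed through the robust SG theorem after quotienting by $V$, while the number of quadratics inside $\ideal{V}$ is bounded directly from $\dim(V)$. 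Low-rank $Q_i$'s (including the squares of linear forms allowed by the hypothesis) are absorbed into $V$ from the outset.

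The main obstacle is case~(ii), where some distinguished $Q_o$ has a constant fraction of partners $Q_i$ admitting $\alpha_i Q_o+\beta_i Q_i=a_i^2$ with $a_i$ a linear form, so after rescaling $Q_i=c_i Q_o+a_i^2$. The task is to show $\dim\spn{a_i}=O(1)$. Applying the $|\cK|=1$ structure theorem \emph{again} to pairs $(Q_i,Q_j)$ inside this dominant family yields, for each such pair, either $Q_k\in\spn{Q_i,Q_j}\subseteq\spn{Q_o,a_i^2,a_j^2}$, or a combination $\alpha a_i^2+\beta a_j^2\equiv c^2\pmod{Q_o}$, or a codimension-$2$ common vanishing locus for $Q_i,Q_j$ which, modulo $Q_o$, forces a tight relation between $a_i$ and $a_j$. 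Combining these rigidity constraints across many pairs --- and handling the $Q_i$'s outside the dominant family and those that are themselves squares by the earlier two cases --- should trap all the $a_i$'s in a constant-dimensional subspace, after which $\spn{\cQ}$ is controlled by $Q_o$, this subspace, and (via one last robust SG invocation) the remaining polynomials. The hard part is this rigidity argument: coherently combining the three alternatives across pairs to localize the $a_i$'s is the technical heart of the proof, and is exactly the step that the present paper later generalizes from ``third polynomial vanishes'' to ``product of polynomials vanishes.''
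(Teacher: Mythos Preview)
Your outline follows the same three-case scaffold as \cite{DBLP:conf/stoc/Shpilka19} (and as the present paper, which generalizes it), but you have the relative difficulty of the cases inverted, and there is a genuine gap in your handling of case~(iii).

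First, on difficulty: in the $|\cK|=1$ setting of \autoref{thm:shpilka-main}, case~(ii) is \emph{not} the hard part. As the paper remarks in \autoref{sec:proof-idea} and the discussion following \autoref{thm:shpilka-main}, when $|\cK|=1$ the rank-one combination has the form $\alpha Q_i+\beta Q_j=a^2$ (a perfect square, not $a\cdot b$), so the ``$Q_o$-dominated'' polynomials are $Q_i=Q_o+a_i^2$. The fact that $a_i=b_i$ here makes bounding $\dim\spn{a_i}$ almost immediate (Claim~5.2 of \cite{DBLP:conf/stoc/Shpilka19}); the elaborate ``rigidity across pairs'' argument you sketch is unnecessary for \autoref{thm:shpilka-main} and is precisely the new work this paper must do because in the $|\cK|>1$ setting one only gets $Q_i=Q_o+a_ib_i$ with $a_i\neq b_i$.

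Second, the gap: in your case~(iii) you write that ``the number of quadratics inside $\ideal{V}$ is bounded directly from $\dim(V)$.'' This is false. For a fixed $O(1)$-dimensional space $V$ of linear forms, $\ideal{V}$ contains quadratics of the form $v\cdot \ell$ for $v\in V$ and \emph{arbitrary} linear forms $\ell$, so $\dim(\cQ\cap\ideal{V})$ can be as large as $\Theta(n)$ without further input. The actual argument (see the end of the $\cQ=\calP_1\cup\calP_3$ case in \autoref{sec:quad-SG}) first arranges that every $P\notin\ideal{V}$ has $\rank_s(P|_{V=0})>2$, then applies the projection maps $T_{\vaa,V}$ of \autoref{sec:z-map} to send each polynomial in $\cQ\cap\ideal{V}$ to $z\cdot b$, and finally invokes Sylvester--Gallai on the resulting linear forms $b$ (using the original vanishing hypothesis pushed through the ring homomorphism $T_{\vaa,V}$). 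Without this step --- or something equivalent --- you have no bound on $\dim(\cQ\cap\ideal{V})$, and the proof does not close. Your ``Quadratics outside $\ideal{V}$ can then be pushed through the robust SG theorem after quotienting by $V$'' is also not straightforward: one must first ensure high rank modulo $V$, which the paper achieves via an additional iterative absorption into $V$.
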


As mentioned earlier, the steps in our proof are similar to the proof of Theorem 1.7 in  \cite{DBLP:conf/stoc/Shpilka19}. Specifically,  \cite{DBLP:conf/stoc/Shpilka19} also relies on an analog of \autoref{thm:structure-intro} and divides the proof according to whether all polynomials satisfy the first case above or not. However, 
the fact that case~\ref{case:rk1-intro} of \autoref{thm:structure-intro} is different than the corresponding case  in the statement of  Theorem 1.8 of  \cite{DBLP:conf/stoc/Shpilka19}, 
makes our proof is significantly more difficult. The reason for this is that while in \cite{DBLP:conf/stoc/Shpilka19} we could always pinpoint which polynomial vanishes when $Q_i$ and $Q_j$ vanish, here we only know that this polynomial belongs to a small set of polynomials. This leads to a richer structure in \autoref{thm:structure-intro} and consequently to a considerably more complicated proof.
To understand the effect of this on our proof we note that the corresponding case to \autoref{thm:structure-intro}\ref{case:rk1-intro} was the \emph{simpler} case to analyze in  the proof of \cite{DBLP:conf/stoc/Shpilka19}. The fact that $a_i=b_i$ when $|\cK|=1$ almost immediately implied that the dimension of the span of the $a_i$s is constant (see Claim 5.2 in  \cite{DBLP:conf/stoc/Shpilka19}). In our case however, this is the bulk of the proof, and \autoref{sec:q-dom} is devoted to handling  this case. 

In addition to being technically more challenging, our proof gives new insights that may be extended to higher degree polynomials. The first is \autoref{thm:structure-intro}. While a similar theorem was proved for the simpler setting of \cite{DBLP:conf/stoc/Shpilka19}, it was not clear whether a characterization in the form given in \autoref{thm:structure-intro} would be possible, let alone true, in our more general setting. This gives hope that a similar  result would be true for higher degree polynomials. 
Our second contribution is that we show (more or less) that either the polynomials in our set satisfy the robust version of Sylvester-Gallai theorem (\autoref{def:delta-SGConf}) or the linear functions composing the polynomials satisfy the theorem. Potentially, this may be extended to higher degree polynomials. 


\subsection{Organization}
The paper is organized as follows. \autoref{sec:prelim} contains basic facts regarding the resultant and some other  tools and notation used in this work. \autoref{sec:structure} contains the proof of our structure theorem (\autoref{thm:structure-intro}). In \autoref{sec:quad-SG} we give the proof of \autoref{thm:main-sg-intro}. This proof uses a main theorem which will be proved in \autoref{sec:q-dom}. Finally in \autoref{sec:discussion} we discuss further directions and open problems.

\section{Preliminaries}\label{sec:prelim}

In this section we explain our notation and present some basic algebraic preliminaries.

We will use the following notation. Greek letters $\alpha, \beta,\ldots$ denote scalars from $\C$.  Non-capitalized letters $a,b,c,\ldots$  denote linear forms and $x,y,z$ denote variables (which are also linear forms). Bold faced letters denote vectors, e.g. $\vx=(x_1,\ldots,x_n)$ denotes a vector of variables, $\vaa=(\alpha_1,\ldots,\alpha_n)$ is a vector of scalars, and $\vec{0} = (0,\ldots,0)$ the zero vector. We sometimes do not use a boldface notation for a point in a vector space if we do not use its structure as vector. Capital letters such as $A,Q,P$  denote quadratic polynomials whereas $V,U,W$ denote linear spaces. Calligraphic letters $\cal I,J,F,Q,T$  denote sets. For a positive integer $n$ we denote $[n]=\{1,2,\ldots,n\}$. For a matrix $X$ we denote by $|X|$ the determinant of $X$.

A \emph{Commutative Ring} is a group that is abelian with respect to both multiplication and addition operations. We mainly use the multivariate polynomial ring, $\CRing{x}{n}$. An \emph{Ideal} $I\subseteq \CRing{x}{n}$ is an abelian subgroup that is closed under multiplication by ring elements. For $\calS \subset \CRing{x}{n}$, we  denote with $\ideal{\calS}$, the ideal generated by $\calS$, that is, the smallest ideal that contains $\calS$. For example, for two polynomials $Q_1$ and $Q_2$, the ideal $\ideal{Q_1,Q_2}$ is the set $\CRing{x}{n}Q_1 + \CRing{x}{n}Q_2$. For a linear subspace $V$, we have that $\ideal{V}$ is the ideal generated by any basis of $V$. The \emph{radical} of an ideal $I$, denoted by $\sqrt{I}$, is the set of all ring elements, $r$, satisfying that for some natural number $m$ (that may depend on $r$), $r^m \in I$. Hilbert's Nullstellensatz  implies that, in $\C[x_1,\ldots,x_n]$, if a polynomial $Q$ vanishes whenever $Q_1$ and $Q_2$ vanish, then $Q\in \sqrt{\ideal{Q_1,Q_2}}$ (see e.g. \cite{CLO}). We shall often use the notation $Q\in \sqrt{\ideal{Q_1,Q_2}}$ to denote this vanishing condition. For an ideal $I\subseteq \CRing{x}{n}$ we denote by $\CRing{x}{n} /I$ the \emph{quotient ring}, that is, the ring whose elements are the cosets of $I$ in $\CRing{x}{n}$ with the proper multiplication and addition operations. For an ideal $I\subseteq \CRing{x}{n}$ we denote the set of all common zeros of elements of $I$ by $Z(I)$.

For $V_1,\ldots,V_k$ linear spaces, we use $\sum_{i=1}^k V_i$ to denote the linear space $V_1 + \ldots + V_k$. For two non zero polynomials $A$ and $B$ we denote $A\sim B$ if $B \in \spn{A}$. For a space of linear forms $V = \spn{\MVar{v}{\Delta}}$, we say that a polynomial $P \in \CRing{x}{n}$ depends only on $V$  if the value of $P$ is determined by the values of the linear forms $v_1,\ldots,v_\Delta$. More formally, we say that $P$ depends only on $V$ if there is a $\Delta$-variate polynomial $\tilde{P}$ such that $P \equiv \tilde{P}(v_1,\ldots,v_\Delta)$. We denote by  $\CRing{v}{\Delta}\subseteq \CRing{x}{n}$ the subring of polynomials that depend only on $V$.

Another notation that we will use throughout the proof is congruence modulo linear forms.
\begin{definition}\label{def:mod-form}
Let $V\subset \CRing{x}{n}$ be a space of linear forms, and $P,Q\in \CRing{x}{n}$. We say that $P\equiv_V Q$ if $P-Q \in \ideal{V}$. 
\end{definition}

\begin{fact}\label{fact:ufd}
Let $V\subset \CRing{x}{n}$ be a space of linear forms and $P,Q\in \CRing{x}{n}$. If $P = \prod_{k=1}^t P_k$, and $Q = \prod_{k=1}^t Q_k$ satisfy that for all $k$, $P_k$ and $Q_k$ are irreducible in $\CRing{x}{n}/{\langle V\rangle}$, and $P \equiv_V Q \not\equiv_V 0$ then, up to a permutation of the indices, $P_k\equiv_V Q_k$ for all $k\in [t]$.
\end{fact}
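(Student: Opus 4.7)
The plan is to reduce the claim to the unique factorization property in a polynomial ring over $\C$. Let $r = \dim V$, and fix a basis $v_1,\ldots,v_r$ of $V$; extend it to a basis $v_1,\ldots,v_n$ of the space of all linear forms of $\CRing{x}{n}$. The associated invertible linear change of variables $x_i \mapsto v_i$ gives a ring automorphism of $\CRing{x}{n}$ under which $\ideal{V}$ corresponds to $\ideal{v_1,\ldots,v_r}$. Passing to the quotient, I obtain an isomorphism
\[
R \;\eqdef\; \CRing{x}{n}/\ideal{V} \;\cong\; \C[v_{r+1},\ldots,v_n],
\]
so $R$ is a polynomial ring in $n-r$ variables and in particular a unique factorization domain whose group of units is exactly $\C^{\times}$.

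Write $\bar f$ for the image of $f\in\CRing{x}{n}$ in $R$. By hypothesis, $\bar P_k$ and $\bar Q_k$ are irreducible in $R$ for every $k$, and
\[
\bar P \;=\; \prod_{k=1}^{t}\bar P_k \;=\; \prod_{k=1}^{t}\bar Q_k \;=\; \bar Q \;\neq\; 0,
\]
the equality of the middle two products using $P\equiv_V Q$ and nonvanishing using $P\not\equiv_V 0$. Uniqueness of factorization in the UFD $R$ then yields a permutation $\sigma\in S_t$ and nonzero scalars $c_1,\ldots,c_t \in \C^{\times}$ with $\bar P_k = c_k\,\bar Q_{\sigma(k)}$ for every $k\in[t]$.

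Translating this equality in $R$ back to $\CRing{x}{n}$, it reads $P_k - c_k Q_{\sigma(k)} \in \ideal{V}$, i.e. $P_k \equiv_V c_k Q_{\sigma(k)}$; after absorbing the nonzero scalar $c_k$ into the factor $Q_{\sigma(k)}$ (which does not affect its irreducibility class in $R$), we obtain $P_k \equiv_V Q_{\sigma(k)}$ for all $k$, which is exactly the stated conclusion.

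There is no real obstacle here: the content of the fact is precisely that $\CRing{x}{n}/\ideal{V}$ is a UFD, and the only point requiring a little care is that the ``up to units'' ambiguity of UFD factorization amounts only to nonzero scalars, which can be folded into the factorization without changing the set of irreducible factors.
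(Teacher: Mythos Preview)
Your proof is correct and is exactly the approach the paper takes: the paper's entire proof is the one-line observation that $\CRing{x}{n}/\ideal{V}$ is a unique factorization domain, and you have spelled this out explicitly (including the isomorphism with a polynomial ring and the identification of units as nonzero scalars). The slight looseness about the scalars $c_k$ at the end is already present in the paper's statement of the fact and is handled the same way when the fact is applied later (e.g.\ in \autoref{cla:rank-2-in-V}, where the scalars $\lambda_1,\lambda_2$ with $\lambda_1\lambda_2=1$ are made explicit).
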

This follows from the fact that the quotient ring $\CRing{x}{n}/{\langle V\rangle}$is a unique factorization domain.

\ifEK
We will also need on the following version of Chernoff bound. See e.g. Theorem 4.5 in \cite{MU05-book}.

\begin{theorem}[Chernoff bound]\label{thm:chernoff}
	Suppose $X_1,\ldots, X_n$ are independent indicator random variables. Let $\mu = E[X_i]$ be the expectation of $X_i$. Then,
	$$\Pr\left[\sum_{i=1}^{n}X_i < \frac{1}{2}n\mu\right] < \exp(-\frac{1}{8}n\mu).$$
\end{theorem}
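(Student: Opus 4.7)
The plan is to apply the standard exponential moment (Chernoff) method. Write $S = \sum_{i=1}^n X_i$, so that $E[S] = n\mu$. First I would apply Markov's inequality to the non-negative random variable $e^{-tS}$ for a free parameter $t > 0$: since $\{S < n\mu/2\}$ coincides with $\{e^{-tS} > e^{-tn\mu/2}\}$, this yields
\[
\Pr[S < n\mu/2] \;\leq\; e^{tn\mu/2} \cdot E[e^{-tS}].
\]

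The next step would use the independence of the $X_i$'s to factor the moment generating function. Writing $p_i := E[X_i]$, a direct computation gives $E[e^{-tX_i}] = 1 - p_i(1 - e^{-t})$. Applying the elementary inequality $1 - y \leq e^{-y}$ to each factor, and using $\sum_i p_i = n\mu$, yields
\[
E[e^{-tS}] \;=\; \prod_{i=1}^n \bigl(1 - p_i(1-e^{-t})\bigr) \;\leq\; \exp\!\Bigl(-(1-e^{-t}) \textstyle\sum_{i=1}^n p_i \Bigr) \;=\; \exp\!\bigl(-n\mu(1-e^{-t})\bigr).
\]
Combining the two displays gives $\Pr[S < n\mu/2] \leq \exp\!\bigl( n\mu [\,t/2 - (1 - e^{-t})\,] \bigr)$.

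The final step is to optimize over $t > 0$. Rather than minimize this particular expression directly, the cleanest route is to carry out the computation for a general relative deviation $\delta \in (0,1)$: replacing $1/2$ above by $1-\delta$ and choosing the optimizer $t = -\ln(1-\delta)$ produces the standard multiplicative Chernoff bound
\[
\Pr[S < (1-\delta)\, E[S]] \;\leq\; \exp\!\Bigl(-\bigl[\delta + (1-\delta)\ln(1-\delta)\bigr]\, E[S]\Bigr).
\]
Specializing to $\delta = 1/2$ and applying the calculus inequality $\delta + (1-\delta)\ln(1-\delta) \geq \delta^2/2$ on $(0,1)$ then gives the desired $\exp(-n\mu/8)$. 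This concluding inequality is the only step that is not entirely mechanical, and it is itself routine: the difference $f(\delta) = \delta + (1-\delta)\ln(1-\delta) - \delta^2/2$ satisfies $f(0) = f'(0) = 0$ together with $f''(\delta) = \delta/(1-\delta) \geq 0$, so $f \geq 0$ on $[0,1)$. Strictness of the final bound is inherited from the strict inequality in Markov's step (since $e^{-tS} > 0$ almost surely and the complementary event $\{S \geq n\mu/2\}$ has positive probability whenever the bound is non-vacuous).
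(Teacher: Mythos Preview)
Your argument is the standard exponential-moment derivation of the multiplicative Chernoff bound and is correct. The paper does not give its own proof of this statement; it simply cites it as Theorem~4.5 in Mitzenmacher--Upfal, so there is nothing to compare against beyond noting that your proof is exactly the textbook argument that citation points to.
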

\fi

\subsection{Sylvester-Gallai Theorem and some of its Variants}\label{sec:SGThms}
In this section we present the formal statement the of Sylvester-Gallai theorem and the extensions that we use in this work.

\begin{definition}
Given a set of points, $v_1,\ldots ,v_m$, we call a line that passes through exactly two of the
points of the set an \emph{ordinary line}.
\end{definition}

\begin{theorem}[Sylvester-Gallai theorem]\label{thm:SG}
If $m$ distinct points $v_1,\ldots ,v_m$ in $\R^n$ are not collinear, then they define at least one ordinary line.
\end{theorem}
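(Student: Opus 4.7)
The plan is to follow Kelly's classical proof by contradiction using a minimum-distance extremal argument. Suppose, for contradiction, that every line determined by two of the points $v_1, \ldots, v_m$ actually passes through a third point in the set. I want to exhibit a configuration that violates this assumption by producing an ordinary line.

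Consider the set $\mathcal{P}$ of all pairs $(v_i, L)$ where $L$ is a line spanned by at least two points of the set and $v_i$ is a point \emph{not} lying on $L$. Since the points are not collinear, $\mathcal{P}$ is nonempty. Because the configuration is finite, there are only finitely many such pairs, so I can choose $(v^\ast, L^\ast) \in \mathcal{P}$ minimizing the Euclidean distance from $v^\ast$ to $L^\ast$. Let $f$ denote the foot of the perpendicular from $v^\ast$ to $L^\ast$.

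By our contradictory assumption, $L^\ast$ contains at least three points of the set. Projecting these onto the one-dimensional line $L^\ast$ with origin at $f$, two of these points, call them $a$ and $b$, must lie on the same closed half-line emanating from $f$ (this is the pigeonhole step, where ``same side'' includes the possibility that one of them coincides with $f$). Order them so that $b$ is at least as close to $f$ as $a$ is.

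The key step, and the only geometric computation in the argument, is to observe that the line $L'$ determined by $v^\ast$ and $a$ is a line through at least two points of the set, and $b \notin L'$ (since $b \neq a$ and $b$ is not collinear with $v^\ast, a$ unless it equals $f$, in which case it is clearly not on $L'$). A direct similar-triangles calculation, comparing the right triangles with legs along $L^\ast$ and perpendiculars to $L^\ast$ versus $L'$, shows that the distance from $b$ to $L'$ is strictly smaller than the distance from $v^\ast$ to $L^\ast$. This contradicts the minimality of $(v^\ast, L^\ast)$, so the assumption fails and an ordinary line must exist. I do not foresee any real obstacle; the only place to be careful is the degenerate subcase where one of $a, b$ coincides with $f$, which must be handled to ensure the perpendicular-distance comparison is strict.
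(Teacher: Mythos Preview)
Your argument is the standard Kelly minimum-distance proof and it is correct; the one delicate point you flag (the case $b=f$) is handled by the same similar-triangles inequality, since the angle at $a$ in the right triangle $v^\ast f a$ is strictly acute, forcing $\mathrm{dist}(f,L') = |fa|\sin\theta < |fa|\tan\theta = |v^\ast f|$.

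There is nothing to compare against in the paper: \autoref{thm:SG} is stated in the preliminaries as a classical background result and is not proved there. The paper only \emph{uses} Sylvester--Gallai (and its robust and colored variants) as black boxes, so your write-up supplies a proof where the paper gives none.
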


\begin{theorem}[Kelly's theorem]\label{thm:Kelly}
If $m$ distinct points $v_1,\ldots ,v_m$ in $\C^n$ are not coplanar, then they define at least one ordinary line.
\end{theorem}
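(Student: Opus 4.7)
The plan is to prove Kelly's theorem by reducing the configuration to $\mathbb{CP}^2$ and then invoking the Hirzebruch inequality from complex algebraic geometry. First I would reduce to the case $n=3$: given $v_1,\ldots,v_m \in \mathbb{C}^n$ not contained in any affine complex $2$-flat, a generic linear projection $\pi\colon\mathbb{C}^n \to \mathbb{C}^3$ keeps the images distinct, preserves the property of spanning more than a $2$-flat, and sends collinear triples to collinear triples and nothing else. Hence any ordinary line for the projected configuration lifts to an ordinary line for the original $v_i$. Embedding $\mathbb{C}^3 \hookrightarrow \mathbb{CP}^3$ and then projecting generically from a point at infinity into $\mathbb{CP}^2$ once again preserves all collinearity relations, so I am reduced to proving: any $m$ points in $\mathbb{CP}^2$ not all on a line, and arising from a configuration that spans an affine $3$-flat of $\mathbb{C}^n$, must admit an ordinary line.

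Next I would apply the Hirzebruch inequality to the dual line arrangement of these $m$ points. Letting $t_k$ denote the number of lines meeting the point configuration in exactly $k$ of the points, Hirzebruch's inequality---obtained by applying the Miyaoka--Yau inequality to a suitable branched cover of $\mathbb{CP}^2$ ramified along the dual arrangement---reads, away from near-pencils,
\[
t_2 + \tfrac{3}{4}\,t_3 \;\geq\; m + \sum_{k\geq 5}(2k-9)\,t_k.
\]
In particular $t_2 + t_3 \geq m$, so many lines meet only two or three points.

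The step I expect to be the main obstacle is ruling out the scenario $t_2 = 0$ with $t_3 > 0$, namely a configuration in which every line through two of the points contains a third. Combining the Hirzebruch bound with the incidence identity $\binom{m}{2} = \sum_k \binom{k}{2}\,t_k$ restricts the possible $(t_k)$-profiles to a short list, and each surviving profile can be shown to force the points onto a plane conic or into a near-pencil. Pulling such a configuration back through the two generic projections then places the original $v_i$ inside a single affine $2$-flat of $\mathbb{C}^n$, contradicting the hypothesis. A secondary technical point is verifying that the projections avoid a finite union of proper algebraic bad loci, so that collinearities and non-collinearities transfer faithfully in both directions; this is where "genericity" must be made precise to close the reduction.
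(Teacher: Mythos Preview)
The paper does not prove this theorem; Kelly's theorem is quoted in the preliminaries as a classical background result and used as a black box, so there is no in-paper proof to compare your proposal against.

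That said, your plan has a genuine gap at precisely the step you flag as the main obstacle. After the generic projection from $\mathbb{CP}^3$ to $\mathbb{CP}^2$, the hypothesis ``not coplanar'' has degraded to merely ``not all on a line'', and from that point on you are effectively asserting that every finite set in $\mathbb{CP}^2$ with $t_2=0$, not all collinear, lies on a conic or is a near-pencil. This is false: the Hesse configuration (the nine inflection points of a smooth plane cubic, with $m=9$, $t_3=12$, and all other $t_k=0$) satisfies both Hirzebruch's inequality and your incidence identity $\binom{m}{2}=\sum_k\binom{k}{2}t_k$ with equality, yet lies on no conic and is certainly not a near-pencil. In fact any near-pencil or any set of $\geq 3$ points on an irreducible conic already has $t_2>0$, so your proposed target classes are incompatible with $t_2=0$ to begin with. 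Moreover, even when a projected set \emph{does} lie on a conic $C\subset\mathbb{CP}^2$, the preimage of $C$ under projection from a generic center is a quadric cone in $\mathbb{CP}^3$, not a $2$-flat, so the pull-back step would not force the original $v_i$ to be coplanar either.

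The published proofs that do go through Hirzebruch (Kelly's original argument, and the later simplification of Elkies--Pretorius--Swanepoel) project from a point $p_0$ \emph{belonging to the configuration} rather than from a generic external point. Projecting from $p_0$ turns the lines through $p_0$ into the points of a derived planar configuration, and the hypotheses ``no ordinary line through $p_0$'' together with ``$P$ spans $\mathbb{CP}^3$'' impose incidence constraints on that derived configuration beyond what a generic projection retains; it is this extra structure that eliminates the Hesse-type obstructions. Your reduction discards exactly the information that makes the argument work.
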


The robust version of the theorem was stated and proved in \cite{barak2013fractional,DSW12}.
\begin{definition}\label{def:delta-SGConf}
We say that a set of points $v_1,\ldots ,v_m\in \C^n$ is a \textit{$\delta$-SG configuration} if for every $i\in [m]$ there exists at least $\delta m$ values of $j \in [m]$ such that the line through $v_i,v_j$ contains a third point in the set.
\end{definition}

\begin{theorem}[Robust Sylvester-Gallai theorem, Theorem $1.9$ of \cite{DSW12}]\label{thm:robustSG}
Let $V = \lbrace v_1,\ldots ,v_m\rbrace \subset \C^n$ be a $\delta$-SG configuration. Then $\dim(\spn{v_1,\ldots ,v_m}) \leq \frac{12}{\delta}+1$.
\end{theorem}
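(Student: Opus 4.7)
The plan is to translate the $\delta$-SG hypothesis into a \emph{rank lower bound} on a matrix of collinearity relations; by basic linear algebra this directly bounds $\dim\spn{v_1,\ldots,v_m}$.

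For each $i\in[m]$ the hypothesis supplies a set $J_i\subseteq[m]$ of size at least $\delta m$ such that, for every $j\in J_i$, the line through $v_i,v_j$ contains a third point $v_{\kappa(i,j)}$ of the set. Collinearity of three pairwise distinct points yields \emph{nonzero} scalars $\alpha_{i,j},\beta_{i,j},\gamma_{i,j}$ satisfying $\alpha_{i,j}v_i+\beta_{i,j}v_j+\gamma_{i,j}v_{\kappa(i,j)}=0$. I would stack these relations as the rows of an $L\times m$ matrix $M$ (with $L=\sum_i|J_i|\geq\delta m^2$) and the $v_i$ as the rows of an $m\times n$ matrix $V$. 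Then $MV=0$, so every column of $V$ lies in $\ker M$, giving $\dim\spn{v_1,\ldots,v_m}=\mathrm{rank}(V)\leq m-\mathrm{rank}(M)$. It therefore suffices to prove $\mathrm{rank}(M)\geq m-12/\delta-1$.

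Now $M$ is a ``design matrix'': every row has exactly three nonzero entries, while every column $i$ contains at least $|J_i|\geq\delta m$ nonzeros (one for each $j\in J_i$). The main technical tool of~\cite{barak2013fractional,DSW12} is precisely a rank lower bound for design matrices, which plugged in here with row-sparsity $3$ and column-weight $\delta m$ gives exactly $\mathrm{rank}(M)\geq m-12/\delta-1$, completing the proof.

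The hard part, and where I would expect the main obstacle, is the rank bound for design matrices itself. The strategy is a matrix-scaling argument: find positive diagonal matrices $D,E$ so that $N=DME$ is doubly balanced --- all columns of $N$ have unit $\ell_2$-norm and all rows have $\ell_2$-norm bounded by a constant depending only on the row sparsity $3$. Existence of such a scaling reduces to a Hall-type combinatorial condition that the design properties of $M$ guarantee (and that can fail in principle on degenerate subconfigurations, which must be ruled out). One then compares $\mathrm{tr}(N^*N)=m$ to its Frobenius mass: the sparsity of the rows plus Cauchy--Schwarz control the off-diagonal part of $N^*N$ by $O(1/\delta)$, which forces all but $O(1/\delta)$ eigenvalues of $N^*N$ to be bounded away from $0$, yielding the corank bound. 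Sharpening the constant to exactly $12/\delta+1$, rather than a qualitative $O(1/\delta)$, is what demands careful accounting in the Cauchy--Schwarz step; this is the delicate technical core of~\cite{DSW12}.
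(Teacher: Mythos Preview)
The paper does not prove this theorem at all: it is stated in the preliminaries (Section~2.1) as a known result, cited verbatim as Theorem~1.9 of~\cite{DSW12}, and used as a black box throughout. There is therefore no ``paper's own proof'' to compare against.

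Your sketch is a reasonable outline of the actual argument in~\cite{barak2013fractional,DSW12}, but since you asked for a comparison I will note two places where the sketch is incomplete relative to that source. First, collinearity of three \emph{points} in $\C^n$ gives an affine relation (coefficients summing to zero), not an arbitrary linear relation; this is handled in~\cite{DSW12} by appending a coordinate of $1$'s to pass to $\C^{n+1}$, which is also where the ``$+1$'' in the bound comes from. Second, and more importantly, the design-matrix rank bound requires not just row sparsity~$3$ and column weight~$\geq \delta m$ but also a bound on the number of rows whose support contains any fixed \emph{pair} of columns (the parameter~$t$ in the $(q,k,t)$-design definition). You need $t=O(1)$, and arranging this requires choosing the collinearity relations so that no ordered pair $(i,j)$ is used too many times; this is the content of the careful selection step in~\cite{DSW12}. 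Without controlling~$t$, the off-diagonal Frobenius estimate does not go through.
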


The following is the colored version of the Sylvester-Gallai theorem.

\begin{theorem}[Theorem $3$ of \cite{EdelsteinKelly66}]\label{thm:EK}
Let $\cT_i$, for $i\in [3]$, be disjoint finite subsets of $\C^n$ such that for every $i\neq j$ and any two points $p_1\in \cT_i$ and $p_2 \in \cT_j$ there exists a point $p_3$ in the third set that lies on the line passing through $p_1$ and $p_2$. Then, any such $\cT_i$ satisfy that $\dim(\spn{\cup_i \cT_i})\leq 3$.
\end{theorem}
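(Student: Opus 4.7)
The plan is to deduce the theorem from Kelly's theorem (\autoref{thm:Kelly}) applied to the union $\cT := \cT_1 \cup \cT_2 \cup \cT_3$: if every line through two points of $\cT$ contains a third point of $\cT$, then by \autoref{thm:Kelly} the set $\cT$ is coplanar, hence $\dim(\spn{\cup_i \cT_i}) \leq 3$ and we are done. So the task reduces to verifying the Kelly hypothesis for $\cT$.

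Bi-chromatic pairs, i.e., pairs $p\in \cT_i,\, q\in \cT_j$ with $i\neq j$, are handled immediately by the hypothesis of the theorem, which gives a third collinear point in $\cT_k$ where $\{i,j,k\}=\{1,2,3\}$. The remaining case is monochromatic: fix $p,p'\in \cT_i$ and suppose for contradiction that the line $\ell := \overline{pp'}$ meets $\cT$ only at $p$ and $p'$. I would pick some $q\in \cT_j$ with $j\neq i$ and work in the plane $\Pi := \spn{p,p',q}$. The hypothesis applied to $(p,q)$ and $(p',q)$ yields points $r,r'\in \cT_k\cap \Pi$ on the lines $\overline{pq}$ and $\overline{p'q}$ respectively; further applications to rainbow pairs among $\{p,p',q,r,r'\}$ iteratively produce more points of $\cT$, each of which necessarily lies in $\Pi$ because the colored SG hypothesis places the third point on a line already contained in $\Pi$. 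Since $\cT$ is finite, the process stabilizes at a finite colored configuration $\cT \cap \Pi$ that still satisfies the colored SG condition inside $\Pi$.

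At this point I would run a Motzkin--Rabin / Melchior-type argument inside $\Pi$: assuming $\cT\cap \Pi$ is not collinear, it must admit an \emph{ordinary} line (one meeting $\cT\cap \Pi$ in exactly two points). The colored SG hypothesis forbids such a line from being bi-chromatic, so every ordinary line of $\cT\cap \Pi$ is monochromatic; a counting argument over the three color classes, together with the lower bound on the number of ordinary lines, shows this is possible only in degenerate situations, which in turn force $\ell$ itself to contain a third point of $\cT\cap \Pi$, contradicting our assumption.

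The main obstacle is executing the Motzkin--Rabin/Melchior argument over $\C$, where the usual Euclidean extremal reasoning (betweenness, distance minimization) is unavailable. I would handle this by projecting $\cT\cap \Pi$ to a generic real $2$-plane via an $\R$-linear map chosen generically enough to preserve all collinearity relations among the finitely many points of $\cT\cap \Pi$; the colored SG structure then descends to this real image, and the classical real Motzkin--Rabin argument supplies the desired ordinary bi-chromatic line in the image, yielding a bi-chromatic ordinary line upstairs and hence the contradiction. Once the monochromatic case is dispatched, every line through two points of $\cT$ contains a third, so Kelly's theorem applies and gives the conclusion $\dim(\spn{\cup_i \cT_i})\leq 3$.
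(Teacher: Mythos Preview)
The paper does not prove this theorem; it is cited as Theorem~3 of \cite{EdelsteinKelly66} and used without proof, so there is nothing in the paper to compare your argument against.

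Your proposal has a genuine gap in the monochromatic case, specifically at the projection step. A complex line in $\Pi \cong \C^2 \cong \R^4$ is a real $2$-plane, and a generic $\R$-linear projection $\pi:\R^4\to\R^2$ sends such a $2$-plane onto all of $\R^2$, not to a real line. Concretely, if $p_3 - p_1 = \lambda(p_2 - p_1)$ with $\lambda \notin \R$, then $\pi(p_3 - p_1) = (\mathrm{Re}\,\lambda)\,\pi(p_2 - p_1) + (\mathrm{Im}\,\lambda)\,\pi\bigl(i(p_2 - p_1)\bigr)$, and for generic $\pi$ the two vectors on the right are linearly independent in $\R^2$, so the images of $p_1,p_2,p_3$ are not collinear. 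With many complex collinearities to preserve simultaneously, no single $\R$-linear $\pi$ will do the job. This obstruction is exactly why Kelly's theorem over $\C$ is not a corollary of real Sylvester--Gallai and requires genuinely different tools; it equally blocks your reduction of the complex colored statement to the real Motzkin--Rabin theorem. Even waiving this issue, the ``counting argument'' you invoke is not supplied: Motzkin--Rabin and Melchior assert the \emph{existence} of monochromatic (respectively ordinary) lines, which is consistent with, not contradictory to, your assumed monochromatic ordinary line $\ell$, so it is unclear what contradiction you intend to extract.
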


We also state the equivalent algebraic versions of Sylvester-Gallai.

\begin{theorem}\label{thm:SGVec}
Let $\calS = \lbrace \vec{s}_1,\ldots ,\vec{s}_m\rbrace \subset \C^n$ be a set of pairwise linearly independent vectors such that for every $i\neq j\in [m]$ there is a distinct $k \in [m]$ for which $\vec{s}_k \in \spn{\vec{s}_i,\vec{s}_j}$. Then $\dim(\calS) \leq 3$.
\end{theorem}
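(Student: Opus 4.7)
The plan is to reduce the algebraic (vector) version to the geometric Kelly's theorem (\autoref{thm:Kelly}) by slicing with a generic affine hyperplane. Concretely, each $\vec{s}_i$ spans a line through the origin; intersecting these lines with a carefully chosen affine hyperplane turns them into points that will inherit the Sylvester-Gallai incidence property.

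First I would pick a vector $\vec{h}\in\C^n$ such that $\langle \vec{s}_i,\vec{h}\rangle \neq 0$ for every $i\in[m]$. This is possible because each condition $\langle \vec{s}_i,\vec{h}\rangle = 0$ defines a proper linear subspace of $\C^n$, and finitely many proper subspaces do not cover $\C^n$. Define the affine hyperplane $H = \{\vec{x}\in\C^n : \langle \vec{x},\vec{h}\rangle = 1\}$ and set $p_i \eqdef \vec{s}_i/\langle \vec{s}_i,\vec{h}\rangle \in H\cap \spn{\vec{s}_i}$. Since the $\vec{s}_i$ are pairwise linearly independent, the lines $\spn{\vec{s}_i}$ are distinct, so the points $p_1,\ldots,p_m$ are pairwise distinct.

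Next I would verify that the $\{p_i\}$ form a Sylvester-Gallai configuration in $H$. Fix $i\neq j$; by hypothesis there is a distinct $k$ with $\vec{s}_k = \alpha\vec{s}_i + \beta\vec{s}_j$ for some scalars $\alpha,\beta$. Scaling, this gives $p_k$ as a linear combination $\lambda p_i + \mu p_j$ of $p_i$ and $p_j$. Applying $\langle \cdot ,\vec{h}\rangle$ to both sides and using that $p_i,p_j,p_k$ all lie in $H$ yields $1 = \lambda + \mu$, so $p_k$ lies on the affine line through $p_i$ and $p_j$. Hence every line through two of the $p_i$'s contains a third, and by \autoref{thm:Kelly} the points $p_1,\ldots,p_m$ lie in a common affine $2$-plane $P\subseteq H$.

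Finally I would conclude the dimension bound. Because each $\vec{s}_i$ is a nonzero scalar multiple of $p_i$, we have $\spn{\calS} = \spn{p_1,\ldots,p_m} \subseteq \spn{P}$, and the linear span (through the origin) of a $2$-dimensional affine plane in $\C^n$ has dimension at most $3$. Therefore $\dim(\calS) \leq 3$, as claimed. The only delicate point, which is minor, is the choice of $\vec{h}$ guaranteeing that each line $\spn{\vec{s}_i}$ is genuinely transversal to $H$; everything else is a direct translation from vector colinearity to affine colinearity in $H$.
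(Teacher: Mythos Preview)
Your proof is correct and follows essentially the same approach as the paper: the paper sketches this reduction in the introduction, noting that one picks a codimension-$1$ hyperplane $H$ in general position, intersects each line $\spn{\vec{s}_i}$ with $H$ to obtain points $p_i$ satisfying the Sylvester--Gallai condition, and then invokes Kelly's theorem (\autoref{thm:Kelly}) to conclude $\dim(\calS)\leq 3$. You have simply filled in the details of that sketch.
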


\begin{theorem}\label{thm:SG-linforms}
Let $\calP = \lbrace \ell_1,\ldots ,\ell_m\rbrace \subset \CRing{x}{n}$ be a set of pairwise linearly independent linear forms such that for every $i\neq j\in [m]$ there is a distinct $k \in [m]$ for which whenever $\ell_i, \ell_j$ vanish so does $\ell_k$. Then $\dim(\calP) \leq 3$.
\end{theorem}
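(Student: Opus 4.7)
The plan is to reduce Theorem \ref{thm:SG-linforms} to the vector form Theorem \ref{thm:SGVec}. The only nontrivial ingredient is to show that the ``whenever $\ell_i, \ell_j$ vanish then $\ell_k$ vanishes'' hypothesis, which a priori is just a statement about common zero sets, forces the much stronger algebraic condition $\ell_k \in \spn{\ell_i, \ell_j}$. Once this is established, the coefficient vectors of the $\ell_i$'s form a pairwise linearly independent configuration in $\C^n$ in which every pair spans a third vector of the set, and then Theorem \ref{thm:SGVec} gives the bound $\dim(\calP) \leq 3$ immediately.

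To carry out the reduction, I would first invoke Hilbert's Nullstellensatz to conclude that if $\ell_k$ vanishes whenever $\ell_i,\ell_j$ vanish, then $\ell_k \in \sqrt{\ideal{\ell_i,\ell_j}}$. The key observation is now that $\ideal{\ell_i,\ell_j}$ is already a radical (in fact prime) ideal: since $\ell_i$ and $\ell_j$ are linearly independent linear forms, we can extend them to a linear system of coordinates, after which $\CRing{x}{n}/\ideal{\ell_i,\ell_j}$ is isomorphic to a polynomial ring in $n-2$ variables, which is an integral domain. Hence $\sqrt{\ideal{\ell_i,\ell_j}} = \ideal{\ell_i,\ell_j}$, and therefore $\ell_k \in \ideal{\ell_i,\ell_j}$.

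Now $\ell_k$ is a homogeneous degree-$1$ element of an ideal whose generators are themselves homogeneous of degree $1$. Writing $\ell_k = f \cdot \ell_i + g \cdot \ell_j$ for some $f,g \in \CRing{x}{n}$ and comparing degrees, only the degree-$0$ parts of $f$ and $g$ can contribute to $\ell_k$, so in fact $\ell_k \in \spn{\ell_i,\ell_j}$.

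Finally, associate to each linear form $\ell_i$ its coefficient vector $\vec{s}_i \in \C^n$. Pairwise linear independence of the $\ell_i$'s is equivalent to pairwise linear independence of the $\vec{s}_i$'s, and the implication $\ell_k \in \spn{\ell_i,\ell_j}$ translates to $\vec{s}_k \in \spn{\vec{s}_i,\vec{s}_j}$. Thus $\{\vec{s}_1,\ldots,\vec{s}_m\}$ satisfies the hypothesis of Theorem \ref{thm:SGVec}, so $\dim(\spn{\vec{s}_1,\ldots,\vec{s}_m}) \leq 3$, which is exactly $\dim(\calP) \leq 3$. There is no real obstacle in the argument; the only subtlety worth flagging is the primality of $\ideal{\ell_i,\ell_j}$, which is precisely what fails for higher-degree polynomials and motivates the distinction between Problem \ref{prob:sg-alg} and Problem \ref{prob:sg-alg-prod} in the quadratic setting.
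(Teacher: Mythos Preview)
Your proposal is correct and matches the paper's approach exactly. The paper does not give a formal proof of this theorem (it is listed among preliminary results in Section~\ref{sec:SGThms}), but the introduction sketches precisely your argument: ``it must be the case that $\ell_k \in \spn{\ell_i,\ell_j}$ and thus the coefficient vectors of the forms in the set satisfy the condition for the (vector version of the) Sylvester-Gallai theorem, and the bound on the dimension follows.'' Your write-up simply fills in the justification for $\ell_k \in \spn{\ell_i,\ell_j}$ via the Nullstellensatz and primality of $\ideal{\ell_i,\ell_j}$, both of which the paper also records in its preliminaries.
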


In this paper we refer to each of \autoref{thm:Kelly}, \autoref{thm:SGVec} and \autoref{thm:SG-linforms} as the Sylvester-Gallai theorem. We shall also refer to sets of points/vectors/linear forms that satisfy the conditions of the relevant theorem as satisfying the condition of the Sylvester-Gallai theorem.

\subsection{Resultant}\label{sec:res}

A tool that will play an important role in the proof of \autoref{thm:structure-intro} is the resultant of two polynomials. We will only define the resultant of a a quadratic polynomial and a linear polynomial as this is the case relevant to our work.\footnote{For the general definition of Resultant, see Definition 2 in $\mathsection 5$ of Chapter 3 in \cite{CLO}.} Let $A,B\in\CRing{x}{n}$. View $A$ and $B$ as polynomials in $x_1$ over $\C[x_2,\ldots,x_n]$ and assume that $\deg_{x_1}(A) = 2 $ and  $\deg_{x_1}(B) = 1 $, namely,
$$A = \alpha x_1^2 + ax_1 + A_0 \quad \text{ and } B =  b x_1 + B_0 \;.$$
Then, the resultant of $A$ and $B$ with respect to $x_1$ is the determinant of their Sylvester matrix
$$
\res_{x_1}(A,B) \eqdef 
\left| \begin{bmatrix} A_0 & B_0 & 0 \\
a & b & B_0 \\
\alpha & 0 & b \\

\end{bmatrix}\right| \;.
$$
A useful fact is that if the resultant of $A$ and $B$ vanishes then they share a common factor. 

\begin{theorem}[See e.g. Proposition 8 in $\mathsection 5$ of Chapter 3 in \cite{CLO}]\label{thm:res}
Given $F,G\in\F[\MVar{x}{n}]$ of positive degree in $x_1$, the resultant $\res_{x_1}(F,G)$ is an integer polynomial in the coefficients of $F$ and $G$. Furthermore, $F$ and $G$ have a common factor in $\F[\MVar{x}{n}]$ if and only if $\res_{x_1}(F,G) = 0$.
\end{theorem}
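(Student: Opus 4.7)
The plan is to establish both claims by interpreting the Sylvester matrix as the matrix of a natural linear map. First, the polynomiality assertion is immediate from the definition: the entries of the Sylvester matrix are coefficients of $F$ and $G$ with respect to $x_1$ (themselves polynomials in $x_2,\ldots,x_n$), and the Leibniz formula then expresses the determinant as an integer polynomial in these entries. For the ``common factor'' part, I would pass to the fraction field $K = \F(x_2,\ldots,x_n)$, and set $d = \deg_{x_1} F$, $e = \deg_{x_1} G$. Consider the $K$-linear map
\[
\phi : K[x_1]_{<e} \times K[x_1]_{<d} \longrightarrow K[x_1]_{<d+e}, \qquad (p,q)\mapsto pF+qG,
\]
where $K[x_1]_{<t}$ denotes polynomials of degree less than $t$. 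In the monomial bases $\{1,x_1,\ldots,x_1^{e-1}\}$, $\{1,x_1,\ldots,x_1^{d-1}\}$ on the domain and $\{1,x_1,\ldots,x_1^{d+e-1}\}$ on the codomain, the matrix of $\phi$ is precisely the Sylvester matrix. Since this is a square matrix of the correct size, $\res_{x_1}(F,G)=0$ if and only if $\phi$ has a non-trivial kernel.

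Next I would translate non-triviality of $\ker\phi$ into existence of a common factor in $K[x_1]$. If $(p,q)\neq(0,0)$ satisfies $pF+qG=0$ with $\deg_{x_1} p<e$ and $\deg_{x_1} q<d$, then $F\mid qG$ in $K[x_1]$; were $F$ and $G$ coprime in the PID $K[x_1]$, this would force $F\mid q$, contradicting the degree bound. Hence $F$ and $G$ share a non-constant common factor in $K[x_1]$. Conversely, if $F=Hp'$ and $G=Hq'$ with $\deg_{x_1} H\geq 1$, then $(p,q):=(q',-p')$ satisfies $pF+qG = q'Hp' - p'Hq' = 0$ with degrees below $e$ and $d$ respectively, giving a non-trivial kernel element.

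Finally I must upgrade the existence of a common factor in $K[x_1]$ to one in $\F[x_1,\ldots,x_n]$; this is where I expect the main subtlety. Because $\F[x_2,\ldots,x_n]$ is a UFD, Gauss's lemma applies to the polynomial ring $\F[x_2,\ldots,x_n][x_1] = \F[x_1,\ldots,x_n]$. Given a common factor $H\in K[x_1]$, clear denominators and extract the primitive part to obtain $\tilde H \in \F[x_1,\ldots,x_n]$ of positive $x_1$-degree that still divides both $F$ and $G$ in $K[x_1]$; Gauss's lemma then promotes this to divisibility in $\F[x_1,\ldots,x_n]$, producing the desired non-constant common factor. In the reverse direction, any common factor in $\F[x_1,\ldots,x_n]$ with positive $x_1$-degree is also a common factor in $K[x_1]$, so the two conditions coincide. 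The only genuinely delicate point in this plan is the careful bookkeeping of primitive parts and content in the Gauss's lemma step, but it is classical and does not introduce new ideas.
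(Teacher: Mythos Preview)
The paper does not prove this statement; it is quoted as a standard fact with a reference to Cox--Little--O'Shea, so there is no ``paper's own proof'' to compare against. Your argument is the classical one and is correct: interpret the Sylvester matrix as the matrix of $(p,q)\mapsto pF+qG$ over $K=\F(x_2,\ldots,x_n)$, relate a non-trivial kernel to a common factor of positive $x_1$-degree in $K[x_1]$, and then use Gauss's lemma over the UFD $\F[x_2,\ldots,x_n]$ to descend to $\F[x_1,\ldots,x_n]$. One small remark: the theorem as phrased in the paper says ``a common factor,'' which, read literally, could include factors not involving $x_1$ (and those do \emph{not} force the resultant to vanish); you have implicitly, and correctly, read it as ``a common factor of positive degree in $x_1$,'' which is what the cited result in CLO actually proves and what the paper uses downstream.
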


\subsection{Rank of Quadratic Polynomials}\label{sec:rank}

%

In this section we define the rank of a quadratic polynomial, and present some of its useful properties.

\begin{definition}\label{def:rank-s}
For a homogeneous quadratic polynomial  $Q$ we denote with $\rank_s(Q)$ the minimal $r$ such that there are $2r$  linear forms $\{a_k\}_{k=1}^{2r}$ satisfying $Q=\sum_{k=1}^r a_{2k}\cdot a_{2k-1}$. We call such representation a \emph{minimal representation} of $Q$.
\end{definition}

This is a slightly different definition than the usual way one defines rank of quadratic forms,\footnote{$\rank(Q)$ is the minimal $t$ such that there are $t$ linear forms $\{a_k\}_{k=1}^{t}$, satisfying $Q=\sum_{k=1}^t a_k^2$.}
but it is more suitable for our needs. We note that a quadratic $Q$ is irreducible if and only if $\rank_s(Q)>1$. The next claim shows that a minimal representation is unique in the sense that the
space spanned by the linear forms in it is unique.

\begin{claim}\label{cla:irr-quad-span}
Let $Q$ be a homogeneous quadratic polynomial and let $Q=\sum_{i=1}^{r}a_{2i-1}\cdot a_{2i}$ and $Q = \sum_{i=1}^{r}b_{2i-1}\cdot b_{2i}$ be two different minimal representations of $Q$. Then $\spn{\MVar{a}{2r}} =\spn{\MVar{b}{2r}}$.
\end{claim}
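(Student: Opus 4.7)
The plan is to identify $\spn{a_1,\ldots,a_{2r}}$ with an \emph{intrinsic} subspace of $Q$, namely the linear space $V(Q)$ spanned by the first-order partial derivatives $\partial Q/\partial x_1,\ldots,\partial Q/\partial x_n$. This subspace depends only on $Q$, and its dimension equals the rank of the symmetric matrix $M$ satisfying $Q(\vx) = \vx^T M \vx$, since the partial derivatives are (twice) the entries of $M\vx$. Set $d \eqdef \dim V(Q)$. Once I show that every minimal representation of $Q$ has linear-form span equal to $V(Q)$, the claim follows by applying this to both representations in the statement.

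First I would observe that for \emph{any} representation $Q = \sum_{i=1}^s a_{2i-1}a_{2i}$ (not necessarily minimal), differentiating both sides shows that each $\partial Q/\partial x_j$ is a $\C$-linear combination of $a_1,\ldots,a_{2s}$, so $V(Q) \subseteq \spn{a_1,\ldots,a_{2s}}$, and in particular $d \leq 2s$. Hence $\rank_s(Q) \geq \lceil d/2 \rceil$. The matching upper bound follows by diagonalizing $Q$ over $\C$: writing $Q = y_1^2 + \cdots + y_d^2$ in some basis of $V(Q)$ and pairing squares via $u^2+v^2 = (u+iv)(u-iv)$ (keeping a leftover $y_d \cdot y_d$ when $d$ is odd) produces a representation with $\lceil d/2 \rceil$ products. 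Therefore $\rank_s(Q) = \lceil d/2 \rceil$, and any minimal representation has exactly $r = \lceil d/2 \rceil$ products.

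Now fix a minimal representation $Q = \sum_{i=1}^r a_{2i-1}a_{2i}$ and set $W \eqdef \spn{a_1,\ldots,a_{2r}}$. From the previous step, $V(Q) \subseteq W$ and $d \leq \dim W \leq 2r$. If $d = 2r$, this immediately forces $\dim W = d$ and hence $W = V(Q)$. The delicate case is $d = 2r-1$, where I must rule out $\dim W = 2r$. Suppose for contradiction that $a_1,\ldots,a_{2r}$ are linearly independent; extend them to a basis of the space of linear forms and write $Q$ in the dual coordinates. The Gram matrix of $Q$ in this basis is block-diagonal with $r$ symmetric $2 \times 2$ blocks of the form $\bigl(\begin{smallmatrix}0 & 1/2\\ 1/2 & 0\end{smallmatrix}\bigr)$ and zeros elsewhere, hence of rank $2r$. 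Since matrix rank is invariant under invertible change of basis, this contradicts $\rank(M) = d = 2r-1$. Thus $\dim W = d$ and $W = V(Q)$ in this case as well.

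Applying this conclusion to both minimal representations in the statement yields $\spn{a_1,\ldots,a_{2r}} = V(Q) = \spn{b_1,\ldots,b_{2r}}$. The main (mild) obstacle is the odd-rank case above, where one must exploit the block-diagonal structure of $Q$ in the coordinates dual to $\{a_i\}$ to rule out wasteful linearly independent families; the even-rank case is essentially automatic from the dimension inequalities.
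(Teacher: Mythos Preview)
Your proof is correct and takes a genuinely different route from the paper's. The paper gives a three-line argument by contradiction: if some $a_1$ lies outside $\spn{b_1,\ldots,b_{2r}}$, then setting $a_1=0$ leaves the $b$-representation untouched (so the restricted polynomial is literally $Q$ again and still has $\rank_s$ equal to $r$) while deleting one product from the $a$-representation (forcing $\rank_s(Q|_{a_1=0})\leq r-1$), a contradiction. Your approach instead identifies the span with the intrinsic subspace $V(Q)$ of first partial derivatives, establishes $\rank_s(Q)=\lceil\dim V(Q)/2\rceil$ via diagonalization over $\C$, and then pins down $W=V(Q)$ for any minimal representation, handling the odd-rank case by a Gram-matrix rank computation. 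The paper's restriction trick is shorter, avoids the even/odd case split, and is the same device the paper reuses later to control how $\rank_s$ drops when a subspace of linear forms is set to zero. Your argument is more structural: it actually \emph{names} the canonical space $\MS(Q)$ as the partial-derivative space and yields the bonus formula relating $\rank_s$ to the matrix rank of $Q$, which the paper never states explicitly.
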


\begin{proof}
Note that if the statement does not hold then,  without loss of generality, $a_1$ is not contained in the span of the $b_i$'s. This means that when setting $a_1=0$ the $b_i$'s are not affected on the one hand, thus $Q$ remains the same function of the $b_i$'s, and in particular $\rank_s(Q|_{a_1=0})=r$, but on the other hand $\rank_s(Q|_{a_1=0})=r-1$ (when considering its representation with the $a_i$'s), in contradiction.
\end{proof}

This claim allows us to define the notion of \textit{minimal space} of a quadratic polynomial $Q$, which we shall denote $\MS(Q)$.
\begin{definition}\label{def:MS}
Let Q be a quadratic polynomial, where $\rank_s(Q) = r$, and let $Q = \sum \limits_{i=1}^r a_{2i-1}a_{2i}$ be some minimal representation of $Q$.
Define $\MS(Q)\eqdef \spn{\MVar{a}{2r}}$, also denote $\MS(\MVar{Q}{k}) = \sum \limits_{i=1}^k \MS(Q_i)$.
\end{definition}
\autoref{cla:irr-quad-span} shows that the minimal space is well defined. The following fact is easy to verify.

\begin{fact}\label{cor:containMS}
Let $Q=\sum_{i=1}^{m}a_{2i-1}\cdot a_{2i}$ be a homogeneous quadratic polynomial, then $\MS(Q)\subseteq \spn{\MVar{a}{2m}}$.
\end{fact}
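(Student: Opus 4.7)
The plan is to exhibit a minimal representation of $Q$ whose linear forms all lie in $V \eqdef \spn{\MVar{a}{2m}}$, and then invoke \autoref{cla:irr-quad-span} to conclude $\MS(Q) \subseteq V$.

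The main construction is a projection onto $V$. I would pick a linear complement $W$ of $V$ inside the space of all linear forms in $\C[\vx]$, so that choosing coordinates $y_1,\ldots,y_n$ with $y_1,\ldots,y_d$ a basis of $V$ and $y_{d+1},\ldots,y_n$ a basis of $W$ gives a ring endomorphism $\pi$ of $\C[\vx]$ (``set $y_{d+1}=\cdots=y_n=0$'') that is the identity on $V$ and annihilates $W$. Since each $a_j \in V$, applying $\pi$ to both sides of $Q = \sum_{i=1}^m a_{2i-1} a_{2i}$ gives $\pi(Q) = Q$.

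Next, start from any minimal representation $Q = \sum_{i=1}^{r} b_{2i-1} b_{2i}$, where $r = \rank_s(Q)$. Applying $\pi$ yields $Q = \sum_{i=1}^{r} \pi(b_{2i-1})\,\pi(b_{2i})$, which expresses $Q$ as a sum of $r$ products of linear forms taken entirely from $V$. The crucial (and only nontrivial) step is to verify that this remains minimal: if some factor $\pi(b_j)$ were zero, one of the products would drop out and $Q$ would admit a representation with strictly fewer than $r$ products, contradicting $r = \rank_s(Q)$. Hence $\{\pi(b_j)\}_{j=1}^{2r}$ is another minimal representation of $Q$, and by \autoref{cla:irr-quad-span} the two minimal representations $\{b_j\}_{j=1}^{2r}$ and $\{\pi(b_j)\}_{j=1}^{2r}$ span the same linear subspace. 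That common span is $\MS(Q)$ and it lies inside $V$, yielding $\MS(Q) \subseteq \spn{\MVar{a}{2m}}$, as required.
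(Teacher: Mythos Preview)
Your argument is correct. The paper states this fact without proof (it simply says ``The following fact is easy to verify''), and your projection argument is exactly the natural way to verify it: since $Q$ depends only on $V$, projecting any minimal representation onto $V$ leaves $Q$ unchanged, and the projected representation must still have $r=\rank_s(Q)$ nonzero terms (else $\rank_s(Q)<r$), so it is itself a minimal representation lying entirely in $V$, whence $\MS(Q)\subseteq V$ by \autoref{cla:irr-quad-span}.
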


We now give some basic claims regarding $\rank_s$.


\begin{claim}\label{cla:rank-mod-space}
Let $Q$ be a homogeneous quadratic polynomial with $\rank_s(Q)=r$, and let $V \subset \CRing{x}{n}$ be a linear space of linear forms such that $\dim(V)=\Delta$. Then $\rank_s(Q|_{V=0})\geq r-\Delta$.
\end{claim}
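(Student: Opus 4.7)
The plan is to reduce the problem to the case $\Delta=1$ and then iterate. That is, I would first prove the single-form statement

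\[
\rank_s(Q|_{v=0}) \geq \rank_s(Q) - 1 \qquad \text{for every linear form } v,
\]

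and then induct on $\Delta$ by choosing a basis $v_1,\ldots,v_\Delta$ of $V$ and applying the single-form bound $\Delta$ times.

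For the key step, suppose $\rank_s(Q|_{v=0})=s$. By definition there are linear forms $b_1,\ldots,b_{2s}$ in the variables of $\C[\vx]/\ideal{v}$ such that
\[
Q|_{v=0} = \sum_{i=1}^{s} b_{2i-1}\, b_{2i}.
\]
I would lift each $b_j$ to an arbitrary linear form $\tilde b_j\in\CRing{x}{n}$ whose restriction to $v=0$ is $b_j$. Then the quadratic $Q - \sum_{i=1}^{s}\tilde b_{2i-1}\tilde b_{2i}$ vanishes identically when $v=0$, so because $\C[\vx]$ is a UFD and $v$ is an irreducible linear form, it must be divisible by $v$. Since the quotient has degree at most $1$, we get a linear form $c$ with
\[
Q = \sum_{i=1}^{s} \tilde b_{2i-1}\tilde b_{2i} + v\cdot c.
\]
This is a representation of $Q$ as a sum of $s+1$ products of linear forms, so $\rank_s(Q)\le s+1$, i.e. $s\ge \rank_s(Q)-1$, as desired.

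For the inductive step, once the single-form case is in hand, I would pick any basis $v_1,\ldots,v_\Delta$ of $V$, denote $V_j=\spn{v_1,\ldots,v_j}$, and apply the single-form bound to the quadratic $Q|_{V_{j-1}=0}$ with respect to the form $v_j$ (viewed in the quotient). Each substitution drops $\rank_s$ by at most $1$, so after $\Delta$ steps
\[
\rank_s(Q|_{V=0}) \ge \rank_s(Q) - \Delta = r - \Delta.
\]

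I do not expect a serious obstacle here: the only subtle point is that the lift $\tilde b_j$ is well-defined up to a multiple of $v$, but any choice works since the claim only uses the identity $(\tilde b_j)|_{v=0} = b_j$. The argument is essentially a ``one form at a time'' reduction, with the key algebraic input being that $\C[\vx]$ is a UFD so vanishing on $\{v=0\}$ forces divisibility by $v$.
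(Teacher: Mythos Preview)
Your proof is correct and uses essentially the same idea as the paper: write $Q$ as $Q|_{V=0}$ plus something in $\ideal{V}$ of $\rank_s$ at most $\Delta$, then use subadditivity of $\rank_s$. The paper does this in one step---writing $Q=\sum_{i=1}^{\Delta} a_i v_i + Q'$ for a basis $v_1,\ldots,v_\Delta$ of $V$ and noting $\rank_s(\sum a_i v_i)\le\Delta$---whereas you peel off one form at a time and induct, but the content is the same.
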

\begin{proof}
Assume  without loss of generality $V = \spn{\MVar{x}{\Delta}}$, and consider $Q\in \C[x_{\Delta+1},\ldots,x_n][\MVar{x}{\Delta}]$. There are $\MVar{a}{\Delta} \in \CRing{x}{n}$ and $Q' \in \C[x_{\Delta+1},\ldots,x_n]$ such that $Q = \sum_{i=1}^{\Delta} a_ix_i + Q'$, where $Q|_{V=0} = Q'$. As $\rank_s(\sum_{i=1}^{\Delta} a_ix_i ) \leq \Delta$, it must be that $\rank_s(Q|_{V=0}) \geq r-\Delta$.
\end{proof}

\begin{claim}\label{cla:ind-rank}
Let $P_1\in \CRing{x}{k}$, and $P_2 = y_1y_2\in \CRing{y}{2}$. Then $ \rank_s (P_1+P_2) = \rank_s(P_1) + 1$. Moreover, $y_1,y_2 \in \MS(P_1+P_2).$
\end{claim}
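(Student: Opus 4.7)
The upper bound $\rank_s(P_1+P_2)\le \rank_s(P_1)+1$ is immediate by appending $y_1\cdot y_2$ to a minimal representation of $P_1$. For the matching lower bound and the claim $y_1,y_2\in\MS(P_1+P_2)$, I would fix an arbitrary representation $P_1+y_1y_2=\sum_{i=1}^s c_i d_i$ and decompose each linear form along the disjoint variable sets, writing $c_i=c_i^x+\alpha_i y_1+\beta_i y_2$ and $d_i=d_i^x+\gamma_i y_1+\delta_i y_2$ with $c_i^x,d_i^x\in\C[\vec x]$ and scalars $\alpha_i,\beta_i,\gamma_i,\delta_i\in\C$. Expanding the right-hand side and matching the coefficients of each type of monomial ($x$-only, $x\cdot y_j$, $y_j^2$, and $y_1y_2$) yields six identities among these data.

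The key step is to observe that combining the $y_1$-linear identity $\sum(\gamma_i c_i^x+\alpha_i d_i^x)=0$, the $y_1^2$-identity $\sum\alpha_i\gamma_i=0$, and the $y_1y_2$-identity $\sum(\alpha_i\delta_i+\beta_i\gamma_i)=1$ produces the clean algebraic identity
\[
\sum_{i=1}^s (\gamma_i c_i+\alpha_i d_i)=y_2,
\]
and symmetrically $\sum_{i=1}^s(\delta_i c_i+\beta_i d_i)=y_1$. Hence $y_1,y_2\in\spn{c_i,d_i}$; applying this to a minimal representation of $P_1+P_2$ and invoking \autoref{cla:irr-quad-span} immediately yields $y_1,y_2\in\MS(P_1+P_2)$, which is the second assertion.

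For the lower bound $\rank_s(P_1+P_2)\ge \rank_s(P_1)+1$, I would argue by contradiction assuming $s=r:=\rank_s(P_1)$. The $x$-only identity gives $P_1=\sum c_i^x d_i^x$, which is then a minimal representation of $P_1$, so $\spn{c_i^x,d_i^x}=\MS(P_1)$ by \autoref{cla:irr-quad-span}. The space $V=\spn{c_i,d_i}_{i=1}^{s}$ projects along the $x$-coordinates onto $\MS(P_1)$ and, by the previous paragraph, contains $\spn{y_1,y_2}$; hence $\dim V\ge \dim\MS(P_1)+2$. But $V$ is generated by $2s=2r$ forms, so $\dim\MS(P_1)\le 2r-2$. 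To close the contradiction, diagonalize $P_1$ over $\C$ within $\MS(P_1)$ as $\sum_{j=1}^t z_j^2$ with $t\le\dim\MS(P_1)\le 2r-2$, then pair the squares via $u^2+v^2=(u+iv)(u-iv)$ to write $P_1$ as a sum of at most $\lceil t/2\rceil\le r-1$ products, contradicting $\rank_s(P_1)=r$. The main subtlety is spotting the algebraic identity expressing $y_2$ (and $y_1$) as an explicit linear combination of the $c_i$'s and $d_i$'s; the remaining steps are bookkeeping together with the standard $\C$-diagonalization of quadratic forms.
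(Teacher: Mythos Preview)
Your argument is correct, but it takes a different route from the paper's. The paper argues by substitution: setting $y_1=0$ turns a putative length-$r$ representation $\sum a_{2i-1}a_{2i}$ of $P_1+y_1y_2$ into a length-$r$ (hence minimal) representation of $P_1$, so each $a_i|_{y_1=0}\in\MS(P_1)\subset\spn{x_1,\dots,x_k}$; in particular no $a_i$ has a $y_2$-component, and by symmetry no $y_1$-component, which is an immediate contradiction. The ``moreover'' is then read off from the specific minimal representation $\sum b_{2i-1}b_{2i}+y_1y_2$.

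Your approach instead proves the ``moreover'' first, via the neat explicit identity $\sum_i(\gamma_i c_i+\alpha_i d_i)=y_2$ (valid for \emph{any} representation, not just a minimal one), and then closes the lower bound by a dimension count plus diagonalization, essentially invoking the relation $\rank_s(P_1)=\lceil \rank(P_1)/2\rceil$ over~$\C$. The paper's proof is shorter and avoids diagonalization; yours has the advantage of exhibiting $y_1,y_2$ as explicit linear combinations of the representing forms, and of decoupling the ``moreover'' from the rank computation.
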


\begin{proof}
Denote $\rank_s(P_1) = r$ and assume towards a contradiction that there are $\MVar{a}{2r}$ linear forms in $\C[x_1,\dots,x_k,y_1,y_2]$ such that $P_1+P_2 = \sum\limits_{i=1}^{r}a_{2i-1}a_{2i}$. 
Clearly, $\sum\limits_{i=1}^{r}a_{2i-1}a_{2i} \equiv_{y_1} P_1$. As $\rank_s(P_1) = r$  this is a minimal representation of $P_1$. Hence, for every $i$, $a_i|_{y_1=0} \in\MS(P_1)\subset \CRing{x}{k}$. Moreover, from the minimality of $r$, $a_i|_{y_1=0} \neq 0$. Therefore, as $y_1$ and $y_2$ are linearly independent, we deduce that all the coefficients of $y_2$ in all the $a_i$'s are 0. By reversing the roles of $y_1$ and $y_2$ we can conclude that $\MVar{a}{2r}\subset \CRing{x}{k}$ which  means that $Q$ does not depend on $y_1$ and $y_2$ in contradiction. Consider a minimal representation $P_1 = \sum_{i=1}^{2r} b_{2i-1}b_{2i}$,  from the fact that $\rank_s(P_1+P_2) = r+1$ it follows that  $P_1+P_2 = \sum_{i=1}^{2r} b_{2i-1}b_{2i}+y_1y_2$ is a minimal representation of $P_1+P_2$ and thus $\MS(P_1+P_2) =\MS(P_1) + \spn{y_1,y_2}$. 
\end{proof}

\begin{corollary}\label{cla:intersection}
Let $a$ and $b$ be linearly independent linear forms. Then, if $c,d,e$ and $f$ are linear forms such that $ab+cd=ef$ then $\dim(\spn{a,b}\cap \spn{c,d})\geq 1$.
\end{corollary}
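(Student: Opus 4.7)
The plan is to argue by contradiction. Suppose $\spn{a,b}\cap\spn{c,d}=\{0\}$. I will show that this forces $\rank_s(ab+cd)=2$, contradicting the hypothesis $ab+cd=ef$, since a product of two linear forms has $\rank_s\le 1$. This reduces the corollary to a rank computation under the transversality assumption, and the proof splits naturally according to whether $c,d$ are linearly independent or parallel.

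First I would dispose of the case in which $c,d$ are linearly independent. Then $\spn{a,b}$ and $\spn{c,d}$ together span a four-dimensional subspace, and I can complete $\{c,d\}$ to a basis $\{c,d,z_1,\dots,z_{n-2}\}$ of the ambient space in which $\spn{a,b}\subseteq\spn{z_1,\dots,z_{n-2}}$. Setting $y_1:=c$ and $y_2:=d$, I have $ab\in\C[z_1,\dots,z_{n-2}]$ while $cd=y_1y_2$ involves only the disjoint variables $y_1,y_2$. \autoref{cla:ind-rank} then applies directly and yields $\rank_s(ab+cd)=\rank_s(ab)+1=2$, giving the desired contradiction.

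The more delicate case is when $c,d$ are linearly dependent (with $cd\neq 0$): here $cd=\gamma\ell^2$ for some non-zero $\gamma\in\C$ and non-zero linear form $\ell\in\spn{c,d}$, and the transversality hypothesis forces $\ell\notin\spn{a,b}$. Now \autoref{cla:ind-rank} does not apply because $\ell^2$ is not a product of two independent new variables. I would instead argue directly: assuming $ef=ab+\gamma\ell^2$, reduce modulo $\ell$ in the UFD $\C[\vx]/\langle\ell\rangle$ to obtain $\bar e\cdot\bar f=\bar a\cdot\bar b$. Since $\ell\notin\spn{a,b}$, the residues $\bar a,\bar b$ are non-zero and irreducible, so unique factorization (\autoref{fact:ufd}) forces, up to a swap of $e$ and $f$, the existence of a non-zero scalar $\mu$ and scalars $\lambda_1,\lambda_2$ with $e=\mu a+\lambda_1\ell$ and $f=\mu^{-1}b+\lambda_2\ell$. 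Expanding $ef$ and comparing the coefficients of $a\ell$, $b\ell$, and $\ell^2$ against those of $ab+\gamma\ell^2$ yields $\mu\lambda_2=0$, $\mu^{-1}\lambda_1=0$, and $\lambda_1\lambda_2=\gamma$; the first two equations give $\lambda_1=\lambda_2=0$, whence $\gamma=0$, a contradiction.

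The main obstacle is precisely the second case, since \autoref{cla:ind-rank} is tailored to products of two independent linear forms and does not directly handle parallel factors $c,d$; the modular/UFD argument above is what compensates. The remaining edge case $cd=0$ reduces to $ab=ef$ and does not arise in the intended applications of the corollary, so it can safely be set aside.
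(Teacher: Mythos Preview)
Your proof is correct, but it takes a longer route than the paper intends. The paper lists this as an immediate corollary of \autoref{cla:ind-rank}, and the one-line argument exploits the hypothesis that $a,b$ (not $c,d$) are linearly independent: assuming $\spn{a,b}\cap\spn{c,d}=\{0\}$, choose a complement of $\spn{a,b}$ containing $\spn{c,d}$ and take $y_1:=a$, $y_2:=b$, $P_1:=cd$. Then \autoref{cla:ind-rank} gives $\rank_s(ab+cd)=\rank_s(cd)+1\geq 2$ whenever $cd\neq 0$, contradicting $\rank_s(ef)\leq 1$. The point is that this works uniformly, with no need to split on whether $c,d$ are independent.

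You applied \autoref{cla:ind-rank} with the roles of $(a,b)$ and $(c,d)$ swapped, which forced you to assume $c,d$ independent in Case~1 and then invent a separate modular/UFD argument for Case~2. That argument is fine, but it is doing by hand what \autoref{cla:ind-rank} already gives you once you feed it the pair that is \emph{guaranteed} to be independent. Your remark about the degenerate case $cd=0$ applies equally to the paper's one-step proof.
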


\begin{claim}\label{cla:rank-2-in-V}
Let $a,b,c$ and $d$ be linear forms, and $V$ be a linear space of linear forms. Assume $\{0\} \neq \MS(ab-cd) \subseteq V$ then $\spn{a,b} \cap V\neq \{0\}$.	
\end{claim}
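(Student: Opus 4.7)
The plan is to split on the value of $\rank_s(ab-cd)$, which lies in $\{1,2\}$ since $\MS(ab-cd)\neq\{0\}$ forces $ab-cd\not\equiv 0$. In both cases the strategy is to rewrite the identity $ab-cd = (\text{something})$ in the shape ``sum of two products equals a single product'', so that \autoref{cla:intersection} applies. The only nontrivial input from outside of \autoref{cla:intersection} is the uniqueness of the minimal space (\autoref{cla:irr-quad-span}).

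First I would handle $\rank_s(ab-cd)=2$. Here the expression $ab-cd = a\cdot b + (-c)\cdot d$ uses exactly two products, so it is already a minimal representation. By \autoref{cla:irr-quad-span}, any two minimal representations span the same space, hence $\MS(ab-cd) = \spn{a,b,c,d}$. The hypothesis $\MS(ab-cd)\subseteq V$ then gives $\spn{a,b}\subseteq V$; since $\rank_s=2$ in particular forces $a,b$ to be linearly independent (and hence nonzero), $\spn{a,b}\cap V = \spn{a,b} \neq \{0\}$.

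Next I would handle $\rank_s(ab-cd)=1$, writing $ab-cd=ef$ for linear forms $e,f$, not both zero. If $e,f$ are linearly independent, then $\MS(ab-cd)=\spn{e,f}\subseteq V$; rearranging the identity as $e\cdot f + (-a)\cdot b = (-c)\cdot d$ and applying \autoref{cla:intersection} to the independent pair $(e,f)$ yields $\spn{e,f}\cap \spn{a,b}\neq\{0\}$, and this nonzero vector lies in $V$. If instead $e,f$ are linearly dependent, say $f=\mu e$ with $e\neq 0$, then $ef=\mu e^{2}$ and $\MS(ab-cd)=\spn{e}\subseteq V$; rearranging as $a\cdot b + (-\mu e)\cdot e = c\cdot d$ and applying \autoref{cla:intersection} to the independent pair $(a,b)$ yields $\spn{a,b}\cap \spn{e}\neq\{0\}$, so $e\in\spn{a,b}\cap V$.

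The main obstacle is that \autoref{cla:intersection} requires the designated ``two products'' side to contain a linearly independent pair, so the last subcase uses that $a,b$ are linearly independent. In the fully degenerate configuration where both $(e,f)$ and $(a,b)$ are dependent, the identity collapses to $cd=\beta a^{2}-\mu e^{2}$, which factors over $\C$ as $(\sqrt{\beta}a-\sqrt{\mu}e)(\sqrt{\beta}a+\sqrt{\mu}e)$. Unique factorization in $\CRing{x}{n}$ then forces $\{c,d\}\subset\spn{a,e}$, and a direct inspection of which combinations of $a$ and $e$ can appear as $c$ and $d$ recovers a nonzero element of $\spn{a,b}\cap V$, completing the analysis in the one remaining corner.
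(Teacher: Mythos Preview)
Your case split on $\rank_s(ab-cd)$ together with \autoref{cla:intersection} is a different route from the paper's, which instead works modulo $V$ and invokes unique factorization (\autoref{fact:ufd}) in the quotient to write $c=\lambda_1 a+v_1$, $d=\lambda_2 b+v_2$ with $v_i\in V$. Your argument is clean and correct in the rank-$2$ case and in both ``non-degenerate'' rank-$1$ subcases.

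The gap is the final subcase, where both $(e,f)$ and $(a,b)$ are linearly dependent: the ``direct inspection'' you invoke cannot succeed, because the claim as stated is actually \emph{false} there. Take $a=b=x$, $c=x-y$, $d=x+y$, and $V=\spn{y}$; then $ab-cd=y^{2}$, so $\{0\}\neq\MS(ab-cd)=\spn{y}\subseteq V$, yet $\spn{a,b}\cap V=\spn{x}\cap\spn{y}=\{0\}$. In your factorization $cd=(\sqrt{\beta}\,a-\sqrt{\mu}\,e)(\sqrt{\beta}\,a+\sqrt{\mu}\,e)$ one indeed has $c,d\in\spn{a,e}$, but nothing forces $a$ into $V$. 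The paper's own proof shares the identical blind spot: for this example $\lambda_1 av_2+\lambda_2 bv_1=xy-xy=0$, so the concluding step ``therefore there is a linear combination of $a,b$ in $V$'' is vacuous. The right fix is to add the hypothesis that $a$ and $b$ are linearly independent, after which your degenerate subcase disappears and your proof is complete; this is harmless downstream, since every application of the claim in the paper only uses the consequence $b\in\spn{a}+V$, which holds trivially when $a\sim b$.
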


\begin{proof}
	As $\MS(ab-cd) \subseteq V$ it follows that $ab \equiv_{V} cd$. If both sides are zero then $ab\in \ideal{V}$ and without loss of generality $b\in V$ and the statement holds. If neither sides is zero then from \autoref{fact:ufd} there are linear forms $v_1, v_2 \in V$, and $\lambda_1, \lambda_2 \in \C^\times$ such that, $\lambda_1\lambda_2 = 1$ and without loss of generality $c = \lambda_1a + v_1, d= \lambda_2 b+v_2$. Note that not both $v_1,v_2$ are zero, as $ab-cd \neq 0$. Thus, \[ab-cd = ab - (\lambda_1 a +  v_1)(\lambda_2 b+v_2)= \lambda_1 av_2+\lambda_2 bv_1 + v_1v_2. \]
	As $\MS(ab-cd) \subseteq V$ it follows that $\MS(\lambda_1 av_2+\lambda_2 bv_1) \subseteq V$ and therefore there is a linear combination of $a,b$ in $V$ and the statement holds.
\end{proof}

We end this section with claims that will be useful in our proofs. 
 
\begin{claim}\label{cla:linear-spaces-intersaction }
Let $V = \sum_{i=1}^m V_i$ where $V_i$ are linear subspaces, and for every $i$, $\dim(V_i) = 2$. If for every $i\neq j \in [m]$, $\dim(V_i\cap V_j) = 1$, then either $\dim(\bigcap_{i=1}^m V_i) = 1$ or $\dim(V)=3$. 
\end{claim}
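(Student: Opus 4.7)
The plan is to show that the only alternative to $\dim\bigl(\bigcap_i V_i\bigr)=1$ is $\dim V=3$, by first reducing to the existence of some triple of the $V_i$ with trivial intersection, and then showing (i) that any such triple already spans a $3$-dimensional space, and (ii) that every remaining $V_\ell$ must lie inside this space. Since all $V_i$ are $2$-dimensional and pairwise distinct (they meet in a line, hence cannot coincide), the total intersection $\bigcap_i V_i$ has dimension at most $1$, so I only need to handle the case $\bigcap_{i=1}^m V_i=\{0\}$ and derive $\dim V=3$.

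The first step is a ``local reduction'': I claim that if $\bigcap_{i=1}^m V_i=\{0\}$ then there exist indices $i,j,k$ with $V_i\cap V_j\cap V_k=\{0\}$. Indeed, otherwise every triple $V_1\cap V_2\cap V_k$ is nonzero; but $V_1\cap V_2$ is itself $1$-dimensional, so nonzero containment $V_1\cap V_2\cap V_k\ne\{0\}$ forces $V_1\cap V_2\subseteq V_k$ for every $k\in[m]$, and then $V_1\cap V_2\subseteq\bigcap_{i=1}^m V_i=\{0\}$, contradicting $\dim(V_1\cap V_2)=1$.

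The second step analyzes such a triple. Fix $i,j,k$ with $V_i\cap V_j\cap V_k=\{0\}$ and set $L_{ab}\eqdef V_a\cap V_b$ for $a,b\in\{i,j,k\}$. Each $L_{ab}$ is a line, and the three lines are pairwise distinct: if two of them coincided, that common line would sit in all three $V_a$'s, contradicting the trivial triple intersection. Hence inside the plane $V_i$ the two distinct lines $L_{ij},L_{ik}$ already span it, so $V_i=L_{ij}+L_{ik}$, and similarly for $V_j,V_k$. Therefore $V_i+V_j+V_k=L_{ij}+L_{ik}+L_{jk}$, a sum of three lines, and so has dimension at most $3$. Dimension $\le 2$ is impossible: it would force $V_i=V_j=V_k$ (they would all equal the common plane containing the three lines), contradicting the distinctness already noted. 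Hence $\dim(V_i+V_j+V_k)=3$.

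The final step is to show $V_\ell\subseteq V_i+V_j+V_k$ for every other index $\ell$, which immediately gives $\dim V=3$. Suppose for contradiction that $V_\ell\not\subseteq V_i+V_j+V_k$. Then $\dim\bigl(V_\ell\cap(V_i+V_j+V_k)\bigr)\le 1$. But each of the lines $V_\ell\cap V_i$, $V_\ell\cap V_j$, $V_\ell\cap V_k$ lies in this intersection (and is $1$-dimensional by hypothesis), so all three of them must coincide as a single line $L$. This $L$ then lies simultaneously in $V_i$, $V_j$ and $V_k$, giving $\{0\}\ne L\subseteq V_i\cap V_j\cap V_k=\{0\}$, a contradiction. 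The only delicate point in the whole argument is the local reduction in the first step; everything else is a straightforward dimension count using that each $V_i$ is a plane and that pairwise intersections are lines.
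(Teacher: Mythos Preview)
Your proof is correct and follows essentially the same outline as the paper's: assume the total intersection is trivial, locate three planes $V_i,V_j,V_k$ with $V_i\cap V_j\cap V_k=\{0\}$, argue that their sum is $3$-dimensional, and then trap every remaining $V_\ell$ inside this sum. The paper does this with explicit basis vectors: it fixes $w\in V_1\cap V_2$, writes $V_1=\spn{w,u_1}$, $V_2=\spn{w,u_2}$, picks $V_i$ avoiding $w$, and then runs a case split on whether $w\in V_j$ to conclude $V_j\subseteq\spn{w,u_1,u_2}$. Your final step is a bit cleaner and coordinate-free: you observe that if $V_\ell\not\subseteq V_i+V_j+V_k$ then the three pairwise-intersection lines $V_\ell\cap V_i$, $V_\ell\cap V_j$, $V_\ell\cap V_k$ are forced into a single $1$-dimensional subspace of $V_\ell$ and hence coincide, contradicting $V_i\cap V_j\cap V_k=\{0\}$. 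This avoids the case analysis entirely, at the cost of being slightly less concrete about what the spanning $3$-space actually is.
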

\begin{proof}
Let  $w \in V_1\cap V_2$. Complete it to basis of $V_1$ and $V_2$: $V_1 = \spn{u_1,w}$ and  $V_2 = \spn{u_2,w}$. Assume that $\dim(\bigcap_{i=1}^m V_i) = 0$. Then, there is some $i$ for which $w \notin V_i$.
Let
$x_1 \in V_i\cap V_1$, and so $x_1 = \alpha_1 u_1 + \beta_1 w$, where $\alpha_1 \neq 0$. Similarly, let $x_2 \in V_i\cap V_2$. Since $w \notin V_i$, $x_2 = \alpha_2 u_2 + \beta_2 w$, where $\alpha_2 \neq 0$.  Note that $x_1 \notin \spn{ x_2}$, as $\dim(V_1\cap V_2) = 1$, and $w$ is already in their intersection. Thus, we have $V_i = \spn{x_1,x_2} \subset \spn{w, u_1, u_2}$.

Now, consider any other $j \in [m]$. If $V_j$ does not contain $w$, we can apply the same argument as we did for $V_i$ and conclude that $V_j \subset \spn{w,u_1,u_2}$. On the other hand, if $w \in V_j$, then let $x_j \in V_i\cap V_j$, it is easy to see that $x_j, w$ are linearly independent and so $V_j = \spn{w,x_j} \subset \spn{w,V_i} \subseteq \spn{w, u_1,u_2}$. Thus, in any case $V_j \subset \spn{w, u_1,u_2}$. In particular, $\sum_j V_j \subseteq \spn{w, u_1,u_2}$ as claimed.
\end{proof}
 \ifEK
We will also use a colored version of the previous claim.

\begin{claim}\label{cla:colored-linear-spaces-intersaction}
	Let $m \geq 2$ be an integer. For $i \in [m]$  let $V_i = \sum_{j=1}^{m_i} V^j_i$ where $V^j_i$ are distinct linear subspaces that satisfy that for every $i,j$, $\dim(V^j_i) = 2$. If for every $i\neq i' \in [m]$, $j \in [m_i], j'\in [m_{i'}]$, $\dim(V^j_i\cap V^{j'}_{i'}) = 1$.
	 Then there exists $w\neq \vec{0}$, and a linear subspace, $U$, such that $\dim(U)\leq 4$ and for every $i\in [m], j\in [m_i]$ either $w \in V^j_i$ or $V^j_i \subseteq U$.  
\end{claim}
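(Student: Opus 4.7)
The strategy is to adapt the proof of the uncolored \autoref{cla:linear-spaces-intersaction } to the colored setting by fixing two spaces within a single color and then routing every other space through cross-color intersections. First, if $m_i = 1$ for every $i$, the hypothesis matches exactly that of \autoref{cla:linear-spaces-intersaction } applied to the unique representatives $\{V_i^1\}_{i\in[m]}$, and the conclusion follows directly (take $w$ to be a generator of the common intersection, or take $U$ to be the total $3$-dimensional span, together with any nonzero $w$). Otherwise, fix a color $i_0$ with $m_{i_0}\geq 2$, set $V := V_{i_0}^1$ and $V' := V_{i_0}^2$, and split into cases on $\dim(V\cap V')$.

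Case $V\cap V' = \{\vec 0\}$: Set $U := V + V'$, a $4$-dimensional subspace. For every $V_i^j$ with $i\neq i_0$, the two $1$-dimensional intersections $V_i^j\cap V$ and $V_i^j\cap V'$ are distinct (since $V\cap V' = \{\vec 0\}$) and together span $V_i^j$, so $V_i^j\subseteq U$. A same-color space $V_{i_0}^j$ with $j\geq 3$ is either already contained in $U$, or else $V_{i_0}^j\cap U$ is exactly $1$-dimensional, and every cross-color intersection $V_{i_0}^j\cap V_i^{j'}$ must equal this single line. This common line then lies in every non-color-$i_0$ space, so we take $w$ to be a generator of it; if no such exceptional $V_{i_0}^j$ exists, every space lies in $U$ and we choose $w$ to be any nonzero vector.

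Case $V\cap V' = \spn{w_0}$ with $w_0\neq \vec 0$: Set $U_0 := V + V'$, of dimension $3$, and take $w := w_0$. For every $V_i^j$ with $i\neq i_0$, the two intersections $V_i^j\cap V$ and $V_i^j\cap V'$ are either distinct (spanning $V_i^j\subseteq U_0$) or both coincide with the unique $1$-dimensional $V\cap V' = \spn{w_0}$, giving $w_0\in V_i^j$. For a same-color space $V_{i_0}^j$ with $j\geq 3$ that neither contains $w_0$ nor lies in $U_0$, the intersection $V_{i_0}^j\cap V_i^{j'}$ with any non-color-$i_0$ space $V_i^{j'} = \spn{w_0,v'}$ having $v'\notin U_0$ must be of the form $\spn{\alpha w_0 + v'}$ (not $\spn{w_0}$, since $w_0\notin V_{i_0}^j$), which forces $v'\in V_{i_0}^j + \spn{w_0}$ and hence pins down the direction $v'$ modulo $U_0$ uniquely (two independent such directions would lift to a $2$-dimensional set of vectors in $V_{i_0}^j$ and force $w_0\in V_{i_0}^j$). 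Calling this unique direction $v^*$, we set $U := U_0 + \spn{v^*}$ of dimension at most $4$; every non-color-$i_0$ space lies in $U$ or contains $w_0$, and every exceptional $V_{i_0}^j$ lies in $U$.

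The main obstacle is the second case, specifically the analysis of same-color spaces $V_{i_0}^j$ with $j\geq 3$ that neither contain $w_0$ nor lie in $U_0$: the hypothesis places no direct constraint on their intersection with $V$ or $V'$, so the argument must route entirely through cross-color spaces. The key technical point is showing that the augmenting direction $v^*$ modulo $U_0$ is unique --- the crux being the computation that two independent out-of-$U_0$ directions would force $w_0$ into $V_{i_0}^j$, contradicting the case assumption, and thereby keeping $\dim U\leq 4$.
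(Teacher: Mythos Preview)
Your Case A has a genuine gap in how it handles multiple ``exceptional'' same-color planes. You correctly argue that if some $V_{i_0}^{j}$ with $j\ge 3$ is not contained in $U=V+V'$, then every cross-color plane must contain the line $\ell_j:=V_{i_0}^{j}\cap U$, and you propose taking $w$ to generate $\ell_j$. But nothing rules out a second exceptional $V_{i_0}^{j'}$ with $\ell_{j'}\neq\ell_j$: then $V_{i_0}^{j'}$ neither contains $w$ (since $V_{i_0}^{j'}\cap U=\ell_{j'}$) nor lies in $U$, and your construction gives no way to absorb it. Case B has a parallel flaw: from $\alpha w_0+v'\in V_{i_0}^{j}$ and $\beta w_0+v''\in V_{i_0}^{j}$ with $v',v''$ independent modulo $U_0$ you only get $V_{i_0}^{j}=\spn{\alpha w_0+v',\,\beta w_0+v''}$, which does \emph{not} force $w_0\in V_{i_0}^{j}$ (take $\alpha=\beta=0$); so neither the uniqueness of $v^*$ nor the inclusion $V_{i_0}^{j}\subseteq U_0+\spn{v^*}$ follows.

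In fact these gaps cannot be repaired, because the statement as written is false. In a space with basis $e_1,\dots,e_6$ take $m=2$, $m_2=1$, and
\[
V_1^1=\spn{e_1,e_2},\ \ V_1^2=\spn{e_3,e_4},\ \ V_1^3=\spn{e_1+e_3,e_5},\ \ V_1^4=\spn{e_1-e_3,e_6},\ \ V_2^1=\spn{e_1,e_3}.
\]
Every cross-color intersection is one-dimensional, yet the four color-$1$ planes meet pairwise only in $\{0\}$, so any nonzero $w$ lies in at most one of them; one checks that any three of them span a $5$-dimensional space, so no $U$ with $\dim U\le 4$ suffices. The paper's proof uses a different case split (on whether some color $i$ has $\bigcap_j V_i^j\neq\{0\}$) and breaks on the same example: it lands in the paper's first case via color $2$, but that case's argument tacitly uses two subspaces from the chosen color, which color $2$ does not have.
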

\begin{proof}

We split the proof into two cases:
\begin{itemize}
	\item There exists $i\in [m]$ such that $\cap_{j=1}^{m_i} V^j_i \neq \{\vec{0}\}$.
	
	Assume, without loss of generality, that $i=1$. Denote $\vec{0} \neq w \in \cap_j V^j_1 $, and let $V_1^1 = \spn{w,x_1}, V_1^2 = \spn{w,x_2}$.
	 For every $1 \leq i \leq m$ and $j \in [m_i]$, if $w \in V_i^j$ then the first part of the statement holds. On the other hand, if $w\notin V_i^j$ then let $0\neq z_1 \in V_1^1 \cap V_i^j$. Thus,  $z_1 = \alpha_1 w + \beta_1 x_1$, for $\beta_1 \neq 0$. Similarly, let $z_2 \in V_1^2 \cap V_i^j$, and so, $z_2 = \alpha_2 w + \beta_2 x_2$, where $\beta_2 \neq 0$. As $V_1^1\neq V_1^2$ it follows that $z_1 \notin \spn{z_2}$ and therefore $V_i^j \subseteq \spn{w,x_1,x_2}$. Thus, the statement holds with $w$ and $U = \spn{w,x_1,x_2}$.
	 
	 \item For every $i\in [m]$, $\cap_{j=1}^{m_i} V^j_i = \{\vec{0}\}$. 
	 
	 Consider $0 \neq w \in V_1^1 \cap V^1_2$ and let $V^1_1 = \spn{w, x_1}$ and $V^1_2= \spn{w,y_1}$.
	 Set $U = \spn{w,x_1,y_1}$. If for every  $V_i^j$ it holds that $w \in V_i^j$ or $V_i^j \subseteq U$, then the statement holds. Assume then that there is $V_i^j$ such that $w \notin V_i^j$ and $V_i^j \not \subseteq U$. If $i \neq 1,2$ consider the intersections of $V_i^j$ with $V_1^1$ and $V_2^1$. Similarly to the previous item, we obtain that $V_i^j \subseteq U$. Thus, we only have to consider the case that $i \in \{1,2\}$. Assume without loss of generality that $i=1$ and $j=2$.
	 Let $z_1 \in V_1^2 \cap V_2^1$. Hence, $z_1 = \alpha_1 w + \beta_1 y_1$, where $\beta_1 \neq 0$. It follows that $V_1^2 = \spn{z_1, x_2}$ for some $x_2\not\in U$. We now show that $U' = \spn{w,x_1,y_1,x_2}$ satisfies the requirements of the theorem (with $w$).
	 
	 Since $V_1^2 \not \subseteq U$ it holds that $x_2 \notin U$. Since $\cap_j V_2^j = \{\vec{0}\}$ we can assume without loss of generality that $z_1 \notin V_2^2$. 
	 Let  $z_2 \in V_2^2 \cap V_1^1$ and  $z_3 \in V_2^2 \cap V_1^2$. We have that $z_2 = \alpha_2 w + \beta_2 x_1$, and   $z_3 = \alpha_3 z_1 + \beta_3 x_2$ where $\beta_3 \neq 0$ (since $z_1 \notin V_2^2$). Note that  $z_3 \notin \spn{z_2}$ as otherwise we would have that $x_2\in\spn{w,x_1,z_1}=\spn{w,x_1,y_1}=U$ in contradiction. Hence, $V_2^2=\spn{z_2,z_3} \subset  U'$. A similar argument will show that for every $j$, $V_2^j \subseteq \spn{w,x_1,y_1,x_2}=U'$. 
	 
	 We now show a similar result for $V_1^j$.
	 Let $V_1^j$ be such that $w\notin V_1^j$, and let $z_4 \in V_2^1 \cap V_1^j$. Then $z_4 = \alpha_4 w + \beta_4 y_1$ where, $\beta_4 \neq 0$. Let $z_5 \in V_2^2 \cap V_1^j$, then $z_5 = \alpha_5 z_2 + \beta_5 z_3$. As $x_2 \notin U= \spn{w,x_1,y_1}$, it follows that $z_5 \notin \spn{z_4}$ and thus $V_1^j=\spn{z_4,z_5} \subset \spn{w,x_1,x_2,y_1}$ and the claim holds for $V_1^j$ as well.
\end{itemize}
\end{proof}
\fi
\subsection{Projection Mappings}\label{sec:z-map}

In this section we present and apply a new technique which allows us to simplify the structure of quadratic polynomials.
Naively, when we want to simplify a polynomial equation, we can project it on a subset of the variables.
Unfortunately, this projection does not necessarily preserve pairwise linear independence, which is a crucial property in our proofs. 
To remedy this fact, we present a set of mappings, which are somewhat similar to projections, but do preserve pairwise linear independence among polynomials.
\begin{definition}\label{def:z-mapping}
Let $V = \spn{\MVar{v}{\Delta}}\subseteq \spn{x_1,\ldots,x_n}$ be a $\Delta$-dimensional linear space of linear forms, and let $\{\MVar{u}{{n-\Delta}}\}$ be a basis for $V^\perp$. For $\vaa = (\MVar{\alpha}{\Delta})\in \C^{\Delta}$ we define $T_{\vaa, V} : \CRing{x}{n} \mapsto \C[\MVar{x}{n},z]$, where $z$ is a new variable, to be the linear map given by the following action on the basis vectors: $T_{\vaa, V}(v_i) = \alpha_i z$ and $T_{\vaa, V}(u_i)=u_i$.
\end{definition}

\begin{observation}
$T_{\vaa, V}$ is a linear transformation and is also a ring homomorphism. This follows from the fact that a basis for $\spn{\MVar{x}{n}}$ is a basis for $\CRing{x}{n}$ as $\C$-algebra. 
\end{observation}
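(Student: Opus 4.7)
The plan is to establish the two assertions—that $T_{\vaa,V}$ is $\C$-linear and that it is a ring homomorphism—separately, both as corollaries of the universal property of polynomial rings. First, I would observe that \autoref{def:z-mapping} specifies $T_{\vaa,V}$ on a $\C$-basis of the space of linear forms, namely $\{v_1,\ldots,v_\Delta,u_1,\ldots,u_{n-\Delta}\}$, and then extends to all of $\CRing{x}{n}$. Linearity as a map of $\C$-vector spaces is then essentially built into the definition: once the action on a basis is fixed, the linear extension exists and is unique, so no further verification is required.

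For the multiplicative property, I would appeal to the fact that $\CRing{x}{n}$ is the symmetric algebra—equivalently, the free commutative $\C$-algebra—on the vector space $\spn{\MVar{x}{n}}$. By the universal property, any $\C$-linear map from $\spn{\MVar{x}{n}}$ into a commutative $\C$-algebra $R$ (here $R = \C[\MVar{x}{n},z]$) extends uniquely to a $\C$-algebra homomorphism $\CRing{x}{n} \to R$. The data defining $T_{\vaa,V}$ on the basis $\{v_i,u_j\}$ is exactly such a map, landing in the linear forms of $R$, and its unique algebra extension must agree with the linear extension already constructed. Hence $T_{\vaa,V}$ is multiplicative, and the two characterizations of the map coincide.

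There is no genuine obstacle in this observation—it is a formal consequence of the universal property hinted at in the statement, and the only reason it merits being recorded is that subsequent arguments will freely use both properties when manipulating products of quadratics under $T_{\vaa,V}$. For a reader preferring a direct computation, one can instead write a monomial in the basis-adapted coordinates as $v_{i_1}\cdots v_{i_k}u_{j_1}\cdots u_{j_\ell}$, note that $T_{\vaa,V}$ sends it to $\alpha_{i_1}\cdots\alpha_{i_k}\, z^k\, u_{j_1}\cdots u_{j_\ell}$, and verify multiplicativity on a product of two such monomials; the general case then follows by bilinearity of multiplication and the linearity of $T_{\vaa,V}$ already established.
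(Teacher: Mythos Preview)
Your proposal is correct and follows essentially the same reasoning as the paper, which simply records the one-line justification already in the statement: a basis for the linear forms is a generating set for $\CRing{x}{n}$ as a $\C$-algebra, so the universal property of the polynomial (free commutative) algebra yields the claim. Your write-up is a faithful and more detailed unpacking of that sentence, with the optional monomial computation as a concrete alternative; nothing further is needed.
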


\begin{claim}\label{cla:res-z-ampping}
	Let $V\subseteq \spn{x_1,\ldots,x_n}$ be a $\Delta$-dimensional linear space of linear forms. Let $F$ and  $G$ be two polynomials that share no common irreducible factor. Then, with probability $1$ over the choice of $\vaa \in [0,1]^{\Delta}$ (say according to the uniform distribution), $T_{\vaa, V}(F)$ and $T_{\vaa, V}(G)$ do not share a common factor that is not a polynomial in $z$.
\end{claim}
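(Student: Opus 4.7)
The plan is to work in the coordinates supplied by \autoref{def:z-mapping}: writing $\CRing{x}{n} = \C[u_1,\ldots,u_{n-\Delta}, v_1,\ldots,v_\Delta]$ in the basis where $u_1,\ldots,u_{n-\Delta}$ span $V^\perp$ and $v_1,\ldots,v_\Delta$ span $V$, the map $T_{\vaa,V}$ becomes the ring homomorphism $\C[u,v]\to\C[u,z]$ that fixes every $u_i$ and sends $v_j\mapsto \alpha_j z$. An irreducible $p\in\C[u,z]$ is a polynomial in $z$ alone precisely when $\deg_{u_i}p=0$ for every $i$, so it suffices to show that, for each $i\in[n-\Delta]$, the set of $\vaa\in\C^\Delta$ for which $T_{\vaa,V}(F)$ and $T_{\vaa,V}(G)$ share a common factor of positive $u_i$-degree is contained in a proper Zariski-closed subset.

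Fix $i$. If either $F$ or $G$ has $u_i$-degree zero then, as long as $T_{\vaa,V}$ does not annihilate it (a generic condition since both polynomials are nonzero), any common factor automatically has $u_i$-degree zero. In the substantive case $\deg_{u_i}F, \deg_{u_i}G\geq 1$, write $F = f_d u_i^d + \cdots$ and $G = g_e u_i^e + \cdots$ with nonzero $f_d, g_e\in\C[u_1,\ldots,\widehat{u_i},\ldots,u_{n-\Delta},v]$. Coprimality of $F,G$ in the UFD $\C[u,v]$ together with \autoref{thm:res} gives $R_i \eqdef \res_{u_i}(F,G)\neq 0$. For $\vaa$ in the Zariski-open set where $T_{\vaa,V}(f_d)\neq 0$ and $T_{\vaa,V}(g_e)\neq 0$, the $u_i$-degrees of $F$ and $G$ are preserved under $T_{\vaa,V}$, so applying $T_{\vaa,V}$ entry-wise to the Sylvester matrix yields $\res_{u_i}(T_{\vaa,V}(F),T_{\vaa,V}(G)) = T_{\vaa,V}(R_i)$.

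It remains to show that $T_{\vaa,V}(R_i)\neq 0$ for generic $\vaa$. Expanding $R_i = \sum_{\vec{d},\vec{e}} c_{\vec{d},\vec{e}}\, u^{\vec{d}} v^{\vec{e}}$ gives $T_{\vaa,V}(R_i) = \sum c_{\vec{d},\vec{e}}\, u^{\vec{d}}\, \vaa^{\vec{e}}\, z^{|\vec{e}|}$, and the condition $T_{\vaa,V}(R_i)\equiv 0$ forces a nonzero polynomial in $\vaa$ (arising from any nonzero coefficient of $R_i$) to vanish, which is again a proper Zariski-closed condition. Applying \autoref{thm:res} to $T_{\vaa,V}(F)$ and $T_{\vaa,V}(G)$ then rules out common factors of positive $u_i$-degree. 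Taking the union over $i\in[n-\Delta]$ of the finitely many proper algebraic subsets of $\C^\Delta$ excluded above produces a single proper algebraic subset $Z$, outside of which every common irreducible factor of $T_{\vaa,V}(F)$ and $T_{\vaa,V}(G)$ has $u_i$-degree zero for every $i$ and hence lies in $\C[z]$. Since $Z\cap[0,1]^\Delta$ has Lebesgue measure zero, the claim follows.

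The main obstacle --- and the reason for the bookkeeping above --- is that the resultant does not commute with arbitrary specializations: if $T_{\vaa,V}$ annihilates the leading $u_i$-coefficient of $F$ or $G$, the $u_i$-degree collapses and the identity $T_{\vaa,V}(\res_{u_i}(F,G)) = \res_{u_i}(T_{\vaa,V}(F),T_{\vaa,V}(G))$ can fail. Carefully excising the codimension-$\geq\!1$ locus where this degeneration occurs is what makes the probability-$1$ conclusion clean; everything else is routine.
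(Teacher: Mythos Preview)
Your proof is correct and follows essentially the same approach as the paper's: both work in the $(u,v)$-coordinates, use the resultant with respect to a $u_i$ variable, invoke the ring-homomorphism property of $T_{\vaa,V}$ to commute the resultant past the specialization once the $u_i$-degree is preserved, and then appeal to a Schwartz--Zippel-type argument to see that $T_{\vaa,V}(\res_{u_i}(F,G))$ is generically nonzero. Your version is in fact a bit more careful than the paper's: you explicitly iterate over all $i\in[n-\Delta]$ and take the union of the bad loci (whereas the paper writes a single ``without loss of generality, $u_1$'' that, strictly speaking, depends on $\vaa$), and you separate out the degenerate cases where one of $F,G$ has $u_i$-degree zero or where the leading $u_i$-coefficient is killed by $T_{\vaa,V}$.
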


\begin{proof}
	Let $\lbrace\MVar{u}{{n-\Delta}}\rbrace$ be a basis for $V^\perp$. We think of $F$ and $G$ as polynomials in $\C[\MVar{v}{\Delta}, \MVar{u}{n-\Delta}]$.
	As $T_{\vaa,V}: \C[\MVar{v}{\Delta}, \MVar{u}{n-\Delta}] \rightarrow \C[z, \MVar{u}{n-\Delta}]$,  
	\autoref{thm:res} implies that if $T_{\vaa, V}(F)$ and  $T_{\vaa, V}(G)$ share a common factor that is not a polynomial in $z$, then, without loss of generality, their resultant with respect to $u_1$ is zero. \autoref{thm:res} also implies that the resultant of $F$ and $G$ with respect to $u_1$ is not zero. Observe that with probability $1$ over the choice of $\vaa$, we have that $\deg_{u_1}(F) = \deg_{u_1}(T_{\vaa, V}(F))$ and $\deg_{u_1}(G) = \deg_{u_1}(T_{\vaa, V}(G))$.
	As $T_{\vaa, V}$ is a ring homomorphism this implies that $Res_{u_1}(T_{\vaa, V}(G), T_{\vaa, V}(F)) = T_{\vaa, V}(Res_{u_1}(G, F))$.  The Schwartz-Zippel-DeMillo-Lipton lemma now implies that sending each basis element of $V$ to a random multiple of $z$, chosen uniformly from $(0,1)$ will keep the resultant non zero with probability $1$. This also means that $T_{\vaa, V}(F)$ and $T_{\vaa, V}(G)$ share no common factor.
\end{proof}

\begin{corollary}\label{cla:still-indep}
	Let $V$ be a $\Delta$-dimensional linear space of linear forms. Let $F$ and  $G$ be two linearly independent, irreducible quadratics, such that $\MS(F),\MS(G)\not\subseteq V$. Then, with probability $1$ over the choice of $\vaa \in [0,1]^{\Delta}$ (say according to the uniform distribution), $T_{\vaa, V}(F)$ and $T_{\vaa, V}(G)$ are linearly independent.
\end{corollary}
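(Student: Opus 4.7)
The plan is to combine \autoref{cla:res-z-ampping} with a direct analysis of the irreducible factorization of $T_{\vaa,V}(F)$. Since $F$ and $G$ are irreducible and linearly independent, they cannot share a common irreducible factor (such a factor would be a scalar multiple of each, forcing linear dependence). Thus \autoref{cla:res-z-ampping} applies and tells us that, with probability $1$ over $\vaa$, the polynomials $T_{\vaa,V}(F)$ and $T_{\vaa,V}(G)$ share no common factor outside $\C[z]$.

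Next I would show that, generically, $T_{\vaa,V}(F)$ is itself nonzero and admits an irreducible factor outside $\C[z]$, and the same for $G$. Fix a basis $\MVar{u}{n-\Delta}$ of $V^\perp$ and decompose $F = F_V + F_M + F_U$, where $F_V\in\C[V]$ is the pure $V$-part, $F_U\in\C[\vu]$ is the pure $V^\perp$-part, and $F_M$ is the mixed part (one factor from each). Applying $T_{\vaa,V}$ yields
\[ T_{\vaa,V}(F) \;=\; F_V(\vaa)\,z^2 \;+\; z\,\tilde F_M(\vu) \;+\; F_U(\vu), \]
where $\tilde F_M$ is a linear form in $\vu$ whose coefficients are linear in $\vaa$. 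The key translation is that $\MS(F)\not\subseteq V$ is equivalent to $F\notin\C[V]$, which is equivalent to $F_U\neq 0$ or $F_M\neq 0$. In the first case $T_{\vaa,V}(F)$ has a nonzero constant-in-$z$ term that depends on the $u_i$'s for every $\vaa$; in the second, the Schwartz--Zippel lemma ensures that $\tilde F_M(\vu)$ is a nonzero linear form in $\vu$ for $\vaa$ outside a proper algebraic subvariety of $\C^\Delta$. Either way, off a measure-zero set, $T_{\vaa,V}(F)$ is nonzero and not a polynomial in $z$ alone. Since $T_{\vaa,V}(F)$ is quadratic, if every irreducible factor lay in $\C[z]$ then their product would also lie in $\C[z]$; hence at least one irreducible factor must lie outside $\C[z]$. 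The identical reasoning applies to $G$.

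Finally, suppose $T_{\vaa,V}(F)$ and $T_{\vaa,V}(G)$ are linearly dependent for a set of $\vaa$ of positive measure. After discarding the measure-zero set on which either $T_{\vaa,V}(F)$ or $T_{\vaa,V}(G)$ vanishes, dependence forces $T_{\vaa,V}(F)=\lambda\,T_{\vaa,V}(G)$ for some $\lambda\in\C^\times$, so the two polynomials share every irreducible factor. In particular they share an irreducible factor outside $\C[z]$, contradicting \autoref{cla:res-z-ampping}. Taking the (finite) union of the various measure-zero exceptional sets, the conclusion holds with probability $1$.

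The main obstacle in this plan is the second paragraph: translating the geometric hypothesis $\MS(F)\not\subseteq V$ into the concrete algebraic statement that, generically, $T_{\vaa,V}(F)$ has nontrivial dependence on $V^\perp$ and therefore a factor outside $\C[z]$. Once this is in hand, everything else reduces to unwinding the conclusion of the previous claim.
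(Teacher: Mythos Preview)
Your proof is correct and follows essentially the same approach as the paper's. The paper's argument is more terse --- it simply invokes Schwartz--Zippel--DeMillo--Lipton to conclude that $T_{\vaa,V}(F)$ and $T_{\vaa,V}(G)$ are not polynomials in $z$ alone, without writing out the explicit $F_V+F_M+F_U$ decomposition --- but the logical skeleton (no common factor outside $\C[z]$ by \autoref{cla:res-z-ampping}, each image has a factor outside $\C[z]$, hence independence) is identical.
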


\begin{proof}
	As $F$ and $G$ are irreducible  they share no common factors. \autoref{cla:res-z-ampping} implies that  $T_{\vaa, V}(F)$ and $T_{\vaa, V}(G)$ do not share a common factor that is not a polynomial in $z$.  The Schwartz-Zippel-DeMillo-Lipton implies that with probability $1$,   $T_{\vaa, V}(F)$ and $T_{\vaa, V}(G)$ are not polynomials in $z$, and therefore they are linearly independent.
\end{proof}

\begin{claim}\label{cla:z-map-rank}
Let $Q$ be an irreducible quadratic polynomial, and $V$  a $\Delta$-dimensional linear space.
Then for every $\vaa \in \C^{\Delta}$,  $\rank_s(T_{\vaa, V}(Q)) \geq \rank_s(Q)-\Delta$.
\end{claim}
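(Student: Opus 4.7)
The plan is to factor the claim through the simpler, already-proved rank bound under a substitution (\autoref{cla:rank-mod-space}). Concretely, I would show two inequalities
\[
\rank_s\bigl(T_{\vaa,V}(Q)\bigr) \;\geq\; \rank_s\bigl(T_{\vaa,V}(Q)\bigm|_{z=0}\bigr) \;=\; \rank_s\bigl(Q|_{V=0}\bigr) \;\geq\; \rank_s(Q)-\Delta,
\]
and then chain them together. The middle equality is the heart of the argument, the first inequality is a standard substitution remark, and the last inequality is \autoref{cla:rank-mod-space} applied verbatim.

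First I would verify the middle equality. Fix bases $v_1,\ldots,v_\Delta$ of $V$ and $u_1,\ldots,u_{n-\Delta}$ of $V^\perp$, so that together they form a basis of $\spn{x_1,\ldots,x_n}$. By definition, $T_{\vaa,V}(v_i)=\alpha_i z$ and $T_{\vaa,V}(u_j)=u_j$; hence, after further setting $z=0$, the composed map sends every $v_i$ to $0$ and fixes every $u_j$. Since $T_{\vaa,V}$ is a ring homomorphism and evaluation at $z=0$ is also a ring homomorphism, the composition coincides with the substitution $V=0$ on $Q$, giving $T_{\vaa,V}(Q)|_{z=0} = Q|_{V=0}$.

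Next I would justify the first inequality: substitution can only decrease $\rank_s$. Writing any minimal representation $T_{\vaa,V}(Q)=\sum_{i=1}^s b_{2i-1}b_{2i}$ and setting $z=0$ yields $T_{\vaa,V}(Q)|_{z=0}=\sum_{i=1}^s (b_{2i-1}|_{z=0})(b_{2i}|_{z=0})$, a sum of $s$ products of linear forms, so its $\rank_s$ is at most $s=\rank_s(T_{\vaa,V}(Q))$. Finally, applying \autoref{cla:rank-mod-space} to $Q$ and the $\Delta$-dimensional space $V$ gives $\rank_s(Q|_{V=0})\geq \rank_s(Q)-\Delta$. Chaining these three facts yields the desired inequality.

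There is no real obstacle here; the only subtlety worth spelling out is that the equality $T_{\vaa,V}(Q)|_{z=0}=Q|_{V=0}$ holds for \emph{every} choice of $\vaa$ (no genericity needed), because $z=0$ annihilates the image $\alpha_i z$ regardless of $\alpha_i$. This is why the claim is stated for every $\vaa\in\C^\Delta$ rather than only almost every $\vaa$, in contrast to \autoref{cla:still-indep}.
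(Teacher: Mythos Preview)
Your proof is correct and follows exactly the same approach as the paper: the paper's proof is the single line $\rank_s(T_{\vaa, V}(Q)) \geq \rank_s(T_{\vaa, V}(Q)|_{z=0}) = \rank_s(Q|_{V=0}) \geq \rank_s(Q)-\Delta$, citing \autoref{cla:rank-mod-space} for the last step. You have simply spelled out the justifications for each step in more detail.
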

\begin{proof}
$\rank_s(T_{\vaa, V}(Q)) \geq \rank_s(T_{\vaa, V}(Q)|_{z=0}) = \rank_s(Q|_{V=0}) \geq \rank_s(Q)-\Delta$, where the last inequality follows from \autoref{cla:rank-mod-space}.
\end{proof}

\begin{claim}\label{cla:z-map-dimension}
Let $\cQ$ be a set of quadratics, and $V$ be a $\Delta$-dimensional linear space. Then, if there are linearly independent vectors, $\{\vaa^1,\dots, \vaa^{\Delta}  \}\subset \C^{\Delta}$, such that, for every $i$,\footnote{Recall that $\MS(T_{\vaa^i,V}(\cQ))$ is the space spanned by $\cup_{Q\in\cQ}\MS(T_{\vaa^i,V}(\cQ))$.} $\dim(\MS(T_{\vaa^i,V}(\cQ)))\leq \sigma$ then $\dim(\MS(\cQ))\leq (\sigma+1) \Delta$.
\end{claim}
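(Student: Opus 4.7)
The plan is to decompose $\dim \MS(\cQ)$ into a ``$V$-part'' (bounded by $\Delta$) and a ``$V^\perp$-part'' (bounded by $\Delta \sigma$). Fix the basis $u_1, \ldots, u_{n-\Delta}$ of $V^\perp$ used in the definition of $T_{\vaa, V}$, and let $\pi$ denote the linear projection from the space of linear forms onto $V^\perp$ that kills $V$. Set $W = \MS(\cQ)$. Since the kernel of $\pi|_W$ is $W \cap V$, which has dimension at most $\Delta$, we have $\dim W \leq \Delta + \dim \pi(W)$, so it suffices to prove $\dim \pi(W) \leq \Delta\sigma$.

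The main step is the identity
\[
\pi(W) \;=\; \sum_{i=1}^{\Delta} \MS\bigl(T_{\vaa^i, V}(\cQ)\bigr)\big|_{z=0},
\]
where $(\cdot)|_{z=0}$ denotes the linear projection that sets $z=0$ (and so lands in $V^\perp$). Granting this, each summand has dimension at most $\dim \MS(T_{\vaa^i, V}(\cQ)) \leq \sigma$, which gives $\dim \pi(W) \leq \Delta\sigma$ and hence the desired bound $\dim W \leq (\sigma+1)\Delta$.

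To prove the identity it is enough, by linearity and by swapping the sums over $i$ and over $Q \in \cQ$, to treat one $Q$ at a time. For a single $Q$ the cleanest route is via the Gram matrix. Let $B = \begin{pmatrix} B_{VV} & B_{V V^\perp} \\ B_{V^\perp V} & B_{V^\perp V^\perp}\end{pmatrix}$ be the symmetric Gram matrix of $Q$ in the basis $(v_1, \ldots, v_\Delta, u_1, \ldots, u_{n-\Delta})$ of $V \oplus V^\perp$. Then $\MS(Q)$ is the column span of $B$ (equivalently, the span of the partial derivatives of $Q$), so $\pi(\MS(Q)) = \mathrm{col}(B_{V^\perp V}) + \mathrm{col}(B_{V^\perp V^\perp})$. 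A direct computation shows that the Gram matrix of $T_{\vaa, V}(Q)$ in the basis $(z, u_1, \ldots, u_{n-\Delta})$ is $\begin{pmatrix} \vaa B_{VV} \vaa^{T} & \vaa B_{V V^\perp}\\ B_{V^\perp V} \vaa^{T} & B_{V^\perp V^\perp}\end{pmatrix}$, whose column span restricted to $\{z=0\}$ equals $\spn{B_{V^\perp V}\vaa^{T}} + \mathrm{col}(B_{V^\perp V^\perp})$. Summing over $\vaa = \vaa^1, \ldots, \vaa^{\Delta}$ and using that these vectors span $\C^{\Delta}$ replaces $\spn{B_{V^\perp V}\vaa^{T}}$ by all of $\mathrm{col}(B_{V^\perp V})$, so the total sum equals $\pi(\MS(Q))$.

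The main obstacle is the reverse containment $\pi(\MS(Q)) \subseteq \sum_i \MS(T_{\vaa^i, V}(Q))|_{z=0}$. The forward containment is immediate from $\MS(T_{\vaa, V}(Q)) \subseteq T_{\vaa, V}(\MS(Q))$ together with the fact that setting $z=0$ in $T_{\vaa, V}(\ell)$ recovers $\pi(\ell)$ for every linear form $\ell$. The reverse direction is the delicate part: $\MS(T_{\vaa, V}(Q))$ can be strictly smaller than $T_{\vaa, V}(\MS(Q))$ because of cancellations inside a minimal representation, so no single $\vaa^i$ is enough on its own. The Gram matrix calculation is what makes transparent that the ``$V$-information'' of $\MS(Q)$ that is lost in any individual $T_{\vaa^i, V}$ is reconstructed precisely by letting $\vaa$ range over a basis of $\C^{\Delta}$.
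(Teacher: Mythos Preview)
Your proof is correct and follows the same overall architecture as the paper's: bound $\dim\MS(\cQ)$ by $\Delta$ (the $V$-part) plus the dimension of the projection to $V^\perp$, and control the latter by showing that the $V^\perp$-shadow of $\MS(P)$ is captured by the family $\{\MS(T_{\vaa^i,V}(P))\}_i$ once the $\vaa^i$ span $\C^\Delta$.

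Where you differ is in the technical realization of that last step. The paper writes each $P\in\cQ$ explicitly as $P=P_V+\sum_j a_j v_j+P'$ with $a_j\in V^\perp$ and $P'\in\C[V^\perp]$, observes that $T_{\vaa^i,V}(P)=\gamma z^2+b_{P,i}z+P'$ with $b_{P,i}=\sum_j\alpha^i_j a_j$, and then argues (with a small case split on whether $b_{P,i}\in\MS(P')$) that $b_{P,i}$ and $\MS(P')$ land in $\spn{z,\MS(T_{\vaa^i,V}(P))}$; linear independence of the $\vaa^i$ then recovers all the $a_j$. Your Gram-matrix formulation packages exactly the same information---your $B_{V^\perp V}$ encodes the $a_j$'s and your $B_{V^\perp V^\perp}$ encodes $P'$---but processes it uniformly via the identity $\MS(Q)=\mathrm{col}(B)$, avoiding the case analysis and in fact proving the exact equality $\pi(\MS(Q))=\sum_i \MS(T_{\vaa^i,V}(Q))|_{z=0}$ rather than just the needed containment. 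So the two arguments are equivalent in spirit; yours is a cleaner linear-algebraic repackaging.
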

\begin{proof}
As $\dim(\MS(T_{\vaa^i,V}(\cQ)))\leq \sigma$, there are $\MVar{u^i}{\sigma} \subset V^{\perp}$ such that $\MS(T_{\vaa^i,V}(\cQ)) \subseteq \spn{z,\MVar{u^i}{\sigma} }$. We will show that $\MS(\cQ) \subset V + \spn{\{\MVar{u^i}{\sigma}\}_{i=1}^{\Delta}}$, which is of dimension at most $\Delta+ \sigma\Delta$.

Let $P \in \cQ$, then there are linear forms, $\MVar{a}{\Delta} \subset V^{\perp}$ and polynomials $P_{V} \in  \C[V]$ and $P'\in \C[V^{\perp}]$, such that
\begin{equation*}
P = P_{V}  + \sum\limits_{j=1}^{\Delta} a_jv_j + P'.
\end{equation*}
Therefore, after taking the projection for a specific $T_{\vaa^i, V}$, for some $\gamma \in \C$,
\begin{equation*}
T_{\vaa^i,V}(P) = \gamma z^2 + \left(\sum\limits_{j=1}^{\Delta} \alpha^i_ja_j\right) z + P'.
\end{equation*}
Denote $b_{P,i} = \sum\limits_{j=1}^{\Delta} \alpha^i_ja_j$. By \autoref{cla:still-indep} if $\MVar{a}{\Delta}$ are not all zeros, then, with probability $1$, $b_{P,i} \neq \vec{0}$ .

If $b_{P,i}\notin \MS(P')$ then from \autoref{cla:ind-rank} it follows that  $\lbrace z, b_{P,i}, \MS(P')\rbrace \subseteq \spn{ \MS(T_{\vaa^i,V}(P))}$. If, on the other hand, $b_{P,i} \in \MS(P')$, then clearly $\lbrace b_{P,i}, \MS(P')\rbrace \subseteq \spn{ z, \MS(T_{\vaa^i,V}(P))}$. To conclude, in either case, $\{b_{P,i},\MS(P')\} \subseteq \spn{z, \MVar{u^i}{\sigma}}.$

Applying the analysis above to $T_{\vaa^1,V},\dots,T_{\vaa^{\Delta},V}$ we obtain that $\spn{b_{P,1}, \cdots b_{P,\Delta}}\subseteq \spn{\{\MVar{u^i}{\sigma}\}_{i=1}^{\Delta}}$. As $\vaa^1,\dots\vaa^\Delta$ are linearly independent, we have that $\{{\MVar{a}{\Delta}}\} \subset \spn{b_{P,1}, \cdots b_{P,\Delta}}$, and thus $\MS(P) \subseteq V + \lbrace\MVar{a}{\Delta}\rbrace + LS(P') \subseteq V + \spn{\{\MVar{u^i}{\sigma}\}_{i=1}^{\Delta}}$.
\end{proof}
\ifEK
\section{Robust Edelstein-Kelly theorems}\label{sec:robust-EK}

In this section we prove some extensions of the robust Edelstein and Kelly, that was shown in \cite{DBLP:conf/stoc/Shpilka19}.

We say that the sets $\cT_1,\cT_2,\cT_3\subset \C^n$ form a partial-$\delta$-EK configuration if for every $i \in [3]$ and $p\in \cT_i$, consider $\cT_j$ to be the bigger of the other two sets, then
at least $\delta$ fraction of the points $p_j\in\cT_j$ satisfy that  $p$ and $p_j$ span some point in the third set.

\begin{theorem}\label{thm:partial-EK-robust}
	Let $0<\delta \leq 1$ be any constant. Let $\cT_1,\cT_2,\cT_3\subset\C^n$ be disjoint finite subsets that form a partial-$\delta$-EK configuration.  Then $\dim(\spn{\cup_i \cT_i}) \leq O(1/\delta^3)$.
\end{theorem}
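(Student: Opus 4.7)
The plan is to apply the robust Sylvester--Gallai theorem (\autoref{thm:robustSG}) to the union set $\tilde\cT \eqdef \cT_1\cup\cT_2\cup\cT_3$. Assume without loss of generality that $|\cT_1|\leq|\cT_2|\leq|\cT_3|$. The partial-$\delta$-EK hypothesis implies that every point $p\in\tilde\cT$ has at least $\delta|\cT_2|$ partners $q\in\tilde\cT$ such that the line through $p,q$ meets $\tilde\cT$ in a third point: for $p\in\cT_1$ or $p\in\cT_2$, the hypothesis directly provides $\geq\delta|\cT_3|\geq\delta|\cT_2|$ partners in $\cT_3$ (with the third collinear point sitting in the remaining small set, hence in $\tilde\cT$); for $p\in\cT_3$ it provides $\geq\delta|\cT_2|$ partners in $\cT_2$ (with the third point in $\cT_1$). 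Thus $\tilde\cT$ is itself a $\delta'$-SG configuration with $\delta'\geq \delta|\cT_2|/(3|\cT_3|)$.

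The first case, which I call \emph{balanced}, is $|\cT_2|\geq\delta^2|\cT_3|$: here $\delta'\geq\delta^3/3$, and \autoref{thm:robustSG} applied to $\tilde\cT$ immediately yields $\dim\spn{\tilde\cT}\leq O(1/\delta^3)$.

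The second case, \emph{unbalanced}, is $|\cT_2|<\delta^2|\cT_3|$. Here the direct application of \autoref{thm:robustSG} is too weak. However, assuming $|\cT_2|\geq 1/\delta$ (otherwise $|\cT_1|+|\cT_2|\leq 2/\delta$ and $\dim\spn{\cT_1\cup\cT_2}$ is trivially bounded), the hypothesis ensures each $p_3\in\cT_3$ has at least one EK-partner $p_2\in\cT_2$ whose line with $p_3$ meets $\cT_1$, so $p_3\in\spn{p_1,p_2}\subseteq\spn{\cT_1\cup\cT_2}$. Hence $\spn{\tilde\cT}=\spn{\cT_1\cup\cT_2}$, and it suffices to bound $\dim\spn{\cT_1\cup\cT_2}$. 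For this, I plan to derive an $\Omega(\delta^2)$-SG configuration on $\cT_1\cup\cT_2$ itself by transferring the rich EK-triples. For each $p_2\in\cT_2$, the $\delta|\cT_3|$ partners in $\cT_3$ distribute among at most $|\cT_1|$ lines through $p_2$ (one for each possible $\cT_1$-point on such a line), so by pigeonhole many of these lines must be $\cT_3$-rich; combined with $\cT_3\subseteq\spn{\cT_1\cup\cT_2}$, this forces nontrivial linear relations among pairs of points of $\cT_1\cup\cT_2$ that should, for a $\Omega(\delta)$-fraction of $p\in\cT_1\cup\cT_2$, yield $\Omega(\delta|\cT_1\cup\cT_2|)$ SG-partners within $\cT_1\cup\cT_2$. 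Then \autoref{thm:robustSG} applied to this derived configuration gives $\dim\spn{\cT_1\cup\cT_2}\leq O(1/\delta^2)$, and combined with the factor-$\delta$ loss in the reduction the overall bound is $O(1/\delta^3)$.

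The principal obstacle will be the derivation of a genuine SG condition on $\cT_1\cup\cT_2$ in the unbalanced case: the natural collinearity witnesses provided by the EK-hypothesis lie in $\cT_3$, which is disjoint from $\cT_1\cup\cT_2$, so the EK-triples do not give SG-triples inside $\cT_1\cup\cT_2$ automatically. The key technical ingredient will be a double-counting/incidence argument exploiting the fact that in the unbalanced regime many lines through each $p_2\in\cT_2$ must carry multiple $\cT_3$-points, which (via $\cT_3\subseteq\spn{\cT_1\cup\cT_2}$) translates into nontrivial linear dependencies among points of $\cT_1\cup\cT_2$ that can be read as SG-collinearities inside this smaller set.
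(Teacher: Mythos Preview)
Your balanced case is correct and in fact cleaner than the paper's treatment of its own first case: once $|\cT_2|\geq\delta^2|\cT_3|$, the union is a $(\delta^3/3)$-SG configuration and \autoref{thm:robustSG} applies directly, without the random-sampling-plus-Chernoff step the paper uses.

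The unbalanced case, however, has a genuine gap. Your reduction $\spn{\tilde\cT}=\spn{\cT_1\cup\cT_2}$ is fine, but the plan to extract an $\Omega(\delta^2)$-SG configuration on $\cT_1\cup\cT_2$ does not go through. The pigeonhole you sketch shows that many lines through a fixed $p_2\in\cT_2$ and a point of $\cT_1$ carry several $\cT_3$-points; but a line with one $\cT_1$-point, one $\cT_2$-point, and many $\cT_3$-points still has only \emph{two} points of $\cT_1\cup\cT_2$, so it contributes nothing to an internal SG relation. There is no mechanism in your argument that forces three points of $\cT_1\cup\cT_2$ to be collinear, and indeed one can imagine configurations where $\cT_1\cup\cT_2$ is in general position while all the EK collinearities are witnessed solely by $\cT_3$-points. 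The ``nontrivial linear relations'' you hope to read off simply are not there.

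The paper handles the unbalanced regime (its case $m_3\leq m_1^{1/3}$, with the opposite size convention) by an entirely different route: a greedy covering argument. One iteratively selects points $p_1,p_2,\ldots\in\cT_2$, tracking their neighborhoods $\Gamma_1(p_k)$ in the largest set, and shows via an incidence analysis of the two-dimensional spaces $V(p,q)=\spn{p,q}$ (for $q$ in the smallest set) that each new $\Gamma_1(p_k)$ overlaps previous ones in at most $(\delta/2)m_1$ points. Hence the process terminates in $O(1/\delta)$ steps, producing an $O(1/\delta^3)$-size spanning set $\cI$ for $\cT_2$; the remaining sets are then absorbed into $\spn{\cI}$ by a short neighborhood-intersection argument. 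This covering idea is the missing ingredient in your unbalanced case.
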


\begin{remark}
This is precisely the same proof as in \cite{DBLP:conf/stoc/Shpilka19}. 
\end{remark}


\begin{proof}
	Denote $|\cT_i|=m_i$. Assume w.l.o.g. that $|\cT_1| \geq   |\cT_2| \geq |\cT_3|$. The proof distinguishes two cases. The first is when  $|\cT_3|$ is not too small and the second case is when it is much smaller than the largest set. 
	
	\begin{enumerate}
		\item {\bf Case $m_3 > m_1^{1/3}$: } \hfill
		
		Let $\cT'_1\subset \cT_1$ be a random subset, where each element is samples with probability $m_2/m_1 = |\cT_2|/|\cT_1$. By the Chernoff bound (\autoref{thm:chernoff}) we get that, w.h.p., the size of the set is at most, say, $2m_2$. Further, the Chernoff bound also implies that for every $p\in \cT_2$ there are at least $(\delta/2)\cdot m_2$ points in $\cT'_1$ that together with $p$ span a point in $\cT_3$. Similarly, for every $p\in \cT_3$ there are at least $(\delta/2)\cdot m_2$ points in $\cT'_1$ that together with $p$ span a point in $\cT_2$. Clearly, we also have that for every point $p\in\cT'_1$ there are $\delta m_2$ points in $\cT_2$ that together with $p$ span a point in $\cT_3$. Thus, the set $\cT'_1\cup \cT_2\cup \cT_3$ is a $(\delta/8)$-SG configuration and hence has dimension $O(1/\delta)$ by \autoref{thm:robustSG}. 
		
		Let $V$ be a subspace of dimension $O(1/\delta)$ containing all these points. Note that in particular, $\cT_2,\cT_3\subset V$. As every point $p\in \cT_1$ is a linear combination of points in $\cT_2\cup\cT_3$ it follows that the whole set has dimension $O(1/\delta)$.
	
		\item {\bf Case $  m_3  \leq m_1^{1/3}$:}\hfill
		
		In this case we may not be able to use the sampling approach from earlier as $m_2$ can be too small and the Chernoff argument from above will not hold. 
		
		We say that a point $p_1\in \cT_1$ is a neighbor of a point $p\in \cT_2\cup \cT_3$ if the space spaned by $p$ and $p_1$ intersects the third set. Denote with  $\Gamma_1(p)$ the neighborhood of  a point $p\in\cT_2\cup\cT_3$ in $\cT_1$.

		\sloppy
		Every two points $p\in\cT_2$  and $q\in\cT_3$ define a two-dimensional space that we denote  $V(p,q)=\spn{p,q}$. 
		
		Fix $p\in \cT_2$ and consider those spaces $V(p,q)$ that contain points from $\cT_1$. Clearly there are at most $|\cT_3|$ such spaces. Any two different subspaces $V(p,q_1)$ and $V(p,q_2)$ have intersection of dimension $1$ (it is $\spn{p}$) and by the assumption in the theorem the union $\cup_{q\in\cT_3}V(p,q)$ covers at least $\delta m_1$ points of $\cT_1$.  Indeed, $\delta m_1$ points  $q_1\in \cT_1$ span a point in $\cT_3$ together with $p$. As our points are pairwise independent, it is not hard to see that if $q_3 \in \spn{p,q_1}$ then $q_1 \in \spn{p,q_3}=V(p,q_3)$
		
		For each  subspace $V(p,q)$ consider the set $V(p,q)_1 = V(p,q)  \cap \cT_1$. 
		
		\begin{claim}\label{cla:T1-intersect}
			Any two such spaces $V(p,q_1)$ and $V(p,q_2)$ satisfy that either $V(p,q_1)_1= V(p,q_2)_1$ or $V(p,q_1)_1\cap V(p,q_2)_1=\emptyset$. 
		\end{claim}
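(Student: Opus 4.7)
The plan is to show that any point shared by $V(p,q_1)_1$ and $V(p,q_2)_1$ forces the two 2-dimensional subspaces $V(p,q_1)$ and $V(p,q_2)$ to coincide, after which the equality $V(p,q_1)_1 = V(p,q_2)_1$ is automatic.

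So suppose there exists $r \in V(p,q_1)_1 \cap V(p,q_2)_1$. Then $r \in \cT_1$, and $r$ lies in both 2-dimensional subspaces $V(p,q_1) = \spn{p,q_1}$ and $V(p,q_2) = \spn{p,q_2}$. The key step is to argue that $r$ is not a scalar multiple of $p$: this is where I would invoke the standing assumption that the points in $\cT_1 \cup \cT_2 \cup \cT_3$ are pairwise linearly independent (in the same way that the proof above uses it when asserting that $q_3 \in \spn{p,q_1}$ implies $q_1 \in \spn{p,q_3}$). Since $r \in \cT_1$ and $p \in \cT_2$ belong to disjoint sets of pairwise linearly independent points, $r \notin \spn{p}$.

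Once we know $\{p, r\}$ is linearly independent, both $V(p,q_1)$ and $V(p,q_2)$ are 2-dimensional subspaces containing this linearly independent pair, so both must equal $\spn{p,r}$. In particular, $V(p,q_1) = V(p,q_2)$, which immediately gives $V(p,q_1)_1 = V(p,q_1) \cap \cT_1 = V(p,q_2) \cap \cT_1 = V(p,q_2)_1$, completing the dichotomy.

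The proof is essentially a one-line linear-algebra observation, so there is no real obstacle beyond making sure that pairwise linear independence across the colors is indeed a hypothesis we may freely use, which the surrounding discussion makes clear. No deeper tool (resultants, radical ideals, projection maps, etc.) is needed for this particular claim; its role is merely to organize the neighborhoods of $p$ inside $\cT_1$ as a disjoint union indexed by distinct subspaces $V(p,q)$, which will then feed into the subsequent counting argument for the case $m_3 \leq m_1^{1/3}$.
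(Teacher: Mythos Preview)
Your proof is correct and matches the paper's approach essentially line for line: both take a common point in the intersection, use pairwise linear independence of the points across the $\cT_i$'s to conclude it is independent of $p$, and then observe that two 2-dimensional spaces sharing two independent vectors must coincide.
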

		
		\begin{proof}
			If there was a point $p'\in V(p,q_1)_1\cap V(p,q_1)_1$ then both $V(p,q_1)$ and $V(p,q_2)$ would contain $p,p'$ and as $p$ and $p'$ are linearly independent (since they belong to  $\cT_i$'s they are not the same point) that would make $V(p,q_1)=V(p,q_2)$. In particular we get $V(p,q_1)_1= V(p,q_2)_1$.
		\end{proof}

		As conclusion we see that at most $O(1/\delta^2)$ different spaces $\{V(p,q)\}_q$ have intersection at least $\delta^2/100 \cdot m_1$ with $\cT_1$. Let $\cI$ contain $p$ and a point from each of the sets $\{V(p,q)_1\}$ that have size at least $\delta^2/100 \cdot m_1$. Clearly $|\cI| \leq O(1/\delta^2)$. We now repeat the following process. As long as $\cT_2 \not\subset \spn{\cI}$ we pick a point  $p'\in \cT_2 \setminus \spn{\cI}$. We add $p'$ to $\cI$ along with a point from each large set $V(p',q)_1$, i.e. subsets satisfying $|V(p',q)_1|\geq \delta^2/100\cdot m_1$, and repeat. 
		
		We next show that this process must terminate after $O(1/\delta)$ steps and that at the end $|\cI| = O(1/\delta^3)$. To show that the process terminates quickly we prove that if $p_k\in \cT_2$ is the point that was picked at the $k$'th step then $|\Gamma_1(p_k)\setminus \cup_{i\in[k-1]} \Gamma_1(p_i)| \geq (\delta/2)m_1$. Thus, every step covers at least $\delta/2$ fraction of new points in $\cT_1$ and thus the process must end after at most $O(1/\delta)$ steps. 
		
		\begin{claim}\label{cla:one-large-intersect}
			Let $p_i\in\cT_2$, for $i\in [k-1]$ be the point chosen at the $i$th step. If  the intersection of $V(p_k,q)_1$ with $V(p_i,q')_1$, for any $q'q'\in \cT_3$, has size larger than $1$ then $V(p_k,q)= V(p_i,q')$ (and in particular, $V(p_k,q)_1= V(p_i,q')_1$) and $|V(p_k,q)_1| \leq \delta^2/100 \cdot m_1$.
			
			Moreover, if there is another pair $(q'',q''')\in\cT^3$ satisfying  $|V(p_k,q'')_1\cap  V(p_i,q''')_1|> 1$ then it must be the case that $V(p_i,q')=V(p_i,q''')$.
		\end{claim}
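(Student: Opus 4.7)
The plan is to exploit two facts that have been set up earlier: every 2-dimensional subspace $V(p,q)=\spn{p,q}$ is determined by any two linearly independent vectors it contains, and the points of $\cT=\cT_1\cup \cT_2\cup \cT_3$ are pairwise linearly independent. The claim will then follow from carefully tracking what the iterative construction of $\cI$ guarantees at step $i$.

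First I would prove the equality $V(p_k,q)=V(p_i,q')$. Suppose two distinct points $r,r'\in \cT_1$ lie in $V(p_k,q)_1\cap V(p_i,q')_1$. By pairwise linear independence of points in $\cT_1$, $r$ and $r'$ are linearly independent, hence they span the whole $2$-dimensional space $V(p_k,q)$ and also span $V(p_i,q')$, so these two subspaces coincide. In particular $V(p_k,q)_1=V(p_i,q')_1$ and $p_k\in V(p_i,q')$.

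Next I would deduce the size bound by contradiction. Suppose $|V(p_i,q')_1|\geq (\delta^2/100)m_1$. Then when $p_i$ was chosen at step $i$, the set $V(p_i,q')_1$ qualified as large, so the construction added to $\cI$ both $p_i$ and a representative $r\in V(p_i,q')_1\subset \cT_1$. Since $p_i\in\cT_2$ and $r\in\cT_1$ are pairwise linearly independent, they span $V(p_i,q')$, and therefore $V(p_i,q')\subseteq \spn{\cI}$. But $p_k\in V(p_i,q')$, which contradicts the fact that $p_k$ was chosen from $\cT_2\setminus \spn{\cI}$. Hence $|V(p_k,q)_1|=|V(p_i,q')_1|<(\delta^2/100)m_1$.

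For the moreover part I would apply the first half of the claim to the pair $(q'',q''')$ to conclude $V(p_k,q'')=V(p_i,q''')$. Then $p_k$ lies in both $V(p_i,q')$ and $V(p_i,q''')$, two $2$-dimensional subspaces that both contain $p_i$. If $V(p_i,q')\neq V(p_i,q''')$ their intersection is exactly $\spn{p_i}$, forcing $p_k\in\spn{p_i}$; this contradicts the pairwise linear independence of points in $\cT_2$. Hence $V(p_i,q')=V(p_i,q''')$. I do not anticipate a real obstacle here: the argument is essentially a bookkeeping exercise about two lines through a common point, and the only subtle point is remembering that the greedy construction covers every large set $V(p_i,q^*)_1$ encountered at step $i$, not just one of them.
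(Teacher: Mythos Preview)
Your proposal is correct and follows essentially the same approach as the paper: use two linearly independent points in the intersection to force $V(p_k,q)=V(p_i,q')$, derive a contradiction to $p_k\notin\spn{\cI}$ from the greedy construction if the set were large, and for the moreover part use that $p_k$ and $p_i$ both lie in $V(p_i,q')$ and $V(p_i,q''')$ together with pairwise linear independence. Your write-up is in fact slightly more precise than the paper's, which speaks of ``two points from $V(p_i,q')_1$'' where it really means the two points $p_i\in\cT_2$ and the chosen representative in $V(p_i,q')_1$, exactly as you spell out.
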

		
		\begin{proof}
			If the intersection of $V(p_k,q)_1$ with $V(p_i,q')_1$ has size at least $2$ then by an argument similar to the proof of \autoref{cla:T1-intersect} we would get that $V(p_k,q) = V(p_i,q')$. To see that in this case the size of $V(p_i,q')_1$ is not too large we note that by our process, if $|V(p_i,q')_1|\geq \delta^2/100 \cdot m_1$ then $\cI$ contains at least two points from $V(p_i,q')_1$. Hence, $p_k\in V(p_i,q')\subset \spn{\cI}$ in contradiction to the choice of $p_k$.
			
			To prove the moreover part we note that in the case of large intersection, since $V(p_k,q) = V(p_i,q')$, we have that $p_k,p_i\in V(p_i,q')$. If there was another pair $(q'',q''')$ so that $|V(p_k,q'')_1 \cap V(p_i,q''')_1|>1$ then we would similarly get that $p_k,p_i\in V(p_i,q''')$. By pairwise linear independence of the points in our sets this implies that $V(p_i,q')=V(p_i,q''')$.
		\end{proof}


		
		\begin{corollary}\label{cor:neighbor-grow}
			Let $i\in[k-1]$ then 
			$$|\Gamma_1(p_k)\cap \Gamma_1(p_i)|\leq \delta^2/100 \cdot m_1 + m_3^2.$$ 
		\end{corollary}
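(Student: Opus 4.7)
The plan is to decompose $\Gamma_1(p_k) \cap \Gamma_1(p_i)$ according to pairs of 2-dimensional subspaces through $p_k$ and $p_i$, and to count contributions separately in a ``large intersection'' regime and a ``small intersection'' regime. Concretely, by definition each point in $\Gamma_1(p_k)$ lies in some $V(p_k,q)_1$ for $q \in \cT_3$, and analogously each point in $\Gamma_1(p_i)$ lies in some $V(p_i,q')_1$. Hence every point of $\Gamma_1(p_k) \cap \Gamma_1(p_i)$ lies in $V(p_k,q)_1 \cap V(p_i,q')_1$ for some $(q,q') \in \cT_3 \times \cT_3$, and the strategy is to union-bound the contribution of each such pair.

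For pairs $(q,q')$ with $|V(p_k,q)_1 \cap V(p_i,q')_1| > 1$, the first part of \autoref{cla:one-large-intersect} gives $V(p_k,q) = V(p_i,q')$ and in particular $|V(p_k,q)_1| \leq \delta^2/100 \cdot m_1$. The moreover clause of the same claim is precisely what I would use to show that all such ``large intersection'' pairs in fact produce the same 2-dimensional subspace $W$: if $(q,q')$ and $(q'',q''')$ both gave coincidences, then both $V(p_i,q')$ and $V(p_i,q''')$ would contain $p_k$ and $p_i$, and by pairwise linear independence these span a unique 2-dim subspace. So the total contribution from this regime is at most $|W \cap \cT_1| \leq \delta^2/100 \cdot m_1$.

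For pairs with $|V(p_k,q)_1 \cap V(p_i,q')_1| \leq 1$, I would simply count pairs. Two distinct 2-dim subspaces $V(p_k,q)$ obtained as $q$ ranges over $\cT_3$ are in bijection with a subset of $\cT_3$ (multiple points of $\cT_3$ may lie on a single line through $p_k$, but then they determine the same $V$), so there are at most $m_3$ distinct subspaces on the $p_k$ side, and likewise at most $m_3$ on the $p_i$ side. This gives at most $m_3^2$ pairs, each contributing at most one point.

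Summing the two regimes yields $|\Gamma_1(p_k) \cap \Gamma_1(p_i)| \leq \delta^2/100 \cdot m_1 + m_3^2$, as claimed. The argument is essentially bookkeeping once \autoref{cla:one-large-intersect} is in hand; the only step that requires slight care is checking that the moreover clause really does collapse all large intersections into the single subspace $\spn{p_k,p_i}$, which is immediate from the pairwise linear independence that pervades the setup.
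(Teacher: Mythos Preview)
Your proof is correct and follows essentially the same approach as the paper's: both decompose $\Gamma_1(p_k)\cap\Gamma_1(p_i)$ as a union of intersections $V(p_k,q)_1\cap V(p_i,q')_1$ over pairs $(q,q')\in\cT_3^2$, use \autoref{cla:one-large-intersect} (including its ``moreover'' part) to bound the contribution of all large-intersection pairs by a single set of size at most $\delta^2/100\cdot m_1$, and bound the remaining at most $m_3^2$ small-intersection pairs by one point each. Your write-up simply spells out the collapsing of the large pairs to the unique subspace $\spn{p_k,p_i}$ more explicitly than the paper does.
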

		
		\begin{proof}
			The proof follows immediately from \autoref{cla:one-large-intersect}. Indeed, the claim assures that there is at most one subspace $V(p_k,q)$ that has intersection of size larger than $1$ with any $V(p_i,q')_1$ (and that there is at most one such subspace  $V(p_i,q')$) and that whenever the intersection size is larger than $1$ it is upper bounded by $\delta^2/100 \cdot m_1$. As there are at most $m_3^2$ pairs $(q,q')\in\cT_3^2$ the claim follows.
		\end{proof}
		
		
		The corollary implies that
		$$|\Gamma_1(p_k)\cap \left( \cup_{i\in[k-1]} \Gamma_1(p_i)\right) | \leq k((\delta^2/100)m_1 + m_3^2) < (\delta/2)\cdot m_1,$$ 
		where the last inequality holds for, say, $k<10/\delta$.\footnote{It is here that we use the fact that we are in the case $  m_3  \leq m_1^{1/3}$.}
		As $|\Gamma_1(p_k)|\geq \delta \cdot m_1$, for each $k$, it follows that after $k<10/\delta$ steps 
		$$|\cup_{i\in[k]} \Gamma_1(p_i)| > k(\delta/2)m_1.$$
		In particular, the process must end after at most $2/\delta$ steps. 
		
		As each steps adds to $\cI$ at most $O(1/\delta^2)$ vectors, at the end we have that $|\cI| = O(1/\delta^3)$ and every $p\in\cT_2$ is in the span of $\cI$. 
		
		Now that we have proved that $\cT_2$ has small dimension we conclude as follows. We find a maximal subset of $\cT_3$ whose neighborhood inside $\cT_1$ are disjoint. As each neighborhood has size at least $ \delta \cdot m_1$ it follows there the size of the subset is at most $O(1/\delta)$. We add those $O(1/\delta)$ points to $\cI$ and let $V=\spn{\cI}$. Clearly $\dim(V) = O(1/\delta^3)$.
		
		\begin{claim}
			$\cup_i \cT_i \subset V$.
		\end{claim}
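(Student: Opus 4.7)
The plan is to establish $\cT_i\subset V$ for each $i$ in turn. The containment $\cT_2\subset V$ is immediate from the construction of $\cI$: earlier in the proof we already argued that every $p\in\cT_2$ lies in $\spn{\cI}=V$, so the task reduces to showing $\cT_3\subset V$ and $\cT_1\subset V$.

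For $\cT_3\subset V$ I would exploit the maximality of the subset $\cT_3'\subset\cT_3$ that was added to $\cI$ (chosen so that its $\cT_1$-neighborhoods are pairwise disjoint). Take any $p\in\cT_3\setminus\cT_3'$; maximality forces $\Gamma_1(p)\cap\Gamma_1(p')\neq\emptyset$ for some $p'\in\cT_3'\subset V$, and we fix a common neighbor $q\in\cT_1$. From $q\in\Gamma_1(p')$, the line $\spn{p',q}$ contains a point $r'\in\cT_2\subset V$; combined with $p'\in V$ this forces $q\in V$. From $q\in\Gamma_1(p)$, the line $\spn{p,q}$ contains a point $r\in\cT_2\subset V$; since $q$ and $r$ are pairwise linearly independent members of $V$ we have $\spn{p,q}=\spn{q,r}$, and so $p\in\spn{q,r}\subset V$.

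For $\cT_1\subset V$, the partial-$\delta$-EK condition applied to any $p\in\cT_1$ (noting that the larger of the other two sets is $\cT_2$) guarantees at least one $p_2\in\cT_2$ such that $\spn{p,p_2}$ contains a point $p_3\in\cT_3$. Both $p_2$ and $p_3$ lie in $V$ by the previous two steps, so $p\in\spn{p_2,p_3}\subset V$.

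The only subtlety in the whole argument is the two-step chase in the $\cT_3$ case, where the shared neighbor $q$ must first be promoted into $V$ (via $p'\in V$ and $r'\in\cT_2\subset V$) before we can transfer $p$ itself into $V$; this is where it is essential that $\cT_2\subset V$ has already been established. Once the containments $\cT_2,\cT_3\subset V$ are in hand, the step for $\cT_1$ is a direct one-line application of the neighbor definition.
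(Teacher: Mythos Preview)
Your proof is correct and follows essentially the same route as the paper: first use the maximality of the chosen subset $\cT_3'\subset\cT_3$ to promote a shared $\cT_1$-neighbor $q$ into $V$ (via $p'\in V$ and $\cT_2\subset V$), then use $q$ to pull $p\in\cT_3$ into $V$, and finally handle $\cT_1$ via the partial-$\delta$-EK property and the already-established $\cT_2\cup\cT_3\subset V$. The only difference is presentational: the paper isolates the sub-step ``if $p\in\cT_1$ is a neighbor of some $p'\in\cI\cap\cT_3$ then $p\in V$'' as a separate observation before using it, and leaves the final $\cT_1\subset V$ step as a one-liner, whereas you spell that last step out explicitly.
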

		
		\begin{proof}
			We first note that if $p\in \cT_1$ is in the neighborhood of some $p'\in\cI\cap \cT_3$ then $p\in V$. Indeed, the subspace spanned by $p'$ and $p$ intersects $\cT_2$. I.e. there is $q\in \cT_2$ that is equal to $\alpha p + \beta p'$, where from pairwise independence both $\alpha\neq0$ and $\beta\neq 0$. As both $p'\in V$ and $\cT_2\subset V$ we get that also $p\in V$.
			
			We now have that the neighborhood of every $p\in \cT_3\setminus \cI$ intersects the neighborhood of some $p'\in\cI\cap \cT_3$. Thus, there is some point $q\in \cT_1$ that is in $V$ (by the argument above as it is a neighbor of $p'$) and is also a neighbor of $p$. It follows that also $p\in V$ as the subspace spanned by $q$ and $p$ contains some point in $\cT_2$ and both $\{q\},\cT_2\subset V$ (and we use pairwise independence again). Hence all the points in $\cT_3$ are in $V$. As $\cT_2\cup\cT_3 \subset V$ it follows that also $\cT_1\subset V$.
		\end{proof}
		This concludes the proof of the case $ m_3  \leq m_1^{1/3}$.
	\end{enumerate}
\end{proof}

Similar to \cite{DBLP:conf/stoc/Shpilka19} we have the following variant of \autoref{thm:partial-EK-robust}.
\begin{theorem}\label{cor:EK-robust}
	Let $0<\delta \leq 1$ be any constant. Let $\cT_1,\cT_2,\cT_3\subset\C^n$ be disjoint finite subsets. Assume that with the exception of at most $c$ elements from $\cup_{i=1}^{3}\cT_i$ all other elements in  $\cup_{i=1}^{3}\cT_i$  satisfy the partial-$\delta$-EK property. Then $\dim(\spn{\cup_i \cT_i}) \leq O_c(1/\delta^3)$.
\end{theorem}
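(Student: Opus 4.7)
The plan is to reduce to \autoref{thm:partial-EK-robust} by deleting the at most $c$ exceptional elements, verifying that what remains still forms a (slightly weaker) partial-EK configuration, and finally adding the exceptions back in at a cost of at most $c$ in dimension.

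First I would introduce the bad set $\cB\subseteq \cup_i \cT_i$ of elements that do not satisfy the partial-$\delta$-EK condition, with $|\cB|\leq c$, and set $\cT'_i \eqdef \cT_i \setminus \cB$, $m'_i \eqdef |\cT'_i|$. If $|\cup_i \cT_i| \leq 100c/\delta$, then trivially $\dim(\spn{\cup_i \cT_i}) \leq 100c/\delta = O_c(1/\delta^3)$ and we are done, so I may assume that every $m_i$ that I need to work with is at least $100c/\delta$; in particular $m'_i \geq 99 c/\delta$ is still large compared to $c$.

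Next I would show that $(\cT'_1,\cT'_2,\cT'_3)$ is a partial-$(\delta/4)$-EK configuration. Fix a good point $p \in \cT'_i$ and let $\cT_j$ be the larger of the two original sets guaranteed by the partial-$\delta$-EK property, so that at least $\delta m_j$ points $p_j \in \cT_j$ satisfy that the line through $p,p_j$ meets $\cT_k$. A valid triple $(p,p_j,p_k)$ is killed by removal only if $p_j \in \cB$ or $p_k\in\cB$, so at most $2c$ triples are lost, leaving at least $\delta m_j - 2c \geq (\delta/2) m'_j$ valid triples inside $(\cT'_j,\cT'_k)$. If $\cT'_j$ remains the larger of $\cT'_j,\cT'_k$, this is already a partial-$(\delta/2)$-EK condition at $p$. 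If the ordering has flipped (which, since $|m_i - m'_i|\leq c$ and $m'_i\geq 99c/\delta$, can only happen when the two sizes are within $2c$ of each other), I would use the symmetry of collinearity under pairwise linear independence: any valid triple $(p,p_j,p_k)$ can be read as exhibiting a valid neighbor $p_k\in\cT'_k$ for $p$, and the map $p_j\mapsto p_k$ is at most $O(1)$-to-one because for fixed $p$ and $p_k$ the line is determined. This yields at least $(\delta/4) m'_k$ valid neighbors in $\cT'_k$, giving the required partial-$(\delta/4)$-EK property at $p$.

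Finally I would apply \autoref{thm:partial-EK-robust} to $(\cT'_1,\cT'_2,\cT'_3)$ to conclude $\dim(\spn{\cup_i \cT'_i}) = O(1/\delta^3)$, and then put back the at most $c$ points of $\cB$, each adding at most one to the dimension, to get the final bound $O(1/\delta^3)+c = O_c(1/\delta^3)$. The main obstacle I expect is precisely the bookkeeping for the flipped-ordering case above; a clean alternative, if that becomes painful, is to observe that if some $\cT_i$ is almost entirely in $\cB$ then $m_i = O(c)$ and one can handle that degenerate case by absorbing $\cT_i$ into the exception budget and then bounding the two remaining sets directly, using that pairwise linear independence together with $\delta m_j$ collinearity pairs through a bounded set forces the configuration to lie on $O_c(1)$ lines through a common point.
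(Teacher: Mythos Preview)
Your reduction has a genuine gap at the step ``at most $2c$ triples are lost.'' The count on the $p_j\in\cB$ side is fine, but on the $p_k\in\cB$ side it is not: a single bad point $b\in\cB\cap\cT_k$ can be the \emph{only} third point for every $p_j$ lying in the two-dimensional subspace $\spn{p,b}$, and there is no a priori bound on $|\cT_j\cap\spn{p,b}|$. So removing one bad $b$ can kill an unbounded number of the $\delta m_j$ neighbors of $p$, not just $O(1)$. The same error reappears in your flipped-ordering argument: the map $p_j\mapsto p_k$ is \emph{not} $O(1)$-to-one, because for fixed $p,p_k$ the ``line'' $\spn{p,p_k}$ is determined but may contain arbitrarily many points of $\cT_j$. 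Thus the remaining sets $\cT'_i$ need not form a partial-$\Omega(\delta)$-EK configuration, and the black-box reduction to \autoref{thm:partial-EK-robust} does not go through as written.

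The paper sidesteps this by \emph{not} deleting the bad points. Instead it reruns the two cases of the proof of \autoref{thm:partial-EK-robust} with the bad points left in place (so they can still serve as ``third points'' for good pairs): in the sampling case the resulting configuration is $\Omega_c(\delta)$-SG and one simply adds the $c$ bad points to the final subspace $V$; in the covering case one initializes the set $\cI$ with the $c$ bad points and the iterative argument goes through unchanged. If you want to salvage the deletion approach, you would need an extra structural step handling the degenerate situation where a large fraction of $N(p)$ lies inside $\bigcup_{b\in\cB}\spn{p,b}$, i.e.\ inside a space of dimension at most $2c$; this is doable but is exactly the kind of case analysis the paper's rerunning strategy avoids.
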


\begin{proof}[Sketch]
	The proof is similar to the proof of \autoref{thm:partial-EK-robust} so we just explain how to modify it.
	\begin{enumerate}
		\item {\bf Case  $m_3>m_1^{1/3}$:} Here too we repeat the sampling argument and note that the sampled set give rise to an $\Omega(\delta/2^{c})$-SG configuration. Adding the $c$ '``bad'' elements to the subspace $V$ gives a subspace of dimension $O_c(1/\delta)$ spanning $\cT_2\cup\cT_3$. The rest of the proof is the same.
		
		\item {\bf Case  $m_3\geq m_1^{1/3}$:} We repeat the covering argument only now we initiate $\cI$ with the $c$ '``bad'' elements. It is not hard to see that the rest of the proof gives the desired result.
	\end{enumerate}
\end{proof}
\fi

\section{Structure theorem for quadratics  satisfying $\prod_i Q_i\in\sqrt{(A,B)}$}\label{sec:structure}


An important tool in the proofs of our main results is \autoref{thm:structure-intro} that classifies all the possible cases in which a product of  quadratic polynomials $Q_1\cdot Q_2\cdots Q_k$ is in the radical of two other quadratics, $\sqrt{\ideal{A,B}}$. To ease the reading we repeat the statement of the theorem here, albeit with slightly different notation.

\begin{theorem}\label{thm:structure}
Let $\{Q_k\}_{k\in \cK},A,B$ be homogeneous polynomials of degree $2$ such that $\prod_{k\in \cK}Q_k \in \sqrt{\ideal{A,B}}$. Then one of the following cases hold:
\begin{enumerate}[label={(\roman*)}]
\item There is $k\in \cK$ such that $Q_k$ is in the linear span of $A,B$  \label{case:span}

\item \label{case:rk1}
There exists a non trivial linear combination of the form $\alpha A+\beta B = c\cdot d$ where  $c$ and $d$ are linear forms.
 
\item There exist two linear forms $c$ and $d$ such that when setting $c=d=0$ we get that $A,B$ and one of $\{Q_k\}_{k\in \cK}$ vanish. \label{case:2}
\end{enumerate}
\end{theorem}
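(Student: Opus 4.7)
My plan is to reduce to the case that $A$ and $B$ are linearly independent irreducible quadratics sharing no common factor, and then to analyze the structure of their resultant $\res_{x_1}(A, B)$ with respect to a well-chosen variable $x_1$. First, the easy reductions: if $A$ (or $B$) factors as a product of two linear forms, case~\ref{case:rk1} holds trivially via the trivial combination in the pencil. If $A$ and $B$ are linearly dependent, then $\sqrt{\ideal{A,B}} = \sqrt{\ideal A}$; assuming $A$ is irreducible (else we are back in case~\ref{case:rk1}), $\ideal A$ is prime, so $\prod_{k\in\cK} Q_k \in \ideal A$ forces some $Q_k \in \spn A\subseteq \spn{A,B}$, which is case~\ref{case:span}. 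Henceforth I assume $A, B$ are linearly independent irreducible quadratics, hence share no nonconstant common factor in the UFD $\C[\vx]$.

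After a generic invertible change of coordinates, I arrange that $A$ has nonzero $x_1^2$-coefficient, and by replacing $B$ with a suitable element of $\spn{A,B}$, I arrange $\deg_{x_1} B \le 1$. Write $A = \alpha x_1^2 + a x_1 + A_0$ and $B = b x_1 + B_0$, with $a, b\in\C[\vy]_1$ and $A_0, B_0\in\C[\vy]_2$, where $\vy = (x_2,\dots,x_n)$. Then
\[
R := \res_{x_1}(A,B) = \alpha B_0^2 - a b B_0 + A_0 b^2 \in \C[\vy]
\]
is a nonzero homogeneous quartic by \autoref{thm:res}. Combining the multiplicativity $\res_{x_1}(fg, B) = \res_{x_1}(f, B)\res_{x_1}(g, B)$ with the fact that $\res_{x_1}(h + g B, B)$ agrees with $\res_{x_1}(h, B)$ up to powers of the leading coefficient $b$, the hypothesis $\prod_{k} Q_k \in \sqrt{\ideal{A,B}}$ translates into a divisibility
\[
R \;\Big|\; b^N \prod_{k\in\cK} \res_{x_1}(Q_k, B)
\]
in $\C[\vy]$, for some integer $N$. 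Each factor $\res_{x_1}(Q_k, B) = \alpha_k B_0^2 - q_k b B_0 + Q_{k,0} b^2$ has exactly the same shape as $R$, so every irreducible factor of $R$ coprime to $b$ divides some $\res_{x_1}(Q_k, B)$.

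I would then perform a case analysis on the factorization of $R$ as a quartic in $\C[\vy]$. If $R$ has an irreducible factor of degree $4$ coprime to $b$, then matching degrees forces $\res_{x_1}(Q_k, B) = \mu R$ for some $k$, which unwinds via the formulas for the resultants to $Q_k \equiv \mu A \pmod B$, hence $Q_k \in \spn{A, B}$ (case~\ref{case:span}). If $R$ splits into factors of lower degree, I would exploit the common shape $\gamma Z^2 - \ell b Z + W b^2$ (viewing $R$ and each $\res_{x_1}(Q_k, B)$ as a quadratic in $Z = B_0$) to either pair some $Q_k$ with $A$ modulo $B$ (case~\ref{case:span} again), extract a rank-$\le 2$ element of the pencil $\spn{A,B}$ (case~\ref{case:rk1}), or identify two linear forms $c, d\in\C[\vx]$ with $A, B \in \ideal{c, d}$ (case~\ref{case:2}); in the last scenario, primality of $\ideal{c, d}$ then forces some $Q_k \in \ideal{c, d}$, giving the conclusion exactly as stated.

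The main obstacle I anticipate is the detailed bookkeeping in the factorization analysis, especially handling repeated factors of $R$ and the leading-coefficient factor $b^N$ which can obscure the divisibility and inject spurious contributions. A secondary technical point is justifying the normalization step (existence of $x_1$ for which both $A$ and $B$ can be put in the canonical form after a pencil combination); generic coordinates suffice, with the degenerate case where $A, B$ depend on only a few variables handled directly by a $\rank_s$-based argument. A more conceptual hurdle, not directly visible from the resultant alone, is correctly handling non-radical ideals $\ideal{A,B}$ (so that $\sqrt{\ideal{A,B}} \supsetneq \ideal{A,B}$), where the resultant must be leveraged through multiplicities of its irreducible factors to recover the geometry of the base locus of the pencil.
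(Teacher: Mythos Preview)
Your approach --- reduce to $A,B$ irreducible and linearly independent, take $\res_{x_1}(A,B)$, and case-split on its factorization --- is the paper's approach. The one substantive difference is how the degenerate situation $b=0$ is handled. The paper first diagonalizes $A$ to $\sum_i x_i^2$, and this normalization can force $b=0$; it then treats that case by an entirely separate geometric argument (project $Z(A,B)$ to the remaining coordinates $\vz$, pick an irreducible component with dense image, and run a ``$\sqrt{2}$ is irrational'' manipulation to show that one of $A',B'$ must be a perfect square, hence case~\ref{case:rk1}). Your proposal to use a \emph{generic} change of coordinates rather than diagonalizing $A$ is a real shortcut: for linearly independent irreducible $A,B$ one checks that the set of directions $e_1$ for which the $x_1$-linear part of the pencil element with vanishing $x_1^2$-coefficient is identically zero is a proper subvariety of $\C^n$, so a generic $x_1$ gives $b\neq 0$ and the paper's entire $b=0$ branch becomes unnecessary. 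You should promote this from ``secondary technical point'' to an explicit lemma, since as written your $\deg_{x_1} B \le 1$ allows $b=0$, which makes the divisibility $R\mid b^N\prod_k\res_{x_1}(Q_k,B)$ vacuous. Beyond that, the remaining factorization analysis --- a linear factor of $R$ gives case~\ref{case:rk1} or~\ref{case:2}, while $R=C\cdot D$ with $C,D$ irreducible quadratics gives case~\ref{case:rk1} via a delicate coefficient comparison after expanding $A',B',C,D$ in powers of $b$ --- is exactly where the paper spends its effort, and your sketch is on target but does not yet engage with those details.
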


From now on, to ease notations, we use \autoref{thm:structure}\ref{case:span}, \autoref{thm:structure}\ref{case:rk1} or \autoref{thm:structure}\ref{case:2} to describe different cases of \autoref{thm:structure}.

The following claim of \cite{Gupta14} shows that we can assume $|\cK| = 4$ in the statement of \autoref{thm:structure}.

\begin{claim}[Claim 11 in \cite{Gupta14}]\label{cla:gup-4}
Let $\MVar{P}{d},\MVar{Q}{k}\in \CRing{x}{n}$ be homogeneous and the degree of each $P_i$ is at most $r$. Then,
\begin{equation*}
\prod_{i=1}^k Q_i \in \sqrt{\ideal{\MVar{P}{d}}} \Rightarrow \exists \lbrace \MVar{i}{r^d} \rbrace \subset [k] \; \text{ such that} \quad  \prod_{j=1}^{r^d} Q_{i_j} \in \sqrt{\ideal{\MVar{P}{d}}} \;.
\end{equation*}
\end{claim}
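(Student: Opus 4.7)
The plan is to reduce the problem to extracting one $Q_i$ per minimal prime of $\sqrt{\ideal{P_1,\ldots,P_d}}$, and then to bound the number of such minimal primes by a Bézout-type degree inequality.

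First I would invoke the primary decomposition of the radical ideal. Write $\sqrt{\ideal{P_1,\ldots,P_d}}=\bigcap_{j=1}^{s}\mathfrak{p}_j$, where $\mathfrak{p}_1,\ldots,\mathfrak{p}_s$ are the (finitely many) minimal primes, corresponding to the irreducible components of the variety $Z\eqdef Z(P_1,\ldots,P_d)\subseteq\C^n$. Since $\prod_{i=1}^k Q_i$ lies in $\sqrt{\ideal{P_1,\ldots,P_d}}$, it lies in each $\mathfrak{p}_j$; as $\mathfrak{p}_j$ is prime, at least one of the factors belongs to $\mathfrak{p}_j$. Call it $Q_{i(j)}$. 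Collecting these factors yields a subset $\cI=\{i(1),\ldots,i(s)\}\subseteq[k]$ of size at most $s$ such that $\prod_{j=1}^{s}Q_{i(j)}\in\bigcap_{j=1}^{s}\mathfrak{p}_j=\sqrt{\ideal{P_1,\ldots,P_d}}$. So the whole task is really to show $s\leq r^d$.

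For the bound $s\leq r^d$ I would use the weak form of Bézout's inequality: the number of irreducible components of $Z(P_1,\ldots,P_d)$ is at most the degree of $Z$, and $\deg(Z)\leq\prod_{i=1}^{d}\deg(P_i)\leq r^d$. The cleanest way to see this is inductively: intersecting the variety defined by the first $d-1$ polynomials with the hypersurface $\{P_d=0\}$ multiplies the degree by at most $\deg(P_d)$, and the number of components is bounded by the degree. One must be careful because the $P_i$'s need not form a regular sequence, but the inequality version of Bézout (as opposed to the equality version) holds in full generality and this is exactly what is needed here. Since the number of minimal primes of an ideal equals the number of irreducible components of its zero set, we get $s\leq r^d$, and appending arbitrary extra indices if needed we obtain a subset of size exactly $r^d$ (or padding one of the $Q_{i(j)}$'s by repetition, since only the \emph{existence} of such an index set of size $r^d$ is claimed).

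The main potential obstacle is the Bézout inequality step, because the polynomials $P_1,\ldots,P_d$ are arbitrary and the scheme-theoretic intersection may be non-reduced or of higher-than-expected dimension; however, the inequality $\deg Z(P_1,\ldots,P_d)\leq\prod\deg(P_i)$ does hold in full generality (see e.g.\ Heintz's bound), so this step goes through without additional hypotheses. Everything else is a direct consequence of primeness and the elementary fact that a product lies in a prime only if one of the factors does.
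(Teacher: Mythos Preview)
The paper does not supply its own proof of this claim; it is quoted verbatim as Claim~11 of \cite{Gupta14} and used as a black box. Your argument is correct and is in fact the standard one: decompose $\sqrt{\ideal{P_1,\ldots,P_d}}$ as an intersection of its minimal primes, select one $Q_i$ per prime by primality, and bound the number of minimal primes by the affine B\'ezout--Heintz inequality $\sum_{C}\deg C\le\prod_i\deg P_i\le r^d$ (the sum running over all irreducible components of $Z(P_1,\ldots,P_d)$, each of degree at least~$1$). The only point worth flagging is that you need the version of B\'ezout that bounds the total degree summed over \emph{all} components, not merely the equidimensional or top-dimensional ones; the Heintz form you cite does exactly this, so the step goes through.
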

Thus, for $r=d=2$ it follow that there are at most four polynomials among the $Q_i$s whose product is in $\sqrt{\ideal{A,B}}$. \\


Before proving  \autoref{thm:structure} we explain the intuition behind the different cases in the theorem. Clearly, if one of $Q_1,\ldots,Q_4$ is a linear combination of $A,B$ then it is in their radical (and in fact, in their linear span). If $A$ and $B$ span a product of the form $ab$ then, say, $(A+ac)(A+bd)$ is in their radical. Indeed, $\sqrt{\ideal{A,B}} = \sqrt{\ideal{A,ab}}$. This case is clearly different than the linear span case. Finally, we note that if $A=ac+bd$ and $B=ae+bf$ then the product $a\cdot b\cdot (cf-de)$ is in $\sqrt{\ideal{A,B}}$.\footnote{If we insist on having all factors of degree $2$ then the same argument shows that  the product $(a^2+A)\cdot (b^2+B)\cdot (cf-de)$ is in $\sqrt{\ideal{A,B}}$.} This case is different than the other two cases as $A$ and $B$ do not span any linear form (or any reducible quadratic) non trivially.


Thus, all the three cases are distinct and can happen. What \autoref{thm:structure} shows is that, essentially, these are the only possible cases.

\begin{proof}[Proof of \autoref{thm:structure}]
Following \autoref{cla:gup-4} shall assume in the proof that $|\cK|=4$.
By applying a suitable linear transformation we can assume that for some $r\geq 1$  $$A = \sum_{i=1}^{r} x_i^2.$$

%

We can also assume  without loss of generality that $x_1^2$ appears only in $A$ as we can replace $B$ with any polynomial of the form $B'=B - \alpha A$ without affecting the result as $\ideal{A,B} = \ideal{A,B'}$. Furthermore, all cases in the theorem remain the same if we replace $B$ with $B'$ and vice versa.

In a similar fashion we can replace $Q_1$ with $Q_1'=Q_1-\alpha A$ to get rid of the term $x_1^2$ in $Q_1$. We can do the same for the other $Q_i$s. Thus, without loss of generality, the situation is 
\begin{eqnarray}
A &=& x_1^2 - A' \nonumber \\
B &=& x_1\cdot b - B'\ \label{eq:Q2}\\
Q_i  &=& x_1 \cdot b_i - Q'_i  \quad \text{for} \quad  i\in\{1,2,3,4\}  \nonumber 
\end{eqnarray}
where  $A',b,B',Q'_i,b_i$ are homogeneous polynomials that do not depend on $x_1$. 
The analysis shall deal with two cases according to whether $B$ depends on $x_1$ or not, as we  only consider the resultant of $A$ and $B$ with respect to $x_1$ when it appears in both polynomials. 

\paragraph{Case $b\not\equiv 0$:}
Consider the Resultant of $A$ and $B$  with respect to $x_1$. It is easy to see that
$$\res_{x_1}(A,B) = {B'}^2 - b^2 \cdot A'.$$

We first prove that if the resultant is irreducible then Case~\ref{case:span} of \autoref{thm:structure} holds. For this we shall need the following claim.

\begin{claim}\label{cla:res-vanish}
Whenever $\res_{x_1}(A,B) =0$ it holds that $\prod_{i=1}^{4}(B'\cdot b_i - b\cdot Q'_i) =0$. 
\end{claim}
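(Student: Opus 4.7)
The plan is a direct substitution argument. We are given that $\prod_{i=1}^{4} Q_i \in \sqrt{\ideal{A,B}}$, so by Hilbert's Nullstellensatz this product vanishes on the common zero set $Z(A) \cap Z(B)$. To connect this to the resultant, note that the standard resultant identity gives
\[
\res_{x_1}(A,B) = {B'}^2 - b^2 \cdot A',
\]
and that if we are given a point $\va = (\alpha_2,\ldots,\alpha_n)$ with $\res_{x_1}(A,B)(\va)=0$ and with $b(\va)\neq 0$, then setting $\alpha_1 \eqdef B'(\va)/b(\va)$ kills both $A$ and $B$ at $(\alpha_1,\va)$. Indeed, $B(\alpha_1,\va) = b(\va)\alpha_1 - B'(\va) = 0$, and
\[
A(\alpha_1,\va) = \alpha_1^2 - A'(\va) = \frac{B'(\va)^2 - b(\va)^2 A'(\va)}{b(\va)^2} = \frac{\res_{x_1}(A,B)(\va)}{b(\va)^2} = 0.
\]

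Next, I would specialize each $Q_i$ at this same point. Since $Q_i = x_1 b_i - Q'_i$, we get
\[
Q_i(\alpha_1,\va) = \frac{B'(\va) b_i(\va) - b(\va) Q'_i(\va)}{b(\va)}.
\]
The Nullstellensatz hypothesis gives $\prod_{i=1}^{4} Q_i(\alpha_1,\va)=0$, and multiplying through by $b(\va)^4$ yields $\prod_{i=1}^{4}\bigl(B'(\va) b_i(\va) - b(\va) Q'_i(\va)\bigr)=0$ at every such $\va$.

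The only issue is the locus $\{b=0\}$ inside $Z(\res_{x_1}(A,B))$. If $b(\va)=0$ and $\res_{x_1}(A,B)(\va)=0$, then $B'(\va)^2 = b(\va)^2 A'(\va) = 0$, so $B'(\va)=0$ as well; consequently every factor $B'(\va) b_i(\va) - b(\va) Q'_i(\va)$ is zero and the product trivially vanishes. Thus the identity $\prod_{i=1}^{4}\bigl(B'\cdot b_i - b\cdot Q'_i\bigr)=0$ holds pointwise on the zero set of $\res_{x_1}(A,B)$, which is exactly the statement of the claim. I do not anticipate any serious obstacle here: the proof is essentially a one-line Nullstellensatz argument plus a check of the $b=0$ boundary case.
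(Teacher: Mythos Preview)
Your proposal is correct and follows essentially the same argument as the paper: both split into the cases $b(\va)=0$ (where $B'(\va)=0$ follows from the resultant formula, making the product vanish trivially) and $b(\va)\neq 0$ (where one sets $x_1 = B'(\va)/b(\va)$ to land in $Z(A)\cap Z(B)$ and then invokes the vanishing hypothesis on $\prod_i Q_i$).
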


\begin{proof}
Let $\vaa \in \C^{n-1}$ be such that $\res_{x_1}(A,B)(\vaa)=0$ then either $b(\vaa)=0$, which also implies $B'(\vaa)=0$ and in this case the claim clearly holds,  or  $b(\vaa)\neq 0$. Consider the case $b(\vaa)\neq 0$ and set $x_1=B'(\vaa)/b(\vaa)$ (we are free to select a value for $x_1$ as $\res_{x_1}(A,B)$ does not involve $x_1$). Notice that for this substitution we have that $B(\vaa)=0$ and that 
$$A|_{x_1=B'(\vaa)/b(\vaa)} =  (B'(\vaa)/b(\vaa))^2 -  A'(\vaa) = \res_{x_1}(A,B)(\vaa)/b(\vaa)^2=0.$$
Hence, we also have $\prod_{i=1}^{4}Q_i |_{x_1=B'(\vaa)/b(\vaa)} = 0$. In other words that 
$$\left(\frac{1}{b^4}\prod_{i=1}^{4}(B'\cdot b_i - b\cdot Q'_i)\right)(\vaa)=0 \;.$$
\end{proof}
It follows that 
$$\prod_{i=1}^{4}(B'\cdot b_i - b\cdot Q'_i) \in \sqrt{\res_{x_1}(A,B)} \;.$$
In other words, for some positive integer $k$ we have that $\res_{x_1}(A,B)$ divides $\left(\prod_{i=1}^{4}(B'\cdot b_i - b\cdot Q'_i)\right)^k$. 
As every irreducible factor of $\left(\prod_{i=1}^{4}(B'\cdot b_i - b\cdot Q'_i)\right)^k$ is of degree $3$ or less, we get that if the resultant is irreducible then one of the multiplicands must be identically zero. Assume  without loss of generality that $B'b_1-bB'_1=0$. It is not hard to verify that in this case either $Q_1$ is a scalar multiple of $B$  and then \autoref{thm:structure}\ref{case:span} holds, or that $B'$ is divisible by $b$. However, in the latter case it also holds that $b$ divides the resultant, contradicting the assumption that it is irreducible.

From now on we assume that  $\res_{x_1}(A,B)$ is reducible. We consider two possibilities. Either $\res_{x_1}(A,B)$ has a linear factor or it can be written as 
$$\res_{x_1}(Q_1,Q_2)=C\cdot D,$$ for irreducible quadratic polynomials $C$ and $D$.

Consider the case where the resultant has a linear factor. If that linear factor is $b$ then $b$ also divides $B$ and \autoref{thm:structure}\ref{case:rk1} holds. Otherwise, if it is a different linear form $\ell$ then when setting $\ell=0$ we get that the resultant of $A|_{\ell=0}$ and $B|_{\ell=0}$ is zero and hence either $B|_{\ell=0}$ is identically zero and \autoref{thm:structure}\ref{case:rk1} holds, or they share a common factor (see \autoref{thm:res}).  It is not hard to see that if that common factor is of degree $2$ then \autoref{thm:structure}\ref{case:rk1} holds and if it is a linear factor then \autoref{thm:structure}\ref{case:2} holds.

Thus, the only case left to handle (when $b\not \equiv 0$) is when there are two irreducible quadratic polynomials, $C$ and $D$ such that $CD=\res_{x_1}(A,B)$. As $C$ and $D$ divide two multiplicands in $\prod_{i=1}^{4}(B'\cdot b_i - b\cdot Q'_i) $ we can assume, without loss of generality, that 
$(B'\cdot b_3 - b\cdot Q'_3)\cdot (B'\cdot b_4 - b\cdot Q'_4) \in \sqrt{\ideal{\res_{x_1}(A,B)}} $.
Next, we express $A',B',C$ and $D$ as quadratics over $b$. That is
\begin{eqnarray}
A' &=& \alpha b^2 + a_1 b + A''
\label{eq:y} \\
B' &=& \beta b^2 + a_2 b + B''\nonumber \\
C &=& \gamma b^2 + a_3 b + C'' \nonumber  \\
D &=& \delta b^2 + a_4 b + D'',\nonumber 
\end{eqnarray}
where $a_1,\ldots,D''$ do not involve $b$ (nor $x_1$).
We have the following two representations of the resultant:
\begin{eqnarray}
\res_{x_1}(A,B)&=&{B'}^2 - b^2 \cdot A'  \label{eq:res-q}\\
&=& \beta^2\cdot b^4 + 2\beta a_2 \cdot b^3 + (2\beta B''  + a_2^2) \cdot b^2 + 2a_2B'' \cdot b + {B''}^2 - \alpha b^4- a_1 b^3 - A'' b^2 \nonumber \\
&=& (\beta^2-\alpha)b^4 + (2\beta a_2 - a_1)\cdot b^3 + (2\beta B'' + a_2^2- A'') \cdot b^2 + 2a_2B'' \cdot b + {B''}^2 \nonumber
\end{eqnarray}
and
\begin{eqnarray}
\res_{x_1}(Q_1,Q_2)&=&CD\label{eq:res-bc} \\
&=& (\gamma b^2 + a_3 b + C'')\cdot (\delta b^2 + a_4 b + D'')\nonumber \\
&=& \gamma\delta b^4 + (\gamma a_4 + \delta a_3)b^3 + (\gamma D'' + a_3 a_4 + \delta C'')b^2 + (a_3D'' + a_4C'')b + C''D'' .\nonumber 
\end{eqnarray}
Comparing the different coefficients of $b$ in the two representations in Equations~\ref{eq:res-q} and \ref{eq:res-bc} we obtain the following equalities
\begin{eqnarray}
{B''}^2 &=&  C''D'' \label{eq:coeff:1}\\
2a_2B'' &=& a_3D'' + a_4C'' \label{eq:coeff:y}
\end{eqnarray}
We now consider the two possible cases giving \autoref{eq:coeff:1}.
\begin{enumerate}
\item {\bf Case 1  explaining \autoref{eq:coeff:1}:} After rescaling $C$ and $D$ we have that ${B''}=C"=D''$. \autoref{eq:y} implies that for some linear form $u,v$ we have that
$$C = bv + B' \quad \text{and} \quad D = bu+B'\;.$$
We now expand the resultant again:
\begin{eqnarray*}
{B'}^2 + b(v+u)B' + b^2 vu=(bv + B')\cdot ( bu+B') &=& CD \\ 
&=& \res_{x_1}(A,B)= {B'}^2 - b^2 A'
\end{eqnarray*}
Hence, 
\begin{equation}\label{eq:y-cor}
(v+u)B' + b vu=- b A' \;.
\end{equation}
Thus, either $b$ divides $B'$ in which case we get that $b$ divides $B$ and we are done as \autoref{thm:structure}~\ref{case:rk1} holds, or $b$ divides $u+v$. That is, 
\begin{equation}\label{eq:u+v}
u+v = \epsilon b
\end{equation} 
for some constant $\epsilon\in \C$. Plugging this back into \autoref{eq:y-cor} we get 
$$\epsilon b B' + b vu=- b A' \;.$$
In other words, 
$$\epsilon B' + vu= -  A' \;.$$
Consider the linear combination $Q=A + \epsilon B$. 
We get that
\begin{eqnarray}
Q=A + \epsilon B &=& (x_1^2 - A') + \epsilon (x_1 b-B') \nonumber \\ 
&=& x_1^2 + \epsilon x_1 b + vu  \nonumber \\
&=& x_1^2 + x_1(u+v)+uv  \nonumber\\
&=& (x_1+u)(x_1+v)\;.\label{eq:Q-reduce}
\end{eqnarray}
where the equality in the third line follows from \autoref{eq:u+v}. Thus, \autoref{eq:Q-reduce} shows that some linear combination of $A$ and $B$ is reducible which implies that \autoref{thm:structure}\ref{case:rk1} holds.

\item {\bf Case 2 explaining \autoref{eq:coeff:1}:} ${B''} = u\cdot v$ and we have that,  without loss of generality, $C''=u^2$ and $D''=v^2$ (where $u,v$ are linear forms).
Consider \autoref{eq:coeff:y}. We have that $v$ divides $2a_2B'' - a_3D''$. It follows that $v$ is also a factor of $ a_4C'' $. Thus, either $u$ is a multiple of $v$ and we are back in the case where $C''$ and $D''$ are multiples of each other, or $a_4$ is a multiple of $v$. In this case we get from \autoref{eq:y} that for some constant $\delta'$,  
$$D = \delta b^2 + a_4 b + D'' = \delta b^2 +\delta' v b + v^2.$$
Thus, $D$ is a homogeneous polynomial in two linear forms. Hence, $D$ is reducible, in contradiction. 
\end{enumerate}
This concludes the proof of \autoref{thm:structure} for the case $b\not\equiv 0$.

\paragraph{Case $b\equiv 0$:}

To ease notation let use denote $x=x_1$. We have that
$A = x^2 - A' $ and that $x$ does not appear in $A',B$. Let $y$ be some variable such that $B=y^2-B'$, and $B'$ does not involve $y$ (we can always assume this is the case without loss of generality). As before we can subtract a multiple of $B$ from $A$ so that the term $y^2$ does not appear   in $A$. If $A$ still involves $y$ then we are back in the previous case (treating $y$ as the variable according to which we take the resultant). Thus, the only case left to study is when there are two variables $x$ and $y$ such that
$$A = x^2 - A' \quad \text{and} \quad B= y^2 - B' \;,$$ where neither $A'$ nor $B'$ involve either $x$ or $y$. To ease notation denote the rest of the variables as $\vz$. Thus, $A'=A'(\vz)$ and $B'=B'(\vz)$. It is immediate that for any assignment to $\vz$ there is an assignment to $x,y$ that yields a common zero of $A,B$.

By subtracting linear combinations of $A$ and $B$ from the $Q_i$s we can assume that for every $i\in [4]$
$$Q_i = \alpha_i xy + a_i(\vz) x + b_i(\vz) y + Q'_i(\vz) \;.$$


We next show that, under the assumptions in the theorem statement it must be the case that either $A'$ or $B'$ is a perfect square or that $A'\sim B'$. In either situation we have that \autoref{thm:structure}\ref{case:rk1} holds.
We next show that if $A'$ and $B'$ are  linearly independent then this  implies that   at least  one of $A',B'$ is a perfect square. 

Let $Z(A,B)$ be the set of common zeros of $A$ and $B$, and denote by $\pi_{\vz}: Z(A,B) \rightarrow \C^{n-2}$, the projection on the $\vz$ coordinates. Note that $\pi_{\vz}$ is surjective, as for any assignment to $\vz$ there is an assignment to $x,y$ that yields a common zero of $A,B$. 
\begin{claim}
Let $Z(A,B) = \bigcup_{i=1}^k X_k$, be the decomposition of $Z(A,B)$ to irreducible components. Then there exists $i\in [k]$ such that $\pi_{\vz}(X_i)$ is dense in $\C^{n-2}$.
\end{claim}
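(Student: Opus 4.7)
The plan is to deduce this from the irreducibility of $\C^{n-2}$ as an algebraic variety, combined with the surjectivity of $\pi_{\vz}$ that was already observed. First, since for every $\vec{\zeta}\in\C^{n-2}$ we may pick any square roots $x=\sqrt{A'(\vec{\zeta})}$ and $y=\sqrt{B'(\vec{\zeta})}$ to obtain a point in $Z(A,B)$ above $\vec{\zeta}$, the projection $\pi_{\vz}$ is surjective onto $\C^{n-2}$. Writing $Z(A,B)=\bigcup_{i=1}^k X_i$ as a union of its (finitely many) irreducible components, we therefore have the covering
\[
\C^{n-2}=\pi_{\vz}(Z(A,B))=\bigcup_{i=1}^{k}\pi_{\vz}(X_i).
\]

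Next, I would invoke the general fact that the image of a constructible set under a polynomial (in particular under a linear projection) is constructible; this is Chevalley's theorem applied to the morphism $\pi_{\vz}:\C^n\to\C^{n-2}$. Let $Y_i$ denote the Zariski closure of $\pi_{\vz}(X_i)$. Then $Y_i$ is a closed algebraic subvariety of $\C^{n-2}$, and
\[
\C^{n-2}=\bigcup_{i=1}^{k}\pi_{\vz}(X_i)\subseteq\bigcup_{i=1}^{k}Y_i.
\]
Since $\C^{n-2}$ is irreducible as an algebraic variety, it cannot be written as a finite union of proper closed subvarieties. Hence some $Y_{i_0}$ must coincide with all of $\C^{n-2}$, which is precisely the statement that $\pi_{\vz}(X_{i_0})$ is Zariski dense (equivalently, contains a Zariski open subset of $\C^{n-2}$, which is then classically dense as well).

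There is no real obstacle here: the whole content of the claim is a pigeonhole argument on the finitely many irreducible components of $Z(A,B)$, powered by the surjectivity of $\pi_{\vz}$ and the irreducibility of affine space. The only care required is to phrase ``dense'' correctly so that it is usable in the subsequent argument, which presumably will continue by fixing such a component $X_{i_0}$ and working over a Zariski-open subset of $\vz$-space on which the generic fiber of $\pi_{\vz}|_{X_{i_0}}$ has a controlled structure (e.g.\ finitely many choices of $(x,y)$ modulo the sign ambiguities coming from $x^2=A'$, $y^2=B'$). That behavior on a dense open set is exactly what will later allow the $Q_i$'s to be analyzed generically on $\vz$.
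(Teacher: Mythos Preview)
Your proof is correct and follows essentially the same route as the paper: surjectivity of $\pi_{\vz}$ gives $\C^{n-2}=\bigcup_i \pi_{\vz}(X_i)\subseteq\bigcup_i \overline{\pi_{\vz}(X_i)}$, and irreducibility of $\C^{n-2}$ forces one of the closures to be everything. The only difference is that you invoke Chevalley's theorem, which is unnecessary here: the argument uses nothing about the images beyond the fact that their \emph{closures} are closed subvarieties, which is automatic.
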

\begin{proof}
 $\bigcup_{i=1}^k \pi_{\vz}( X_i) = \pi_{\vz}(Z(A,B)) = \C^{n-2}$, as $\pi_{\vz}$ is a surjection, it holds that  $\bigcup_{i=1}^k \overline{\pi_{\vz}( X_i)} = \C^{n-2}$. We also know that $\C^{n-2}$ is irreducible, and thus there is $i\in [k]$ such that $\overline{\pi_{\vz}( X_i)} = \C^{n-2}$, which implies that ${\pi_{\vz}( X_i)} $ is dense.
\end{proof}

Assume, without loss of generality that $\pi_{\vz}( X_1)$ is dense. We know that $X_1 \subseteq Z(\prod_{i=1}^{4}Q_i)$ so we can assume, without loss of generality that $X_1 \subseteq Z(Q_1)$. Observe that this implies that $Q_1$ must depend on at least one of $x,y$. Indeed, if $Q_1$ depends on neither then it is a polynomial in $\vz$ and hence its set of zeros cannot be dense. 

Every point $\vxx \in X_1$ is of the form $\vxx = (\delta_1\sqrt{A'(\vbb)},\delta_2\sqrt{B'(\vbb)},\vbb)$, for some $\vbb \in \C^{n-2}$, $\delta_1,\delta_2 \in \{\pm 1\}$ ($ \delta_1,\delta_2$ may be a function of $\vbb$).  Thus $Q_1(\vxx) = Q_1(\delta_1\sqrt{A'(\vbb)},\delta_2\sqrt{B'(\vbb)},\vbb) =0$, and we obtain that 

\begin{equation}\label{eq:sqrt}
\alpha_1 \delta_1\delta_2\sqrt{A'(\vbb')}\cdot \sqrt{B'(\vbb')} + a_1(\vbb') \delta_1 \sqrt{A'(\vbb')} + b_1(\vbb') \delta_2\sqrt{B'(\vbb')} + Q'_1(\vbb')=0\;.
\end{equation}
As we assumed that $Q_1$ depends on at least one of $x,y$ let us assume  without loss of generality that either $\alpha_1$ or $a_1$ are non zero. The next argument is similar to the proof that $\sqrt{2}$ is irrational. Note that we use the fact that $\delta_1^2=\delta_2^2=1$.

\begin{eqnarray}
\eqref{eq:sqrt} 
\implies &  B'(\vbb')\left( \alpha_1 \delta_1 \sqrt{A'(\vbb')} + b_1(\vbb') \right)^2 = \left( Q'_1(\vbb') +a_1(\vbb') \delta_1 \sqrt{A'(\vbb')} \right)^2 \nonumber \\
\implies & B'(\vbb')\left( \alpha_1^2 {A'(\vbb')} +2\delta_1\alpha_1 b_1(\vbb') \sqrt{A'(\vbb')} + 
 b_1(\vbb')^2 \right) = \nonumber \\ & \quad\quad\quad Q'_1(\vbb')^2 + 2\delta_1 a_1(\vbb')Q'_1(\vbb') \sqrt{A'(\vbb')} + a_1(\vbb')^2 {A'(\vbb')} \nonumber \\
 \implies &\delta_1 \sqrt{A'(\vbb')}\left( 2\alpha_1 b_1(\vbb') B'(\vbb') -  2a_1(\vbb')Q'_1(\vbb') \right) = \label{eq:sqrt-2} \\
 & \quad\quad\quad Q'_1(\vbb')^2  +a_1(\vbb')^2 {A'(\vbb')} -B'(\vbb')\left( \alpha_1^2 {A'(\vbb')}  + 
 b_1(\vbb')^2 \right)\nonumber  \\
 \implies & A'(\vbb') \left( 2\alpha_1 b_1(\vbb') B'(\vbb') -  2a_1(\vbb')Q'_1(\vbb') \right)^2 = \label{eq:C=0} \\
 & \quad\quad\quad \left(Q'_1(\vbb')^2  +a_1(\vbb')^2 {A'(\vbb')} -B'(\vbb')\left( \alpha_1^2 {A'(\vbb')}  + 
 b_1(\vbb')^2 \right)\right)^2 \nonumber  \;.
\end{eqnarray}
This equality holds for every $\vbb \in \pi_{\vz}(X_1)$, which is a dense set, and hence holds as a polynomial identity.
Thus, either $A'(\vz)$ is a square, in which case we are done or it must be the case that the following identities hold
\begin{equation}
Q'_1(\vz)^2  +a_1(\vz)^2 {A'(\vz)} -B'(\vz)\left( \alpha_1^2 {A'(\vz)}  + 
 b_1(\vz)^2 \right) = 0 \label{eq:C=01}
 \end{equation}
 and
 \begin{equation}
 \alpha_1 b_1(\vz) B'(\vz) -  a_1(\vz)Q'_1(\vz) =0 \;.\label{eq:C=02}
\end{equation}
By symmetry, if $B'(\vz)$ is not a square (as otherwise we are done), we get that 
 \begin{equation}
 \alpha_1 a_1(\vz) A'(\vz) -  b_1(\vz)Q'_1(\vz) =0 \;.\label{eq:C=03}
\end{equation}
If $\alpha_1=0$ then we get from \eqref{eq:C=02} that $Q'_1\equiv 0$. 
Hence,
by \eqref{eq:C=01},
$$a_1(\vz)^2 {A'(\vz)} = B'(\vz)b_1(\vz)^2\;.$$
Since we assumed that $A'$ and $B'$ are independent this implies that $A'$ and $B'$ are both squares.
If $Q'_1\not\equiv 0$ (and in particular, $\alpha_1\neq 0$ ) then either $a_1(\vz)=b_1(\vz)\equiv 0$, in which case Equation~\eqref{eq:C=01} implies that $Q'_1(\vz)^2 =\alpha_1^2 A'(\vz)B'(\vz)$ and we are done (as either both $A'$ and $B'$ are squares or they are both multiples of $Q'_1$), or Equations~\eqref{eq:C=02},\eqref{eq:C=03} imply that $\alpha_1^2  A'(\vz) B'(\vz) = Q'_1(\vz)^2$
which again implies the claim.

This concludes the proof of  \autoref{thm:structure} for the case $b\equiv 0$ and thus the proof of the theorem.
\end{proof}

\section{Sylvester-Gallai theorem for quadratic polynomials}\label{sec:quad-SG}

In this section we prove \autoref{thm:main-sg-intro}. For convenience we repeat the statement of the theorem.




\begin{theorem*}[\autoref{thm:main-sg-intro}]
There exists a universal constant $c$ such that the following holds. 
Let $\tilde{\cQ} = \{Q_i\}_{i\in \{1,\ldots,m\}}\subset\C[x_1,\ldots,x_n]$ be a finite set of pairwise linearly independent homogeneous polynomials, such that every $Q_i\in \tilde{\cQ}$ is either irreducible or a square of a linear form. Assume  that, for every $i\neq j$, whenever $Q_i$ and $Q_j$ vanish then so does $\prod_{k\in  \{1,\ldots,m\} \setminus\{i,j\}} Q_k$. Then,  $\dim(\spn{\cQ})\leq c$.
\end{theorem*}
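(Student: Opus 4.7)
The plan is to apply the structure theorem \autoref{thm:structure} to every pair from $\tilde\cQ$ and then case-split according to which alternative of the theorem is satisfied by a ``typical'' pair containing a fixed polynomial $Q \in \tilde\cQ$. Concretely, \autoref{cla:gup-4} lets us strengthen the hypothesis to: for every $i \neq j$ there is a subset $\cK \subseteq [m]\setminus\{i,j\}$ of size at most $4$ with $\prod_{k\in\cK} Q_k \in \sqrt{\ideal{Q_i, Q_j}}$, so at least one of cases \ref{case:span}, \ref{case:rk1}, \ref{case:2} of \autoref{thm:structure} applies to every pair. Call $Q \in \tilde\cQ$ \emph{spanning-heavy} if at least a constant fraction of the pairs $(Q, Q_j)$ fall into case \ref{case:span} (i.e.\ there is a third polynomial of $\tilde\cQ$ in $\spn{Q, Q_j}$).

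If every $Q$ is spanning-heavy, then the (projective equivalence classes of the) quadratics form a $\delta$-SG configuration in the ambient vector space of quadratic forms, and the robust Sylvester--Gallai theorem (\autoref{thm:robustSG}), applied to the coefficient vectors, immediately yields $\dim(\spn{\tilde\cQ}) = O(1)$. Otherwise there exists $Q \in \tilde\cQ$ for which an $\Omega(1)$ fraction of pairs fall into cases \ref{case:rk1} or \ref{case:2}. In the subcase where a constant fraction witness case \ref{case:2} — i.e.\ $Q, Q_j \in \ideal{a_j, b_j}$ for some linear forms $a_j, b_j$ — the idea is to use the rigid intersection structure of the $2$-dimensional spaces $\spn{a_j, b_j}$ (invoking \autoref{cla:linear-spaces-intersaction }) to find a bounded-dimensional space $V$ of linear forms such that every rank-$2$ polynomial in $\tilde\cQ$ lies in $\ideal{V}$. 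The polynomials outside $\ideal{V}$ can then be analyzed via the projections $T_{\vaa,V}$ of \autoref{sec:z-map}, which preserve pairwise linear independence by \autoref{cla:still-indep} and reduce us to a smaller instance, while the dimension of $\tilde\cQ \cap \ideal{V}$ is controlled by yet another application of the robust SG theorem — this time to the linear forms composing those polynomials.

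The remaining subcase, which forms the technical heart of the paper (\autoref{sec:q-dom}), is when there is some $Q_o \in \tilde\cQ$ such that at least $0.98 m$ of the other polynomials $Q_i$ satisfy case \ref{case:rk1} with $Q_o$; after rescaling each such $Q_i$ has the form $Q_i = Q_o + a_i b_i$ for linear forms $a_i, b_i$. The goal is to show $\dim(\spn{\{a_i, b_i\}_i}) = O(1)$; once this is done, the dimension of these $0.98m$ polynomials is bounded, and the remaining $\leq 0.02 m$ polynomials are absorbed by applying the previous cases to the residual configuration. To carry this out I would re-apply \autoref{thm:structure} to pairs $(Q_o + a_i b_i, Q_o + a_j b_j)$; each of the three alternatives translates, via $\MS(\cdot)$ together with \autoref{cla:ind-rank} and \autoref{cla:rank-2-in-V}, into rigid linear constraints on the quadruple $(a_i, b_i, a_j, b_j)$, and these can be iterated across $j$.

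The main obstacle is exactly this last subcase, for two reasons that distinguish it from the analogous step in~\cite{DBLP:conf/stoc/Shpilka19}. First, case~\ref{case:rk1} of \autoref{thm:structure} allows the linear combination $\alpha Q_i + \beta Q_j$ to be an arbitrary product $ab$ rather than a square $a^2$, so the forms $a_i, b_i$ are only loosely pinned down and do not immediately collapse into a constant-dimensional span. Second, the witnessing polynomial inside $\prod_{k\in\cK} Q_k$ is not unique — we only know it lies in a multiset of size $\leq 4$ — which blocks the clean pigeonhole used in the earlier work. To overcome this I would further split according to whether $\rank_s(Q_o)$ is large, in which case many $a_i, b_i$ can be forced to lie in $\MS(Q_o)$ (and hence in a low-dimensional space), or small, in which case $Q_o$ depends on few linear forms and the arguments can be pushed into the quotient ring $\CRing{x}{n}/\ideal{\MS(Q_o)}$, where \autoref{fact:ufd} controls how the factors $a_i b_i$ can interact. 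Proving these dimension bounds in both regimes, uniformly over the $0.98m$ polynomials, is where essentially all the work of the paper is concentrated.
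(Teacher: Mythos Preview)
Your plan tracks the paper's architecture, but the top-level case split as stated does not close. You fix a \emph{single} non-span-heavy $Q$ and branch on whether it is case-\ref{case:2}-heavy or case-\ref{case:rk1}-heavy, and from the former you want ``every rank-$2$ polynomial in $\tilde\cQ$ lies in $\ideal{V}$'' for a bounded $V$. That does not follow: the case-\ref{case:2} neighbours of this one $Q$ all lie in $\ideal{\MS(Q)}$, but the remaining $(1-\delta)m$ polynomials are unconstrained and may contain further non-span-heavy ones. The paper instead defines $\calP_1$ (span-heavy) and $\calP_3$ (case-\ref{case:2}-heavy) \emph{globally} and splits on whether $\cQ=\calP_1\cup\calP_3$. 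In the affirmative branch it iterates over all of $\calP_3$ to grow $V$ (\autoref{cla:VforP3}, via a pigeonhole on the $4$-dimensional $\MS$-spaces rather than \autoref{cla:linear-spaces-intersaction }), then runs a separate greedy cover over $\calP_1$ modulo $\ideal V$ (\autoref{cla:Iexists}), and only then applies $T_{\vaa,V}$ plus linear Sylvester--Gallai. The negative branch produces a $Q_o$ that is simultaneously not in $\calP_1$ and not in $\calP_3$, which is what forces $\geq(1-2\delta)m$ (not merely $\Omega(m)$) of its pairs into case~\ref{case:rk1}.

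Your high-rank mechanism is also inverted. When $\rank_s(Q_o)$ is large, $\MS(Q_o)$ is large too, and there is no reason for $a_i,b_i$ to lie in it (e.g.\ $Q_o=\sum_{k\le 200}x_k^2$ and $Q_i=Q_o+x_{201}x_{202}$). The actual leverage of high rank is \autoref{obs:case3}: it rules out case~\ref{case:2} for every pair involving any $Q_j=Q_o+a_jb_j$, so \autoref{cla:span-rank1-hr} together with \autoref{cla:intersection} forces the $2$-spaces $V_i=\spn{a_i,b_i}$ to pairwise intersect, and \autoref{cla:linear-spaces-intersaction } then collapses them into a $V$ of dimension $\le 4$ unrelated to $\MS(Q_o)$; one finally shows the $a_i$ satisfy linear Sylvester--Gallai modulo that $V$ (\autoref{cla:stisfy-sg-lines-hr}). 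It is the \emph{low}-rank subcase in which $\MS(Q_o)$ is small and gets folded into $V$.
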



\begin{remark}
The requirement that the polynomials are homogeneous is not essential as homogenization does not affect the property $Q_k\in\sqrt{\ideal{Q_i,Q_j}}$. 
\end{remark}

\begin{remark}
Note that we no longer demand that the polynomials are irreducible but rather allow some of them to be square of linear forms, but now we restrict all polynomials to be of degree exactly $2$. Note that both versions of the theorem are equivalent, as this modification does not affect the vanishing condition. 
\end{remark}

\begin{remark}\label{rem:K=4}
Note that from \autoref{cla:gup-4} it follows that for every $i\neq j$ there exists a subset $\cK \subseteq [m]\setminus\{i,j\}$ such that $|\cK|\leq 4$ and whenever $Q_i$ and $Q_j$ vanish then so does $\prod_{k\in \cK} Q_k$.
\end{remark}

In what follows we shall use the following terminology. Whenever we say that two quadratics $Q_1,Q_2\in\tilde\cQ$ satisfy \autoref{thm:structure}\ref{case:span} we mean that there is a polynomial $Q_3\in\tilde\cQ\setminus\{Q_1,Q_2\}$ in their linear span. Similarly, when we say that they satisfy \autoref{thm:structure}\ref{case:rk1}  (\autoref{thm:structure}\ref{case:2}) we mean that there is a reducible quadratic in their linear span (they belong to $\ideal{a_1,a_2}$ for linear forms $a_1,a_2$).

\begin{proof}[Proof of \autoref{thm:main-sg-intro}]

Partition the polynomials to two sets. Let $\calL$ be the set of all squares and let $\cQ$
be the subset of irreducible quadratics, thus $\tilde{\cQ} = \cQ \cup \calL$. Denote $|\cQ| = m$, $|\calL| = r$.
Let $\delta =\frac{1}{100}$, 
and denote
\begin{itemize}
\item  $\calP_1 = \{P\in \cQ \mid \text{There are at least } \delta m \text{ polynomials in } \cQ \text { such that } P \text{ satisfies \autoref{thm:structure}\ref{case:span}} \\ \text{\hspace{2.6cm} but not \autoref{thm:structure}\ref{case:rk1} with each of them} \}$.
\item  $\calP_3 = \{P\in \cQ \mid  \text{There are at least } \delta m \text{ polynomials in } \cQ \text { such that } P \text{ satisfies \autoref{thm:structure}\ref{case:2}} \\ \text{\hspace{2.6cm} 
 with each of them} \}$.
\end{itemize}
The proof first deals with the case where $\cQ = \calP_1\cup \calP_3$. We then handle the case that there is $Q \in \cQ \setminus(\calP_1\cup \calP_3)$.

\subsection{The case $\cQ=\calP_1\cup\calP_3$.}

Assume that $\cQ = \calP_1\cup \calP_3$. For our purposes, we may further assume that $\calP_1\cap \calP_3 = \emptyset$, by letting $\calP_1 = \calP_1 \setminus \calP_3$.

\begin{claim}\label{cla:VforP3}
There exists a linear space of linear forms, $V$, such that $\dim(V)=O(1)$ and $\calP_3 \subset \ideal{V}$.
\end{claim}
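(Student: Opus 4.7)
The plan is to find a small-dimensional space $V$ of linear forms, built from $\MS(P)$'s of a carefully selected subfamily of $\calP_3$, such that $\calP_3\subseteq\ideal V$. The key starting observation is that every $P\in\calP_3$ has $\rank_s(P)=2$: indeed $P$ is irreducible, so $\rank_s(P)\ge 2$, and $P$ is matched (via Case~\ref{case:2} of \autoref{thm:structure}) with some $Q\in\cQ$ through a pair of linear forms $(a,b)$, giving $P=a\ell+bm$ so $\rank_s(P)\le 2$. From $\MS(P)\subseteq\spn{a,b,\ell,m}$ and an analysis analogous to \autoref{cla:rank-2-in-V}, one gets $\spn{a,b}\subseteq\MS(P)$; combined with $Q\in\ideal{a,b}$ this yields $Q\in\ideal{\MS(P)}$. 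Hence the entire matched set $\cS_P\subseteq\cQ$ of Case~\ref{case:2}-partners of $P$ (of size $\ge\delta m$) sits in $\ideal{\MS(P)}$, and $\dim\MS(P)\le 4$.

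Given this, the construction of $V$ is greedy: start with $V_0=\{0\}$, and at step $j+1$, if there exists $P_{j+1}\in\calP_3\setminus\ideal{V_j}$, set $V_{j+1}:=V_j+\MS(P_{j+1})$. Each step increases $\dim V$ by at most $4$, and the process terminates exactly when $\calP_3\subseteq\ideal V$. Therefore the entire claim reduces to showing that this process halts after $O_\delta(1)=O(1)$ iterations.

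For termination, the plan is to exploit the fact that, by construction, $P_{j+1}\notin\ideal{V_j}\supseteq\ideal{\MS(P_i)}$ for all $i\le j$. Applying \autoref{thm:structure} to each pair $(P_i,P_{j+1})$, Case~\ref{case:2} is ruled out (it would put $P_{j+1}\in\ideal{\MS(P_i)}$). So every pair falls into Case~\ref{case:span} (some element of $\cQ$ lies in $\spn{P_i,P_{j+1}}$) or Case~\ref{case:rk1} (some non-trivial combination $\alpha P_i+\beta P_{j+1}$ is reducible as $cd$). I expect the main obstacle to be cleanly handling Case~\ref{case:rk1}: the reducible combination $cd$ need not lie in $\ideal{V_j}$, so it does not directly obstruct picking $P_{j+1}$. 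To control it, I would combine the structural constraints across all pairs with the large sets $\cS_{P_i}\subseteq\ideal{V_j}$: either many pairs fall into Case~\ref{case:span}, yielding a dense SG-type configuration among $\{P_1,\dots,P_t\}$ that is bounded via \autoref{thm:robustSG} applied on appropriate projections $T_{\vaa,V_j}$ (using \autoref{cla:z-map-dimension}), or many pairs are Case~\ref{case:rk1} and then applying \autoref{thm:structure} once more to a triple $(P_i,P_{j+1},Q)$ with $Q\in\cS_{P_i}$ forces $c$ or $d$ into $\MS(P_i)\subseteq V_j$, contradicting $P_{j+1}\notin\ideal{V_j}$.

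Once this termination bound gives $t=O(1)$, we obtain $\dim V_t\le 4t=O(1)$ and $\calP_3\subseteq\ideal{V_t}$, establishing the claim.
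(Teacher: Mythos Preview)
Your setup matches the paper: the greedy construction of $V$ by repeatedly adjoining $\MS(P)$ for some $P\in\calP_3\setminus\ideal{V}$, together with the observation that each such $P$ has $\rank_s(P)=2$ and $\dim\MS(P)\le 4$, is exactly what the paper does. The divergence, and the gap, is in the termination argument.

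You try to bound the number of steps by applying \autoref{thm:structure} to pairs $(P_i,P_{j+1})$ inside the greedily chosen family and then case-splitting on \ref{case:span}/\ref{case:rk1}. But your treatment of \autoref{thm:structure}\ref{case:rk1} is not an argument: knowing $\alpha P_i+\beta P_{j+1}=cd$ gives no reason for $c$ or $d$ to land in $\MS(P_i)$, and ``applying \autoref{thm:structure} once more to a triple $(P_i,P_{j+1},Q)$'' has no meaning (the theorem concerns a pair). The robust-SG route you sketch for \ref{case:span} also does not control the number of greedy steps, since the $P_i$ are being chosen adaptively and there is no $\delta$-fraction hypothesis among them.

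The paper's termination is entirely different and much shorter. It never analyzes pairs inside $\calP'_3$. Instead it uses the stronger half of \autoref{rem:4-2-dim}: whenever $Q$ and $P_i$ are \autoref{thm:structure}\ref{case:2}-partners, not only $\dim\MS(P_i)\le 4$ but also $\dim(\MS(Q)\cap\MS(P_i))\ge 2$. From this, a four–dimensional linear algebra count (\autoref{lem:3-are-V}) shows that any fixed $Q\in\cQ$ can be a \ref{case:2}-partner of at most three of the greedily selected $P_i$'s: the subspaces $U_i=\MS(Q)\cap\MS(P_i)$ are $\ge 2$-dimensional inside the $\le 4$-dimensional $\MS(Q)$, and by the greedy choice each $U_{i+1}$ is not contained in $U_1+\cdots+U_i$, so three of them already fill $\MS(Q)$. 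Then a straight double count---each $P_i$ has $\ge\delta m$ partners, each $Q$ is a partner of at most $3$ of the $P_i$---gives $|\calP'_3|\le 3/\delta$ and hence $\dim V\le 12/\delta$. You already isolated the sets $\cS_{P_i}$ of size $\ge\delta m$; the missing idea is to bound how many $\cS_{P_i}$ a single $Q$ can belong to, via the intersection bound, rather than to reason about the $P_i$ pairwise.
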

The intuition  behind the claim is based on the following observation.

\begin{observation}\label{rem:4-2-dim}
If $Q_1,Q_2 \in \cQ$ satisfy \autoref{thm:structure}\ref{case:2} then $\dim(\MS(Q_1)), \dim(\MS(Q_2)) \leq 4$ and $\dim(\MS(Q_1)\cap\MS(Q_2)) \geq 2$. 
\end{observation}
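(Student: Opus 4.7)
My plan is to unpack the hypothesis directly. By the convention fixed just before the proof, saying that $Q_1, Q_2$ satisfy \autoref{thm:structure}\ref{case:2} means that there exist linear forms $a, b$ with $Q_1, Q_2 \in \ideal{a, b}$. I would first show that $a, b$ must be linearly independent: otherwise $\ideal{a, b} = \ideal{a}$, and then each $Q_i$ would be divisible by $a$, contradicting the fact that $Q_1, Q_2 \in \cQ$ are irreducible. So $\dim(\spn{a, b}) = 2$.

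Since each $Q_i$ is a homogeneous quadratic lying in $\ideal{a, b}$, I can write $Q_i = a c_i + b d_i$ for some linear forms $c_i, d_i$. This is a $2$-term product representation, so $\rank_s(Q_i) \leq 2$. On the other hand, $Q_i$ is irreducible, so $\rank_s(Q_i) \geq 2$. Hence $\rank_s(Q_i) = 2$, and the representation $Q_i = a c_i + b d_i$ is already a minimal representation in the sense of \autoref{def:rank-s}.

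Applying \autoref{cla:irr-quad-span} (uniqueness of the minimal space up to the choice of minimal representation) then pins down $\MS(Q_i) = \spn{a, b, c_i, d_i}$ for $i = 1, 2$. Both conclusions now drop out at once: $\dim(\MS(Q_i)) \leq 4$ is immediate from this description, and both $\MS(Q_1)$ and $\MS(Q_2)$ contain the $2$-dimensional subspace $\spn{a, b}$, so $\dim(\MS(Q_1) \cap \MS(Q_2)) \geq 2$. I do not anticipate any genuine obstacle here, since every step is either a definitional unpacking of case (iii) or a direct application of \autoref{cla:irr-quad-span} together with the elementary sandwich $2 \leq \rank_s(Q_i) \leq 2$ coming from irreducibility.
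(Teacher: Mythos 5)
Your proof is correct, and since the paper states this as an observation without proof, your argument is exactly the intended unpacking: case (iii) puts $Q_1,Q_2\in\ideal{a,b}$ with $a,b$ necessarily independent (else the $Q_i$ would be reducible), homogeneity gives $Q_i=ac_i+bd_i$, irreducibility forces $\rank_s(Q_i)=2$ so this representation is minimal, and \autoref{cla:irr-quad-span} then yields $\MS(Q_i)=\spn{a,b,c_i,d_i}\supseteq\spn{a,b}$. No gaps; this matches the paper's (implicit) reasoning.
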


Thus, we have many small dimensional spaces that have large pairwise intersections and we can therefore expect that such a $V$ may exist.

\begin{proof}
We prove the existence of $V$ by explicitly constructing it.
Repeat the following process: Set $V = \{\vec{0}\}$, and $\calP'_3=\emptyset$. At each step consider any $Q \in \calP_3$ such that $Q\notin \ideal{V}$ and set $V =\MS(Q) + V$, and $\calP'_3=\calP'_3 \cup \{Q\}$. Repeat this process as long as possible, i.e, as long as $\calP_3 \not \subseteq \ideal{V}$. We  show next that this process must end after at most $\frac{3}{\delta}$ steps. In particular, $|\calP'_3| \leq \frac{3}{\delta}$. It is clear that at the end of the process it holds that $\calP_3 \subset \ideal{V}$.


\begin{claim}\label{cla:3-case3}
Let $Q\in \cQ$ and $\cB\subseteq \calP'_3$ be the subset of all polynomials in $\calP'_3$ that satisfy \autoref{thm:structure}\ref{case:2} with $\cQ$, then $|\cB| \leq 3$.
\end{claim}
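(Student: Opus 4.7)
The plan is to quantify how the running space $V$ grows inside $\MS(Q)$ each time a polynomial of $\cB$ is added to $\calP'_3$, and to show that $V$ swallows $\MS(Q)$ after at most three $\cB$-additions, so no further $\cB$-element can be admitted.

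First, by \autoref{rem:4-2-dim}, for each $P \in \cB$ we have $\dim(\MS(Q))\le 4$, $\dim(\MS(P))\le 4$ and $\dim(\MS(Q)\cap \MS(P))\ge 2$. The pair of linear forms $c_P,d_P$ guaranteed by \autoref{thm:structure}\ref{case:2} satisfies $Q,P\in \ideal{W_P}$ with $W_P\eqdef \spn{c_P,d_P}$, and in fact $W_P\subseteq \MS(Q)\cap \MS(P)$. To check the latter, write $Q=c_P\alpha+d_P\beta$: if $\{c_P,d_P,\alpha,\beta\}$ are linearly independent this is a minimal rank-$2$ representation and \autoref{cla:irr-quad-span} yields $\MS(Q)=\spn{c_P,d_P,\alpha,\beta}$; otherwise a short case analysis (every linear dependency that would push $c_P$ or $d_P$ out of $\MS(Q)$ makes $Q$ reducible) closes the gap, and the symmetric argument handles $P$.

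Now enumerate $\cB=\{P_1,\ldots,P_k\}$ in the order the polynomials joined $\calP'_3$. For each $j$ let $V_j^-$ denote the value of $V$ immediately before $P_j$ was added, and set $U_j \eqdef V_j^- \cap \MS(Q)$. The construction of $\calP'_3$ ensures $P_j\notin \ideal{V_j^-}$, while $P_j\in \ideal{W_j}$; hence $W_j\not\subseteq V_j^-$ and therefore $W_j\not\subseteq U_j$. Since $W_j$ is $2$-dimensional and contained in $\MS(Q)$, this forces $\dim(W_j\cap U_j)\le 1$, so $\dim(U_j+W_j)\ge \dim(U_j)+1$. Because $V_{j+1}^-\supseteq V_j^-+\MS(P_j)$, we get $U_{j+1}\supseteq U_j+W_j$, and hence $\dim(U_{j+1})\ge \dim(U_j)+1$ at every $\cB$-step.

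The finishing numerical bound splits on $\dim(U_1)$. If $\dim(U_1)\ge 1$, iterating gives $\dim(U_{k+1})\ge \dim(U_1)+k\ge 1+k$, and combining with $\dim(U_{k+1})\le \dim(\MS(Q))\le 4$ yields $k\le 3$. If $\dim(U_1)=0$ then $W_1\cap U_1=\{\vec 0\}$, so the first step's growth jumps to $2$, giving $\dim(U_2)\ge 2$ and thus $\dim(U_{k+1})\ge k+1$, again $k\le 3$. The main technical obstacle is the sub-claim $W_P\subseteq \MS(Q)\cap \MS(P)$; once that is in place, the rest is a clean dimension count driven by the sequential construction of $\calP'_3$.
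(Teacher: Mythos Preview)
Your proof is correct and follows essentially the same approach as the paper: both track how $2$-dimensional subspaces accumulate inside the at-most-$4$-dimensional space $\MS(Q)$, using the fact that each $P_j\in\ideal{W_j}$ while $P_j\notin\ideal{V_j^-}$ to force strict growth. Your bookkeeping via the running intersection $U_j=V_j^-\cap\MS(Q)$ and the direct dimension count replaces the paper's use of the individual $U_i=\MS(Q)\cap\MS(Q_i)$ together with the auxiliary \autoref{lem:3-are-V}, but the underlying idea is identical (and note that your sub-claim $W_P\subseteq\MS(Q)$ needs no case analysis: since $Q$ is irreducible, $\rank_s(Q)=2$, so $Q=c_P\alpha+d_P\beta$ is already a minimal representation and \autoref{cla:irr-quad-span} gives $c_P,d_P\in\MS(Q)$ directly).
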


\begin{proof}
Assume towards a contradiction that $|\cB| \geq 4$, and that $Q_1,Q_2,Q_3$ and $Q_4$ are the first $4$ elements of $\cB$ that where added to $\calP'_3$. Denote $U=\MS(Q)$, and  $U_i = U\cap \MS(Q_i)$, for $1\leq i \leq 4$.

As $Q$ satisfies \autoref{thm:structure}\ref{case:2} we have that  $\dim(U) \leq 4$. Furthermore, for every $i$, $\dim(U_i)\geq 2$ (by \autoref{rem:4-2-dim}). As the $Q_i$s were picked by the iterative process, we have that $U_2 \not \subseteq U_1$. Indeed, since $Q_2 \in \ideal{U_2}$, if we had $U_2 \subseteq U_1\subseteq  \MS(Q_1)\subseteq V$, then this would imply that $Q_2\in\ideal{V}$, in contradiction to the fact that $Q_2\in \calP'_3$. Similarly we get that
$U_3 \not \subseteq U_1 + U_2$ and  $U_4 \not \subseteq U_1+U_3 +U_3$. However, as the next simple lemma shows, this is not possible.
\begin{lemma} \label{lem:3-are-V}
Let $V$ be a linear space of dimension $\leq 4$, and let $V_1,V_2,V_3 \subset V$ each of dimension $\geq 2$, such that $V_1\not \subseteq V_2$ and  $V_3\not \subseteq V_2 + V_1$ then $V = V_1+V_2+V_3$.
\end{lemma}
\begin{proof}
As $V_1\not \subseteq V_2$ we have that $\dim(V_1+V_2)\geq 3$. Similarly we get 
$4\leq \dim(V_1+V_2+V_3)\leq \dim(V)=4$.
\end{proof}
Thus, \autoref{lem:3-are-V} implies that $V=U_1+U_2+U_3$ and in particular, $U_4 \subseteq U_1+U_2+U_3$ in contradiction. This completes the proof of \autoref{cla:3-case3}.
\end{proof}

For $Q_i \in \calP'_3$, define $T_i =\{Q\in \cQ \mid Q,Q_i \text{ satisfiy \autoref{thm:structure}\ref{case:2}}\}$. Since $|T_{i}|\geq \delta m$, and as by \autoref{cla:3-case3} each $Q\in \cQ$ belongs to at most $3$ different sets, it follows by double counting that  $|\calP'_3|\leq 3/\delta$.
As in each step we add at most $4$ linearly independent linear forms to $V$, we obtain $\dim(V)\leq \frac{12}{\delta}$.

This completes the proof of \autoref{cla:VforP3}.
\end{proof}

So far $V$ satisfies that $\calP_3 \subset \ideal{V}$. Next, we find a small set of polynomials $\cI$ such that $\cQ \subset \ideal{V}+\spn{\cI}$. 


\begin{claim}\label{cla:Iexists}
There exists a set $\cI\subset \cQ$ such that $\cQ \subset \ideal{V}+\spn{\cI}$ and $|\cI|= O(1/\delta) $. 
\end{claim}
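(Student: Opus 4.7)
The plan is to apply the robust Sylvester--Gallai theorem (\autoref{thm:robustSG}) to the image of $\cQ \setminus \ideal{V}$ in the quotient space $U$ of homogeneous quadratic polynomials modulo $\ideal{V}$. Since $\calP_3 \subset \ideal{V}$ by \autoref{cla:VforP3}, polynomials in $\calP_3$ contribute nothing to the image, so it suffices to bound the dimension of the image of $\calP_1$ in $U$ by $O(1/\delta)$; preimages of a basis of this image, together with representatives produced by the preprocessing step below, will form the set $\cI$.

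First I would iteratively pull out ``large'' projective fibers. For each nonzero projective class $r$ of images in $U$, let $\cF(r) = \{P \in \calP_1 : [\hat{P}] = r\}$, where $\hat{P}$ denotes the image of $P \in \cQ$ in $U$. While there exists a class $r$ with $|\cF(r)| \geq \delta m / 8$, I would add one representative of $\cF(r)$ to a set $\cI_0$ and mark the class as processed. Since the processed classes are pairwise distinct and each accounts for at least $\delta m / 8$ polynomials, this loop terminates after at most $8/\delta$ iterations, yielding $|\cI_0| = O(1/\delta)$, and every polynomial in a processed class lies in $\ideal{V} + \spn{\cI_0}$.

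Next, I would argue that the polynomials in $\calP_1$ whose class has not been processed give rise to an $\Omega(\delta)$-robust SG configuration in $U$, after collapsing parallel images into one projective representative per class. The key counting step is the following: fix such a $P$ with unprocessed class $r$, so $|\cF(r)| < \delta m / 8$. The polynomial $P$ has at least $\delta m$ case \ref{case:span} partners $P' \in \cQ$. A partner is ``bad'' if $\hat{P'}$ is either zero or parallel to $\hat{P}$, equivalently if $P' \in (\cQ \cap \ideal{V}) \cup \cF(r)$. I would prove that each of these two subsets of partners has size at most $|\cF(r)|-1$: the bound from $\cF(r) \setminus \{P\}$ is immediate, while for partners $P' \in \cQ \cap \ideal{V}$ the guaranteed third polynomial $P'' \in \spn{P,P'} \cap \cQ$ must satisfy $\hat{P''} = \alpha \hat{P}$ with $\alpha \neq 0$ (by pairwise linear independence), so $P'' \in \cF(r) \setminus \{P\}$; moreover, for each such $P''$ the $P' \in \cQ \cap \ideal{V} \cap \spn{P, P''}$ is uniquely determined up to scalar, producing an injection from these partners into $\cF(r) \setminus \{P\}$. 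Hence the ``good'' partners ($\hat{P'}$ nonzero and nonparallel to $\hat{P}$) number at least $\delta m - 2(|\cF(r)|-1) > \delta m / 2$, and a similar pairwise linear independence argument on the triple shows each such partner yields $\hat{P''} \neq 0$ in a third projective class distinct from both $[\hat{P}]$ and $[\hat{P'}]$.

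The main obstacle will be converting the count of $\Omega(\delta m)$ good partners into the $\Omega(\delta) \cdot |\cR|$ distinct classes per point required by \autoref{thm:robustSG}, where $\cR$ denotes the set of unprocessed projective classes. Because after preprocessing every unprocessed fiber has size at most $\delta m / 8$, each class contributes at most $\delta m / 8$ partners of $P$, so the good partners of $P$ are forced to spread across many distinct classes; a careful averaging/iterative argument combined with robust Sylvester--Gallai then bounds the dimension of the span of the unprocessed classes in $U$ by $O(1/\delta)$. Combined with $|\cI_0| = O(1/\delta)$, this yields $|\cI| = O(1/\delta)$ with $\cQ \subseteq \ideal{V} + \spn{\cI}$, as required.
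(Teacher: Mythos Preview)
Your approach --- pass to the quotient by $\ideal{V}$, strip off large projective fibers, and then hope the remaining classes form a robust SG configuration --- is natural but has a real gap exactly where you flag ``the main obstacle.'' The robust SG hypothesis on $\cR$ requires, for each class $r$, at least $\Omega(\delta)\cdot|\cR|$ \emph{other classes} $r'$ on a special line with $r$. What you actually have is $\Omega(\delta m)$ good partner \emph{polynomials}; dividing by the fiber bound $\delta m/8$ only yields a constant number ($\geq 4$) of distinct partner classes, not $\Omega(\delta)|\cR|$, and $|\cR|$ can be anywhere up to $m$. You also do not control partners whose image lies in a \emph{processed} class (these can absorb most of your $\delta m$ partners), nor do you check that the third image $[\hat{P''}]$ lands in an unprocessed class rather than a processed one. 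The phrase ``careful averaging/iterative argument'' is a placeholder, and I do not see how to fill it without essentially rebuilding the mechanism you are trying to shortcut.

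The paper avoids all of this by never passing to the quotient. It runs a greedy process at the polynomial level: starting from $\cB=\calP_3\cup(\calP_1\cap\ideal{V})$, it repeatedly adds to $\cI$ any $P\in\calP_1\setminus\cB$ that has at least $(\delta/3)m$ case-\ref{case:span} partners inside $\cB$, and shows (via the same pairwise-independence injection you used) that each such step pulls at least $(\delta/3)m$ fresh polynomials into $\cB$; hence $|\cI|\leq 3/\delta$. When the process halts, every survivor $P\in\calP_1\setminus\cB$ has fewer than $(\delta/3)m$ partners in $\cB$, hence at least $(2\delta/3)m$ partners among the survivors, and the same injection bounds those whose third point falls back into $\cB$ by another $(\delta/3)m$. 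Since $|\calP_1\setminus\cB|\leq m$, the survivors form a $(\delta/3)$-SG configuration \emph{as polynomials}, and \autoref{thm:robustSG} applies directly --- no polynomial-to-class conversion is ever needed.
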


\begin{proof}
As before the proof shows how to construct $\cI$ by an iterative process. 
Set $\cI= \emptyset$ and $\cB=\calP_3$. First add to $\cB$ any polynomial from $\calP_1$ that is in $\ideal{V}$. Observe that at this point we have that $\cB\subset \cQ\cap \ideal{V}$. We now describe another iterative process for the polynomials in $\calP_1$.
In each step pick any $P \in \calP_1\setminus \cB$ such that $P$ satisfies \autoref{thm:structure}\ref{case:span}, but not \autoref{thm:structure}\ref{case:rk1},\footnote{By this we mean that there are many polynomials that together with $P$ span another polynomial in $\cQ$ but not in $\calL$.} with at least $\frac{\delta}{3}m$ polynomials in $\cB$, and add it to both $\cI$ and to $\cB$. Then, we add to $\cB$ all the polynomials $P'\in\calP_1$ that satisfy $P' \in \spn{(\cQ \cap \ideal{V})  \cup \cI}$. Note, that we always maintain that  $\cB \subset \spn{(\cQ \cap \ideal{V})  \cup \cI}$.

We continue this process as long as we can.
Next, we prove that at the end of the process we have that $|\cI| \leq 3/\delta$.

\begin{claim}\label{cla:2nd-proc}
In each step we added to $\cB$ at least $\frac{\delta}{3}m$ new polynomials from $\calP_1$. In particular, $|\cI| \leq 3/\delta$.
\end{claim}
\begin{proof}
Consider what happens when we add some polynomial $P$ to $\cI$. By the description of our process, $P$ satisfies \autoref{thm:structure}\ref{case:span} with at least $\frac{\delta}{3}m$ polynomials in $\cB$. Any $Q\in  \cB$, that satisfies \autoref{thm:structure}\ref{case:span} with $P$,  must span with $P$ a polynomial $P'\in \tilde{\cQ}$. Observe that $P' \notin \calL$ as $Q,P$ do not satisfy \autoref{thm:structure}\ref{case:rk1}, and thus  $P'\in\cQ$. It follows that $P'\in \calP_1$ since otherwise we would have that $P\in \spn{\cB}\subset  \spn{(\cQ\cap \ideal{V}) \cup \cI}$, which implies $P\in\cB$ in contradiction to the way that we defined the process.
Furthermore, for each such $Q\in \cB$ the polynomial $P'$ is unique. Indeed, if there was a $P\neq P'\in\calP_1$ and $Q_1,Q_2\in  \cB$ such that $P'\in \spn{Q_1,P}\cap\spn{Q_2,P}$ then by pairwise independence we would conclude that $P\in \spn{Q_1,Q_2}\subset \spn{\cB}$, which, as we already showed, implies $P\in\cB$ in contradiction. Thus, when we add $P$ to $\cI$ we add at least $\frac{\delta}{3}m$ polynomials  to $\cB$. In particular, the process terminates after at most $3/\delta$ steps and thus $|\cI|\leq 3/\delta$.
\end{proof} 


Consider the polynomials left in $\calP_1\setminus \cB$. As they ''survived'' the process, each of them satisfies the condition in the definition of $\calP_1$  with at most $\frac{\delta}{3}m$ polynomials in $\cB$. From the fact that $\calP_3\subseteq \cB$ and  the uniqueness property we obtained in the proof of \autoref{cla:2nd-proc}, we get that $\calP_1\setminus \cB$ satisfies the conditions of \autoref{def:delta-SGConf} with parameter $\delta/3$ and thus, \autoref{thm:robustSG} implies that $\dim(\calP_1\setminus \cB)\leq O(1/\delta)$. Adding a basis of $\calP_1\setminus \cB$ to $\cI$ we get that $|\cI| = O(1/\delta)$ and every polynomial in $\cQ$ is in $\spn{(\cQ\cap \ideal{V})\cup \cI}$.
\end{proof}

 
We are not done yet as the dimension of $\ideal{V}$, as a vector space, is not a constant.  Nevertheless, we next show how to use \hyperref[thm:SG-linforms]{Sylvester-Gallai theorem} to bound the dimension of $\cQ$ given that  $\cQ \subset \spn{(\cQ\cap \ideal{V})\cup \cI}$. To achieve this we introduce yet another iterative process:  For each $P\in \cQ\setminus \ideal{V}$, if there is quadratic $L$, with $\rank_s(L) \leq 2$, such that  $P + L \in \ideal{V}$, then we set $V = V+\MS(L)$ (this increases the dimension of $V$ by at most $4$). Since this operation increases $\dim\left( \ideal{V}\cap \cQ \right)$ we can remove one polynomial from $\cI$, and thus decrease its size by $1$, and still maintain the property that $\cQ \subset \spn{(\cQ\cap \ideal{V})\cup \cI}$. We repeat this process until either $\cI$ is empty, or none of the polynomials in $\cI$ satisfies the condition of the process. By the upper bound on $|\cI|$ the dimension of $V$ grew by at most  $4|\cI|= O( 1/{\delta})$ and thus it remains of dimension $O( 1/{\delta})=O(1)$. 
At the end of the process we have that $\cQ \subset \spn{(\cQ\cap \ideal{V})\cup \cI}$
and that every polynomial in $P \in \cQ\setminus \ideal{V}$ has $\rank_s(P) > 2$, even if we set all linear forms in $V$ to zero. 

Consider the map $T_{\vaa,V}$ as given in \autoref{def:z-mapping}, for a randomly chosen $\vaa\in[0,1]^{\dim(V)}$. Each polynomial in $\cQ\cap \ideal{V} $ is mapped to a polynomial of the form form $zb$, for some linear form $b$. From \autoref{cla:rank-mod-space}, it follows that every polynomial in $\cQ\setminus \ideal{V}$ still has rank larger than $2$ after the mapping.
Let $$\cA= \{b \mid \text{ some polynomial in } \cQ\cap \ideal{V}  \text{ was mapped to } zb\} \cup T_{\vaa,V}(\calL) \;.$$
We now show that, modulo $z$, $\cA$ satisfies the conditions of \hyperref[thm:SG-linforms]{Sylvester-Gallai theorem}.
Let $b_1,b_2\in\cA$ such that $b_1 \not\in\spn{z}$ and $b_2 \not\in\spn{z,b_1}$.
As $\tilde\cQ$ satisfies the conditions of \autoref{thm:main-sg-intro} we get that there are  polynomials $Q_1,\ldots,Q_4 \in \tilde\cQ$  such that $\prod_{i=1}^{4}T_{\vaa,V}(Q_i) \in \sqrt{\ideal{b_1,b_2}}=\ideal{b_1,b_2}$, where the equality holds as  $\ideal{b_1,b_2}$ is a prime ideal. This fact also implies that, without loss of generality,  $T_{\vaa,V}(Q_4)\in \ideal{b_1,b_2}$. Thus, $T_{\vaa,V}(Q_4)$ has rank at most $2$ and therefore $Q_4\in\calL\cup(\cQ\cap \ideal{V}) $. Hence, $T_{\vaa,V}(Q_4)$ was mapped to $zb_4$ or to $b_4^2$. In particular, $b_4\in\cA$. \autoref{cla:res-z-ampping} and \autoref{cla:still-indep} imply that $b_4$ is neither a multiple of $b_1$ nor a multiple of $b_2$, so it must hold that $b_4$ depends non-trivially on both $b_1$ and $b_2$. Thus, $\cA$ satisfies the conditions of \hyperref[thm:SG-linforms]{Sylvester-Gallai theorem} modulo $z$. It follows that $\dim(\cA)=O(1)$.

The argument above shows that the dimension of $T_{\vaa,V}(\calL\cup(\cQ\cap \ideal{V}) ) =O(1)$. \autoref{cla:z-map-dimension} implies that if we denote $U = \spn{\calL \cup \MS(\cQ\cap \ideal{V})}$ then  $\dim(U )$ is $O(1)$. As $\cQ \subseteq \spn{(\cQ\cap \ideal{V}) \cup \cI}$, we obtain that $\dim(\tilde\cQ)=\dim(\calL\cup \cQ) = O(1)$, as we wanted to show. 
%
%

This completes the proof of \autoref{thm:main-sg-intro} for the case $\cQ=\calP_1\cup\calP_3$.

\subsection{The case $\cQ\neq\calP_1\cup\calP_3$.}

In this case there is some polynomial $Q_o \in \cQ \setminus ({\calP}_1\cup {\calP}_3)$. In particular,  $Q_0$ satisfies \autoref{thm:structure}\ref{case:rk1} with at least $(1-2\delta)m$ of the polynomials in $\cQ$; of the remaining polynomials, at most $\delta m$ satisfy \autoref{thm:structure}\ref{case:span} with $Q_o$; 
and, $Q_o$ satisfies \autoref{thm:structure}\ref{case:2} with at most $\delta m$ polynomials.
 Let 
\begin{itemize}
\item $\cQ_1 =  \{P \in \cQ \mid P,Q_o \text{ satisfiy \autoref{thm:structure}\ref{case:rk1} }\} \cup \{Q_o\}$
\item $\cQ_2 = \{P \in \cQ \mid P,Q_o \text{ do not satisfiy \autoref{thm:structure}\ref{case:rk1} }\}$ 
\item $m_1 = |\cQ_1|$, $m_2 = |\cQ_2|$.
\end{itemize}
As $Q_o \notin \calP_1 \cup \calP_3$ we have that $m_2 \leq 2\delta m$ and $m_1 \geq (1-2\delta)m$.
These properties of $Q_o$ and $\cQ$ are captured by the following definition.

\begin{definition}\label{def:Q-dom}
Let  $\mathcal{Q}_1 = \{Q_o,Q_1,\ldots,Q_{m_1}\}$ and $\mathcal{Q}_2 = \{\MVar{P}{m_2}\}$ be sets of irreducible homogeneous quadratic polynomials. Let $\calL = \{\MVar{\ell^2}{r}\}$ be a set of squares of homogeneous linear forms. We say that $\tilde{\cQ} = \cQ \cup \calL$ where $\cQ=\cQ_1\cup\cQ_2$ is a {$(Q_o,m_1,m_2)$-set} if it satisfies the following:
\begin{enumerate}
\item $\tilde{\cQ}$ satisfy the conditions in the statement of \autoref{thm:main-sg-intro}.
\item $m_1 > 5m_2+2$.
\item For every $j\in [m_1]$, there are  linear forms $a_j,b_j$ such that  $Q_j = Q_o + a_j b_j$.
\item For every $i\in [m_2]$, every non-trivial linear combination of $P_i$ and $Q_o$ has rank at least $2$.\label{item:Qdom>2}
\item At most $m_2$ of the polynomials in $\cQ$  satisfy \autoref{thm:structure}\ref{case:2} with $Q_o$. \label{item:Qdom-case3}
\end{enumerate}
\end{definition}

By the discussion above, the following theorem is what we need in order to complete the proof for the case   $\cQ\neq\calP_1\cup\calP_3$.

\begin{theorem}\label{thm:Q-dom-gen}
Let $\tilde{\cQ}$ satisfy the conditions of \autoref{def:Q-dom}, then $\dim{\tilde{\cQ}} = O(1)$.
\end{theorem}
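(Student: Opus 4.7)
The plan is to first bound $\dim(W)$ where $W \eqdef \spn{\{a_j, b_j\}_{j\in[m_1]}}$, and then to derive $\dim(\tilde\cQ) = O(1)$ as a relatively direct consequence. Indeed, once $\dim(W) = O(1)$, every $Q_j \in \cQ_1$ lies in $\spn{Q_o}$ plus the (finite-dimensional) space of homogeneous quadratics in $W$, so $\dim(\spn{\cQ_1}) = O(1)$; the elements of $\cQ_2 \cup \calL$ are then handled by pairing each of them with a fixed generic $Q \in \cQ_1$, applying \autoref{thm:structure}, and invoking \autoref{thm:robustSG} and \autoref{thm:SG-linforms} on the configuration of residuals modulo $\spn{\cQ_1}$.

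To bound $\dim(W)$, I would analyze every pair $(Q_j, Q_{j'})$ with $j\neq j'\in[m_1]$. \autoref{thm:structure} together with \autoref{cla:gup-4} places the pair into one of three cases; substituting $Q_j = Q_o + a_j b_j$ yields concrete algebraic consequences. \autoref{thm:structure}\ref{case:rk1} with coefficient sum $\alpha+\beta\neq 0$ writes $Q_o$ as a $\C$-combination of three rank-$1$ quadratics, hence $\rank_s(Q_o) \leq 3$; the same case with $\alpha+\beta=0$ forces $a_j b_j - a_{j'} b_{j'}$ to factor as a product of two linear forms, and \autoref{cla:intersection} then gives $\dim(V_j \cap V_{j'}) \geq 1$, where $V_j \eqdef \spn{a_j, b_j}$. \autoref{thm:structure}\ref{case:2} forces $Q_o \in \ideal{c,d} + \spn{a_j b_j}$, hence again $\rank_s(Q_o) \leq 3$. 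Finally, \autoref{thm:structure}\ref{case:span} splits further according as the witness $Q_k$ lies in $\cQ_1$, $\cQ_2$, or $\calL$, and in each sub-case either forces $\rank_s(Q_o) \leq 3$ or yields a rank-$1$ identity of the form $a_k b_k = \alpha a_j b_j + \beta a_{j'} b_{j'}$, to which \autoref{cla:intersection} again applies to produce pairwise intersections among $V_j, V_{j'}, V_k$.

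This naturally suggests splitting on $\rank_s(Q_o)$. In the high-rank case $\rank_s(Q_o) \geq 4$, both \autoref{thm:structure}\ref{case:rk1} with $\alpha+\beta\neq 0$ and \autoref{thm:structure}\ref{case:2} are impossible, and several sub-cases of \autoref{thm:structure}\ref{case:span} are also excluded. Using condition~\ref{item:Qdom-case3} of \autoref{def:Q-dom} together with a double counting argument to control pairs whose witness $Q_k$ belongs to $\cQ_2\cup\calL$, one deduces that all but a $o(1)$-fraction of pairs $(j,j')$ in $\cQ_1^2$ satisfy $\dim(V_j\cap V_{j'})\geq 1$. \autoref{cla:linear-spaces-intersaction } then forces either all $V_j$ to share a common line or $\dim(\sum_j V_j) = 3$, so $\dim(W) = O(1)$. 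In the low-rank case $\rank_s(Q_o) \leq 3$, $\MS(Q_o)$ has dimension at most $6$, and a generic projection $T_{\vaa, \MS(Q_o)}$ from \autoref{def:z-mapping} sends $Q_o$ to a polynomial in $z$ alone while preserving pairwise linear independence by \autoref{cla:still-indep}. The projected polynomials $T_{\vaa,\MS(Q_o)}(Q_j)$ reduce essentially to products of two linear forms, the Sylvester--Gallai hypothesis passes to them, and a Sylvester--Gallai style argument on the resulting linear forms combined with \autoref{cla:z-map-dimension} bounds $\dim(W)$.

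The main obstacle I anticipate lies in the high-rank case: controlling the pairs of $\cQ_1^2$ mediated by a witness $Q_k$ in $\cQ_2 \cup \calL$. A single such $Q_k$ could a priori be coplanar with many pairs $(Q_j, Q_{j'})$, so the proof must combine pairwise linear independence (to cap how many $Q_{j'}$ lie on a given line through $Q_j$ and $Q_k$) with the interplay between conditions~\ref{item:Qdom>2} and~\ref{item:Qdom-case3} of \autoref{def:Q-dom} to keep the total number of exceptional pairs at $o(m_1^2)$, leaving a $(1-o(1))$-fraction that witnesses the intersection conclusion. A secondary technical difficulty arises in the low-rank case, where the projection must be chosen so that the reduced configuration still satisfies a usable Sylvester--Gallai hypothesis; here the generic-position guarantees of \autoref{cla:res-z-ampping} and \autoref{cla:still-indep} will be essential.
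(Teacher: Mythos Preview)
Your proposal has the right overall shape—split on $\rank_s(Q_o)$, analyze pairs $Q_j,Q_{j'}$ via \autoref{thm:structure}, and ultimately run a Sylvester–Gallai argument on the linear forms—but there are two genuine gaps that the paper has to work around differently.

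\textbf{High-rank case.} Your plan is to show that ``all but a $o(1)$-fraction'' of pairs $(j,j')$ satisfy $\dim(V_j\cap V_{j'})\geq 1$ and then invoke \autoref{cla:linear-spaces-intersaction }. This does not work. First, the exceptional pairs are exactly those whose witness $Q_k$ under \autoref{thm:structure}\ref{case:span} lies in $\cQ_2$ (witnesses in $\calL$ still give the intersection, as you note). A fixed $P_i\in\cQ_2$ can lie on the span of up to $m_1/2$ pairs, so the number of exceptional pairs can be of order $m_2 m_1$; since only $m_1>5m_2+2$ is assumed, this is a \emph{constant} fraction of $\binom{m_1}{2}$, not $o(1)$. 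Second, and more fatally, \autoref{cla:linear-spaces-intersaction } is not robust: it requires \emph{every} pair of the $2$-dimensional spaces to intersect. A robust version is false in general (take two disjoint families, each sharing a common line, living in independent subspaces). The paper therefore does not pursue a pairwise-intersection argument when $m_2\neq 0$. Instead it fixes a single $P\in\cQ_2$, writes $P=\gamma Q_o+L$ with $\rank_s(L)=2$, sets $V=\MS(L)$, and shows that for all but $\leq m_2$ indices $j$ one has $a_j,b_j\in V$; it then propagates this structure to every polynomial in $\tilde\cQ$ and runs Sylvester–Gallai on the residual linear forms modulo $V$. This is the key idea your proposal is missing.

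\textbf{Low-rank case.} After projecting $\MS(Q_o)$ to multiples of $z$, you claim the images $T_{\vaa,\MS(Q_o)}(Q_j)$ ``reduce essentially to products of two linear forms.'' They do not: $T(Q_j)=\gamma z^2+T(a_j)T(b_j)$ depends on the three linear forms $z,T(a_j),T(b_j)$ and is generically irreducible of $\rank_s=2$, so the Sylvester–Gallai hypothesis does not pass to a configuration of linear forms. The paper first proves (\autoref{cla:V-exist} when $m_2=0$, \autoref{cla:mod-V} in general) that there is a small space $V\supseteq\MS(Q_o)$ with $b_j=\epsilon_j a_j+v_j$ for some $v_j\in V$; only then does each projected polynomial live in $\spn{z,T(a_j)}$ and hence factor, making \autoref{cla:V-li} applicable. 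Establishing this extra structure is precisely the work your sketch skips.
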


We prove this theorem in \autoref{sec:q-dom}. This concludes the proof of \autoref{thm:res}. 
\end{proof}

\section{Proof of Theorem~\ref{thm:Q-dom-gen}}\label{sec:q-dom}

In this section we prove \autoref{thm:Q-dom-gen}. The proof is divided to two parts according to whether the polynomial $Q_o$ in \autoref{def:Q-dom} is of high rank (\autoref{cla:strong-Q-dom-hr-main}) or of low rank (\autoref{cla:strong-Q-dom-lr-main}). Each part is also divided to two -- first we consider what happens when $m_2=0$ and then the general case where $m_2\neq 0$. The reason for this split is that when $Q_o$ is of high rank then we know, e.g., that it cannot satisfy \autoref{thm:structure}\ref{case:2} with any other polynomial. Similarly any polynomial satisfying \autoref{thm:structure}\ref{case:rk1} with $Q_o$ is also of high rank and cannot satisfy \autoref{thm:structure}\ref{case:2} with any other polynomial. The reason why we further break the argument to weather $m_2=0$ or not, is that when $m_2= 0$ all the polynomials  are of the form $Q_o+ab$ for some linear forms $a,b$, which means we have fewer cases to analyse.
While this seems a bit restrictive, the general case is not much harder and most of the ideas there already appear in the case $m_2=0$. 

Throughout the proof we use the notation of \autoref{def:Q-dom}. In particular, each $Q_i\in\cQ_1$ is of the form $Q_i=Q_o+a_ib_i$.


\subsection{$Q_o$ is of high rank}

In this subsection we assume that $\tilde{\cQ}$ is a \hyperref[def:Q-dom]{$(Q_o,m_1,m_2)$-set} for some quadratic $Q_o$ of rank at least $\rkq$, this constant is arbitrary, as we just need it to be large enough.
The following observation says that for our set $\cQ$ we will never have to consider 
\autoref{thm:structure}\ref{case:2}. 

\begin{observation}\label{obs:case3}
For $\tilde{\cQ} = \cQ \cup \calL$ that satisfy \autoref{def:Q-dom} with $\rank_s(Q_o)\geq \rkq$, for every $j \in [m_1]$ the rank of $Q_j$ is at least $\rkq-1>2$ and so $Q_j$ never satisfies \autoref{thm:structure}\ref{case:2} with any other polynomial in $\tilde{\cQ}$.

\end{observation}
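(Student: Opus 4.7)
The plan is to prove both assertions directly from the subadditivity of $\rank_s$ and a characterization of membership in an ideal generated by two linear forms.

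First, I would bound $\rank_s(Q_j)$ from below. Since $Q_j = Q_o + a_j b_j$, write $Q_o = Q_j - a_j b_j$. The quadratic $a_j b_j$ has $\rank_s \leq 1$, and $\rank_s$ is subadditive (if $Q_j = \sum_{k=1}^{r} c_{2k-1}c_{2k}$ realizes $\rank_s(Q_j) = r$, then $Q_o = Q_j - a_j b_j$ admits a representation as a sum of $r+1$ products of linear forms). Therefore
\[
\rank_s(Q_o) \leq \rank_s(Q_j) + \rank_s(a_j b_j) \leq \rank_s(Q_j) + 1,
\]
which gives $\rank_s(Q_j) \geq \rank_s(Q_o) - 1 \geq \rkq - 1$.

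Next, I would argue that if $Q_j$ satisfied \autoref{thm:structure}\ref{case:2} with some other polynomial in $\tilde{\cQ}$, then $\rank_s(Q_j) \leq 2$. Indeed, \autoref{thm:structure}\ref{case:2} asserts the existence of linear forms $c,d$ such that both polynomials in the pair vanish when $c=d=0$, equivalently both polynomials lie in $\ideal{c,d}$. Since $Q_j$ is homogeneous of degree $2$, belonging to $\ideal{c,d}$ forces a representation $Q_j = c\cdot f + d\cdot g$ for linear forms $f,g$, which witnesses $\rank_s(Q_j)\leq 2$.

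Combining the two bounds, $\rank_s(Q_j) \geq \rkq - 1 > 2$ (since $\rkq$ is taken to be a large constant, e.g. $\rkq=100$), contradicting the $\rank_s \leq 2$ upper bound forced by \autoref{thm:structure}\ref{case:2}. Hence $Q_j$ cannot satisfy \autoref{thm:structure}\ref{case:2} with any other polynomial in $\tilde{\cQ}$. There is no real obstacle here — the only mild subtlety is making sure the subadditivity step is invoked correctly, but this is immediate from \autoref{def:rank-s}.
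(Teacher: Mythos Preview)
Your proof is correct and matches the paper's intended reasoning. The paper states this as an observation without an explicit proof, treating both steps---the rank lower bound via subadditivity (essentially \autoref{cla:rank-mod-space} with $\Delta=1$, or directly from \autoref{def:rank-s}) and the fact that membership in $\ideal{c,d}$ forces $\rank_s\le 2$---as immediate; you have simply written out these implicit details.
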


Our goal in this subsection is to prove the next claim.

\begin{claim}\label{cla:strong-Q-dom-hr-main}
Let $\tilde{\cQ} = \cQ \cup \calL$ be a \hyperref[def:Q-dom]{$(Q_o,m_1,m_2)$-set} with $\rank_s(Q_o)\geq \rkq$.
Then $\dim(\spn{\tilde{\cQ}}) = O(1)$.
\end{claim}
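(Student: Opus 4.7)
The plan is to show $\dim(W) = O(1)$ for $W := \spn{a_j, b_j : j \in [m_1]}$; once this is established, $\dim(\spn{\cQ_1}) \leq 1 + \binom{\dim W + 1}{2} = O(1)$, and $\cQ_2 \cup \calL$ is handled by bootstrapping. For each pair $(Q_i, Q_j)$ with $i \neq j \in [m_1]$ I apply \autoref{thm:structure}. By \autoref{obs:case3}, case \ref{case:2} is vacuous. Using $\rank_s(Q_o) \geq \rkq$ I sharpen the remaining cases by expanding a candidate identity $\alpha Q_i + \beta Q_j = (\alpha + \beta) Q_o + \alpha a_i b_i + \beta a_j b_j$: a rank-$\leq 1$ combination (case \ref{case:rk1}) forces $\alpha + \beta = 0$, so $a_i b_i - a_j b_j = cd$; and a case \ref{case:span} witness $Q_k = \alpha Q_i + \beta Q_j$, matching the coefficient of $Q_o$, satisfies $\alpha + \beta = 1$ when $Q_k \in \{Q_o\} \cup \cQ_1 \cup \cQ_2$ and $\alpha + \beta = 0$ when $Q_k = \ell^2 \in \calL$. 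Translating back to the rank-$\leq 2$ quadratics $R_j := a_j b_j$, the possibilities are: $R_l$ is an affine combination of $R_i, R_j$ (witness $Q_l \in \cQ_1$), $R_i \sim R_j$ (witness $Q_o$), $R_i - R_j \sim \ell^2$ (witness $\ell^2 \in \calL$), or $P_l - \gamma Q_o = \alpha R_i + \beta R_j$ with left side of rank $2, 3$, or $4$ (witness $P_l \in \cQ_2$, by item \ref{item:Qdom>2}).

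From this I deduce that, for every pair whose witness lies outside $\cQ_2$, the two-dimensional spaces $V_j := \spn{a_j, b_j}$ satisfy $\dim(V_i \cap V_j) \geq 1$. The identities $a_i b_i - a_j b_j = cd$, $c \ell^2$, or $a_i b_i \sim a_j b_j$ supply the intersection via \autoref{cla:intersection} and \autoref{cla:rank-2-in-V}; for a $\cQ_1$-witness identity $a_l b_l = \alpha a_i b_i + \beta a_j b_j$ with $\alpha \beta \neq 0$, a direct rank computation on the symmetric form supported on $\spn{a_i, b_i, a_j, b_j}$ shows that if these four forms were linearly independent the sum would have rank $4$, contradicting $\rank_s(R_l) \leq 2$. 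Since $m_2 < m_1/5$, at most $m_2$ pairs have their only witness in $\cQ_2$, so a robust majority of the $V_j$'s pairwise intersect. I then apply a colored or robust variant of \autoref{cla:linear-spaces-intersaction } to conclude that either $\dim(W) \leq 3$ directly, or all but $O(m_2)$ of the $V_j$'s share a common nonzero vector $v$.

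In the common-vector subcase I write $V_j = \spn{v, u_j}$; since $a_j b_j$ is a quadratic supported on $V_j$, its linear factors $\{a_j, b_j\}$ lie in $\spn{v, u_j}$, so $W = \spn{v} + \spn{u_j : j}$. The pairwise identities from the first step, read modulo $\spn{v}$, give relations on the $\bar u_j$ fitting the Sylvester--Gallai hypothesis (\autoref{thm:SG-linforms}): a case \ref{case:rk1} pair yields $\bar u_i^2 - \bar u_j^2 \sim \bar c \bar d$ (hence a linear relation among $\bar u_i, \bar u_j, \bar c, \bar d$), a case \ref{case:span} pair with an $\calL$-witness yields $\bar u_i^2 - \bar u_j^2 \sim \bar \ell^2$ (forcing $\bar \ell \in \spn{\bar u_i, \bar u_j}$), and a case \ref{case:span} pair with a $\cQ_1$-witness places $\bar u_l$ on the line through $\bar u_i, \bar u_j$. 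A robust application of \autoref{thm:robustSG} then bounds $\dim \spn{\bar u_j} = O(1)$, whence $\dim(W) = O(1)$. With $\spn{\cQ_1}$ controlled, I bound $\spn{\cQ_2}$ by iterating \autoref{thm:structure} on pairs $(P_i, Q_j)$ with $Q_j \in \cQ_1$ (exploiting $m_2 < m_1/5$), and $\spn{\calL}$ by noting that each $\ell^2 \in \calL$ either appears as a witness in the first step (so $\ell$ is already tied to $W$) or satisfies its own Sylvester--Gallai-type relation extracted from \autoref{thm:structure}, possibly after invoking a projection from \autoref{sec:z-map}.

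The main obstacle is the common-vector subcase: the purely linear conclusion of \autoref{cla:linear-spaces-intersaction } does not by itself bound $\dim W$, and the polynomial-level identities needed to apply Sylvester--Gallai to the $\bar u_j$'s must be extracted with care. This is genuinely new compared to the $|\cK| = 1$ analysis of~\cite{DBLP:conf/stoc/Shpilka19}, where $R_j = a_j^2$ collapses $V_j$ to a single line $\spn{a_j}$ and the argument reduces directly to one Sylvester--Gallai instance on the $a_j$'s.
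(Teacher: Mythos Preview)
Your overall plan mirrors the paper's $m_2=0$ argument (\autoref{cla:span-rank1-hr}--\autoref{cla:stisfy-sg-lines-hr}), but the attempt to push it through for general $m_2$ has a genuine gap in the Sylvester--Gallai step of the common-vector subcase.

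The problem is your treatment of \autoref{thm:structure}\ref{case:rk1}. You write that ``a case~\ref{case:rk1} pair yields $\bar u_i^2-\bar u_j^2\sim\bar c\,\bar d$ (hence a linear relation among $\bar u_i,\bar u_j,\bar c,\bar d$)''. That relation is correct, but it is useless for Sylvester--Gallai: the forms $c,d$ come from an arbitrary reducible quadratic in $\spn{Q_i,Q_j}$ and are \emph{not} members of your set $\{\bar u_k\}$. So no third point in the configuration is produced, and the SG hypothesis is not verified for this pair. The paper's \autoref{cla:stisfy-sg-lines-hr} resolves exactly this point by a deeper argument that your mod-$v$ shortcut bypasses: from $Q_i-Q_j=cd$ one has $\sqrt{\ideal{Q_i,Q_j}}=\sqrt{\ideal{cd,Q_j}}$, and since $\rank_s(Q_o)\geq\rkq$ the restriction $Q_j|_{c=0}$ remains irreducible. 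Hence any product in the radical must contain a factor $Q_k=\alpha Q_j+ce$ with $Q_k\in\tilde\cQ$, and it is \emph{this} $a_k$ (which is in the set) that supplies the third collinear point modulo $V$. A further case analysis is then needed to rule out $a_k\in\spn{a_i,V}\cup\spn{a_j,V}$. Your sketch omits this entire radical-ideal step. A related issue: when one of $a_j,b_j$ is a multiple of $v$ (equivalently $\epsilon_j=0$ in the paper's normalization $b_j=\epsilon_j a_j+v_j$), your mod-$v$ reduction gives $\bar a_j\bar b_j=0$, so even the case~\ref{case:span} identity degenerates and no longer places $\bar u_k$ nontrivially on the line through $\bar u_i,\bar u_j$.

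For $m_2\neq 0$ the paper does not try to make this argument robust; it takes a different route (\autoref{cla:no-rank-3}--\autoref{cla:sg-mod-v}). One picks $P=\gamma Q_o+L\in\cQ_2$ with $\rank_s(L)=2$ (\autoref{cla:no-rank-3}), sets $V=\MS(L)$ (so $\dim V\leq 4$), and shows directly that at least $m_1-m_2$ of the pairs $(a_j,b_j)$ already lie in $V$ (\autoref{cla:small-space-span}). This bypasses both the ``robust variant of \autoref{cla:linear-spaces-intersaction }'' you invoke (which is not stated or proved anywhere) and the delicate SG verification. Every polynomial in $\tilde\cQ$ is then shown to have the shape $Q_o+P'+c(\epsilon c+v)$ or $c^2$ with $P'$ over $V$ (\autoref{cla:V-map-Q-dom}), and a single SG argument on the $c$'s modulo $V$ (\autoref{cla:sg-mod-v}) finishes all of $\cQ_1,\cQ_2,\calL$ simultaneously. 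Your bootstrapping of $\cQ_2$ and $\calL$ at the end is too vague to stand in for this.
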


We break the proof of \autoref{cla:strong-Q-dom-hr-main} to two steps. First we handle the case $m_2=0$ and then the case $m_2\neq 0$.


\subsubsection{The case $m_2=0$}\label{sec:m2=0-hr}

In this subsection we prove the following version of \autoref{cla:strong-Q-dom-hr-main} for the case $m_2=0$.

\begin{claim}\label{cla:strong-Q-dom-hr}
Let $\tilde{\cQ} = \cQ \cup \calL$ be a \hyperref[def:Q-dom]{$(Q_o,m_1,0)$-set} with $\rank_s(Q_o)\geq \rkq$.
Then, for $a_i,b_i, \ell_j$ as in \autoref{def:Q-dom}, $\dim(\spn{\MVar{a}{m_1}, \MVar{b}{m_1}, \MVar{\ell}{r}})\leq 7$. In particular, $\dim(\spn{\cQ}) \leq 8$.
\end{claim}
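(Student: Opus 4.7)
The plan is to analyze the subspaces $V_i := \spn{a_i, b_i}$ for $i \in [m_1]$ by applying the Structure Theorem (\autoref{thm:structure}) to pairs $(Q_i, Q_j)$, invoking \autoref{cla:linear-spaces-intersaction } to obtain a dichotomy on these subspaces, and handling the squares $\ell_k^2 \in \calL$ separately via a Sylvester-Gallai argument on linear forms.

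First I will show that every pair $V_i, V_j$ shares a nonzero common vector. By \autoref{obs:case3}, Case (iii) of \autoref{thm:structure} is impossible for any pair from $\cQ_1$. For Case (i) with some $Q_k = Q_o + a_k b_k \in \cQ_1$, matching the coefficient of $Q_o$ yields $\alpha + \beta = 1$ and $a_k b_k = \alpha a_i b_i + \beta a_j b_j$, with $\alpha, \beta \neq 0$ by pairwise linear independence. For Case (i) with $Q_k = \ell_k^2 \in \calL$, or for Case (ii), the coefficient of $Q_o$ must vanish (otherwise the combination would have rank at least $\rkq - 2 > 2$), so $\alpha(a_i b_i - a_j b_j)$ equals $\ell_k^2$ or $cd$ respectively. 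In every subcase a rank-one quadratic is written as a nontrivial linear combination of $a_i b_i$ and $a_j b_j$, and applying \autoref{cla:ind-rank} forces $V_i \cap V_j \neq \{0\}$ (otherwise the combination would have rank $2$).

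Grouping the $V_i$'s into the distinct subspaces among them, any two distinct such subspaces intersect in exactly one dimension, so \autoref{cla:linear-spaces-intersaction } yields a dichotomy: either (I) there is a common $w \neq 0$ in $\bigcap_i V_i$, or (II) $\dim(\sum_i V_i) = 3$. In parallel, for the squares: for any pair $\ell_k^2, \ell_{k'}^2 \in \calL$ the Sylvester-Gallai hypothesis combined with $\sqrt{\ideal{\ell_k^2, \ell_{k'}^2}} = \ideal{\ell_k, \ell_{k'}}$ (which is prime) forces some third polynomial of $\tilde\cQ$ into $\ideal{\ell_k, \ell_{k'}}$; by \autoref{cla:rank-mod-space} and $\rank_s(Q_o) \geq \rkq$ no polynomial involving $Q_o$ can lie in this ideal, so the third polynomial must be some $\ell_s^2$ with $\ell_s \in \spn{\ell_k, \ell_{k'}}$, whence \autoref{thm:SG-linforms} gives $\dim(\spn{\ell_k}) \leq 3$. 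In Case (II) this already yields $\dim(\spn{\MVar{a}{m_1}, \MVar{b}{m_1}, \MVar{\ell}{r}}) \leq 3 + 3 = 6 \leq 7$.

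In Case (I) I will restrict all polynomials to the hyperplane $\{w = 0\}$: since $w \in V_i$, a direct computation shows $a_i b_i|_{w=0} = \gamma_i u_i^2$ for some scalar $\gamma_i$ and $u_i \in V_i$ with $V_i = \spn{w, u_i}$. The restricted collection $\{Q_o|_{w=0} + \gamma_i u_i^2\}_i \cup \{\ell_k^2|_{w=0}\}_k$ inherits the vanishing condition of $\tilde\cQ$, is pairwise linearly independent after removing duplicates, and has $\rank_s(Q_o|_{w=0}) \geq \rkq - 1$, so \autoref{thm:shpilka-main} bounds the dimension of its span by a constant. This in turn bounds $\dim(\spn{u_i^2, \ell_k^2|_{w=0}})$; since linearly independent linear forms have linearly independent squares, $\dim(\spn{u_i, \ell_k|_{w=0}})$ is bounded by the same constant, and lifting back by adjoining $w$ yields the desired bound on $\dim(\spn{\MVar{a}{m_1}, \MVar{b}{m_1}, \MVar{\ell}{r}})$. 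The main obstacle will be matching the explicit constant $7$ in Case (I): this requires carefully tracking the constant from \autoref{thm:shpilka-main} through the Veronese lift (rather than only $O(1)$), and cleanly handling the edge case where some $V_i = V_j$, which falls outside the strict hypothesis of \autoref{cla:linear-spaces-intersaction }.
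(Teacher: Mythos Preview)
Your plan diverges from the paper precisely in Case~(I), and the divergence creates a genuine gap. The restricted family $\{Q_o|_{w=0}+\gamma_i u_i^2\}\cup\{\ell_k^2|_{w=0}\}$ only inherits the \emph{product} vanishing condition of \autoref{thm:main-sg-intro}, whereas \autoref{thm:shpilka-main} requires that for each pair a \emph{single} third polynomial lies in the radical. These hypotheses are not equivalent here (indeed, the whole point of the present paper is that they are not), so the black-box invocation fails. Two further obstructions compound this: when $\gamma_i=0$ (e.g.\ $b_i\sim w$) the restriction collapses $Q_i$ to $Q_o|_{w=0}$ and you lose $u_i$ entirely; and even if \autoref{thm:shpilka-main} did apply, it gives only an unspecified $O(1)$ bound, so ``carefully tracking the constant'' cannot deliver $7$.

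The paper does not restrict to $\{w=0\}$ at all. Instead it absorbs the $\ell_k$'s into the same framework by writing $\ell_k^2=0\cdot Q_o+\ell_k\cdot\ell_k$ (\autoref{rem:alpha}), so that every polynomial is $\alpha_i Q_o+a_i b_i$, and then shows that there is a space $V$ with $\dim V\le 4$ such that (after swapping $a_i,b_i$ if needed) $b_i=\epsilon_i a_i+v_i$ with $v_i\in V$ (\autoref{cla:z-exist}). The heart of the argument is then \autoref{cla:stisfy-sg-lines-hr}: a direct case analysis on \autoref{thm:structure} showing that the linear forms $\{a_1,\ldots,a_{m_1+r}\}$ satisfy the Sylvester--Gallai condition \emph{modulo $V$}. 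Classical Sylvester--Gallai then gives dimension $\le 3$ modulo $V$, hence $\le 3+4=7$ overall. Your Case~(II) treatment and your separate SG argument on $\calL$ are fine, but to handle Case~(I) you need this direct modulo-$V$ Sylvester--Gallai step rather than a reduction to the earlier theorem.
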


We first show some properties satisfied by the products $\lbrace a_1b_1, \ldots, a_{m_1}b_{m_1}\rbrace$.

\begin{remark}\label{rem:alpha}
	For $\ell_i^2 \in \calL$ we can write $\ell^2_i = 0\cdot Q_o + \ell_i\ell_i$. Thus, from now on we can assume that every $Q_i \in \tilde{\cQ}$ is of the form $Q_i = \alpha_i Q_o + a_ib_i$, for $\alpha_i\in\{0,1\}$, and when $\alpha_i =0$ it holds that $a_i=b_i$. We shall use the convention that for $i\in\{m_1+1,\ldots,m_1+r\}$, $a_i = \ell_{i-m_1}$.
\end{remark}

\begin{claim}\label{cla:span-rank1-hr}
Let $\tilde{\cQ} = \cQ \cup \calL$ be a \hyperref[def:Q-dom]{$(Q_o,m_1,0)$-set} with $\rank_s(Q_o)\geq \rkq$, and let  $Q_i=Q_o+a_ib_i$ and $Q_j= Q_o+a_jb_j$ be polynomials in $\cQ = \cQ_1$. 
\begin{enumerate}
\item If $Q_i$ and $Q_j$ satisfy \autoref{thm:structure}\ref{case:span} then there exists $k\in [m_1+r]$ such that for some $\alpha,\beta \in \C\setminus\{0\}$
\begin{equation} \label{eq:1-alpha-hr}
 \alpha a_ib_i + \beta a_jb_j = a_k b_k. 
\end{equation}
 \item If $Q_i$ and $Q_j$ satisfy \autoref{thm:structure}\ref{case:rk1} then there exist two linear forms, $c$ and $d$ such that
\begin{equation}
 a_ib_i -  a_jb_j = cd.
\end{equation}
\end{enumerate}
\end{claim}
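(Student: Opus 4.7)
The plan is to exploit the rank gap between $Q_o$, which has $\rank_s(Q_o) \geq \rkq$, and any bounded-rank quadratic, using the trivial subadditivity $\rank_s(A+B) \leq \rank_s(A) + \rank_s(B)$ immediate from \autoref{def:rank-s}. Both parts reduce to substituting $Q_i = Q_o + a_ib_i$ and $Q_j = Q_o + a_jb_j$ into the conclusion of \autoref{thm:structure} and comparing coefficients of $Q_o$.

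For part (1), I would let $Q_k \in \tilde\cQ \setminus \{Q_i, Q_j\}$ be the polynomial furnished by \autoref{thm:structure}\ref{case:span}, so that $Q_k = \alpha Q_i + \beta Q_j$ for some scalars $\alpha, \beta$. By \autoref{rem:alpha}, write $Q_k = \alpha_k Q_o + a_k b_k$ with $\alpha_k \in \{0,1\}$ (the case $\alpha_k = 0$ absorbs the possibility $Q_k \in \calL$, with $a_k = b_k$). Substituting and rearranging produces
\[
(\alpha + \beta - \alpha_k)\, Q_o \;=\; a_k b_k - \alpha\, a_i b_i - \beta\, a_j b_j .
\]
The right-hand side is a sum of at most three products of linear forms, hence has $\rank_s \leq 3$, whereas $\rank_s(Q_o) \geq \rkq > 3$. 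The scalar coefficient on the left must therefore vanish, giving $\alpha + \beta = \alpha_k$ and $a_k b_k = \alpha\, a_i b_i + \beta\, a_j b_j$. Pairwise linear independence of the elements of $\tilde\cQ$ rules out $\alpha = 0$ or $\beta = 0$: indeed, $\alpha = 0$ would make $Q_k$ a scalar multiple of $Q_j$ with $k \neq j$, a contradiction.

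For part (2), \autoref{thm:structure}\ref{case:rk1} supplies $(\alpha, \beta) \neq (0,0)$ and linear forms $c, d$ with $\alpha Q_i + \beta Q_j = cd$. Expanding yields
\[
(\alpha + \beta)\, Q_o \;=\; cd - \alpha\, a_i b_i - \beta\, a_j b_j ,
\]
and again the right-hand side has $\rank_s \leq 3$, so the high rank of $Q_o$ forces $\alpha + \beta = 0$. To exclude $\alpha = 0$ (which would, via $\alpha + \beta = 0$, contradict nontriviality), observe that $\alpha = 0$ would make $\beta Q_j = cd$, hence $Q_j$ reducible; however, subadditivity of $\rank_s$ gives $\rank_s(Q_j) = \rank_s(Q_o + a_j b_j) \geq \rank_s(Q_o) - 1 \geq \rkq - 1 > 1$, so $Q_j$ is irreducible. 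Rescaling $c, d$ to absorb $\alpha$ produces the desired identity $a_i b_i - a_j b_j = cd$.

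I do not expect a significant obstacle here: once the high-rank assumption on $Q_o$ is invoked, the claim reduces to matching coefficients of $Q_o$ in two low-rank identities. The only mild subtlety is uniformly treating the cases $Q_k \in \cQ_1$ and $Q_k \in \calL$ in part (1), which is exactly what \autoref{rem:alpha} is designed for.
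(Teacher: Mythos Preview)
Your proposal is correct and follows essentially the same argument as the paper: substitute $Q_i=Q_o+a_ib_i$, $Q_j=Q_o+a_jb_j$, and (in part~1) $Q_k=\alpha_kQ_o+a_kb_k$ into the relevant linear combination, rearrange so that a scalar multiple of $Q_o$ equals a quadratic of $\rank_s\le 3$, and invoke $\rank_s(Q_o)\ge\rkq$ to kill the coefficient of $Q_o$; pairwise independence then forces $\alpha,\beta\neq 0$. Your extra irreducibility argument in part~(2) is harmless but redundant, since $\alpha+\beta=0$ together with nontriviality already gives $\alpha\neq 0$.
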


The claim only considers \autoref{thm:structure}\ref{case:span} and \autoref{thm:structure}\ref{case:rk1} as by \autoref{obs:case3} we know that $Q_i,Q_j$ do not satisfy \autoref{thm:structure}\ref{case:2}.
 Note that the guarantee of this claim is not sufficient to conclude that the dimension of $\MVar{a}{m_1}, \MVar{b}{m_1}$ is bounded. The reason is that $ c$ and $d $ are not necessarily part of the set. For example if for every $i$, $a_ib_i = x_i^2 - x_1^2$. Then every pair, $Q_i, Q_j$ satisfy \autoref{thm:structure}\ref{case:rk1}, but the dimension of $\MVar{a}{m_1}, \MVar{b}{m_1}$ is unbounded.

\begin{proof}[Proof of \autoref{cla:span-rank1-hr}]
If $Q_i,Q_j$ satisfy \autoref{thm:structure}\ref{case:span} then there are constants  $\alpha, \beta \in \C$ and $k \in [m_1+r]\setminus \{i,j\}$ such that $\alpha (Q_o+a_ib_i) + \beta ( Q_o+a_jb_j) = \alpha Q_i+\beta Q_j = Q_k = \alpha_k Q_o+a_kb_k$. Rearranging we get that \[\alpha a_ib_i + \beta a_jb_j -a_kb_k = (\alpha_k-(\alpha+\beta))Q_o\;.\] From the fact that $\rank_s(Q_o) \geq \rkq$, it must be that $\alpha_k-(\alpha+\beta) = 0$. Hence,
\begin{equation} \label{eq:span-ak-hr}
 \alpha a_ib_i + \beta a_jb_j = a_kb_k
\end{equation}
and \eqref{eq:1-alpha-hr} holds. Observe that $\alpha, \beta \neq 0$ as otherwise we will have two linearly dependent polynomials in $\cQ$.

If $Q_i,Q_j$ satisfy \autoref{thm:structure}\ref{case:rk1} then there are $\alpha, \beta \in \C$ and two linear forms $c$ and $d$ such that $ \alpha (Q_o+a_ib_i) + \beta ( Q_o+a_jb_j) = cd$, and again, by the same argument, we get that  $\beta = -\alpha$, and that, without loss of generality,
\begin{equation*}
  a_ib_i -a_jb_j = cd. \qedhere
\end{equation*}
\end{proof}

Let $V_i \eqdef \spn{a_i,b_i}$. We next show that the different spaces $V_i$ satisfy some non-trivial intersection properties.

\begin{claim}\label{cla:V_i intersection-hr}
Let $\tilde{\cQ}$ be a \hyperref[def:Q-dom]{$(Q_o,m_1,0)$-set} such that $\rank_s(Q_o)\geq \rkq$. If for some $i\in [m_1]$ we have $\dim(V_i)=2$ then for every $j \in [m_1]$ it holds that  $\dim(V_j\cap V_i) \geq 1$. In particular it follows that if $dim(V_j)=1$ then $V_j\varsubsetneq V_i$.
\end{claim}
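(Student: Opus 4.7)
The plan is to fix $j \in [m_1]$ and apply the structure theorem \autoref{thm:structure} to the pair $Q_i, Q_j$. Since $\tilde{\cQ}$ satisfies the hypotheses of \autoref{thm:main-sg-intro}, some product of polynomials from $\tilde{\cQ}\setminus\{Q_i,Q_j\}$ lies in $\sqrt{\ideal{Q_i, Q_j}}$, so one of the three cases of \autoref{thm:structure} must hold. By \autoref{obs:case3}, $Q_i$ has rank at least $\rkq-1 > 2$, which rules out \autoref{thm:structure}\ref{case:2} (that case would force $Q_i$ to vanish on the zero set of two linear forms, implying $\rank_s(Q_i) \leq 2$ by \autoref{cla:rank-mod-space}). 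Thus either \ref{case:span} or \ref{case:rk1} holds, and \autoref{cla:span-rank1-hr} translates this into an identity of the form
\[
\alpha a_i b_i + \beta a_j b_j = a_k b_k \quad (\alpha,\beta\neq 0), \qquad \text{or} \qquad a_i b_i - a_j b_j = cd.
\]

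Next, I would rewrite each identity in the form $ab + c'd' = ef$ with $a = a_i$ and $b = b_i$, which are linearly independent since $\dim V_i = 2$. Specifically, in case \ref{case:rk1} I take $c' = -a_j$, $d' = b_j$, $ef = cd$; in case \ref{case:span} I take $c' = (\beta/\alpha) a_j$, $d' = b_j$ and $ef = \alpha^{-1} a_k b_k$. In both cases $\spn{c', d'} = \spn{a_j, b_j} = V_j$. Then \autoref{cla:intersection} applies and yields
\[
\dim(V_i \cap V_j) = \dim(\spn{a_i, b_i}\cap\spn{c',d'}) \geq 1,
\]
which is the main conclusion.

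For the ``in particular'' clause, note that $Q_j \neq Q_o$ by pairwise linear independence of $\tilde{\cQ}$, so $a_j b_j \neq 0$ and hence $\dim V_j \geq 1$. If $\dim V_j = 1$, then the non-trivial intersection $V_j\cap V_i$ combined with $\dim V_j = 1$ forces $V_j \subseteq V_i$, and this inclusion is strict because $\dim V_j = 1 < 2 = \dim V_i$. The only real subtlety is checking that case \ref{case:2} is genuinely excluded and that the reducible pair $(c', d')$ extracted from \autoref{cla:span-rank1-hr} really spans $V_j$; both are immediate from the high-rank hypothesis on $Q_o$ and the fact that the coefficients $\alpha, \beta$ produced by case \ref{case:span} are nonzero.
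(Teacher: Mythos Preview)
Your proof is correct and follows essentially the same approach as the paper, which simply says the claim ``follows immediately from \autoref{cla:span-rank1-hr} and \autoref{cla:intersection}.'' You have just unpacked those two references carefully, including the (already implicit) exclusion of \autoref{thm:structure}\ref{case:2} via \autoref{obs:case3} and the verification that the rescaled pair $(c',d')$ still spans $V_j$.
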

\begin{proof}
This follows immediately from \autoref{cla:span-rank1-hr} and \autoref{cla:intersection}. 
\end{proof}


Next we use this fact to conclude some structure on the set of pairs $(a_i,b_i)$.
 
\begin{claim}\label{cla:z-exist}
Let $\tilde{\cQ}$ be as in \autoref{cla:strong-Q-dom-hr}. If $\dim(\spn{a_i,b_i})>3$ then there is a linear space of linear forms, $V$ such that $\dim(V) \leq 4$, and for all $i\in [m_1+r]$, $b_i \in \spn{a_i,V}$ or $a_i \in \spn{b_i,V}$. 
\end{claim}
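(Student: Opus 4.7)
The plan is to apply the subspace intersection lemma \autoref{cla:linear-spaces-intersaction } to the collection of 2-dimensional spaces among $\{V_i\}_{i\in[m_1]}$. As a preliminary observation, I use \autoref{cla:V_i intersection-hr}: whenever $V_i$ has dimension 2 and $V_j$ is any other space (for $j \in [m_1]$), their intersection has dimension at least 1. Consequently, any two distinct 2-dimensional $V_i, V_j$ meet in exactly dimension 1 (if the intersection had dimension 2 they would coincide), and any 1-dimensional $V_j$ is contained in every 2-dimensional $V_i$.

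Now let $\mathcal{J} \subseteq [m_1]$ be a maximal collection of indices giving pairwise distinct 2-dimensional spaces $V_j$. The trivial cases $|\mathcal{J}| \in \{0,1\}$ are handled immediately: when $\mathcal{J} = \emptyset$ every $V_i$ for $i \in [m_1]$ is 1-dimensional and $V = \{0\}$ works; when $|\mathcal{J}| = 1$ with unique space $V_{i_0}$, the first paragraph gives $V_j \subseteq V_{i_0}$ for every other $j \in [m_1]$, so $V = V_{i_0}$ works with $\dim V = 2$. When $|\mathcal{J}| \geq 2$, I apply \autoref{cla:linear-spaces-intersaction } to $\{V_j\}_{j \in \mathcal{J}}$ (the hypotheses being exactly what the first paragraph established), obtaining one of two sub-cases: either (a) $\dim\bigl(\bigcap_{j\in\mathcal{J}} V_j\bigr) \geq 1$, or (b) $\dim\bigl(\sum_{j\in\mathcal{J}} V_j\bigr) = 3$.

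In sub-case (a), pick $w \neq 0$ in the common intersection and set $V = \spn{w}$. For each $j \in \mathcal{J}$ we have $w \in V_j$; if $b_j \in \spn{w}$ then $b_j \in V \subseteq \spn{a_j, V}$, otherwise $\{b_j, w\}$ is a basis of $V_j$ and $a_j \in V_j = \spn{b_j, V}$. In sub-case (b), set $V = \sum_{j \in \mathcal{J}} V_j$, of dimension 3. Every 2-dim $V_j$ lies in $V$ by construction, and every 1-dim $V_j$ for $j \in [m_1]$ lies inside some 2-dim $V_i \subseteq V$ by the first paragraph, so $a_j, b_j \in V$ for all $j \in [m_1]$. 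In either sub-case, the indices $i \in \{m_1+1, \ldots, m_1+r\}$ coming from $\calL$ are handled tautologically since $a_i = b_i = \ell_{i-m_1}$ by \autoref{rem:alpha}, so $b_i \in \spn{a_i, V}$ automatically.

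The argument is essentially bookkeeping once the subspace-intersection lemma is invoked; the one point that needs to be checked is that distinct 2-dim $V_i$'s intersect in exactly (not merely at least) dimension 1, which follows for free from \autoref{cla:V_i intersection-hr}. The hypothesis $\dim(\spn{a_i, b_i}) > 3$ is used only to rule out the trivial situation where the total span is already $\leq 3$ (in which case $V$ can just be taken to be the whole span); in either sub-case above the construction produces the stronger bound $\dim V \leq 3$, well within the stated bound of $4$. I do not anticipate any major obstacle.
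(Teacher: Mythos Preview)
Your proposal is correct and follows essentially the same approach as the paper: both proofs apply \autoref{cla:linear-spaces-intersaction } to the collection of two-dimensional $V_i$'s (whose pairwise intersection in dimension exactly $1$ comes from \autoref{cla:V_i intersection-hr}), and then split into the ``common line $w$'' and ``total span $3$'' sub-cases to build $V$. Your write-up is in fact slightly more careful than the paper's in handling the edge cases $|\mathcal{J}|\in\{0,1\}$ and the indices from $\calL$; the one point you leave implicit in sub-case~(a) is that indices $i\in[m_1]\setminus\mathcal{J}$ are covered either because $V_i$ coincides with some $V_j$, $j\in\mathcal{J}$, or because $\dim V_i=1$, but this is immediate.
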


\begin{proof}
Consider the set of all $V_i$'s of dimension $2$. Combining   \autoref{cla:span-rank1-hr} and \autoref{cla:linear-spaces-intersaction } we get that either  $\dim(\bigcup_{i=1}^m V_i) \leq 3$ or $\dim(\bigcap_{i=1}^m V_i) =1$. If $\dim(\bigcup_{i=1}^m V_i) \leq 3$ then $V = \bigcup_{i=1}^m V_i$ is the linear space promised in the claim. If $\bigcap_{i=1}^m V_i) =1$
 there is a linear form, $w$, such that $\spn{w} = \dim(\bigcap_{i=1}^m V_i)$. It follows that for every $ i \in [m_1]$ there are constants $\epsilon_i,\delta_i$ such that, with out loss of generality, $ b_i = \epsilon_i a_i + \delta_i w$. Note that if $\dim(V_i) = 1$ this representation also holds with $\delta_i = 0$, and thus $V = \spn{w}$. is the linear space promised in the claim.
\end{proof}

From now on we assume there is a linear space of linear forms, $V$ such that $\dim(V) \leq 4$ and for every $i \in [m_1+r]$ it holds that $b_i = \epsilon_i a_i + v_i$ (we can do this by replacing the roles of $a_i$ and $b_i$ if needed). Indeed, if $\dim(\spn{a_i,b_i})>3$ then this follows from \autoref{cla:z-exist} and otherwise we can take $V=\spn{a_i,b_i}$. Thus, following \autoref{rem:alpha},  every polynomial in $\cQ$ is of the form $\alpha_i Q + a_i(\epsilon_i a_i +v_i)$ and for polynomials in  $\calL$ we have that $\alpha_i = 0$, $\epsilon_i = 1$ and $ v_i = 0$. 

The following claim is the crux of the proof of \autoref{cla:strong-Q-dom-hr}. It shows that, modulo $V$,  the set $\{\MVar{a}{m_1+r}\}$ satisfies the Sylvester-Gallai theorem..

\begin{claim}\label{cla:stisfy-sg-lines-hr}
Let $i\neq j \in [m_1+r]$ be such that $a_i \notin V$ and $a_j \notin \spn{a_i,V}$. Then, there is $k \in [m_1+r]$ such that $a_k \in \spn{a_i,a_j,V}$ and $a_k \notin \spn{a_i,V}\cup \spn{a_j,V}$.
\end{claim}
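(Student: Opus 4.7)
I apply \autoref{thm:main-sg-intro}'s hypothesis to the pair $(Q_i,Q_j)$, using \autoref{cla:gup-4} to obtain $\cK\subseteq [m_1+r]\setminus\{i,j\}$ of size at most $4$ with $\prod_{\ell\in\cK}Q_\ell \in \sqrt{\ideal{Q_i,Q_j}}$. By \autoref{thm:structure}, one of the three cases holds; \autoref{obs:case3} rules out \ref{case:2} since $\rank_s(Q_i),\rank_s(Q_j)\ge \rkq-1\gg 2$. All analysis takes place in the quotient ring $R:=\C[\vx]/\ideal{V}$, where the relation $b_t=\epsilon_t a_t+v_t$ with $v_t\in V$ yields $\bar b_t = \epsilon_t\bar a_t$, and $\bar Q_o$ still has rank at least $\rkq-\dim V$ by \autoref{cla:rank-mod-space}. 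I focus on the generic subcase $\epsilon_i,\epsilon_j\neq 0$; the boundary subcases ($\bar Q_i=\bar Q_o$ or $\bar Q_j=\bar Q_o$ in $R$) are handled by simpler variants of the argument below.

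First, \autoref{thm:structure}\ref{case:span} cannot occur in the generic subcase: by \autoref{cla:span-rank1-hr} it would produce $a_\ell b_\ell=\alpha a_ib_i+\beta a_jb_j$ with $\alpha\beta\neq 0$, and reducing modulo $V$ gives $\epsilon_\ell\bar a_\ell^2=\alpha\epsilon_i\bar a_i^2+\beta\epsilon_j\bar a_j^2$. Since $\bar a_i,\bar a_j$ are linearly independent in $R$, for the right-hand side to equal $\epsilon_\ell\bar a_\ell^2$ (either zero or a single square) the vanishing of the cross term $\bar a_i\bar a_j$ forces $\alpha\epsilon_i=0$ or $\beta\epsilon_j=0$, contradicting $\alpha\beta\epsilon_i\epsilon_j\neq 0$. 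Hence only \autoref{thm:structure}\ref{case:rk1} remains.

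In \autoref{thm:structure}\ref{case:rk1} we get $a_ib_i-a_jb_j=cd$ for linear forms $c,d$ (the $Q_o$-coefficients match by the rank bound on $Q_o$). Reducing modulo $V$ and applying \autoref{fact:ufd} to $\bar c\bar d=(\sqrt{\epsilon_i}\bar a_i-\sqrt{\epsilon_j}\bar a_j)(\sqrt{\epsilon_i}\bar a_i+\sqrt{\epsilon_j}\bar a_j)$ shows that, after a possible swap, $\bar c\sim\sqrt{\epsilon_i}\bar a_i+\sqrt{\epsilon_j}\bar a_j$ and $\bar d\sim\sqrt{\epsilon_i}\bar a_i-\sqrt{\epsilon_j}\bar a_j$; in particular $c,d\in\spn{a_i,a_j,V}\setminus(\spn{a_i,V}\cup\spn{a_j,V})$. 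Since $\rank_s(Q_i|_{c=0})\ge \rkq-2$ by \autoref{cla:rank-mod-space}, $Q_i|_{c=0}$ is irreducible, and the ideal $\sqrt{\ideal{Q_i,c}}$ is prime (likewise for $d$). The decomposition $\sqrt{\ideal{Q_i,Q_j}}=\sqrt{\ideal{Q_i,c}}\cap\sqrt{\ideal{Q_i,d}}$ then forces, by primality, some factor $Q_{k_1}$ of $\prod_\ell Q_\ell$ to lie in $\sqrt{\ideal{Q_i,c}}$ (and similarly $Q_{k_2}\in\sqrt{\ideal{Q_i,d}}$).

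To extract the desired $a_k$: if $\alpha_{k_1}=0$ then $Q_{k_1}=a_{k_1}^2$, and $a_{k_1}^2\in\ideal{Q_i,c}$ together with a rank comparison (using that $Q_i|_{c=0}$ is irreducible of rank $\gg 1$ while $a_{k_1}|_{c=0}^2$ has rank $\le 1$) forces $a_{k_1}|_{c=0}=0$, i.e.\ $a_{k_1}\sim c$, yielding $k=k_1$. Otherwise $\alpha_{k_1}=1$ and $Q_{k_1}=Q_i+ce$ for some linear form $e$, so modulo $V$ one factors $\epsilon_{k_1}\bar a_{k_1}^2-\epsilon_i\bar a_i^2=\bar c\bar e$ via \autoref{fact:ufd}; this places $\bar a_{k_1}\in\spn{\bar a_i,\bar a_j}$ with both coefficients generically nonzero (giving $k=k_1$), the only exceptional outcomes being $\bar a_{k_1}\sim\bar a_i$ or $\bar a_{k_1}\sim\bar a_j$, in which cases $\bar Q_{k_1}\in\{\bar Q_i,\bar Q_j\}$ in $R$. \textbf{The main obstacle} is the ``double-degenerate'' configuration in which every available $Q_{k_1},Q_{k_2}$ lands in these exceptional outcomes; I resolve it by running the same analysis based at $Q_j$ (using $\sqrt{\ideal{Q_j,c}}=\sqrt{\ideal{Q_i,c}}$, since $Q_j\equiv Q_i\pmod c$), which imposes a complementary constraint on the same $\bar a_{k_\ell}$, and combining the two constraints with pairwise linear independence of $\tilde\cQ$ rules out simultaneous double-degeneracy unless some $Q_{k_\ell}$ with $\alpha_{k_\ell}=0$ is present---a case already handled.
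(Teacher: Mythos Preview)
Your overall strategy mirrors the paper's: invoke the structure theorem, rule out Case~\ref{case:2} via \autoref{obs:case3}, and in Case~\ref{case:rk1} extract some $Q_k$ from the prime ideal $\ideal{Q_i,c}$. Working modulo $V$ is a sensible simplification, and your observation that Case~\ref{case:span} is vacuous when $\epsilon_i,\epsilon_j\neq 0$ is correct and sharper than what the paper bothers to note. However, two genuine gaps remain.

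First, your identity ``$a_ib_i-a_jb_j=cd$'' in Case~\ref{case:rk1} silently assumes $\alpha_i=\alpha_j$. When $\alpha_i=1$, $\alpha_j=0$ (so $Q_j=a_j^2\in\calL$; note $\epsilon_j=1$, so this lies in your ``generic'' subcase, not a boundary one), the rank bound forces the $Q_o$-coefficient of the reducible combination to vanish, which here means the coefficient of $Q_i$ is zero and the combination is just a multiple of $Q_j$; hence $c\sim d\sim a_j$, not a form depending nontrivially on both $a_i$ and $a_j$. Your subsequent argument, which needs $\bar c\notin\spn{\bar a_i}\cup\spn{\bar a_j}$, then breaks down. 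The paper handles the three possibilities for $(\alpha_i,\alpha_j)$ separately (its cases (i)--(iii)); your sketch covers only (i).

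Second, your resolution of the ``double-degenerate'' subcase does not work as written. The $Q_i$-based and $Q_j$-based factorizations modulo $V$ produce the \emph{same} pair of exceptional outcomes, $\bar a_{k_1}\sim\bar a_i$ or $\bar a_{k_1}\sim\bar a_j$; combining them adds no information, and pairwise linear independence in $\C[\vx]$ does not by itself forbid $Q_{k_1}\equiv Q_i\pmod{\ideal V}$. What actually eliminates, say, $a_{k_1}\in\spn{a_i,V}$ is the paper's explicit computation: in that case $Q_{k_1}-Q_i=ce$ is a nonzero (this is where pairwise independence enters) quadratic with $\MS(ce)\subseteq\spn{a_{k_1},b_{k_1},a_i,b_i}\subseteq\spn{a_i,V}$, forcing $c\in\spn{a_i,V}$ --- which contradicts $c=\mu a_i+\eta a_j$ with $\eta\neq 0$. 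The symmetric argument via $Q_{k_1}-Q_j=ce'$ kills the case $a_{k_1}\in\spn{a_j,V}$. This $\MS$-containment step is the missing idea; your sketch gestures toward the right objects but never supplies it.
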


\begin{proof}

We split the proof to three cases (recall \autoref{rem:alpha}): Either 
\begin{enumerate*}[label=(\roman*)]
  \item $\alpha_i = \alpha_j = 1$, or 
  \item $\alpha_i = 1, \alpha_j = 0$ (without loss of generality), or
  \item $\alpha_i=\alpha_j = 0$.
\end{enumerate*} 
Recall that $\alpha_i=0$ if and only if $i\in\{m+1,\ldots,m+r\}$. 

\begin{enumerate}[label=(\roman*)]
	\item  $\alpha_i= \alpha_j = 1$. 	
\autoref{cla:span-rank1-hr} implies that there are two linear forms $c$ and $d$ such that $cd$ is a nontrivial linear combination of ${a_j(\epsilon_j a_j + v_j), a_i(\epsilon_i a_i + v_i)}$. We next show that  without loss of generality $c$ depends non-trivially on both $a_i$ and $a_j$.

\begin{lemma}\label{lem:span-c-hr}
In the current settings, without lost of generality, $c= \mu a_i + \eta a_j$ where $\mu,\eta \neq 0$.
\end{lemma}
\begin{proof}
Setting $a_i=0$ gives that, without loss of generality, $cd \equiv_{a_i} a_j(\epsilon_j a_j + v_j)$ and as $a_j\not\in \spn{a_i,V}$ we have that $cd\not\equiv_{a_i}0$. Thus,  without loss of generality $c \equiv_{a_i} \eta a_j$, for some non-zero $\eta$. Let $\mu$ and $\eta$ be such that  $c = \mu a_i + \eta a_j$. We will now show that $\mu  \neq 0$. Indeed, if this was not the case then we would have that $cd = \eta a_j d$.
This means that $ a_i(\epsilon_i a_i + v_i) \in \spn{a_j(\epsilon_j a_j + v_j), \eta a_j d}$ (since the linear dependence was non-trivial) setting $a_j =0$ we see that either $a_i$, or $ \epsilon_i a_i + v_i$ in $\spn{a_j}$, which contradicts our assumption. 
\end{proof}

\autoref{eq:1-alpha-hr} and \autoref{lem:span-c-hr} show that if $Q_i$ and $Q_j$ satisfy \autoref{thm:structure}\ref{case:span}, i.e. they span $Q_k$ (for $k\not\in\{i,j\}$), then one of $a_k,\epsilon_k a_k + v_k$ is a non-trivial linear combination of $a_i$ and $a_j$. Thus, modulo $V$, $a_k$ is in the span of $a_i$ and $a_j$, which is what we wanted to show. 
	
	We next handle the case where $Q_i$ and $Q_j$ satisfy \autoref{thm:structure}\ref{case:rk1}. Let $cd$ be a product of linear forms in the span of $Q_i$ and $Q_j$. From \autoref{lem:span-c-hr} we can assume that $c=\mu a_i+\eta a_j$ with $\mu\eta\neq 0$. In particular, this means that $\sqrt{\ideal{Q_i,Q_j}} = \sqrt{\ideal{cd,Q_j}}$. 
	
	The assumption that $\rank_s(Q_o)\geq \rkq$ implies that $Q_j$ is irreducible even after setting $c=0$. It follows that if a product of irreducible polynomials satisfy $\prod_i A_i \in  \sqrt{\ideal{cd,Q_j}}$ then, after setting $c=0$, some $A_i$ is divisible by ${Q_j}|_{c=0}$. Thus, there is a multiplicand that is equal to $ \alpha Q_j + ce$ for some linear form $e$. In particular, there must be a polynomial $Q_k$, $k\in[m_1+r]\setminus \{i,j\}$, such that  $Q_k = \alpha Q_j + ce$.
	If $\alpha = 0$ then it holds that $Q_k = a_k^2 = ce$ and therefore $a_k$ satisfies the claim. Otherwise, as before, the rank condition on $Q_o$ implies that $\alpha = 1$ and thus $ a_k(\epsilon_k a_k +v_k) =  a_j(\epsilon_j a_j + v_j) + (\mu a_i + \eta a_j)e$. 
	Consider what happens when we set $a_j=0$. We get that $a_k(\epsilon_k a_k +v_k) \equiv_{a_j} \mu a_i e$.  Note that it cannot be the case that $e\equiv_{a_j}0$ as this would imply that $a_k\in\spn{a_j,v_k}$ and in turn, this implies that $a_i \in \spn{a_j,V}$ in contradiction to the choice of $a_i$ and $a_j$. Thus, we get that either $a_k$ or $\epsilon_k a_k +v_k$ are equivalent to $a_i$ modulo $a_j$. We next show that if either of them depends only on $a_i$, then we get a contradiction. Thus, we are left in the case that $a_k = \lambda a_i$ (the case $\epsilon_k a_k + v_k = \lambda a_i$ is equivalent).  Since $Q_k = Q_o+  \lambda a_i\left(\epsilon_k  \lambda a_i + v_k\right) = Q_j + ce$ and we have that $Q_i = Q_o + a_i(\epsilon_i a_i + v_i) = Q_j + cd$ we get by subtracting $Q_i$ from $Q_k$ that 
	\[a_i\left((\lambda^2\epsilon_k - \epsilon_i)a_i + (\lambda v_k -v_i)\right) =\lambda a_i(\epsilon_k  \lambda a_i + v_k) - a_i(\epsilon_i   a_i + v_i) = Q_k-Q_i=c(e-d) \;,\]
	and clearly neither side of the equation is zero since $Q_i\neq Q_k$. 
	This implies that $c \in \spn{a_i,V}$, in contradiction. Thus, in this case too we get that $a_k$ satisfies the claim.
	
	\item $\alpha_i= 1, \alpha_j = 0$. In this case, $Q_i,Q_j$ must satisfy \autoref{thm:structure}\ref{case:rk1}, as $0 \cdot Q_i + Q_j = a_j^2$. As before,  the  assumption that $\rank_s(Q_o)\geq \rkq$ implies that $Q_i$ is irreducible even after setting $a_j=0$. It follows that if a product of irreducible polynomials satisfy $\prod_t A_t \in  \sqrt{\ideal{a_j^2,Q_i}}$ then, after setting $a_j=0$, some $A_t$ is divisible by ${Q_i}|_{a_j=0}$. In our case we get that there is a multiplicand that is equal to $ \alpha Q_i + a_je$ for some linear form $e$. In particular, there must be a polynomial $Q_k$, for $k\in[m_1+r]\setminus \{i,j\}$, such that  $Q_k = \alpha Q_i + a_je$.
	If $\alpha = 0$ it follows that $Q_k$ is reducible and thus of the form $Q_k = a_k^2 = a_je$ which is a contradiction to pairwise linear independence (as $Q_k \sim Q_j$).  Thus $\alpha = \alpha_k = 1$, and $a_k(\epsilon_ka_k + v_k) = a_i(\epsilon_i a_i + v_k) + a_je$. As before, we can conclude that $a_k \in \spn{a_i,a_j,V}$ and that it cannot be the case that $a_k \in \spn{a_i,V} \cup \spn{a_j,V}$ (as by rearranging the equation we will get a contradiction to the fact that $a_j \notin \spn{a_i,V}$),  which is what we wanted to show.

	
	\item  $\alpha_i= \alpha_j = 0$. Then $\sqrt{\ideal{Q_i,Q_j}} = \ideal{a_i,a_j}$ is a prime ideal. It follows that there is $k \in [m_1+r] \setminus \{i,j\}$ such that $Q_k \in \ideal{a_i,a_j}$ the  rank condition on $Q_o$ implies that $\alpha_k = 0$ and therefore $a_k$ is a non-trivial linear combination of $a_i$ and $a_j$, which is what we wanted to show.

\end{enumerate}


This completes the proof of \autoref{cla:stisfy-sg-lines-hr}.
\end{proof}

We can now prove  \autoref{cla:strong-Q-dom-hr}.

\begin{proof}[Proof of  \autoref{cla:strong-Q-dom-hr}]
%
%


\autoref{cla:stisfy-sg-lines-hr} implies that any two linear functions in $\{\MVar{a}{m_1+r}\}$ that are linearly independent modulo $V$, span (modulo $V$) a third function in the set. This implies that if we project all the linear functions to the perpendicular space to $V$ then they satisfy the usual condition of the \hyperref[thm:SG-linforms]{Sylvester-Gallai theorem} and thus the dimension of the projection is at most $3$. As $\spn{ \MVar{a}{m_1}, \MVar{b}{m_1},a_{m_1+1},\ldots,a_{m_1+r}} \subseteq \spn{\MVar{a}{m_1+r},V}$, we get  that  $\dim(\{\MVar{a}{m_1}, \MVar{b}{m_1},a_{m_1+1},\ldots,a_{m_1+r}\}) \leq 3+\dim(V) \leq 7$, as claimed.
\end{proof}

Thus far we have proved  \autoref{cla:strong-Q-dom-hr} which is a restriction of  \autoref{cla:strong-Q-dom-hr-main} to the case $m_2=0$. In the next subsection we handle the general case $m_2\neq 0$.

\subsubsection{The case $m_2\neq 0$.}

In this subsection we prove  \autoref{cla:strong-Q-dom-hr-main}.
We shall assume  without loss of generality that $m_2\neq 0$.
We first show that each $P_i \in \cQ_2$ (recall \autoref{def:Q-dom}) is either a rank-$2$ quadratic, or it is equal to $Q_o$ plus a rank-$2$ quadratic.

\begin{claim}\label{cla:no-rank-3}
Let $\tilde{\cQ}$ be a \hyperref[def:Q-dom]{$(Q_o,m_1,m_2)$-set} such that $\rank_s(Q_o)\geq \rkq$. Then for every $i \in [m_2]$ there exists $\gamma_i\in \C$  such that $\rank_s(P_i-\gamma_i Q_o)) = 2$.
\end{claim}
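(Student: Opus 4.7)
The goal is to prove the upper bound $\rank_s(P_i - \gamma_i Q_o) \leq 2$ for some $\gamma_i \in \C$, since the matching lower bound follows from item~\ref{item:Qdom>2} of \autoref{def:Q-dom}. My plan is to pair $P_i$ with an arbitrary $Q_j = Q_o + a_jb_j \in \cQ_1$ (with $j \in [m_1]$), apply the vanishing hypothesis together with \autoref{cla:gup-4} and the structure theorem \autoref{thm:structure}, and argue that the desired conclusion must hold for at least one such $j$.

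For any fixed $j$ I will analyze the three cases of \autoref{thm:structure} applied to $(P_i, Q_j)$. Case \ref{case:2} is ruled out immediately: it would place $Q_j$ in an ideal $\ideal{c,d}$ generated by two linear forms, contradicting $\rank_s(Q_j) \geq \rkq - 1$ via \autoref{cla:rank-mod-space}. In Case \ref{case:rk1}, a nontrivial relation $\alpha P_i + \beta Q_j = cd$ must have $\alpha \neq 0$ (otherwise $Q_j$ is reducible), and substituting $Q_j = Q_o + a_jb_j$ gives $P_i + (\beta/\alpha)Q_o = (cd - \beta a_jb_j)/\alpha$, which is of rank at most $2$. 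In Case \ref{case:span}, some $Q_k \in \cK$ satisfies $Q_k = \alpha P_i + \beta Q_j$ with both coefficients nonzero by pairwise linear independence. If $Q_k$ is a square in $\calL$, equals $Q_o$, or has the form $Q_o + a_kb_k \in \cQ_1$, then a direct rearrangement expresses $P_i - \lambda Q_o$ as a sum of at most two products of linear forms, yielding the required bound.

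The only remaining ``bad'' possibility is that for every $j \in [m_1]$, Case \ref{case:span} produces some $Q_k = P_{i'(j)} \in \cQ_2 \setminus \{P_i\}$ together with nonzero scalars $\alpha_j, \beta_j$ such that $P_{i'(j)} = \alpha_j P_i + \beta_j Q_j$. This is where the combinatorial gap $m_1 > 5m_2 + 2$ from \autoref{def:Q-dom} becomes essential: since at most $m_2 - 1$ values are available for $i'(j)$, pigeonhole produces distinct $j_1, j_2 \in [m_1]$ with $i'(j_1) = i'(j_2) =: i^*$. Equating the two expressions for $P_{i^*}$ and substituting $Q_{j_t} = Q_o + a_{j_t}b_{j_t}$ gives
\[
(\alpha_{j_1} - \alpha_{j_2})\, P_i = (\beta_{j_2} - \beta_{j_1})\, Q_o + \beta_{j_2}\, a_{j_2}b_{j_2} - \beta_{j_1}\, a_{j_1}b_{j_1}.
\]
If $\alpha_{j_1} = \alpha_{j_2}$, the right-hand side must vanish; since $\rank_s(Q_o) \geq \rkq$ dwarfs the rank of the remaining terms we conclude $\beta_{j_1} = \beta_{j_2}$, whence $Q_{j_1} = Q_{j_2}$, a contradiction. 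Otherwise $\alpha_{j_1} \neq \alpha_{j_2}$, and solving for $P_i$ immediately yields $\rank_s(P_i - \gamma_i Q_o) \leq 2$ with $\gamma_i = (\beta_{j_1} - \beta_{j_2})/(\alpha_{j_1} - \alpha_{j_2})$.

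The main obstacle is precisely this bad case, where the ``span'' witness produced by the structure theorem is yet another polynomial from $\cQ_2$; every other sub-case reduces directly by exploiting pairwise linear independence and the high rank of $Q_o$. The bound $m_1 > 5m_2 + 2$ from \autoref{def:Q-dom} is the decisive combinatorial input that forces a pigeonhole collision and closes out the argument.
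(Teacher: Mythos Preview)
Your proof is correct and follows essentially the same approach as the paper's: rule out \autoref{thm:structure}\ref{case:2} by rank, handle \autoref{thm:structure}\ref{case:rk1} and the ``good'' subcases of \autoref{thm:structure}\ref{case:span} directly, and in the residual case use pigeonhole on $m_1 > m_2$ to force two distinct $Q_{j_1},Q_{j_2}\in\cQ_1$ to span the same $P_{i^*}\in\cQ_2$ together with $P_i$. The only cosmetic difference is in that last step: the paper observes that pairwise linear independence gives $\spn{P_i,P_{i^*}}=\spn{Q_{j_1},Q_{j_2}}$, so $P_i$ is a combination of two polynomials from $\cQ_1$ and one reduces to the already-handled subcase, whereas you equate the two expressions for $P_{i^*}$ and solve directly for $P_i$ --- the same content, arrived at by a slightly different (and equally valid) manipulation.
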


\begin{proof}
Fix $i\in [m_2]$. We shall analyse, for each $j \in [m_1]$, which case of \autoref{thm:structure} $Q_j$ and $P_i$ satisfy.
From \autoref{obs:case3} we know that $P_i$ does not satisfy \autoref{thm:structure}\ref{case:2} with any $Q_j$.
We start by analysing what happens when $P_i$ and  $Q_j$ satisfy \autoref{thm:structure}\ref{case:rk1}.  By definition, there exist linear forms  $a', b'$ and non zero constants $\alpha, \beta \in \C$, such that $\alpha P_i + \beta Q_j = a'b'$  and thus, 
\begin{equation}
\label{eq:case2}
P_i = \frac{1}{\alpha}\left(a'b' - \beta \left(Q_o+a_j b_j\right)\right) = \frac{-\beta}{\alpha}Q_o + \left(\frac{1}{\alpha} a'b' - \frac{\beta}{\alpha} a_j b_j\right) \;.
\end{equation} 
Hence, the statement holds with $\gamma_i = - \frac{\beta}{\alpha}$. Indeed, observe that the $\rank_s$ of $(\frac{1}{\alpha} a'b' - \frac{\beta}{\alpha} a_j b_j)$ cannot be $1$ as this will contradict \autoref{item:Qdom>2}  in \autoref{def:Q-dom}.

Thus, the only case left to consider is when $P_i$ satisfies \autoref{thm:structure}\ref{case:span} alone with all the $Q_j$'s. If for some $j\in [m_1]$ there is $j'\in [m_1]$ such that $Q_{j'}\in \spn{Q_j,P_i}$, then there are $\alpha, \beta \in \C\setminus\{0\}$, for which $P_i =\alpha Q_j + \beta Q_{j'}$ and then 
\[
P_i = (\alpha + \beta)Q_o + \alpha a_j b_j + \beta a_{j'} b_{j'},
\] and the statement holds with $\gamma_i =  \beta+ \alpha$. 
So, let us assume that for every $j \in [m_1]$, there is $t_j \in [m_2]$ such that $P_{t_j} \in \spn{Q_j,P_i}$. As $5m_2+2<m_1$ there must be $j'\neq j'' \in [m_1]$ and $t'\in [m_2]$ such that $P_{t'} \in \spn{Q_{j'}, P_i}$ and $ P_{t'} \in \spn{Q_{j''}, P_i}$. Since $\mathcal{Q}$ is a set of pairwise linearly independent polynomials, we can deduce that $\spn{P_i, P_{t'}} = \spn{Q_{j'}, Q_{j''}}$. In particular there exist  $\alpha, \beta \in \C$, for which $P_i =\alpha Q_j + \beta Q_{j'}$, which, as we already showed, implies what we wanted to prove.
\end{proof}

For simplicity, rescale $P_i$ so that $P_i = \gamma_i Q_o+L_i$ with $\rank_s(L_i)=2$ and $\gamma_i \in \{0,1\}$. Clearly $\cQ$ still satisfies the conditions of \autoref{def:Q-dom} after this rescaling, as it does not affect the vanishing conditions or linear independence. The next claim shows that even in the case $m_2\neq 0$, the linear forms $\lbrace \MVar{a}{m_1}, \MVar{b}{m_1}\rbrace$ ``mostly'' belong to a low dimensional space (similar to \autoref{cla:strong-Q-dom-hr}). 

\begin{claim}\label{cla:small-space-span}
Let $\tilde{\cQ}$ be a \hyperref[def:Q-dom]{$(Q_o,m_1,m_2)$-set} such that $\rank_s(Q_o)\geq \rkq$. 
Then, there exists a subspace $V$ of linear  forms such that $\dim(V) \leq 4$ and that for at least $m_1-m_2 $ indices $j\in [m_1]$ it holds that $a_j,b_j \in V $. Furthermore, there is a polynomial $P\in\cQ_2$ such that $P = \gamma Q_o + L$ and $\MS(L) = V$.
\end{claim}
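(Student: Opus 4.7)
The plan is to pick any $P = P_i \in \cQ_2$ (which exists since $m_2 \neq 0$), write $P = \gamma Q_o + L$ with $\gamma \in \{0,1\}$ and $\rank_s(L) = 2$ as provided by \autoref{cla:no-rank-3} and the rescaling that follows it, and set $V = \MS(L)$. Then $\dim(V) \leq 4$ is automatic. The claim will follow once we show that $a_j, b_j \in V$ for all but at most $m_2 - 1$ indices $j \in [m_1]$.

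For each $j \in [m_1]$ I would apply \autoref{thm:structure} to the pair $(P, Q_j)$. By \autoref{obs:case3} case (iii) is impossible, since $Q_j$ has rank at least $\rkq - 1 > 2$. Whenever case (ii) holds, or case (i) holds with a witness $Q_k \in \cQ_1 \cup \calL$, I would compare the coefficient of $Q_o$ on both sides of the defining linear relation and use $\rank_s(Q_o) \geq \rkq$ to pin the scalars down; in each such subcase I obtain (after rescaling) an identity of the form $L = a_j b_j + R$, where $R$ is a rank-$\leq 1$ quadratic of the form $a_k b_k$, $cd$, or $\ell_k^2$. Since $\rank_s(L) = 2$, this exhibits a second minimal rank-$2$ representation of $L$, and \autoref{cla:irr-quad-span} then forces $\MS(L)$ to contain both $a_j$ and $b_j$, i.e.\ $a_j, b_j \in V$. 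A short verification handles the degenerate sub-subcases: for example, if $\gamma = 0$ then case (ii) and case (i) with $Q_k \in \calL$ would each force $\rank_s(L) \leq 1$, contradicting $\rank_s(L) = 2$; and when $\gamma = 0$ and $Q_k \in \cQ_1$, the same rank-$2$ representation argument still applies.

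The remaining ``bad'' case is when the only witness to \autoref{thm:structure} for the pair $(P,Q_j)$ is case (i) with $Q_k \in \cQ_2$. To bound the number of bad indices I would argue that for every $P_l \in \cQ_2 \setminus \{P\}$ there is at most one bad $j$ with $P_l \in \spn{P, Q_j}$. Indeed, if $j_1 \neq j_2$ were two such indices, pairwise linear independence of $\cQ$ would yield $\spn{P, Q_{j_1}} = \spn{P, P_l} = \spn{P, Q_{j_2}}$, so $Q_{j_2} \in \spn{P, Q_{j_1}}$ is a case-(i) witness in $\cQ_1$, making $j_1$ good and contradicting its badness. Hence the set of bad $j$'s injects into $\cQ_2 \setminus \{P\}$, so there are at most $m_2 - 1$ bad indices and at least $m_1 - m_2 + 1 \geq m_1 - m_2$ good ones, proving the claim. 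The main technical obstacle I anticipate is the careful coefficient-matching in each subcase of the structure theorem and the recognition of $L = a_j b_j + R$ as a genuinely second minimal rank-$2$ representation so that \autoref{cla:irr-quad-span} applies cleanly.
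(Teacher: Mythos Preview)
Your proposal is correct and follows essentially the same approach as the paper's proof: pick $P=P_1\in\cQ_2$, set $V=\MS(L)$, rule out case~(iii) via \autoref{obs:case3}, use the rank bound on $Q_o$ to force the $Q_o$-coefficient to vanish in each remaining subcase, and then recognise the resulting identity as a second minimal rank-$2$ representation of $L$ so that \autoref{cla:irr-quad-span} yields $a_j,b_j\in V$. The only cosmetic differences are that the paper folds the $Q_k\in\calL$ subcase into case~(ii) (since $\ell_k^2\in\spn{P,Q_j}$ is literally a reducible combination), and that the paper phrases the ``bad'' count via pigeonhole over $\cQ_2$ to get at most $m_2$ bad indices, whereas your injection into $\cQ_2\setminus\{P\}$ gives the slightly sharper $m_2-1$; both suffice.
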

\begin{proof}
Let $P_1 = \gamma_1Q_o+L_1$ where $\rank_s(L_1) =2$. To simplify notation we drop the index $1$ and only talk of $P$, $L$ and $\gamma$. Set $V = \MS(L)$.
As before, \autoref{obs:case3} implies that $P$ cannot satisfy \autoref{thm:structure}\ref{case:2} with any $Q_j\in \cQ_1$.

Let $Q_j\in\cQ_1\cup\calL$. If $Q_j, P$ satisfy \autoref{thm:structure}\ref{case:2}, then $\alpha_j = 0$ and $Q_j  =a_j^2$. By the rank condition on $Q_o$ it follows that $\gamma = 0$  and therefore $a_j \in \MS(L) = V$.

Let $Q_j\in \cQ_1\cup\calL$ be such that $Q_j$ and $P$ satisfy \autoref{thm:structure}\ref{case:rk1}. This means that there are two linear forms $e,f$, and non zero $\alpha, \beta \in \C$ for which $\alpha P - \beta Q_j = ef$, and so,
\begin{equation}
(\alpha \gamma -\beta\alpha_j)Q_o = -\alpha L + \beta a_jb_j +ef
\end{equation}

As we assumed that $\rank_s(Q_o) \geq \rkq$ this implies that $\alpha \gamma-\beta\alpha_j = 0$ and thus $\beta a_jb_j + ef = \beta L$.  \autoref{cla:irr-quad-span} implies that $e, f, a_{j}, b_{j} \in V$. 

We have shown that $V$ contains all $a_j,b_j$ that come from polynomials satisfying  \autoref{thm:structure}\ref{case:rk1} with $P$.

Let $j\in[m_1]$ be such that $P$ and $Q_j$ satisfy \autoref{thm:structure}\ref{case:span} but not \autoref{thm:structure}\ref{case:rk1}, i.e, they span another polynomial in $\tilde{\cQ}\setminus\calL$. If this polynomial is in $\cQ_1$, i.e. there exists $j' \in [m_1]$ such that $Q_{j'}\in \spn{P,Q_j}$ then $P = \alpha Q_j + \beta Q_{j'}$ and as before we would get that $a_{j'} ,b_{j'}, a_{j}, b_{j} \in V$.

All that is left is to bound the number of $j\in[m_1]$ so that $P$ and $Q_j$ span a polynomial in $\cQ_2$. If there are more than $m_2$ such indices $j$ then, by the pigeonhole principle, for two of them, say $j,j'$ it must be the case that there is some $i \in [m_2]$ such that $P_{i}\in \spn{P,Q_j}$ and $P_{i}\in \spn{P,Q_{j'}}$. As our polynomials are pairwise independent this implies that $P \in \spn{Q_j,Q_{j'}}$, and as before we get that $a_{j'},b_{j'}, a_{j},b_{j}\in V$. 

It follows that the only $j$'s for which we may have $a_j,b_j\not\in V$ must be such that $Q_j$ and $P$ span a polynomial in $\cQ_2$, and no other $Q_{j'}$ spans this polynomial with $P$. Therefore, there are at most $m_2$ such ``bad'' $j$'s and the claim follows. 
\end{proof}

\begin{remark}\label{rem:QnotV}
The proof of  \autoref{cla:small-space-span} implies that if $Q_i=\alpha_iQ_o+a_ib_i \in \cQ_1$ satisfies that $\{a_i,b_i\}\not\subseteq V$ then
it must be the case that $Q_i$ and $P$ span a polynomial $P_j \in \cQ_2$. 
\end{remark}

\begin{claim}\label{cla:V-map-Q-dom}
Let $\tilde{\cQ}$ be a \hyperref[def:Q-dom]{$(Q_o,m_1,m_2)$-set} such that $\rank_s(Q_o)\geq \rkq$. 
Then there exists a $4$-dimensional linear space $V$, 
such that for every $P_i\in \tilde{\cQ}$ either $P_i$ is defined over $V$, or there is a quadratic polynomial $P'_i$ and a linear form $v_i$ that are defined over $V$, and  a linear form $c_i$, such that $P_i = Q_o +P'_i + c_i(\epsilon_i c_i + v_i )$, or $P_i = c_i^2$. 
\end{claim}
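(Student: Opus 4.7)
The plan is to take $V$ to be the $4$-dimensional space $\MS(L)$ produced by \autoref{cla:small-space-span}, where $P=\gamma Q_o+L\in\cQ_2$ has $\rank_s(L)=2$, $\MS(L)=V$, and $\gamma\in\{0,1\}$, and then to verify that every $P_i\in\tilde{\cQ}$ fits one of the three stated forms. Polynomials in $\calL$ trivially satisfy option~$3$, and every $Q_j=Q_o+a_jb_j\in\cQ_1$ with $a_j,b_j\in V$ satisfies option~$2$ with $c_j=0$ and $P'_j=a_jb_j$. A short rearrangement also handles the cases where exactly one of $a_j,b_j$ lies in $V$ (take $c_j$ to be the one outside $V$, $v_j$ the one inside, $\epsilon_j=0$, $P'_j=0$), or where both lie outside $V$ but some nontrivial combination lies in $V$ (write $a_j=\lambda b_j+v_j$ with $v_j\in V$, so $a_jb_j=\lambda b_j^2+b_jv_j=c_j(\epsilon_jc_j+v_j)$ with $c_j=b_j$, $\epsilon_j=\lambda$). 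So the two nontrivial cases are~(a) $Q_j\in\cQ_1$ with $a_j,b_j\notin V$ and $\spn{a_j,b_j}\cap V=\{0\}$, which I will show is impossible, and~(b) polynomials $P_j\in\cQ_2$.

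For~(a), the argument is a rank clash. By \autoref{rem:QnotV}, such a $Q_j$ together with $P$ spans some $P_{j^\star}\in\cQ_2$, so $P_{j^\star}=\alpha Q_j+\beta P$ for scalars $\alpha,\beta$. Matching the $Q_o$-coefficient via $\rank_s(Q_o)\geq\rkq$ and invoking \autoref{cla:no-rank-3}, we obtain $L_{P_{j^\star}}=\alpha a_jb_j+\beta L$ with $\rank_s(L_{P_{j^\star}})=2$. Pairwise linear independence of $\tilde{\cQ}$ forces $\alpha\neq 0$, and $\rank_s(L_{P_{j^\star}})=2$ (versus $\rank_s(\alpha a_jb_j)=1$) further forces $\beta\neq 0$. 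But then, under the standing assumption $\MS(a_jb_j)\cap\MS(L)=\spn{a_j,b_j}\cap V=\{0\}$, \autoref{cla:ind-rank} gives $\rank_s(\alpha a_jb_j+\beta L)=\rank_s(\alpha a_jb_j)+\rank_s(\beta L)=1+2=3$, contradicting $\rank_s(L_{P_{j^\star}})=2$.

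For~(b), I apply \autoref{thm:structure} to the pair $(P_j,P)$ when $\gamma=1$, and to $(P_j,Q_o)$ (using $Q_o\in\cQ_1\subset\tilde{\cQ}$) when $\gamma=0$. The high rank of $P$ (resp.~$Q_o$) rules out \autoref{thm:structure}\ref{case:2}. In \autoref{thm:structure}\ref{case:rk1}, matching the $Q_o$-coefficient forces the identity to $\alpha L_j+\beta L=ef$ for linear forms $e,f$; \autoref{cla:ind-rank} applied to $ef-\beta L$, together with $\rank_s(L_j)=2$, forces $\spn{e,f}\cap V\neq\{0\}$, whence $\bar e\bar f$ modulo $V$ is either zero or a scalar multiple of the square of a single form, and the same rearrangement used in~(a) lifts this to $L_j=P'_j+c_j(\epsilon_jc_j+v_j)$ with $P'_j,v_j$ over $V$, giving option~$2$ (or option~$1$ if the square is zero). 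In \autoref{thm:structure}\ref{case:span}, some $R'\in\tilde{\cQ}\setminus\{P_j,P\}$ lies in $\spn{P_j,P}$; if $R'\in\calL\cup\cQ_1$ the form of $R'$ (already verified) propagates through the equation $L_{R'}=\alpha L_j+\beta L$ (obtained by matching the non-$Q_o$ components) and, read modulo $V$, forces $\bar{L_j}$ to be a scalar multiple of a square or zero.

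The main obstacle is the remaining subcase of~(b) in which the \autoref{thm:structure}\ref{case:span} witness $R'$ also lies in $\cQ_2$, so direct recursion is circular. I will break the circularity by pairing $P_j$ against a ``good'' $Q_i\in\cQ_1$ with $a_i,b_i\in V$, of which at least $m_1-m_2>0$ exist by the lower bound $m_1>5m_2+2$ in \autoref{def:Q-dom} together with \autoref{cla:small-space-span}; re-running \autoref{thm:structure} on $(P_j,Q_i)$ yields either \autoref{thm:structure}\ref{case:rk1}, supplying an identity $\alpha L_j+\beta a_ib_i=ef$ whose residue modulo $V$ (using $a_ib_i\in\ideal{V}$) reduces to $\alpha\bar{L_j}=\bar e\bar f$, or \autoref{thm:structure}\ref{case:span}, producing a witness that, by varying $Q_i$ over the many available choices, we may arrange to land in $\calL\cup\cQ_1$. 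In either situation the earlier rank-and-residue analysis then forces $\bar{L_j}$ to be a scalar multiple of a square (or zero), which completes the proof.
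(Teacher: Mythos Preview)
Your argument for case~(a) and for the first pass through \autoref{thm:structure} in case~(b) is essentially the paper's, and is correct. The gap is in the ``main obstacle'' subcase, specifically your treatment of \autoref{thm:structure}\ref{case:rk1} when you pair $P_j$ with a single good $Q_i\in\cI$.

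From one such pairing you obtain $\alpha L_j+\beta a_ib_i=ef$ and hence $\alpha\bar{L_j}=\bar e\,\bar f$ modulo $V$. You then assert that ``the earlier rank-and-residue analysis'' forces $\bar{L_j}$ to be a scalar multiple of a square. It does not. The earlier argument, applied to $(P_j,P)$, used that $\rank_s(L)=2$: if $e,f$ were independent of $V$ and of each other, \autoref{cla:ind-rank} would give $\rank_s(ef-\beta L)=3>\rank_s(L_j)$. Here $\rank_s(a_ib_i)=1$, so the same count yields only $\rank_s(ef-\beta a_ib_i)\le 2$, which is no contradiction. Concretely, take $L_j=ef+a_ib_i$ with $e,f$ linearly independent modulo $V$: then $P_j-Q_i=ef$ realises \autoref{thm:structure}\ref{case:rk1}, yet $\bar{L_j}=\bar e\,\bar f$ is a product of two independent forms, which fits none of the three options in the claim.

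The paper closes this gap by a counting step you omitted: among the $|\cI|\ge m_1-m_2$ good $Q_i$'s, at most $|\,([m_1]\setminus\cI)\cup\cQ_2\,|\le 2m_2$ can satisfy \autoref{thm:structure}\ref{case:span} with $P_j$ via distinct witnesses outside $\cI\cup\calL$ (and a witness inside $\cI\cup\calL$ already finishes). Since $|\cI|-2m_2>2$, at least \emph{two} good $Q_j,Q_{j'}\in\cI$ satisfy \autoref{thm:structure}\ref{case:rk1} with $P_j$. Writing $L_i=cd+a_jb_j=ef+a_{j'}b_{j'}$ and subtracting gives $\MS(cd-ef)\subseteq V$; then \autoref{cla:rank-2-in-V} forces $d=\epsilon c+v$ with $v\in V$, whence $P_i=Q_o+a_jb_j+c(\epsilon c+v)$ as required. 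Your pigeonhole idea for \autoref{thm:structure}\ref{case:span} is the right instinct, but you must feed it into this two-equation subtraction rather than try to conclude from a single \autoref{thm:structure}\ref{case:rk1} identity.
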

\begin{proof}
 \autoref{cla:small-space-span} implies the existence of a polynomial $P = \gamma Q_o +  L \in \cQ_2$ and $4$-dimensional linear space $V=\MS(L)$ such that the set
$\cI = \lbrace Q_j\mid j\in [m_1]\text{ and }a_j,b_j \in V\rbrace$ satisfies  $\vert \cI \vert \geq m_1-m_2$. We will prove that $V$ is the space guaranteed in the claim. We first note that every $P_i \in \cI$ satisfies the claim with $P'_i = a_ib_i$ and $v_i=c_i =0$, and clearly for
$Q_i \in \calL$ the claim trivially holds.

Consider $Q_i \in \cQ_1 \setminus \cI$. By \autoref{rem:QnotV} it must be the case that $Q_i$ and $P$ span a polynomial $P_j \in \cQ_2$. Hence, there are $\alpha,\beta \in \C\setminus \{0\}$ such that $P_j = \alpha P + \beta Q_i$. From \autoref{cla:no-rank-3} we get that $P_j = \gamma_j Q_o + L_j$ and thus
 \[(\gamma_j -\alpha \gamma-\beta)Q_o = \alpha L + \beta a_ib_i - L_j \;.\]
As $\rank_s(Q_o) \geq \rkq$ it follows that $(\gamma_j -\alpha \gamma-\beta) = 0$ and $\alpha L + \beta a_ib_i = L_j$.  \autoref{cla:ind-rank} implies that $\spn{a_i,b_i} \cap V \neq \{\vec{0}\}$ and therefore there is $v_i\in V$ such that, without loss of generality, $b_i = \epsilon_i a_i + v_i$, for some constant $\epsilon_i$. Thus, the claimed statement holds for $Q_i$ with $c_i = a_i$ and $Q'_i = 0$. I.e., $Q_i = Q_o + 0+ a_i(\epsilon_i a_i + v_i )$.
 
Consider a polynomial $P_i=\gamma_iQ_o+L_i\in\cQ_2$.

If $\gamma_i=0$ then by rank argument we see that $P_i$ cannot satisfy  \autoref{thm:structure}\ref{case:rk1} nor  \autoref{thm:structure}\ref{case:2} with any polynomial in $\cQ_1$. Hence it must satisfy \autoref{thm:structure}\ref{case:span} with all the polynomials in $\cQ_1$. Therefore, by the pigeonhole principle $P_i$ must be spanned by two polynomials in $\cI$. Note that in this case we get that $P_i = L_i$ is a polynomial defined over $V$.

Assume then that $\gamma_i=1$. If  $P_i$ is spanned by $Q_j$ and $Q_{j'}$ such that  $j,j'\in \cI$, then, as before, $\MS(L_i) \subseteq \spn{a_jb_j , a_{j'}b_{j'}}$ and hence $L_i$ is a function of the linear forms in $V$. Thus, the statement holds with $P'_i = L$ and $v_i=c_i =0$.

The only case left to consider is when $\gamma_i =1$ and  every polynomial $Q_j$, for $j\in\cI$, that satisfies \autoref{thm:structure}\ref{case:span} with $P_i$, does not span with $P_i$ any polynomial in $\{Q_j \mid j\in  \cI\} \cup \calL$. Note that in such a case it must hold that $Q_j$ spans with $P_i$ a polynomial in $\{Q_j \mid j\in [m_1]\setminus  \cI\} \cup \cQ_2$. Observe that since our polynomials are pairwise linearly independent, if two polynomials from $\cI$ span the same polynomial with $P_i$ then $P_i$ is in their span and we are done. From 
$$\left|\{Q_j \mid j\in [m_1]\setminus  \cI\} \cup \cQ_2\right| \leq \left(m_1 - |\cI| \right) + m_2 \leq 2m_2 < m_1 -m_2 -2 \leq |\cI| -2 \;,$$
we see that for $P_i$ to fail to satisfy the claim it must be the case that 
it satisfies \autoref{thm:structure}\ref{case:rk1} with at least $2$ polynomials whose indices are in $\cI$. 
 Let $Q_j,Q_{j'} \in \cI$ be two such polynomials. In particular,  there are four linear forms $c,d,e$ and $f$ and scalars $\epsilon_j,\epsilon_{j'}$, such that
\begin{equation}\label{eq:P_i-twice-2}
P_i - \varepsilon_j Q_j = cd \quad \text{and} \quad P_i - \varepsilon_{j'} Q_{j'} = ef \;.
\end{equation} 
Equivalently, 
\begin{equation*}
(1- \varepsilon_j) Q_o = cd + \varepsilon_j a_jb_j - L_i  \quad \text{and} \quad (1- \varepsilon_{j'}) Q_o = ef + \varepsilon_{j'}a_{j'}b_{j'} - L_i \;.
\end{equation*} 
As $\rank_s(Q_o) \geq \rkq$ 
it must hold that $\varepsilon_j = \varepsilon_{j'} = 1$ and hence
\begin{equation*}
L_i = cd +   a_jb_j \quad \text{and} \quad 
L_i = ef + a_{j'}b_{j'} \;.
\end{equation*} 
It follows that  $cd -ef =   a_{j'}b_{j'}- a_{j}b_{j}$ and therefore $\MS(cd -ef) \subseteq V$. \autoref{cla:rank-2-in-V} implies that without loss of generality $d = \epsilon_i c + v_i$. We therefore conclude that $$P_i = Q_o + L_i = Q_o + a_jb_j + c( \epsilon_i c + v_i)$$ and the statement holds for $P'_i = a_jb_j$ and $c_i = c$.
This completes the proof of the \autoref{cla:V-map-Q-dom}.
\end{proof}

Consider the representation guaranteed in \autoref{cla:V-map-Q-dom} and let 
\[\begin{split}
\calS = \lbrace c_i \mid \text{there is } P_i\in\cQ \text{ such that either } P_i = c_i^2 \text{ or, for some } P'_i \text{ defined over } V, \\  P_i= Q_o +P'_i + c_i(\epsilon_i c_i + v_i)  \rbrace \;.
\end{split}\] 
Clearly, in order to bound the dimension of $\tilde{\cQ}$ it is enough to bound the dimension of $\calS$. We do so, by proving that $\calS$ satisfies the conditions of \hyperref[thm:SG-linforms]{Sylvester-Gallai theorem} modulo $V$, and thus have dimension at most $3+\dim(V)=7$.

\begin{claim} \label{cla:sg-mod-v}
Let $c_i, c_j \in \calS$ be such that $c_i \notin V$ and $c_j \notin \spn{c_i,V}$. Then, there is $c_k \in \calS$ such that $c_k \in \spn{c_i,c_j,V}$ and $c_k \notin \spn{c_i,V}\cup \spn{c_j,V}$.
\end{claim}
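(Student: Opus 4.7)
The plan is to apply \autoref{thm:structure} to the pair $P_i, P_j \in \tilde{\cQ}$ whose normal forms (from \autoref{cla:V-map-Q-dom}) produce $c_i, c_j$, combined with the vanishing condition of \autoref{def:Q-dom}: by \autoref{cla:gup-4} we obtain at most four polynomials $A_1,\ldots,A_s \in \tilde{\cQ}\setminus\{P_i,P_j\}$ with $\prod_\ell A_\ell \in \sqrt{\ideal{P_i,P_j}}$. In each of the three cases of \autoref{thm:structure} I would produce a suitable $c_k$, relying throughout on $\rank_s(Q_o)\geq \rkq$ and on reducing modulo $\ideal{V}$: every type-(b) polynomial then reduces to $Q_o + \epsilon c^2$, every type-(c) polynomial to $c^2$, and every type-(a) polynomial to $0$.

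First I would dispose of \autoref{thm:structure}\ref{case:2}: by \autoref{cla:rank-mod-space}, no polynomial involving $Q_o$ can vanish after setting two linear forms to zero, so both $P_i$ and $P_j$ must be squares $c_i^2, c_j^2$. Then $\sqrt{\ideal{P_i,P_j}} = \ideal{c_i,c_j}$ is prime and contains some $A_\ell$; rank considerations and $c_i, c_j \notin V$ force $A_\ell = c_k^2$ with $c_k = \lambda c_i + \mu c_j$, and pairwise independence of $\tilde{\cQ}$ yields $\lambda,\mu \neq 0$. In \autoref{thm:structure}\ref{case:span}, writing $P_k = \alpha P_i + \beta P_j$ and matching $Q_o$-coefficients modulo $V$ determines a linear relation $\gamma_k = \alpha\gamma_i + \beta\gamma_j$ among the type-indicators, and the residual identity $\epsilon_k c_k^2 \equiv_V \alpha\epsilon_i c_i^2 + \beta\epsilon_j c_j^2$ either yields $c_k$ directly (when this sum is a nontrivial perfect square in $c_i, c_j$ modulo $V$) or forces $P_k$ to be a square $c_k^2$ with $\alpha + \beta = 0$, which makes $P_i - P_j$ a multiple of $c_k^2$ and thereby collapses into \autoref{thm:structure}\ref{case:rk1}.

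The principal obstacle is \autoref{thm:structure}\ref{case:rk1}: some $\alpha P_i + \beta P_j = cd$ is reducible. Balancing $Q_o$-coefficients forces $\alpha\gamma_i + \beta\gamma_j = 0$; in the generic sub-case (both $P_i, P_j$ of type (b)) this yields $P_i - P_j = cd$, so $cd \equiv_V \epsilon_i c_i^2 - \epsilon_j c_j^2 = (\sqrt{\epsilon_i}c_i + \sqrt{\epsilon_j}c_j)(\sqrt{\epsilon_i}c_i - \sqrt{\epsilon_j}c_j)$ when $\epsilon_i,\epsilon_j \neq 0$, identifying $c$ and $d$ modulo $V$ as honest linear combinations of $c_i, c_j$ with both coefficients nonzero. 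To transfer this into an element of $\calS$ I would invoke
\[\sqrt{\ideal{P_i, P_j}} \;=\; \sqrt{\ideal{P_i, c}} \cap \sqrt{\ideal{P_i, d}},\]
each factor being a prime ideal because the rank bound on $Q_o$ keeps $P_i$ irreducible after setting $c$ or $d$ to zero. Hence some $A_\ell$ lies in $\ideal{P_i, c}$, and comparing $Q_o$-coefficients in $A_\ell = \lambda P_i + ce$ forces $\lambda = 1$, so $A_\ell = P_i + ce$. Extracting the normal form of $A_\ell$ from \autoref{cla:V-map-Q-dom} then identifies $c_{A_\ell}$ modulo $V$ as a combination of $c$ and $c_i$ (hence of $c_i, c_j$); if this $c_{A_\ell}$ happens to collapse into $\spn{c_i, V}$, the symmetric argument using $d$ in place of $c$ produces a different $A_{\ell'} \in \ideal{P_i, d}$ whose $c_{A_{\ell'}}$ escapes both $\spn{c_i, V}$ and $\spn{c_j, V}$.

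The hardest part is the bookkeeping in the degenerate sub-cases: when $\epsilon_i = 0$ or $\epsilon_j = 0$ (so the modular factorization of $cd$ degenerates into $\pm \epsilon c^2$ and $c, d$ may themselves lie in $\spn{c_i, V}$ or $\spn{c_j, V}$), and when $P_i, P_j$ are of mixed types (one a square from $\calL$ and the other involving $Q_o$, so that $\alpha\gamma_i + \beta\gamma_j = 0$ forces a different proportionality). Each of these sub-cases is handled by the same strategy---pass to the quotient by $V$, exploit primality together with the high rank of $Q_o$ to locate an $A_\ell$ in a manageable ideal, and invoke the normal form of \autoref{cla:V-map-Q-dom}---but requires separate rank arithmetic to verify that the $c_k$ produced actually escapes $\spn{c_i, V} \cup \spn{c_j, V}$.
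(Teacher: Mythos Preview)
Your high-level strategy matches the paper's: apply \autoref{thm:structure} to the pair $P_i,P_j$, split into the three cases, and in each case extract a witness $c_k$. The square--square sub-case and the ruling out of \autoref{thm:structure}\ref{case:2} when $Q_o$ is involved are handled correctly.

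The genuine gap is your fallback argument in \autoref{thm:structure}\ref{case:rk1}. You write that if $c_{A_\ell}$ collapses into $\spn{c_i,V}$, then ``the symmetric argument using $d$ in place of $c$ produces a different $A_{\ell'}$ whose $c_{A_{\ell'}}$ escapes both $\spn{c_i,V}$ and $\spn{c_j,V}$.'' But nothing prevents the same collapse from recurring for $d$: both $c$ and $d$ are (modulo $V$) linear combinations of $c_i,c_j$ with nonzero coefficients, so the situations are symmetric and the fallback buys you nothing without a further argument. The paper does not use a fallback at all. Instead, having established (via \autoref{lem:ef}) that the factor $e$ satisfies $e=\mu c_i+\eta c_j+v_e$ with $\mu,\eta\neq 0$, it finds $Q_k=Q_j+ed$ and then rules out $c_k\in\spn{c_i,V}$ \emph{directly}: writing $Q_i=Q_j+ef$ as well, one subtracts to get $Q_k-Q_i=e(d-f)$, and if $c_k\in\spn{c_i,V}$ then $\MS(Q_k-Q_i)\subseteq\spn{c_i,V}$, forcing $e\in\spn{c_i,V}$ and contradicting $\eta\neq 0$. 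The exclusion of $c_k\in\spn{c_j,V}$ is handled by a separate argument showing this would force $d\in\spn{c_j}$ and hence $c_i\in\spn{c_j,V}$. This subtraction trick is the missing ingredient in your plan.

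A second issue is the degenerate sub-cases you flag at the end ($\epsilon_i=0$ or $\epsilon_j=0$, mixed types). Your modular factorisation $cd\equiv_V(\sqrt{\epsilon_i}c_i+\sqrt{\epsilon_j}c_j)(\sqrt{\epsilon_i}c_i-\sqrt{\epsilon_j}c_j)$ breaks down precisely when one $\epsilon$ vanishes: then $cd\equiv_V\pm\epsilon c^2$ and one of $c,d$ may lie in $V$, so you lose the information that both factors depend nontrivially on $c_i$ and $c_j$. The paper sidesteps this by proving \autoref{lem:ef} separately: it works with the full identity $c_i(\epsilon_i c_i+v_i)+c_j(\epsilon_j c_j+v_j)+ef=P_V$ rather than its reduction modulo $V$, and the $v_i,v_j$ terms supply exactly the leverage needed when $\epsilon_i$ or $\epsilon_j$ vanishes (the proof of \autoref{lem:ef} tracks the cross-term $c_ic_j$ to rule out $e\in\spn{c_i,V}\cup\spn{c_j,V}$). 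Passing to the quotient by $V$ too early loses this.
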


Before proving the claim we prove the following simple lemma.

\begin{lemma}\label{lem:ef}
Let $P_V$ be a polynomial defined over $V$ and let $c_i,c_j$ as in \autoref{cla:sg-mod-v}. If there are linear forms $e,f$ such that $$c_j(\epsilon_j c_j + v_j) + c_i(\epsilon_i c_i + v_i) + ef = P_V$$ then, without loss of generality,  $e \in \spn{c_i,c_j,V}$ and $e \notin \spn{c_i,V}\cup \spn{c_j,V}$.
\end{lemma}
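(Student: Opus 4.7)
The plan is to reduce the given equation modulo $V$ and invoke unique factorization in the quotient ring $\C[\vx]/\ideal{V}$. Since $v_i,v_j\in V$ and $P_V$ is a homogeneous quadratic defined over $V$ (hence lies in $\ideal{V}$), reducing the given equation modulo $V$ yields
\[ ef \;\equiv_V\; -\epsilon_j c_j^{\,2}-\epsilon_i c_i^{\,2}. \]
By hypothesis $c_i\notin V$ and $c_j\notin \spn{c_i,V}$, so $c_i$ and $c_j$ are linearly independent nonzero linear forms in the UFD $\C[\vx]/\ideal{V}$; in particular they are distinct primes there.

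In the main case, when both $\epsilon_i,\epsilon_j\neq 0$, I would factor the right hand side over the algebraically closed field $\C$ as
\[ -\epsilon_i c_i^{\,2}-\epsilon_j c_j^{\,2} \;=\; (\alpha c_i+\beta c_j)(\alpha c_i-\beta c_j),\qquad \alpha^2=-\epsilon_i,\ \beta^2=\epsilon_j. \]
Both factors are nonzero modulo $V$ and each involves $c_i$ and $c_j$ nontrivially (since $\alpha,\beta\neq 0$). Applying \autoref{fact:ufd} to the congruence $ef\equiv_V (\alpha c_i+\beta c_j)(\alpha c_i-\beta c_j)$, up to a scalar and a swap of $e$ and $f$, one of $e,f$ is congruent modulo $V$ to $\alpha c_i+\beta c_j$. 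Relabel that one as $e$; this is exactly the ``without loss of generality'' in the statement. Then $e\in \spn{c_i,c_j,V}$ and, because both coefficients are nonzero, $e\notin \spn{c_i,V}\cup\spn{c_j,V}$, as required.

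For the degenerate subcases where $\epsilon_i$ or $\epsilon_j$ vanishes, I would return to the full equation (not just its reduction) and use that $P_V\in\C[V]$. For instance, if $\epsilon_j=0$ and $\epsilon_i\neq 0$, then $ef\equiv_V -\epsilon_i c_i^{\,2}$ forces both $e$ and $f$ to lie in $\spn{c_i,V}$, so write $e=\alpha c_i+v^e$ and $f=\gamma c_i+v^f$ with $v^e,v^f\in V$ and $\alpha\gamma=-\epsilon_i$. Substituting back and using $\alpha\gamma=-\epsilon_i$ to cancel the $c_i^{\,2}$ contributions leaves
\[ c_j v_j + c_i\bigl(v_i+\alpha v^f+\gamma v^e\bigr) + v^e v^f \;=\; P_V. \]
Since $P_V$ and $v^ev^f$ lie in $\C[V]$, while $c_jv_j$ and $c_i(v_i+\alpha v^f+\gamma v^e)$ contribute monomials of the form $c_j\cdot u$ and $c_i\cdot u$ ($u\in V$) that cannot be absorbed into $\C[V]$, matching coefficients forces $v_j=0$. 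Together with $\epsilon_j=0$ this gives $c_j(\epsilon_j c_j+v_j)=0$, so $c_j$ does not actually appear in the equation, contradicting the non-triviality of the rank-one factor attached to $c_j$ (which comes from $c_j\in\calS$ at the point this lemma is invoked). The symmetric case $\epsilon_i=0,\epsilon_j\neq 0$ and the case $\epsilon_i=\epsilon_j=0$ are ruled out by an identical argument.

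The main obstacle I anticipate is the bookkeeping in the degenerate subcases: the UFD reduction is the conceptual core and handles the generic case cleanly, but one has to carefully lift back from $\C[\vx]/\ideal{V}$ to $\C[\vx]$ to derive the required contradictions when the $\epsilon$'s degenerate.
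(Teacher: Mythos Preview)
Your argument is correct and takes a genuinely different route from the paper. You reduce the equation modulo $V$ and factor $-\epsilon_i c_i^2-\epsilon_j c_j^2$ over~$\C$, then invoke unique factorization in $\C[\vx]/\ideal{V}$; the paper instead reduces modulo $\ideal{c_i,c_j}$ to get $ef\equiv P_V$, deduces $e\in\spn{c_i,c_j,V}$ from $\MS(P_V)\subseteq V$, and then rules out $e\in\spn{c_i,V}$ by tracking the $c_j^2$ and $c_ic_j$ coefficients directly. Your approach is slicker in the generic case $\epsilon_i,\epsilon_j\neq 0$: the factorization immediately exhibits $e$ as a nontrivial combination of $c_i$ and $c_j$, with no further coefficient chase needed. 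The paper's approach, on the other hand, does not case-split on the $\epsilon$'s at all; it uniformly assumes $e\in\spn{c_i,V}$ and derives $\epsilon_j=0$ as a consequence rather than a hypothesis. Regarding your degenerate subcases: your derivation that $\epsilon_j=0$ forces $v_j=0$ is correct, and your appeal to the ``non-triviality of the rank-one factor attached to $c_j$'' is exactly the implicit assumption the lemma needs (and which always holds at the invocation sites, since $c_j\in\calS$). The paper's own proof relies on the same implicit assumption: its step ``$f=\mu c_j+\eta c_i+v_f$ with $\mu\neq 0$'' fails precisely when $\epsilon_j=v_j=0$. So both proofs are complete under the intended reading of the lemma; yours just makes the dependence on that assumption more visible.
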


\begin{proof}

First note that $e\not\in V$ as otherwise we would have that $c_i\equiv_V c_j$ in contradiction.

By our assumption, $ef =P_V$ modulo $c_i,c_j$.  We can therefore assume without loss of generality that  $e \in \spn{c_i,c_j,V}$. 
Assume towards a contradiction and without loss of generality that  $e = \lambda c_i + v_e$, where $\lambda \neq 0$ and $v_e\in V$. Consider the equation $c_j(\epsilon_j c_j + v_j) + c_i(\epsilon_i c_i + v_i) + ef = P_V$ modulo $c_i$. We have that $c_j(\epsilon_j c_j + v_j) + v_e f \equiv_{c_i} P_V$ which implies that $\epsilon_j=0$. Consequently, we also have that  $f = \mu c_j + \eta c_i + v_f$, for some $\mu \neq 0$ and $v_f \in V$. We now observe that the product $c_ic_j$ has a non zero coefficient $\lambda\mu$ in $ef$ and a zero coefficient in $P_V - c_j(\epsilon_j c_j + v_j) + c_i(\epsilon_i c_i + v_i) $, in contradiction.
\end{proof}

\begin{proof}[Proof of \autoref{cla:sg-mod-v}] 

Following the notation of \autoref{cla:V-map-Q-dom}, we either have $Q_i = Q_o + Q'_i +  c_i(\epsilon_i c_i + v_i )$ or   $Q_i = c_i^2$. Very similarly to \autoref{cla:stisfy-sg-lines-hr}, we  consider which case of \autoref{thm:structure}  $Q_i$ and $Q_j$  satisfy, and what structure they have.

Assume  $Q_i = Q_o + Q'_i +  c_i(\epsilon_i c_i + v_i )$ and  $Q_j = Q_o + Q'_j +  c_j(\epsilon_jc_j + v_j )$. As argued before, since the rank of $Q_o$ is large  they can not satisfy
 \autoref{thm:structure}\ref{case:2}. We consider the remaining cases:
\begin{itemize}
\item $Q_i,Q_j$ satisfy \autoref{thm:structure}\ref{case:span}: there is $Q_k \in \cQ$ such that $Q_k \in \spn{Q_i,Q_j}$. 

By assumption, for some scalars $\alpha,\beta$ we have that 
\begin{equation}\label{eq:Qk}
Q_k  = \alpha(Q_o + Q'_i +  c_i(\epsilon_i c_i + v_i )) + \beta(Q_o + Q'_j +  c_j(\epsilon_j c_j + v_j)) \;.  
\end{equation}
If $Q_k$ depends only on $V$ then we would get a contradiction to the choice of $c_i,c_j$. Indeed, in this case we have that
$$(\alpha+\beta)Q_o = Q_k - \alpha( Q'_i +  c_i(\epsilon_i c_i + v_i )) - \beta( Q'_j +  c_j(\epsilon_j c_j + v_j))\;.$$
Rank arguments imply that $\alpha+\beta=0$ and therefore 
$$\alpha c_i(\epsilon_i c_i + v_i ) + \beta c_j(\epsilon_j c_j + v_j) = Q_k - \alpha Q'_i - \beta Q'_j \;,$$
which implies that $c_i$ and $c_j$ are linearly dependent modulo $V$ in contradiction. 

If $Q_k = c_k^2$ then by \autoref{lem:ef} it holds that $c_k$ satisfies the claim condition.

We therefore assume that $Q_k$ is not a function of $V$ alone and denote $Q_k = Q_o + Q'_k +  c_k(\epsilon_k c_k + v_k)$.
\autoref{eq:Qk} implies that $$(1 - \alpha-\beta)Q_o = \alpha Q'_i + \beta Q'_j - Q'_k + \alpha c_i(\epsilon_i c_i + v_i ) + \beta c_j(\epsilon_j c_j + v_j) - c_k(\epsilon_k c_k + v_k)\;.$$ As $\alpha Q'_i + \beta Q'_j - Q'_k$ is a polynomial defined over $V$, its rank is smaller than $4$ and thus, combined with the fact that $\rank_s(Q_o) \geq \rkq$, we get that $ (1 - \alpha-\beta) = 0$ and 
\[Q'_k -\alpha Q'_i - \beta Q'_j = \alpha c_i(\epsilon_i c_i + v_i ) + \beta c_j(\epsilon_j c_j + v_j) - c_k(\epsilon_k c_k + v_k) \;.\]
We now conclude from \autoref{lem:ef} that $c_k$ satisfies the claim. 

\item $Q_i,Q_j$ satisfy \autoref{thm:structure}\ref{case:rk1}: 
There are linear forms $e,f$ such that for non zero scalars $\alpha,\beta$, $\alpha Q_i + \beta Q_j = ef$. In particular, 
\[(\alpha + \beta ) Q_o = ef - \alpha Q'_i - \beta Q'_j - \alpha c_i(\epsilon_i c_i + v_i ) - \beta c_j(\epsilon_j c_j + v_j).\]
From rank argument we get that $\alpha+ \beta = 0$ and from \autoref{lem:ef} we conclude that, without loss of generality, $e = \mu c_i + \eta c_j +v_e$ where $\mu, \eta \neq 0$. We also assume without loss of generality that $Q_i=Q_j+ef$.

By our assumption that $\rank_s(Q_o)\geq \rkq$ it follows that $Q_j$ is irreducible even after setting $e=0$. It follows that if a product of irreducible quadratics satisfy $$\prod_k A_k \in \sqrt{\ideal{Q_i,Q_j}} = \sqrt{\ideal{ef,Q_j}}$$ then, after setting $e=0$, some $A_k$ is divisible by ${Q_j}|_{e=0}$. Thus, there is a multiplicand that is equal to $ \gamma Q_j + ed$ for some linear form $d$ and scalar $\gamma$. In particular, there must be a polynomial $Q_k \in \tilde\cQ\setminus\{Q_1,Q_2\}$, such that  $Q_k = \gamma Q_j + ed$. If $\gamma = 0$ then it must hold that $Q_k = a_k^2 = ed$ and thus $a_k \sim e$, and the statment holds. If $\gamma = 1$ then we can assume without loss of generality that $Q_k =  Q_j + ed$. Thus,
$$ Q + Q'_k + c_k(\epsilon_k c_k + v_k) = Q_k =  Q_j + ed = Q_o+ Q'_j +  c_j(\epsilon_j c_j + v_j) + (\mu c_i + \eta c_j + v_e)d \;.$$ 
Setting $c_j=0$ we get that 
\begin{equation}\label{eq:d}
Q'_k + c_k(\epsilon_k c_k +v_k) \equiv_{c_j} Q'_j + (\mu c_i + v_e) d \;.
\end{equation}
Note that it cannot be the case that $d\equiv_{c_j}0$. Indeed, if $d=0$ then we get that $Q_j$ and $Q_k$ are linearly dependent in contradiction. If $d \sim c_j$ then  \eqref{eq:d} implies that $c_k\in\spn{c_j,V}$. From the equality  $Q_k = Q_j + ed$ and the fact that $e$ depends non trivially on $c_i$, it now follows that $c_i \in \spn{c_j,V}$ in contradiction to the choice of $c_i$ and $c_j$. As $d\not\equiv_{c_j}0$, we deduce from \eqref{eq:d} that, modulo $c_j$, $c_k \in \spn{c_i, V}$. We next show that if $c_k$ depends only on $c_i$ and $V$ then we reach a contradiction and this will conclude the proof. So assume towards a contradiction that $c_k = \lambda c_i + v'_k$, for a scalar $\lambda$ and $v'_k\in V$.  Since $$Q_j + ed=Q_k = Q_o+ Q'_k + c_k(\epsilon_k c_k + v_k ) = Q_o+Q'_k+ ( \lambda c_i + v'_k)\left(\epsilon_k  ( \lambda c_i + v'_k) + v_k\right) $$ and $$Q_j + ef = Q_i = Q_o +Q'_i + c_i(\epsilon_i c_i + v_i) $$ we get by subtracting $Q_i$ from $Q_k$ that 
\[
  e(d-f)   = Q_k-Q_i = Q'_k - Q'_i + ( \lambda c_i + v'_k)\left(\epsilon_k  ( \lambda c_i + v'_k) + v_k\right) -c_i(\epsilon_i c_i + v_i) 
\]
and clearly neither side of the equation is zero since $Q_i\neq Q_k$. 
This implies that $e \in \spn{c_i,V}$. This however contradicts the fact that  $e = \mu c_i + \eta c_j +v_e$ where $\mu, \eta \neq 0$. 
\end{itemize}

Now let us consider the case where without loss of generality, $Q_i = Q_o +Q'_i + c_i(\epsilon_i c_i + v_i)$ and $Q_j = c_j^2$. In this case the polynomials satisfy   \autoref{thm:structure}\ref{case:rk1} as $0 \cdot Q_i + Q_j = c_j^2$.
Similarly to the previous argument, it holds that there is $Q_k$ such that $Q_k = \gamma Q_i + c_j e$. If $\gamma =0$ it holds that $Q_k$ is reducible, and therefore a square of a linear form, in contradiction to pairwise linear independence. Thus $\gamma \neq 0$. If $Q_k$ is defined only on the linear functions in $V$ then it is of rank smaller then $\dim(V) \leq 4$, which will result in a contradiction to the rank assumption on $Q_o$. Thus $Q_k = Q_o + Q'_k + c_k(\epsilon_k c_k + v_k)$ and $\gamma = 1$. Therefore, we have 
\[ Q_o + Q'_k + c_k(\epsilon_k c_k + v_k) = Q_k = Q_i + c_j e\; = Q_o + Q'_i + c_i(\epsilon_i c_i + v_i)+ c_j e.\]
Hence,
\[Q'_k - Q'_i -c_i(\epsilon_i c_i + v_i) - c_je= -c_k(\epsilon_k c_k + v_k).\] 

Looking at this equation modulo $c_j$ implies that $c_k \in \spn{V, c_i, c_j}$. and $c_k \notin \spn{V, c_j}$, or we will get a contradiction to the fact that $c_i \notin\spn{c_j ,V}$. Similarly it holds that $c_k \notin \spn{V, c_i}$, as we wanted to show.

The last structure we have to consider is the case where $Q_i = c_i^2, Q_j = c_j^2$. In this case, the ideal $\sqrt{\ideal{c_i^2,c_j^2}} = \ideal{c_i,c_j}$ is prime and therefore there is $Q_k \in \ideal{c_i,c_j}$ this means that $\rank_s(Q_k) \leq 2$. If $\rank_s(Q_k) = 1$ then $Q_k = c_k^2$ and the statement holds. $\rank_s(Q_k) = 2$ then $Q_k$ is defined on the linear function of $V$,  which implies $c_i,c_j \in V$ in contradiction to our assumptions.
\end{proof}
We are now ready to prove \autoref{cla:strong-Q-dom-hr-main}.
\begin{proof}[Proof of \autoref{cla:strong-Q-dom-hr-main}]
\autoref{cla:sg-mod-v} implies that if we project the linear forms in $\calS$ to $V^\perp$ then, after removing linearly dependent forms, they satisfy the conditions of the \hyperref[thm:SG-linforms]{Sylvester-Gallai theorem}. As $\dim(V)\leq 4$ we obtain that $\dim(\spn{\calS\cup V}) \leq 7$. 
By \autoref{cla:V-map-Q-dom} every polynomial $P\in\cQ$ is a linear combination of $Q_o$ and a polynomial defined over $\spn{\calS\cup V}$ which, by the argument above, implies that $\dim(\spn{\cQ})\leq 8$. 
\end{proof}

This completes the proof of  \autoref{thm:Q-dom-gen} when  $Q_o$ has high rank.
We next handle the case where $Q_o$ is of low rank. 

\subsection{$Q_o$ is of Low Rank}
%


In this section we prove the following claim.

\begin{claim}\label{cla:strong-Q-dom-lr-main}
Let $\tilde{\cQ}$ be a \hyperref[def:Q-dom]{$(Q_o,m_1,m_2)$-set} such that $2\leq \rank_s(Q_o)< \rkq$.
Then, $\dim(\spn{\tilde{\cQ}}) = O(1)$.
\end{claim}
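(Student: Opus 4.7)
The strategy mirrors \autoref{cla:strong-Q-dom-hr-main}, but uses the projection mapping $T_{\vaa,V_o}$ of \autoref{def:z-mapping} to compensate for the fact that $Q_o$ is no longer large enough to force the vanishing of extra terms by a rank argument. Set $V_o := \MS(Q_o)$; the hypothesis $\rank_s(Q_o) < \rkq$ gives $\dim V_o \leq 2\rkq - 2 = O(1)$. First I would isolate $\tilde{\cQ}_{V_o} := \tilde{\cQ} \cap \C[V_o]$: living in a polynomial ring in $O(1)$ variables, this subset automatically spans a subspace of dimension $O(1)$, so it can be discarded from the dimension count. Write $\tilde{\cQ}' := \tilde{\cQ} \setminus \tilde{\cQ}_{V_o}$.

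Next, apply $T_{\vaa,V_o}$ for a generic $\vaa \in [0,1]^{\dim V_o}$. By \autoref{cla:res-z-ampping} and \autoref{cla:still-indep}, pairwise linear independence is preserved on $\tilde{\cQ}'$ with probability one, and the ring-homomorphism property ensures that the projected set continues to satisfy the hypothesis of \autoref{thm:main-sg-intro}. After rescaling so $Q_o(\vaa)=1$ we have $T_{\vaa,V_o}(Q_o) = z^2$, and for each $i \in [m_1]$, $T_{\vaa,V_o}(Q_i) = z^2 + \tilde a_i \tilde b_i$ where $\tilde a_i, \tilde b_i \in \C[z,V_o^\perp]$. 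Polynomials in $\cQ_2$ and $\calL$ project to quadratics and squares respectively. The resulting configuration is still a $Q_o$-dominated set in the sense of \autoref{def:Q-dom}, but now with $\rank_s$ of the distinguished polynomial equal to one.

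The heart of the argument is bounding the dimension of this projected configuration. Since $\sqrt{\ideal{z^2}} = \ideal{z}$ and the latter is prime, reducing modulo $\ideal{z}$ turns each $T_{\vaa,V_o}(Q_i)$ into the product of two linear forms $\overline{\tilde a_i} \cdot \overline{\tilde b_i} \in \C[V_o^\perp]$. Using \autoref{fact:ufd} and the vanishing hypothesis of \autoref{thm:main-sg-intro}, for any two indices $i,j$ at least one factor from some other pair $(\overline{\tilde a_k},\overline{\tilde b_k})$ must lie in $\spn{\overline{\tilde a_i},\overline{\tilde b_i}}$ and another factor from a (possibly different) pair must lie in $\spn{\overline{\tilde a_j},\overline{\tilde b_j}}$. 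This is precisely a colored Sylvester-Gallai-type condition on products of pairs of linear forms. Running the case analysis of \autoref{cla:stisfy-sg-lines-hr} modulo $z$, and invoking \autoref{thm:robustSG} together with \autoref{thm:EK} as needed, yields $\dim(\spn{\overline{\tilde a_i}, \overline{\tilde b_i}}) = O(1)$; adding back the $z$ direction gives an $O(1)$ bound on the projected span of the $\tilde a_i,\tilde b_i$. Polynomials originating from $\cQ_2$ and $\calL$ are absorbed exactly as in the proofs of \autoref{cla:no-rank-3}, \autoref{cla:small-space-span}, and \autoref{cla:V-map-Q-dom}, using $m_1 > 5m_2 + 2$ and pigeonhole. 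Finally, taking $\dim V_o$ linearly independent choices of $\vaa$ and applying \autoref{cla:z-map-dimension} lifts the $O(1)$ bound on each projected span back to $V_o^\perp$; combined with the $O(1)$ dimension of $\tilde{\cQ}_{V_o}$, this gives $\dim(\spn{\tilde{\cQ}}) = O(1)$.

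The hard part I expect is the analogue of \autoref{cla:stisfy-sg-lines-hr} in the rank-$1$ projected setting. In the high-rank case, large $\rank_s(Q_o)$ was used repeatedly to deduce that spurious scalars in linear combinations must vanish; once $Q_o$ has been replaced by $z^2$, such rank arguments only hold modulo $\ideal{z}$, so case (ii) of \autoref{thm:structure} (non-trivial reducible combination) can coincide with case (iii) (vanishing on two linear forms), and one must carefully track which of these cases applies for each pair and ensure that the inductive rank-reduction still produces a Sylvester-Gallai configuration for the outside linear forms rather than merely forcing relations that already hold modulo $z$. This replaces the clean rank pigeonhole of \autoref{cla:strong-Q-dom-hr} with a more delicate colored-SG analysis, and it is the main technical obstacle.
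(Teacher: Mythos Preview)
Your approach has a genuine gap at the crucial step. After applying $T_{\vaa,V_o}$ with $V_o=\MS(Q_o)$, each $Q_i=Q_o+a_ib_i$ becomes $z^2+\tilde a_i\tilde b_i$, so modulo $z$ you are left with a \emph{product of two} linear forms $\overline{\tilde a_i}\cdot\overline{\tilde b_i}$ per polynomial, not a single one. The radical $\sqrt{\ideal{\overline{\tilde a_i}\,\overline{\tilde b_i},\ \overline{\tilde a_j}\,\overline{\tilde b_j}}}$ decomposes into the four primes $\ideal{\overline{\tilde a_i},\overline{\tilde a_j}}$, $\ideal{\overline{\tilde a_i},\overline{\tilde b_j}}$, $\ideal{\overline{\tilde b_i},\overline{\tilde a_j}}$, $\ideal{\overline{\tilde b_i},\overline{\tilde b_j}}$; it is \emph{not} contained in $\ideal{\overline{\tilde a_i},\overline{\tilde b_i}}$, so your assertion that ``at least one factor from some other pair must lie in $\spn{\overline{\tilde a_i},\overline{\tilde b_i}}$'' does not follow. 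What you actually get is that for each of the four primes some factor of some $\overline{\tilde a_k}\,\overline{\tilde b_k}$ lies in it --- a much weaker incidence structure on pairs of forms that is not a Sylvester--Gallai configuration, and the case analysis of \autoref{cla:stisfy-sg-lines-hr} does not apply because that claim already assumes each polynomial contributes a \emph{single} outside form $a_i$. You acknowledge this as ``the hard part,'' but it is not merely a technicality: it is the entire content of the low-rank argument.

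The paper resolves this by reversing the order of operations. \emph{Before} any projection, it first proves (\autoref{cla:span-rank1}, \autoref{cla:V_i intersection}, \autoref{cla:V-exist} for $m_2=0$; \autoref{cla:Q-and-rank-2}, \autoref{cla:m-2k-in-ideal-v}, \autoref{cla:mod-V} for $m_2\neq 0$) that there is a space $V$ with $\MS(Q_o)\subseteq V$ and $\dim V=O(1)$ such that every polynomial in $\tilde\cQ$ either lies in $\ideal{V}$ or has $\MS(Q_i)\subseteq\spn{V,a_i}$ for a single form $a_i$. The key geometric input here is that the two-dimensional spaces $V_i=\spn{a_i,b_i}$ pairwise intersect (\autoref{cla:V_i intersection}), so by \autoref{cla:linear-spaces-intersaction } they either share a common line or sit in a three-dimensional space. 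Only \emph{after} this reduction to one outside form per polynomial does the paper apply the projection $T_{\vaa,V}$ (packaged as \autoref{cla:V-li}); each projected polynomial then lives in $\spn{z,a_i}$, hence has a linear factor $\gamma_i z+\delta_i a_i$, and the radical argument genuinely produces a Sylvester--Gallai configuration on the single forms $\{a_i\}$ modulo $z$. Your proposal tries to skip straight to the projection and recover the one-form structure afterwards, but that structure is precisely what makes the Sylvester--Gallai reduction work.
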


Before we start with the proof of the main claim, let us prove a similar claim but for a more specific structure of polynomials. We will later see that, essentially, this structure holds when $2\leq \rank_s(Q_o)< \rkq$.

\begin{claim}\label{cla:V-li}
	Let $\tilde{\cQ}$ be a set of quadratics polynomials that satisfy the conditions in the statement of \autoref{thm:main-sg-intro}. Assume farther that there is a linear space of linear forms, $V$ such that $\dim(V)=\Delta$ and for each polynomial $Q_i \in \tilde{\cQ}$ one of the following holds: either $Q_i \in \ideal{V}$ or there is a linear form $a_i$ such that $\MS(Q_i) \subseteq \spn{V,a_i}$. Then $\dim(\tilde{\cQ}) \leq 8\Delta^2 $.
\end{claim}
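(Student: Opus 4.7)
The plan is to apply the projection map $T_{\vaa, V}$ of \autoref{def:z-mapping} to push $\tilde{\cQ}$ into the smaller ring $\C[z, u_1, \ldots, u_{n-\Delta}]$ (with $u_j$ a basis of $V^\perp$), extract a Sylvester-Gallai statement for linear forms in $V^\perp$, and lift the resulting dimension bound back via \autoref{cla:z-map-dimension}. The hypothesis lets me put each $Q_i \in \tilde{\cQ}$ into the normal form $Q_i = \alpha_i a_i^2 + a_i v_i + P_i^V$ with $v_i \in V$ and $P_i^V \in \C[V]_2$ (taking $\alpha_i = 0$ and $a_i \in V$ when $Q_i$ lies in $\ideal{V}$ with $\MS(Q_i) \subseteq V$). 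Writing $a_i = v''_i + a'_i$ with $a'_i \in V^\perp$, a direct computation shows $T_{\vaa, V}(Q_i)$ is a quadratic in the two linear forms $z$ and $a'_i$ only, so $\MS(T_{\vaa, V}(Q_i)) \subseteq \spn{z, a'_i}$. Letting $\mathcal{A} = \{a'_i : Q_i \in \tilde{\cQ},\ a'_i \neq 0\} \subseteq V^\perp$, we obtain $\MS(T_{\vaa, V}(\tilde{\cQ})) \subseteq \spn{z} + \spn{\mathcal{A}}$.

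The core step is to argue $\mathcal{A}$ satisfies the hypothesis of \autoref{thm:SG-linforms}, so that $\dim(\spn{\mathcal{A}}) \leq 3$. Given $a'_i, a'_j \in \mathcal{A}$ linearly independent with $Q_i, Q_j \notin \ideal{V}$, \autoref{cla:gup-4} applied to the hypothesis of \autoref{thm:main-sg-intro} yields $Q_{k_1}, \ldots, Q_{k_4} \in \tilde{\cQ} \setminus \{Q_i, Q_j\}$ with $\prod_l Q_{k_l} \in \sqrt{\ideal{Q_i, Q_j}}$. Passing to the UFD $\C[x]/\ideal{V} \cong \C[U]$, every $Q_\bullet \notin \ideal{V}$ reduces to a nonzero multiple of $(a'_\bullet)^2$, so the radical $\sqrt{\ideal{Q_i, Q_j}}$ is contained in the prime ideal $\ideal{a'_i, a'_j, V}$ in $\C[x]$. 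By primality some individual $Q_{k_l}$ lies in $\ideal{a'_i, a'_j, V}$, and if this $Q_{k_l} \notin \ideal{V}$, then its normal form gives $(a'_{k_l})^2 \in \ideal{a'_i, a'_j}$ in $\C[U]$, hence $a'_{k_l} \in \spn{a'_i, a'_j}$. Pairwise linear independence of $\tilde{\cQ}$ together with \autoref{cla:still-indep} rule out the degenerate possibilities $a'_{k_l} \sim a'_i$ or $a'_{k_l} \sim a'_j$.

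I expect the main obstacle to be the degenerate situation in which every selected witness $Q_{k_l}$ happens to lie in $\ideal{V}$ (so the membership $Q_{k_l} \in \ideal{a'_i, a'_j, V}$ is trivial and gives no information), as well as the symmetric case in which $Q_i$ or $Q_j$ itself lies in $\ideal{V}$. To handle this I would use the extra structure of case-$(a)$ polynomials $Q_{k_l} = a_{k_l} w_{k_l} + P^V_{k_l}$ with $w_{k_l} \in V$, together with the identity $V \cap \spn{a_i, a_j} = \{0\}$: tracking the bihomogeneous piece of least $V$-degree in the certificate $(\prod_l Q_{k_l})^N = r_i Q_i + r_j Q_j$ forces some case-$(a)$ $Q_{k_l}$ to satisfy $a_{k_l} \in \spn{a_i, a_j} + V$, and hence $a'_{k_l} \in \spn{a'_i, a'_j}$. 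Once $\dim(\spn{\mathcal{A}}) \leq 3$ is in hand we obtain $\dim(\MS(T_{\vaa, V}(\tilde{\cQ}))) \leq 4$ for a full-measure set of $\vaa \in \C^{\Delta}$; selecting $\Delta$ linearly independent such $\vaa$'s and invoking \autoref{cla:z-map-dimension} gives $\dim(\MS(\tilde{\cQ})) \leq 5\Delta$, so that $\dim(\spn{\tilde{\cQ}}) \leq \binom{5\Delta+1}{2}$, and a tighter accounting (e.g.\ absorbing $z$ into the span by working with a generic rescaling of $\vaa$) yields the stated bound $\dim(\tilde{\cQ}) \leq 8\Delta^2$.
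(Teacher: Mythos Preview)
Your overall plan---project via $T_{\vaa,V}$, extract one linear form $a'_i\in V^\perp$ per polynomial, show Sylvester--Gallai for these linear forms, and then lift back with \autoref{cla:z-map-dimension}---matches the paper exactly. The gap is in your execution of the Sylvester--Gallai step, where you pass to $\C[x]/\ideal{V}$ and use the prime ideal $\ideal{a'_i,a'_j,V}$.

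The problem is twofold. First, your invocation of \autoref{cla:still-indep} does \emph{not} rule out $a'_{k_l}\sim a'_i$: two quadratics $Q_i,Q_{k_l}$ with the same $a'$-part (say $Q_i=a^2$ and $Q_{k_l}=a^2+av$ with $v\in V$) are linearly independent and remain so under $T_{\vaa,V}$, yet yield the same element of $\mathcal A$, so you have not produced a \emph{third} point on the line. Second, the ideal $\ideal{a'_i,a'_j,V}$ contains all of $\ideal{V}$, so a witness $Q_{k_l}\in\ideal{V}$ lies in it for free and your mod-$V$ reduction sees only $0$; your proposed ``bihomogeneous least $V$-degree'' argument would have to recover information that this reduction has already thrown away, and as stated it does not.

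The paper avoids both issues by \emph{not} reducing modulo $V$. Instead it keeps the full image $T_{\vaa,V}(Q_\bullet)$, observes that each such image is a \emph{product of two linear forms} in $\spn{z,a_\bullet}$ (including the $\ideal{V}$ case, where the image is $z\cdot\ell$), and works inside the codimension-$2$ prime ideal $\ideal{\gamma_1 z+\delta_1 a_1,\ \gamma_2 z+\delta_2 a_2}$ generated by one factor from each of $T_{\vaa,V}(Q_1),T_{\vaa,V}(Q_2)$. Since $z$ is not in this ideal, the $\ideal{V}$ case is handled automatically; and the degenerate positions $a_3\in\spn{z,a_1}$ or $a_3\in\spn{z,a_2}$ are excluded not by \autoref{cla:still-indep} but by \autoref{cla:res-z-ampping}, which guarantees (for generic $\vaa$) that $T_{\vaa,V}(Q_3)$ shares no non-$z$ factor with $T_{\vaa,V}(Q_1)$ or $T_{\vaa,V}(Q_2)$. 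That is the missing ingredient in your argument.
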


\begin{proof}
	Note that by the conditions  in the statement of \autoref{thm:main-sg-intro}, no two polynomials in $\tilde{\cQ}$ share a common factor.
	
	Let $\vaa\in\C^\Delta$ be such that if two polynomials in $T_{\vaa, V}(\tilde{\cQ})$ (recall \autoref{def:z-mapping}) share a common factor then it is a polynomial in $z$. Note that by \autoref{cla:res-z-ampping} such $\vaa$ exists. Thus, each $P\in \tilde{\cQ}$, satisfies that either $T_{\vaa, V}(P)=\alpha_P z^2$ or $\MS(T_{\vaa, V}(P))\subseteq \spn{z,a_P}$ for some linear form $a_P$ independent of $z$.
	It follows that every polynomial in $T_{\vaa, V}(\tilde{\cQ})$ is reducible. We next show that $\calS = \{a_P \mid P\in \tilde\cQ\}$ satisfies the conditions of \hyperref[thm:SG-linforms]{Sylvester-Gallai theorem} modulo $z$. 
	
	Let $a_1, a_2 \in \calS$ such that $a_2 \notin \spn{z,a_1}$. Consider $Q_1$ such that $\MS(T_{\vaa, V}(Q_1)) \subseteq \spn{z,a_1}$ yet $\MS(T_{\vaa, V}(Q_1)) \not\subseteq \spn{z}$. Similarly, let $Q_2$ be such that $\MS(T_{\vaa, V}(Q_2)) \subseteq \spn{z,a_2}$ and $\MS(T_{\vaa, V}(Q_2)) \not\subseteq \spn{z}$.
	Then there is a factor of $T_{\vaa,V}(Q_1)$ of the form $\gamma_1 z + \delta_1 a_1$ where $\delta_1\neq 0$. Similarly  there is a factor of $T_{\vaa,V}(Q_2)$ of the form $\gamma_2 z + \delta_2 a_2$ where $\delta_2\neq 0$.
	
	This implies that $\sqrt{\ideal{T_{\vaa,V}(Q_1), T_{\vaa,V}(Q_2)}} \subseteq \ideal{\gamma_1 z + \delta_1 a_1, \gamma_2 z + \delta_2 a_2}$. Indeed, it is clear that for $i\in\{1,2\}$, ${T_{\vaa,V}(Q_i)}\in\ideal{\gamma_i z + \delta_i a_i}$. Hence, $\sqrt{\ideal{T_{\vaa,V}(Q_1), T_{\vaa,V}(Q_2)}} \subseteq \sqrt{\ideal{\gamma_1 z + \delta_1 a_1, \gamma_2 z + \delta_2 a_2}}=\ideal{\gamma_1 z + \delta_1 a_1, \gamma_2 z + \delta_2 a_2}$, where the equality holds since $\ideal{\gamma_1 z + \delta_1 a_1, \gamma_2 z + \delta_2 a_2}$ is a prime ideal.
	
	 We know that, there are $Q_3,Q_4,Q_5,Q_6\in \cQ$ such that 
	\[
	Q_3\cdot Q_4\cdot Q_5\cdot Q_6 \in \sqrt{\ideal{Q_1,Q_2}}.
	\] 
	As $T_{\vaa,V}$ is a ring homomorphism it follows that,
	\[
	T_{\vaa,V}(Q_3)\cdot T_{\vaa,V}(Q_4)\cdot T_{\vaa,V}(Q_5)\cdot T_{\vaa,V}(Q_6)\in \sqrt{\ideal{T_{\vaa,V}(Q_1),T_{\vaa,V}(Q_2)}} \subseteq \ideal{\gamma_1 z + \delta_1 a_1, \gamma_2 z + \delta_2 a_2}.
	\] 
	
	Since $\ideal{\gamma_1 z + \delta_1 a_1, \gamma_2 z + \delta_2 a_2}$ is prime it follows that, without loss of generality, $T_{\vaa,V}(Q_3)\in\ideal{\gamma_1 z + \delta_1 a_1, \gamma_2 z + \delta_2 a_2}$. It cannot be the case that $T_{\vaa,V}(Q_3)\in\ideal{\gamma_i z + \delta_i a_i}$ for any $i \in \{1,2\}$, because otherwise this will imply that $T_{\vaa,V}(Q_3)$ and $T_{\vaa,V}(Q_i)$ share a common factor that is not a polynomial in $z$, in contradiction to our choice of $T_{\vaa, V}$. This means that there is a factor of $T_{\vaa,V}(Q_3)$ that is in $\spn{a_1,a_2,z} \setminus \left(\spn{a_1,z}\cup \spn{a_2,z}\right)$.
	Consequently, $a_3 \in \spn{a_1,a_2,z} \setminus \left(\spn{a_1,z}\cup \spn{a_2,z}\right)$ as we wanted to prove.
	This shows that $\calS$ satisfies the conditions of \hyperref[thm:SG-linforms]{Sylvester-Gallai theorem}, and therefore $\dim(\calS)\leq 3$. Repeating the analysis above for linearly independent $\MVar{\vaa}{\Delta}$, we can use \autoref{cla:z-map-dimension} and obtain that $\dim(\MS(\tilde\cQ)) \leq (3+1)\Delta$, and thus $\dim(\tilde{\cQ}) \leq {4\Delta \choose 2}+\Delta \leq 8\Delta^2$. 
\end{proof}

Back to the proof of \autoref{cla:strong-Q-dom-lr-main}. As before we first prove the claim for the case $m_2=0$ and then we prove the general case.

\subsubsection{The case $m_2=0$}

Similarly to the high rank case, in this subsection we prove the following claim.
\begin{claim}\label{cla:strong-Q-dom}
Let $\tilde{\cQ} = \cQ\cup \calL$ be a\hyperref[def:Q-dom]{$(Q_o,m_1,0)$-set} such that $2\leq \rank_s(Q_o)< \rkq$,
then  $\dim(\spn{\MVar{a}{m_1},\MVar{b}{m_1},\MVar{\ell}{r}}) = O(1)$.
\end{claim}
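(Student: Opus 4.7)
Set $V := \MS(Q_o)$. Since $\rank_s(Q_o) < \rkq$ is a constant, $\dim V \leq 2(\rkq-1) = O(1)$, and $Q_o \in \ideal{V}$ trivially. My plan is to enlarge $V$ by only $O(1)$ additional linear forms to a space $V'$ with respect to which every polynomial of $\tilde\cQ$ satisfies the hypothesis of \autoref{cla:V-li}---namely, $Q_i \in \ideal{V'}$ or $\MS(Q_i) \subseteq \spn{V', c_i}$ for some single linear form $c_i$---so that \autoref{cla:V-li} yields $\dim \MS(\tilde\cQ) = O(1)$. Since $a_i, b_i, \ell_j \in \MS(\tilde\cQ)$, this will immediately give the claimed bound on $\dim \spn{\MVar{a}{m_1}, \MVar{b}{m_1}, \MVar{\ell}{r}}$.

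Call $Q_i \in \tilde\cQ$ \emph{$V$-aligned} if it already satisfies the above condition with $V$ itself. Every $\ell_j^2 \in \calL$ is trivially $V$-aligned (take $c_i = \ell_j$). For $Q_i = Q_o + a_i b_i \in \cQ_1$, alignment holds precisely when $\spn{a_i, b_i} \cap V \neq \{\vec 0\}$ or $a_i \sim b_i$, since then one of $a_i, b_i$ lies in $\spn{V, \text{other}}$ and the other serves as $c_i$. Let $\cB \subseteq \cQ_1$ denote the non-aligned polynomials; for $Q_j \in \cB$ the forms $a_j, b_j$ are linearly independent with $\spn{a_j, b_j} \cap V = \{\vec 0\}$, so by \autoref{cla:ind-rank} we get $\rank_s(Q_j) = \rank_s(Q_o) + 1 \geq 3$ and $\MS(Q_j) = V \oplus \spn{a_j, b_j}$. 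The crux is to show that $|\cB| = O(1)$: once established, set $V' := V + \spn{a_j : Q_j \in \cB}$, so that $\dim V' = O(1)$, every $V$-aligned polynomial remains $V'$-aligned, and each $Q_j \in \cB$ satisfies $\MS(Q_j) \subseteq \spn{V', b_j}$, completing the reduction via \autoref{cla:V-li}.

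To bound $|\cB|$, consider any pair $Q_j, Q_{j'} \in \cB$ and apply \autoref{thm:structure}. Case~\ref{case:2} is ruled out because it forces $\rank_s(Q_j) \leq 2$ while $\rank_s(Q_j) \geq 3$. Cases~\ref{case:span} and~\ref{case:rk1}, after substituting $Q_i = Q_o + a_ib_i$ and rearranging, produce an identity
\begin{equation*}
\lambda Q_o \;=\; R,
\end{equation*}
where $R$ is a quadratic of rank at most $3$ built from $a_j, b_j, a_{j'}, b_{j'}$ together with one additional pair of linear forms (either $a_k, b_k$ from some $Q_k \in \cQ_1$, or the reducible factors $c, d$). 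When $\rank_s(Q_o) \geq 4$, necessarily $\lambda = 0$, and the surviving identity on the $ab$ products is exactly the one established in \autoref{cla:span-rank1-hr}. Then the analysis of \autoref{cla:strong-Q-dom-hr}---the intersection properties of the spaces $V_i := \spn{a_i, b_i}$ via \autoref{cla:V_i intersection-hr} and \autoref{cla:z-exist}, culminating in the Sylvester--Gallai argument of \autoref{cla:stisfy-sg-lines-hr}---carries over verbatim working modulo $V$, yielding $\dim \spn{a_j, b_j : Q_j \in \cB} \leq 7 + \dim V = O(1)$ and hence $|\cB| = O(1)$.

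The main obstacle is the subcase $\rank_s(Q_o) \in \{2, 3\}$, where $\lambda$ may genuinely be nonzero. In that regime, $\lambda Q_o = R$ forces $V = \MS(Q_o) \subseteq \MS(R) \subseteq \spn{a_j, b_j, a_{j'}, b_{j'}, a_k, b_k, c, d}$, a subspace of dimension at most $8$. Since $\dim V \leq 6$ while the $\{a_j, b_j\}_{Q_j \in \cB}$ are independent modulo $V$, a counting argument should show that only $O(1)$ polynomials of $\cB$ can participate in $\lambda \neq 0$ identities with any fixed partner---each such identity pins down the low-dimensional $V$ inside a constantly-bounded span. The remaining pairs in $\cB$ satisfy $\lambda = 0$ identities and are handled by the high-rank argument above, while the $O(1)$ exceptional polynomials are absorbed directly into $V'$ at no extra cost. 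Making this counting precise, and teasing apart the separate configurations for $r = 2$ and $r = 3$, is the bulk of the technical work.
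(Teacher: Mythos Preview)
Your high-level plan---enlarge $V=\MS(Q_o)$ by $O(1)$ linear forms so that every polynomial is aligned, then invoke \autoref{cla:V-li}---is exactly the paper's plan. The execution, however, has a real gap in the low-rank case, and the paper closes it by a simple rearrangement you are missing.

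First a minor point: you do not need (and will not get) $|\cB|=O(1)$. What you actually need, and what your argument in the $\rank_s(Q_o)\ge 4$ subcase would give, is $\dim\spn{a_j,b_j:Q_j\in\cB}=O(1)$; that already makes $V'$ small. The sentence ``and hence $|\cB|=O(1)$'' is a non-sequitur.

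The substantive gap is your treatment of $\rank_s(Q_o)\in\{2,3\}$, where you only sketch a ``counting argument'' and explicitly say the details are the bulk of the work. The paper never needs this case split. In \autoref{cla:span-rank1} the identity is rearranged asymmetrically: from
\[
(\alpha_k-(\alpha+\beta))\,Q_o \;=\; \alpha\,a_ib_i+\beta\,a_jb_j-a_kb_k
\]
one moves $a_ib_i$ to the $Q_o$ side, obtaining
\[
(\alpha_k-(\alpha+\beta))\,Q_o-\alpha\,a_ib_i \;=\; \beta\,a_jb_j-a_kb_k.
\]
Now only the index $i$ is assumed to be ``non-aligned'' (i.e.\ $\dim V_i=2$ and $V_i\cap\MS(Q_o)=\{0\}$). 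By \autoref{cla:ind-rank} the left side has rank $\rank_s(Q_o)+1\ge 3$ whenever the coefficient is nonzero, while the right side has rank $\le 2$. This forces the coefficient to vanish for \emph{every} $\rank_s(Q_o)\ge 2$, with no separate low-rank analysis. Your symmetric setup $\lambda Q_o=R$ with $\rank_s(R)\le 3$ throws away exactly this leverage.

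With $\lambda=0$ secured for every pair, the paper does not bound $|\cB|$ at all: \autoref{cla:V_i intersection} gives $\dim(V_i\cap V_j)\ge 1$ for all $j$ once a single non-aligned $i$ exists, and then \autoref{cla:linear-spaces-intersaction } immediately yields either a common line $\spn{w}$ through all the $V_i$'s or $\dim(\sum_i V_i)\le 3$. Either way, $V=\MS(Q_o)+\spn{w}$ (or $\MS(Q_o)+\sum_i V_i$) is an $O(1)$-dimensional space with $b_i\in\spn{a_i,V}$ for every $i$, which is precisely the hypothesis of \autoref{cla:V-li}.
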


The proof is similar in structure to the proof of \autoref{cla:strong-Q-dom-hr}.
As before, we consider a polynomial $\ell_i^2\in \calL$ as $0\cdot Q_o+ \ell_i\ell_i$.
We start by proving an analog of \autoref{cla:span-rank1-hr}. The claims are similar but the proofs are slightly different as we cannot rely on $Q_o$ having high rank. 

\begin{claim}\label{cla:span-rank1}
Let $\tilde\cQ$ satisfy  the assumptions of \autoref{cla:strong-Q-dom}. Let $i\in[m_1]$ be such that $dim(a_i,b_i) = 2$ and $\spn{a_i,b_i} \cap \MS(Q_o) = \{\vec{0}\}$. Then, for every $j\in [m_1]$ the following holds:
\begin{enumerate}
\item $Q_i$ and $Q_j$ do not satisfy \autoref{thm:structure}\ref{case:2}.

\item If $Q_i$ and $Q_j$ satisfy \autoref{thm:structure}\ref{case:span} then there exists $\alpha, \beta\in\C\setminus\{0\}$ such that for some $k\in[m_1]\setminus \{i,j\}$
\begin{equation} \label{eq:1-alpha}
 \alpha a_ib_i + \beta a_jb_j = a_k b_k \;. 
\end{equation}
\item If $Q_j$ is irreducible and $Q_i$ and $Q_j$ satisfy \autoref{thm:structure}\ref{case:rk1} then there exist two linear forms, $c$ and $d$ such that   
\begin{equation}
 a_ib_i -  a_jb_j = cd \;.
\end{equation}
\end{enumerate}
\end{claim}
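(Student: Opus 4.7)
The plan is to leverage the independence hypothesis $\spn{a_i,b_i}\cap \MS(Q_o)=\{\vec{0}\}$ through \autoref{cla:ind-rank}, which yields the crucial inequality
\[
\rank_s\bigl(Q_o+\lambda\, a_ib_i\bigr)=\rank_s(Q_o)+1\geq 3\qquad \text{for every }\lambda\in\C\setminus\{0\}.
\]
In particular $Q_o+\lambda a_ib_i$ can never be expressed as a quadratic of rank at most $2$. This single inequality will stand in for the large-rank hypothesis on $Q_o$ that powered \autoref{cla:span-rank1-hr}.

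Part~1 is almost immediate: if $Q_i,Q_j\in \ideal{c,d}$ for linear forms $c,d$, then $Q_i=ce+df$ has rank at most $2$, contradicting the displayed inequality applied with $\lambda=1$.

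For parts~2 and~3 I would use one unified template. From the relevant case of \autoref{thm:structure} one gets a nontrivial relation $\alpha Q_i+\beta Q_j=R$, where $R=Q_k=\alpha_kQ_o+a_kb_k$ in part~2 and $R=cd$ in part~3. Substituting $Q_i=Q_o+a_ib_i$ and $Q_j=\alpha_jQ_o+a_jb_j$ (with $\alpha_j=1$ in part~3 since $Q_j$ is irreducible) and collecting the $Q_o$-terms on one side yields
\[
\gamma\, Q_o=R-\alpha\, a_ib_i-\beta\, a_jb_j
\]
for an explicit scalar $\gamma$. If $\gamma\neq 0$ I would solve for $Q_o$ and subtract $(\alpha/\gamma)\, a_ib_i$ from both sides; the right-hand side becomes a quadratic of rank at most $2$, while the left-hand side equals $Q_o+\mu\, a_ib_i$ for an explicit $\mu$. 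Provided $\mu\neq 0$ this contradicts the displayed inequality, so we must have $\mu=0$, which forces $\alpha=0$; but then either $Q_j\sim Q_k$ (part~2) or $Q_j$ is reducible (part~3), each contradicting the hypotheses. Hence $\gamma=0$, and the remaining equation is exactly the identity asserted by the claim (after normalizing $\alpha=1$ in part~3, and checking that the second coefficient cannot vanish either).

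The main obstacle, compared to the high-rank proof \autoref{cla:span-rank1-hr}, is that we can no longer exploit a large rank gap to conclude $\gamma=0$ for free; every contradiction must be routed through \autoref{cla:ind-rank} and the independence assumption on $\spn{a_i,b_i}$. The remaining bookkeeping involves ruling out $\alpha=0$ (using pairwise linear independence of $\tilde\cQ$ and, in part~3, irreducibility of $Q_j$), and verifying in part~2 that the spanning polynomial $Q_k$ belongs to $\cQ_1$ rather than $\calL$ so that the index $k$ indeed lies in $[m_1]$.
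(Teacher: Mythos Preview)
Your proposal is correct and follows essentially the same route as the paper. Both arguments hinge on \autoref{cla:ind-rank} applied to $Q_o$ and $a_ib_i$ (using the hypothesis $\spn{a_i,b_i}\cap\MS(Q_o)=\{\vec 0\}$) to force the $Q_o$-coefficient in the relevant identity to vanish; the paper simply invokes $\alpha\neq 0$ up front from pairwise independence and then applies the rank bound, whereas you phrase the same step as a nested contradiction (assume $\gamma\neq 0$, deduce $\alpha=0$, contradict independence). One small remark: your planned ``verification'' that $Q_k\in\cQ_1$ is not really needed---if the spanned polynomial happens to lie in $\calL$ it is a square, so case~\ref{case:rk1} holds and the conclusion is covered by part~3 (and in any event the identity $\alpha a_ib_i+\beta a_jb_j=a_kb_k$ still holds with $a_k=b_k$); the paper's restriction to $k\in[m_1]$ is a minor imprecision that does not affect the downstream use of the claim.
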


\begin{proof}
Assume $Q_i$ and $Q_j$ satisfy \autoref{thm:structure}\ref{case:span}, i.e., there are $\alpha, \beta \in \C$ and $k \in [m_1]\setminus \{i,j\}$ such that 
$$\alpha (Q_o+a_ib_i) + \beta ( Q_o+a_jb_j) = \alpha Q_i + \beta Q_j = Q_k =\alpha_kQ+a_kb_k \;$$ 
This implies that $\alpha a_ib_i + \beta a_jb_j -a_kb_k = (\alpha_k-(\alpha+\beta))Q_o$.
We next show that it must be the case that $\alpha_k-(\alpha+\beta) = 0$. 

Indeed, if $\alpha_k-(\alpha+\beta)\neq 0$ we get that $\beta a_jb_j -a_kb_k = (\alpha_k-(\alpha+\beta))Q_o- \alpha a_ib_i$. However,  as we assumed 
$\spn{a_i,b_i} \cap \MS(Q_o) = \{\vec{0}\}$, we get by  \autoref{cla:ind-rank} that 
$$\rank_s(\alpha_k-(\alpha+\beta))Q_o- \alpha a_ib_i)= \rank_s(Q_o)+1 > 2 \geq \rank_s(\beta a_jb_j -a_kb_k)$$ in contradiction.
We thus have that $\alpha_k-(\alpha+\beta) = 0$ and hence
\begin{equation} \label{eq:span-ak}
 \alpha a_ib_i + \beta a_jb_j = a_kb_k
\end{equation}
and \autoref{eq:1-alpha} is satisfied. Observe that since our polynomials are pairwise independent $\alpha, \beta\neq 0$.

A similar argument to the one showing $\alpha_k-(\alpha+\beta) = 0$ also implies that  $Q_i$ and $Q_j$ do not satisfy \autoref{thm:structure}\ref{case:2}. If this was not the case then we would have that $\rank_s(Q_o+a_ib_i)= 2$ which would again contradict  \autoref{cla:ind-rank}.

If $Q_j$ is irreducible, the only case left is when $Q_o+a_ib_i, Q_o+a_jb_j$ satisfy  \autoref{thm:structure}\ref{case:rk1}. In this case there are $\alpha, \beta \in \C$ and two linear forms $c$ and $d$ such that $ \alpha (Q_o+a_ib_i) + \beta ( Q_o+a_jb_j) = cd$, and again, by the same argument we get that  $\beta = -\alpha$ and so (after rescaling $c$)
\begin{equation*}
  a_ib_i -a_jb_j = cd \;. 
\end{equation*}
This completes the proof of \autoref{cla:span-rank1}.
\end{proof}

For each $i\in [m_1]$ let $V_i \eqdef \spn{a_i,b_i}$. 
The next claim is analogous to \autoref{cla:V_i intersection-hr}.
 
 \begin{claim}\label{cla:V_i intersection}
Let $\tilde\cQ$ satisfy the assumption in \autoref{cla:strong-Q-dom}. If for some $i\in [m_1]$ it holds that $\dim(V_i)=2$ and $\MS(Q_o)\cap V_i = \{\vec{0}\}$ then for every $j \in [m_1]$ it is the case that $\dim(V_j\cap V_i) \geq 1$. In particular, if $dim(V_j)=1$ then $V_j\varsubsetneq V_i$.
\end{claim}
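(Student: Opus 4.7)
The plan is to reduce the statement to the classification given by \autoref{cla:span-rank1}, which we have just established. Fix $j\in [m_1]$. Since $\dim V_i = 2$ and $V_i\cap \MS(Q_o) = \{\vec 0\}$, the hypotheses of \autoref{cla:span-rank1} are in force, and since $j\in [m_1]$ the polynomial $Q_j$ is irreducible. The first bullet of \autoref{cla:span-rank1} rules out \autoref{thm:structure}\ref{case:2} for the pair $(Q_i,Q_j)$, so only the ``span'' alternative or the ``rank-$1$'' alternative can occur.

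In the span case, \autoref{cla:span-rank1}(2) gives $\alpha a_i b_i + \beta a_j b_j = a_k b_k$ with $\alpha,\beta\ne 0$. Rewriting this as $(\alpha a_i)b_i + (\beta a_j)b_j = a_k b_k$ and noting that $\alpha a_i, b_i$ remain linearly independent, \autoref{cla:intersection} immediately yields $\dim(V_i\cap V_j)\geq 1$.

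In the rank-$1$ case we obtain linear forms $c,d$ with $a_i b_i - a_j b_j = cd$, and pairwise linear independence of $Q_i$ and $Q_j$ forces $a_i b_i\ne a_j b_j$, hence $cd\ne 0$ and $\rank_s(cd)=1$. I would proceed by contradiction: assume $V_i\cap V_j = \{\vec 0\}$. If $\dim V_j = 2$, then $a_j, b_j$ are linearly independent from $a_i, b_i$, and \autoref{cla:ind-rank} applied with $P_1 = a_i b_i$ and $P_2 = (-a_j)b_j$ gives $\rank_s(a_i b_i - a_j b_j) = 2$, contradicting $\rank_s(cd)\leq 1$. If $\dim V_j = 1$ then $a_j b_j$ is (a multiple of) a square $a_j^2$ with $a_j\notin V_i$, and \autoref{cla:ind-rank} does not apply verbatim; here a direct one-line argument substituting $a_i=0$ and then $b_i=0$ into a hypothetical identity $a_i b_i - \gamma a_j^2 = cd$ forces both $c$ and $d$ to be multiples of $a_j$, which is incompatible with $a_i b_i \ne 0$. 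Either way, we reach a contradiction, so $\dim(V_i\cap V_j)\geq 1$.

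The ``in particular'' clause is immediate: if $\dim V_j = 1$ then $\dim(V_i\cap V_j)\geq 1$ forces $V_j\subseteq V_i$, and this containment is proper because $\dim V_i = 2 > 1 = \dim V_j$. The only mildly delicate point in the proof is handling the degenerate sub-case $\dim V_j = 1$ inside the rank-$1$ alternative, where \autoref{cla:ind-rank} is not literally applicable and a short direct rank computation is required.
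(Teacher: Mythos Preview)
Your proof is correct and follows the paper's route via \autoref{cla:span-rank1}. The one place you do more work than necessary is the rank-$1$ alternative: the equation $a_ib_i - a_jb_j = cd$ is already of the form required by \autoref{cla:intersection} (take $a=a_i$, $b=b_i$, which are linearly independent since $\dim V_i=2$, and $c'=-a_j$, $d'=b_j$, $e=c$, $f=d$), so \autoref{cla:intersection} gives $\dim(V_i\cap V_j)\ge 1$ directly, with no need to split on $\dim V_j$ or invoke \autoref{cla:ind-rank}. The paper's proof is literally the one-line ``immediate from \autoref{cla:span-rank1} and \autoref{cla:intersection},'' applying the latter uniformly to both the span and rank-$1$ equations.
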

\begin{proof}
The proof of this claim follows immediately from \autoref{cla:span-rank1} and \autoref{cla:intersection}.
\end{proof}

the next claim is an analogous to \autoref{cla:z-exist}.

\begin{claim}\label{cla:V-exist}
Under the assumptions of \autoref{cla:strong-Q-dom} there exists a subspace $V$ of linear forms such that $\dim(V)\leq 2\cdot\rkq+ 3$ and  for every $i \in [m_1]$ there exists $v_i\in V$ and a constant $\epsilon_i\in\C$ such that $b_i = \epsilon_i a_i +v_i$ (or $a_i = \epsilon_i b_i +v_i$).
\end{claim}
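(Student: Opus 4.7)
The plan is to build $V$ by starting from $\MS(Q_o)$ and augmenting it with a low-dimensional space $W$ that controls the $V_i=\spn{a_i,b_i}$'s which are transverse to $\MS(Q_o)$. The easy cases, handled first, are: if $\dim(V_i)=1$, then $b_i\sim a_i$ and the required form holds with $v_i=\vec{0}$; if $\dim(V_i)=2$ but $V_i\cap\MS(Q_o)\neq\{\vec{0}\}$, pick a nonzero $v\in V_i\cap\MS(Q_o)$, complete it to a basis of $V_i$, and, after possibly swapping the names of $a_i$ and $b_i$, write $b_i=\epsilon_i a_i+v_i$ with $v_i$ a scalar multiple of $v$, hence in $\MS(Q_o)\subseteq V$.

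The crux is the collection
\[
\mathcal{V}=\bigl\{V_i : \dim(V_i)=2,\ V_i\cap\MS(Q_o)=\{\vec{0}\}\bigr\}.
\]
If $\mathcal{V}=\emptyset$, taking $V=\MS(Q_o)$ suffices. Otherwise, for any $V_i\in\mathcal{V}$, \autoref{cla:V_i intersection} gives $\dim(V_i\cap V_j)\geq 1$ for every $j\in[m_1]$. Restricted to distinct members of $\mathcal{V}$, this intersection cannot be two-dimensional (else $V_i=V_j$), so the distinct two-dimensional subspaces in $\mathcal{V}$ pairwise meet in dimension exactly one. This is precisely the setup of \autoref{cla:linear-spaces-intersaction }, which yields two alternatives: either $\dim\bigl(\sum_{V_i\in\mathcal{V}}V_i\bigr)\leq 3$, or there is a nonzero linear form $w$ with $w\in\bigcap_{V_i\in\mathcal{V}}V_i$.

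In the first alternative, set $W=\sum_{V_i\in\mathcal{V}}V_i$ and $V=\MS(Q_o)+W$. Every $V_i\in\mathcal{V}$ then lies in $V$, so both $a_i$ and $b_i$ lie in $V$ and the claim holds trivially (with $\epsilon_i=0$ and $v_i=b_i$). In the second alternative, set $W=\spn{w}$ and $V=\MS(Q_o)+W$. For any $V_i\in\mathcal{V}$, since $w\neq\vec{0}$ lies in the two-dimensional space $V_i$, we may write $V_i=\spn{a_i,w}$ after possibly swapping $a_i$ and $b_i$, so $b_i=\epsilon_i a_i+\delta_i w$ with $\delta_i w\in V$, which is the desired representation.

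In both alternatives $\dim(W)\leq 3$, while $\dim(\MS(Q_o))\leq 2(\rkq-1)=2\rkq-2$ since $\rank_s(Q_o)<\rkq$. Hence $\dim(V)\leq(2\rkq-2)+3=2\rkq+1\leq 2\rkq+3$, as claimed. The one technical point to verify carefully is that \autoref{cla:linear-spaces-intersaction } applies cleanly to $\mathcal{V}$; this reduces to observing that passing from $\mathcal{V}$ to its collection of distinct elements does not lose any $V_i$ (since duplicates contribute nothing new) and that all pairwise intersections of distinct two-dimensional members are automatically one-dimensional, as shown above.
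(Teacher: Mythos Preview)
Your proof is correct and follows essentially the same approach as the paper's: define the set of indices $i$ with $\dim(V_i)=2$ and $V_i\cap\MS(Q_o)=\{\vec{0}\}$, invoke \autoref{cla:V_i intersection} and \autoref{cla:linear-spaces-intersaction } on these to obtain either a common line $\spn{w}$ or a total span of dimension at most $3$, and set $V=\MS(Q_o)+W$ accordingly. You are slightly more explicit than the paper in handling the easy cases and in checking that the representation $b_i=\epsilon_i a_i+v_i$ actually follows once $V$ is fixed, and your dimension count $2\rkq+1$ is a bit sharper than the stated $2\rkq+3$, but the argument is the same.
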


\begin{proof}

Let $\cI=\{i\in[m_1] \mid \dim(V_i)=2 \text{ and } \MS(Q_o)\cap V_i = \{\vec{0}\} \}$. 
If $\dim(\bigcup_{i\in \cI} V_i) \leq 3$ then we set $V = \spn{\MS(Q_o)  \cup (\bigcup_{i\in \cI} V_i)}$. Clearly $\dim(V)\leq 2\cdot \rank_s(Q)+3\leq 2\cdot \rkq+3$.  \autoref{cla:V_i intersection}  implies that $V$ has the required properties. 

If $\dim(\bigcup_{i\in \cI} V_i) > 3$ then from \autoref{cla:V_i intersection} and \autoref{cla:linear-spaces-intersaction }
it follows that $\dim(\bigcap_{i\in \cI} V_i) =1$.
Let $w$ be such that $\spn{w}=\bigcap_{i\in \cI} V_i$ and set $V = \spn{\MS(Q_o),w}$. 
In this case too it is easy to see  that $V$ has the required properties.
\end{proof}


From now on we assume, without loss of generality that for every $i\in [m_1]$, $b_i = \epsilon_i a_i +v_i$. This structure also holds for the polynomials in $\calL$.

\begin{proof}[Proof of \autoref{cla:strong-Q-dom}]
	\autoref{cla:V-exist} implies that there is a linear space of linear forms, $V$, with $\dim(V) \leq 2\cdot\rkq+ 3$, with the property that for every $Q_i \in \tilde{\cQ}$  there is a linear form $a_i$ such that $\MS(Q_i) \subseteq \spn{V,a_i}$. 
	Thus $\tilde{\cQ}$ satisfies the conditions of \autoref{cla:V-li}, and $\dim(\tilde{\cQ}) = O(1)$, as we wanted to show.
\end{proof}

We next consider the case $m_2\neq 0$.

\subsubsection{The case $m_2\neq 0$}

In this subsection we prove  \autoref{cla:strong-Q-dom-lr-main}, we can assume without loss of generality that $m_2 \neq 0$, as the case that $m_2=0$ was proved in the previous subsection.
To handle this case we prove the existence of a subspace $V$ of linear forms, of dimension $O(1)$, such that every polynomial in $\tilde{\cQ}$ is in $\ideal{V}$, and then, like we did before, we bound the dimension of $\tilde{\cQ}$. The first step is proving an analog of \autoref{cla:no-rank-3}.

\begin{claim}\label{cla:Q-and-rank-2}
Let $\tilde{\cQ}$ be a \hyperref[def:Q-dom]{$(Q_o,m_1,m_2)$-set} such that $\rank_s(Q_o)< \rkq$. Then for every $i \in [m_2]$ there exists $ \gamma_i\in \C$  such that $\rank_s(P_i-\gamma_i Q_o) = 2$.
\end{claim}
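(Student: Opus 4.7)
The plan is to follow the same skeleton as the proof of \autoref{cla:no-rank-3}, but now account for the fact that \autoref{thm:structure}\ref{case:2} can genuinely occur between $P_i$ and $Q_j\in\cQ_1$ (in the high-rank regime it was ruled out by \autoref{obs:case3}). As before, I fix $i\in[m_2]$ and, for each $j\in[m_1]$, apply \autoref{thm:structure} to the pair $(P_i,Q_j)$, then argue based on which case holds.

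First I would dispose of the two ``rank-collapse'' cases. If \autoref{thm:structure}\ref{case:rk1} holds for some $j$, the identity $\alpha P_i+\beta Q_j=a'b'$ (with $\alpha\neq 0$) rearranges as in \eqref{eq:case2} to
$P_i=-\tfrac{\beta}{\alpha}Q_o+\bigl(\tfrac{1}{\alpha}a'b'-\tfrac{\beta}{\alpha}a_jb_j\bigr),$
and the parenthesized expression has $\rank_s\le 2$; item~\ref{item:Qdom>2} of \autoref{def:Q-dom} then forces $\rank_s$ equal to $2$ (it cannot be $\le 1$, since $P_i\not\sim Q_o$ by pairwise independence makes $P_i+\tfrac{\beta}{\alpha}Q_o$ a non-trivial combination of $P_i$ and $Q_o$), so $\gamma_i=-\beta/\alpha$ works. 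If instead \autoref{thm:structure}\ref{case:2} holds for some $j$, then there exist linear forms $c,d$ with $P_i\in\ideal{c,d}$, so $P_i=ce+df$ for some linear forms $e,f$ and in particular $\rank_s(P_i)\le 2$; the same appeal to item~\ref{item:Qdom>2} upgrades this to $\rank_s(P_i)=2$, so we may take $\gamma_i=0$.

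It remains to handle the case in which, for every $j\in[m_1]$, the pair $(P_i,Q_j)$ satisfies \autoref{thm:structure}\ref{case:span} and neither of the other two cases. Here I would reuse the pigeonhole argument from the proof of \autoref{cla:no-rank-3}. If for some $j\in[m_1]$ the third polynomial in $\spn{Q_j,P_i}$ already lies in $\cQ_1$, say it is $Q_{j'}$, then $P_i=\alpha Q_j+\beta Q_{j'}=(\alpha+\beta)Q_o+\alpha a_jb_j+\beta a_{j'}b_{j'}$, whose rank-2 remainder $\alpha a_jb_j+\beta a_{j'}b_{j'}$ is upgraded to $\rank_s=2$ by item~\ref{item:Qdom>2}, so $\gamma_i=\alpha+\beta$ works. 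Otherwise, for every $j\in[m_1]$ the witness lies in $\cQ_2$, i.e., there is some $t_j\in[m_2]$ with $P_{t_j}\in\spn{Q_j,P_i}$. Since $m_1>5m_2+2>m_2$, the pigeonhole principle yields $j'\neq j''$ with $t_{j'}=t_{j''}=t$; then $P_t\in\spn{Q_{j'},P_i}\cap\spn{Q_{j''},P_i}$, and pairwise independence forces $\spn{P_i,P_t}=\spn{Q_{j'},Q_{j''}}$, so $P_i$ lies in the linear span of two polynomials of $\cQ_1$, returning us to the previous subcase.

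The main obstacle is simply handling \autoref{thm:structure}\ref{case:2}, which did not arise in the high-rank proof. But the observation that $P_i\in\ideal{c,d}$ immediately caps $\rank_s(P_i)$ at $2$ makes this case arguably the easiest, so no additional machinery is needed; condition~\ref{item:Qdom>2} of \autoref{def:Q-dom} uniformly serves as the lower-bound tool across all three cases to rule out rank collapse. The rest of the argument is essentially identical to the high-rank proof.
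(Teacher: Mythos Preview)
Your proposal is correct and follows essentially the same approach as the paper's proof: handle \autoref{thm:structure}\ref{case:2} by taking $\gamma_i=0$ (since $P_i\in\ideal{c,d}$ has $\rank_s\le 2$), handle \autoref{thm:structure}\ref{case:rk1} exactly as in \eqref{eq:case2}, and in the remaining all-\ref{case:span} case use the pigeonhole $m_1>5m_2+2$ to force $P_i\in\spn{Q_j,Q_{j'}}$ for some $j,j'\in[m_1]$. Your treatment is in fact slightly more explicit than the paper's (you separate out the subcase where some witness already lies in $\cQ_1$ before pigeonholing, and you justify why the witness cannot lie in $\calL$ once case~\ref{case:rk1} is excluded), but the argument is the same.
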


\begin{proof}
Consider $i\in [m_2]$. If $P_i$ satisfies \autoref{thm:structure}\ref{case:2} with any $Q_j\in \cQ_1$, then the claim holds with $\gamma_i=0$. 
If $P_i$ satisfies \autoref{thm:structure}\ref{case:rk1} with any $Q_j\in \cQ$ then there exist linear forms  $c$ and $d$ and non zero $\alpha, \beta \in \C$, such that $\alpha P_i + \beta Q_j = cd$. Therefore, $
P_i = \frac{1}{\alpha}(cd - \beta (Q+a_j b_j))$
and the statement holds with $\gamma_i = -\frac{\beta}{\alpha}$. Observe that the rank of $cd - \beta a_j b_j$ cannot be $1$ by \autoref{def:Q-dom}.

Thus, the only case left to consider is when $P_i$ satisfies  \autoref{thm:structure}\ref{case:span} with all the $Q_j$'s in $\cQ_1$. 
We next show that in this case there must exist $j\neq j' \in [m_1]$ such that $Q_{j'}\in \spn{Q_j,P_i}$. Indeed,  since $m_1 >  5m_2+2$ there must be $j,j'\in[m_1]$ and $i'\in[m_2]$ such that
$P_{i'} \in \spn{Q_{j'}, P_i}$ and $ P_{i'} \in \spn{Q_{j}, P_i}$. As we saw before this implies that $P_i \in \spn{Q_j,Q_{j'}}$, which is what we wanted to show.

Let  $j \neq j' \in [m_1]$ be as above and let $\alpha, \beta \in \C$ be such that $P_i =\alpha Q_j + \beta Q_{j'}$.  It follows that
\[
P_i = (\alpha + \beta)Q_o+ \alpha a_j b_j + \beta a_{j'} b_{j'}\;.
\] 
Let $\gamma_i = \alpha + \beta$. Property~\ref{item:Qdom>2} in \autoref{def:Q-dom} implies that $\rank_s(\alpha a_j b_j + \beta a_{j'} b_{j'})=2$ and the claim follows.
\end{proof}

As before, whenever $\gamma_i\neq 0$ let us replace $P_i$ with $\frac{1}{\gamma_i}P_i$. Thus, from now on we shall assume $\gamma_i \in \{0,1\}$.
We next prove an analog of \autoref{cla:small-space-span}.

\begin{claim}\label{cla:m-2k-in-ideal-v}
Let $\tilde{\cQ}$ be a \hyperref[def:Q-dom]{$(Q_o,m_1,m_2)$-set} such that $\rank_s(Q_o)< \rkq$. 
Then there is a subspace $V$ of linear  forms such that $\dim(V) \leq 2\cdot \rkq+4$, $\MS(Q_o)\subseteq V$ and for at least $m_1-2m_2 $ of the indices $j\in[m_1]$ it holds that $a_j,b_j \in V$.
\end{claim}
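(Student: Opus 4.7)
The plan is to mirror the high-rank case (\autoref{cla:small-space-span}), with the key modification that we include $\MS(Q_o)$ inside the subspace $V$. Since we can no longer use large $\rank_s(Q_o)$ to force $Q_o$-coefficients to vanish in the equations produced by \autoref{thm:structure}, we instead work everywhere modulo $V$, exploiting $\MS(Q_o)\subseteq V$ to reduce most equations to identities between rank-at-most-$2$ quadratics, to which \autoref{fact:ufd} and \autoref{cla:rank-2-in-V} apply.

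Since we may assume $m_2\geq 1$ in this subsection, I would pick some $P_1\in\cQ_2$, and use \autoref{cla:Q-and-rank-2} to write $P_1=\gamma_1 Q_o+L_1$ with $\rank_s(L_1)=2$ and $\gamma_1\in\{0,1\}$. Setting
\[V\eqdef \MS(Q_o)+\MS(L_1)\]
immediately yields $\dim(V)\leq 2\rkq+4$ and $\MS(Q_o)\subseteq V$, meeting the two structural requirements. The remaining task is to show that all but at most $2m_2$ indices $j\in[m_1]$ satisfy $a_j,b_j\in V$, where $Q_j=Q_o+a_jb_j$.

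For each such $j$, the pair $(P_1,Q_j)$ falls into one of the three cases of \autoref{thm:structure}. In \autoref{thm:structure}\ref{case:rk1}, the relation $\alpha P_1-\beta Q_j=ef$ rearranges to $(\alpha\gamma_1-\beta)Q_o=-\alpha L_1+\beta a_jb_j+ef$. When $\alpha\gamma_1-\beta=0$, this reduces to $\beta a_jb_j+ef=\alpha L_1$, which is a rank-$2$ minimal representation of $\alpha L_1$, so \autoref{cla:irr-quad-span} forces $a_j,b_j,e,f\in\MS(L_1)\subseteq V$, exactly as in the high-rank proof. When $\alpha\gamma_1-\beta\neq 0$, reducing both sides modulo $V$ gives $\beta a_jb_j+ef\equiv_V 0$; sub-splitting on whether $ef\equiv_V 0$, \autoref{fact:ufd} either places at least one of $a_j,b_j$ in $V$ or forces $a_j\equiv_V c_1 e$ and $b_j\equiv_V c_2 f$ for nonzero scalars $c_1,c_2$. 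The analyses of \autoref{thm:structure}\ref{case:span} and \autoref{thm:structure}\ref{case:2} go through in the same spirit: \ref{case:span} with some $Q_{j'}\in\cQ_1\cup\calL$ yields, modulo $V$, an equation $\alpha a_jb_j+\beta a_{j'}b_{j'}\equiv_V 0$, handled by \autoref{fact:ufd} and by using that $a_{j'},b_{j'}\in V$ (once $j'$ is a good index), while \ref{case:2} furnishes linear forms $c,d$ with $Q_j,P_1\in\ideal{c,d}$ and an analogous reduction using $\MS(Q_o)\subseteq V$.

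The main obstacle will be the bookkeeping that produces the explicit bound of $2m_2$ bad indices. I expect the bad set to split into two parts, each of size at most $m_2$. The first part, exactly as in \autoref{cla:small-space-span}, consists of those $j$ for which $Q_j$ and $P_1$ span a polynomial in $\cQ_2$ and no other $Q_{j'}$ spans the same polynomial with $P_1$; pairwise linear independence then gives an injection into $\cQ_2$ and bounds this count by $m_2$. The second part, which is genuinely new to the low-rank setting, consists of those $j$ in \autoref{thm:structure}\ref{case:rk1} with $\alpha\gamma_1-\beta\neq 0$ for which $a_j$ or $b_j$ escapes $V$; I plan to charge each such $j$ to a distinct element of $\cQ_2$ by using the pencil $\spn{P_1,Q_j}$, the rank-$2$ combination $ef$ it contains, and item~\ref{item:Qdom>2} of \autoref{def:Q-dom}, so that if two indices $j\neq j'$ were charged to the same $P_i\in\cQ_2$ then $P_1\in\spn{Q_j,Q_{j'}}$, collapsing us into a situation already covered by the good cases. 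Verifying that these two disjoint charges account for every bad $j$, and that no further obstructions arise from the interaction of \autoref{thm:structure}\ref{case:2} with the low rank of $Q_o$, is the delicate part of the write-up.
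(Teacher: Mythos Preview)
Your setup is right: $V=\MS(Q_o)+\MS(L_1)$ is exactly what the paper uses. But your case analysis misidentifies where the two bad sets of size $\leq m_2$ come from, and overlooks the key role of properties~\ref{item:Qdom>2} and~\ref{item:Qdom-case3} in \autoref{def:Q-dom}.

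In \autoref{thm:structure}\ref{case:rk1} there is no need to split on whether $\alpha\gamma_1-\beta=0$. From $\alpha P_1-\beta Q_j=ef$ you get $-\beta Q_o+\alpha P_1=\beta a_jb_j+ef$. The left side is a \emph{nontrivial} linear combination of $Q_o$ and $P_1$ (since $\alpha,\beta\neq0$), so property~\ref{item:Qdom>2} forces $\rank_s(-\beta Q_o+\alpha P_1)\ge 2$; as the right side has $\rank_s\le 2$, equality holds and $\beta a_jb_j+ef$ is a minimal representation of a polynomial whose minimal space sits inside $\MS(Q_o)+\MS(L_1)=V$. Hence $a_j,b_j\in V$ \emph{always}. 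The same trick handles \autoref{thm:structure}\ref{case:span} when $P_1,Q_j$ span some $Q_{j'}\in\cQ_1\cup\calL$: writing $P_1-(\alpha+\beta\alpha_{j'})Q_o=\alpha a_jb_j+\beta a_{j'}b_{j'}$ again invokes property~\ref{item:Qdom>2}, avoiding your circular appeal to ``$j'$ being a good index.'' So your ``second bad set'' from case~\ref{case:rk1} is in fact empty, and the charging argument you sketch for it (which never produces a polynomial of $\cQ_2$ in $\spn{P_1,Q_j}$ other than $P_1$ itself) is both unnecessary and unworkable.

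The actual second bad set lives in \autoref{thm:structure}\ref{case:2}, which you dismiss too quickly. Given $c,d$ with $P_1,Q_j\in\ideal{c,d}$, one has $c,d\in\MS(P_1)\subseteq V$. If $Q_o\not\equiv 0\pmod{c,d}$, then $a_jb_j\equiv -Q_o$ there, and unique factorization together with $\MS(Q_o)\subseteq V$ gives $a_j,b_j\in V$. But if $Q_o\equiv 0\pmod{c,d}$, you only get one of $a_j,b_j$ in $\spn{c,d}$, not both; these are precisely the indices for which $Q_j$ and $Q_o$ satisfy \autoref{thm:structure}\ref{case:2}, and property~\ref{item:Qdom-case3} (which you never invoke) bounds them by $m_2$. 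Together with the $\leq m_2$ indices where $P_1,Q_j$ span only a polynomial in $\cQ_2$, this yields the $2m_2$ bound.
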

\begin{proof}
Let $P=P_1$.
\autoref{cla:Q-and-rank-2} implies that $P = \gamma Q_o+L$, for some $L$ of rank $2$. Set $V = \spn{\MS(Q_o)\cup\MS(L)}$. Clearly $\dim(V)\leq 2 \cdot \rkq+4$.

Let $j\in [m_1]$. 
If $P$ and $Q_j$ satisfy \autoref{thm:structure}\ref{case:2}, then there are two linear forms $c$ and $d$ such that  $Q_j,P\in \sqrt{\ideal{c,d}}$, this implies that $\spn{c,d} \subset\MS(P) \subseteq V$. 
If $Q_o=Q_j-a_j b_j$ is not zero modulo ${c,d}$, then we obtain that $Q_o\equiv_{c,d} -a_j b_j$. Thus, there are linear forms $v_1,v_2\in\MS(Q_o)$ such that $a_j \equiv_{c,d} v_1$ and $b_j \equiv_{c,d} v_2$. In particular, as $\MS(Q_o)\cup\{c,d\}\subset V$ it follows that $a_j,b_j\in V$.
If $Q_o$ is zero modulo $c$ and $d$, then $Q_{j},Q_o$ satisfy \autoref{thm:structure}\ref{case:2} and from property~\ref{item:Qdom-case3} of \autoref{def:Q-dom} we know that there are at most $m_2$ such $Q_j$'s. Furthermore, as  $c,d\in\MS(Q_o) \subset V$ we obtain that $Q_j \in \ideal{V}$. Denote by $\cK$ the set of all $Q_j$ that satisfy \autoref{thm:structure}\ref{case:2} with $Q_o$. As we mentioned, $|\cK|\leq m_2$.

If $P$ and $Q_j$ satisfy  \autoref{thm:structure}\ref{case:rk1} then there are two linear forms $c$ and $d$, and non zero $\alpha, \beta \in \C$, such that $\alpha P + \beta Q_j = cd$. Hence,
\begin{equation*}
\beta Q_o+\alpha P=  - \beta a_jb_j +cd\;.
\end{equation*}
As $\beta Q_o+\alpha P$ is a non trivial linear combination of $Q_o$ and $P$, we get from property~\ref{item:Qdom>2} of \autoref{def:Q-dom} that $2 \leq \rank_s((\alpha\gamma+\beta)Q_o+\alpha L)$. It follows that $$\rank_s(- \beta a_jb_j +cd)=\rank_s((\alpha\gamma+\beta)Q_o+\alpha L)= 2$$ and therefore by \autoref{cor:containMS}, $$\{a_j,b_j,c,d\}\subset \MS(-\beta a_jb_j +cd)= \MS((\alpha\gamma+\beta)Q_o+\alpha L)\subseteq V\;,$$ and again $a_j,b_j\in V$.

The last case to consider is when $P$ and $Q_j$ satisfy  \autoref{thm:structure}\ref{case:span}. If they span a polynomial $Q_{j'} \in \cQ_1\cup \calL$,   then $P = \alpha Q_j + \beta Q_{j'}$ and as in the previous case we get that $ a_{j}, b_{j} \in V$.

Let $\cJ$ be the set of all indices $j\in [m_1]$ such that $P$ and $Q_j$ span a polynomial in $\cQ_2$ but no polynomial in $\cQ_1\cup \calL$. So far we proved that  for every $j\in [m_1]\setminus (\cJ\cup \cK)$ we have that $a_j,b_j\in V$. We next show that $|\cJ|\leq m_2$ which concludes the proof. 

Indeed, if this was not the case then by the pigeonhole principle there would exist a polynomial $P_i\in\cQ_2$ and two polynomials $Q_j,Q_{j'}\in\cQ_1 $ such that $P_i\in\spn{Q_j,P}$ and $P_i\in\spn{Q_{j'},P}$. By pairwise independence this implies that $Q_{j'}$ is in the linear span of $P$ and $Q_j$ which contradicts the definition of $\cJ$.
\end{proof}

Our next claim gives more information about the way the polynomials in $\tilde\cQ$ relate to the subspace $V$ found in \autoref{cla:m-2k-in-ideal-v}.

\begin{claim}\label{cla:mod-V}
Let $\tilde{\cQ}$ and $V$ be as in \autoref{cla:m-2k-in-ideal-v}. Then,  every polynomial $P$ in $\tilde{\cQ}$ satisfies (at least) one of the following cases:
\begin{enumerate}
\item $\MS(P)\subseteq V$ or
\item $P\in \ideal{V}$ or
\item  $P=P'+c(c + v)$ where $P'$ is a quadratic polynomial such that $\MS(P')\subseteq V$, $v\in V$ and $c$ is a linear form.
\end{enumerate}
\end{claim}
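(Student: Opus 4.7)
The plan is to apply \autoref{thm:structure} to the pair $(P, Q_j)$ for each $P \in \tilde\cQ \setminus \ideal{V}$ and each $j \in \cI$, the set of size at least $m_1 - 2m_2$ from \autoref{cla:m-2k-in-ideal-v} with $a_j, b_j \in V$, and to exploit that $Q_j \in V^2 \subseteq \ideal{V}$ for such $j$. The easy cases are dispatched first: for $j \in \cI$, $\MS(Q_j) \subseteq \MS(Q_o) + \spn{a_j,b_j} \subseteq V$ so case~1 holds; for $\ell^2 \in \calL$, case~1 holds if $\ell \in V$ and case~3 holds trivially (with $P' = 0$, $c = \ell$, $v = 0$) otherwise; and if $P = Q_\ell$ has exactly one of $a_\ell, b_\ell$ in $V$ then $P \in \ideal{V}$, giving case~2. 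The remaining polynomials to handle are $P = Q_\ell$ with $a_\ell, b_\ell \notin V$ and $P = P_i \in \cQ_2$ with $P \notin \ideal{V}$; for these the goal is case~3.

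Fix such a $P$ and apply \autoref{thm:structure} to $(P, Q_j)$. In case~\ref{case:span}, some $Q_k \in \tilde\cQ \setminus \{P,Q_j\}$ satisfies $Q_k = \alpha P + \beta Q_j$ with $\alpha \neq 0$. If $Q_k \in \ideal{V}$ then $P \in \ideal{V}$, contradicting our hypothesis. If $Q_k = \ell_k^2 \in \calL$ with $\ell_k \notin V$, then $\alpha P = \ell_k^2 - \beta Q_j$ is already of the form $c^2 + P'$ with $\MS(P') \subseteq V$, giving case~3. The troublesome sub-case is $Q_k \in (\cQ_1 \setminus \cI) \cup \cQ_2$, a set of size at most $3m_2$; but $|\cI| \geq m_1 - 2m_2 > 3m_2 + 2$ by the hypothesis $m_1 > 5m_2 + 2$, so a pigeonhole over $j \in \cI$ produces two distinct $j, j' \in \cI$ giving the same $Q_k$, and subtracting the two resulting identities together with pairwise linear independence writes $P$ as a combination of $Q_j, Q_{j'} \in \ideal{V}$, a contradiction. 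So for at least one $j \in \cI$ we are in case~\ref{case:rk1} or~\ref{case:2}.

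Case~\ref{case:rk1} gives $\alpha P + \beta Q_j = cd$ with $\alpha, \beta \neq 0$ (both are forced to be nonzero by irreducibility of $P$ and $Q_j$). Fixing a complement $W$ of $V$ and decomposing each linear form into its $V$- and $W$-parts, the $\mathrm{Sym}^2(W)$-part of the equation (using $Q_o, Q_j \in V^2$) becomes $c^W d^W = \alpha\,a_\ell^W b_\ell^W$, so by unique factorization we may rescale so that $c = \mu a_\ell + v_c$, $d = \nu b_\ell + v_d$ for some $v_c, v_d \in V$ and $\mu\nu = \alpha$. The $V\otimes W$-part of the equation then reduces to $\nu v_c\,b_\ell^W + \mu a_\ell^W\,v_d = 0$. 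If $\bar a_\ell, \bar b_\ell$ were linearly independent in the UFD $R = \CRing{x}{n}/\ideal{V}$, this would force $v_c = v_d = 0$; substituting back collapses the equation to $(\alpha+\beta)Q_o = -\beta a_j b_j$, which makes $\rank_s(Q_o) \leq 1$, contradicting $\rank_s(Q_o) \geq 2$. Hence $\bar a_\ell \sim \bar b_\ell$ in $R$, i.e.\ $b_\ell = \lambda a_\ell + v$ for some $v \in V$, whence $P = Q_o + \lambda a_\ell^2 + a_\ell v$ exhibits case~3 with $c = \sqrt{\lambda}\,a_\ell$, $v' = v/\sqrt\lambda$, and $P' = Q_o$ (which has $\MS(P') \subseteq V$). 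For $P \in \cQ_2$ the same scheme applies with $\bar L_i$ playing the role of $\bar a_\ell \bar b_\ell$; the extra flexibility from $\rank_s(L_i) = 2$ is absorbed by an additional application of \autoref{thm:structure} to $(P_i, P_1)$, since $\MS(L_1) \subseteq V$ by construction of $V$.

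Finally, case~\ref{case:2} places $P, Q_j \in (c, d)$; writing $Q_j = ce + df$ and noting that $\MS(Q_j) \subseteq V$ is nonzero, \autoref{cla:rank-2-in-V} applied to $ce - (-d)f$ (and to $df - (-c)e$) forces $\spn{c,e}$ and $\spn{d,f}$ each to meet $V$ non-trivially, which pins down at most one ``external'' direction modulo $V$ and reduces $P \in (c,d)$ to case~3 (or case~2 if the resulting $c^2$-coefficient vanishes). The main obstacle is the $V\cdot W$ bookkeeping in case~\ref{case:rk1}: one must carefully track each linear form's decomposition along $V$ and $W$ and invoke the rank lower bound on $Q_o$ at precisely the right moment to rule out the ``independent'' sub-case; the analogous bookkeeping in case~\ref{case:2} for $P \in \cQ_2$ then falls out by the same pattern.
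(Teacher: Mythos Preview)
Your overall strategy---set $\cI=\{j\in[m_1]:a_j,b_j\in V\}$, run \autoref{thm:structure} on each pair $(P,Q_j)$ for $j\in\cI$, and pigeonhole on the ``troublesome'' outcomes of case~\ref{case:span}---matches the paper's. Your direct argument for case~\ref{case:rk1} when $P=Q_\ell\in\cQ_1\setminus\cI$ (forcing $\bar a_\ell\sim\bar b_\ell$ via the $V\cdot W$ bookkeeping from a \emph{single} $j$) is a correct and pleasantly direct alternative to the paper's two-$j$ subtraction in that sub-case.

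The gap is in your extension to $P=P_i\in\cQ_2$. From one $j$ with case~\ref{case:rk1} you only get $\alpha\bar L_i=\bar c\bar d$, so $\bar L_i$ has $\rank_s\le1$ modulo $V$; but when $\bar L_i=\bar u\,\bar w$ with $\bar u,\bar w$ independent, the $V\cdot W$ comparison only pins the $V\cdot W$ part of $L_i$ inside $V\cdot\spn{u,w}$, not $V\cdot\spn{u}$, and the residual $V^2$ equation $(\alpha\gamma_i+\beta)Q_o+\alpha L_i^{VV}+\beta a_jb_j=v_cv_d$ is consistent with $\rank_s(Q_o)=2$ (we are in the low-rank regime), so no contradiction arises and case~3 does not follow. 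Your appeal to ``an additional application of \autoref{thm:structure} to $(P_i,P_1)$'' does not close this: that application could equally well land in case~\ref{case:span} with a troublesome $Q_k$. The paper's fix is to sharpen the counting you already have: at most $3m_2$ of the $j\in\cI$ can be troublesome in case~\ref{case:span} (distinct $j$'s yield distinct $Q_k$'s, or else $P\in\spn{Q_j,Q_{j'}}\subseteq\ideal V$), so at least $|\cI|-3m_2>2$ of them fall into case~\ref{case:rk1}. With two such $j,j'$ one subtracts to get $cd-ef=\varepsilon_jQ_j-\varepsilon_{j'}Q_{j'}$, hence $0\neq\MS(cd-ef)\subseteq V$, and \autoref{cla:rank-2-in-V} gives $d=\epsilon c+v$ directly---no decomposition of $L_i$ needed.

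Separately, your case~\ref{case:2} argument is more convoluted than necessary and the stated reasoning (``$\spn{c,e}\cap V\neq\{0\}$ pins down at most one external direction'') does not obviously conclude. Since $Q_j$ is irreducible and $Q_j\in\ideal{c,d}$ forces $\rank_s(Q_j)=2$, the representation $Q_j=ce+df$ is minimal, so $c,d\in\MS(Q_j)\subseteq V$ by \autoref{cla:irr-quad-span}, and $P\in\ideal{c,d}\subseteq\ideal V$ immediately.
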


\begin{proof}
Let $\cI = \lbrace j\in [m_1] \mid  a_j,b_j \in V\rbrace$.  \autoref{cla:m-2k-in-ideal-v} implies that $|\cI|\geq m_1-2m_2$. Furthermore, by the construction of $V$ we know that $\MS(Q_o)\subseteq V$. Observe that this implies that for every $j\in \cI$, $\MS(Q_j)\subseteq V$.

Note that every polynomial in $\calL$ satisfies the third item of the claim.
Let $P$ be any polynomial in $\cQ_2 \cup \{Q_j \mid j\in [m_1]\setminus \cI\}$. We study which case of \autoref{thm:structure} $P$ satisfies with polynomials whose indices belong to $\cI$.

If $P_i$ satisfies \autoref{thm:structure}\ref{case:2} with any polynomial $Q_j$, for $j\in \cI$, then, as $\MS(Q_j)\subseteq V$, it follows that  $P\in \ideal{V}$.

If $P$ is spanned by two polynomials $Q_j,Q_{j'}$ such that $j,j'\in \cI$, then clearly $\MS(P)\subseteq V$. Similarly, if $P$ is spanned by a polynomial $Q_j,Q_{j'}$ such that $j \in \cI$ and $Q_{j'}\in \calL$ then $P = \alpha Q_j + \beta a_{j'}^2$, and hence it also satisfies the claim.

Hence, for $P$ to fail to satisfy the claim, it must be the case that every polynomial $Q_j$, for $j\in\cI$, that satisfies \autoref{thm:structure}\ref{case:span} with $P$, does not span with $P$ any polynomial in $\{Q_j \mid j\in  \cI\} \cup \calL$. Thus, it must span with $P$ a polynomial in $\{Q_j \mid j\in [m_1]\setminus  \cI\} \cup \cQ_2$. As before, observe that by pairwise linear independent, if two polynomials from $\cI$ span the same polynomial with $P$, then $P$ is in their span and we are done. Thus, since 
$$\left|\{Q_j \mid j\in [m_1]\setminus  \cI\} \cup \cQ_2\right| \leq \left(m_1 - |\cI| \right) + m_2 \leq 3m_2 < m_1 -2m_2 -2 \leq |\cI| -2 \;,$$
for $P$ to fail to satisfy the claim it must be the case that 
it satisfies \autoref{thm:structure}\ref{case:rk1} with at least $2$ polynomials whose indices are in $\cI$. 

Let $Q_j,Q_{j'}$ be two such polynomials. There are four linear forms, $c,d,e$ and $f$ and scalars $\epsilon_j, \epsilon_{j'}$ such that
\begin{equation*}
P + \varepsilon_j Q_j = cd \quad\quad \text{and}\quad\quad
P + \varepsilon_{j'} Q_{j'} = ef \;.
\end{equation*} 
Therefore
\begin{equation} \label{eq:rank2-in-V}
\varepsilon_j Q_j -\varepsilon_{j'} Q_{j'} = cd -ef \;.
\end{equation} 
In particular, $\MS(cd -ef)\subseteq V$. \autoref{cla:rank-2-in-V} and Equation~\eqref{eq:rank2-in-V} imply that, without loss of generality,  $d = \epsilon c + v$ for some $v \in V$ and $\epsilon \in \C$. 
Thus, $P = cd -\varepsilon_j Q_j = c(\epsilon c+v) -\varepsilon_jQ_j$ and no matter whether $\epsilon=0$ or not. $P$ satisfies the claim. Indeed, if $\epsilon = 0$ then $P \in \ideal{V}$ and we are done. Otherwise, we can normalize $c,v$ to  assume that $\epsilon = 1$ and get that $\MS(P-c^2)\in V$ as claimed. 
\end{proof}

We can now complete the proof of  \autoref{cla:strong-Q-dom-lr-main}.
\begin{proof}[Proof of \autoref{cla:strong-Q-dom-lr-main}]
		\autoref{cla:mod-V} implies that there is a linear space of linear forms, $V$, such that $\dim(V) \leq 2\cdot\rkq+ 4$ and  every polynomial $Q_i \in \tilde{\cQ}$ satisfies the following. Either $Q_i \in \ideal{V}$ or, there is a linear form $a_i$ such that $\MS(Q_i) \subseteq \spn{V,a_i}$. (It might be that $\MS(Q_i) \subseteq V$ or that $\MS(Q_i) \subseteq \spn{a_i}$). Thus $\tilde{\cQ}$ satisfies the conditions of \autoref{cla:V-li}, and $\dim(\tilde{\cQ}) = O(1)$, as we wanted to show.

\end{proof}
\autoref{cla:strong-Q-dom-hr-main} together with \autoref{cla:strong-Q-dom-lr-main} completes the proof of \autoref{thm:Q-dom-gen}.\qed
\ifEK
\section{Edelstein-Kelly theorem for quadratic polynomials}\label{sec:quad-EK}
\begin{theorem*}[\autoref{thm:main-ek-intro}]
Let $\cT_1,\cT_2$ and $\cT_3$ be finite sets of homogeneous quadratic polynomials over $\C$ satisfying the following properties:
\begin{itemize}
\item Each $Q_o\in\cup_i\cT_i$ is either irreducible or a square of a linear function.
\item No two polynomials are multiples of each other (i.e., every pair is linearly independent).
\item For every two polynomials $Q_1$ and $Q_2$ from distinct sets the product of all the polynomials in the third set, is in the ideal generated by $Q_1$ and $Q_2$. 
\end{itemize}
Then the linear span of the polynomials in $\cup_i\cT_i$'s has dimension $O(1)$.
\end{theorem*}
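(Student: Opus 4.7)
The plan is to prove \autoref{thm:main-ek-intro} by paralleling the architecture of the proof of \autoref{thm:main-sg-intro}, but replacing every use of a monochromatic Sylvester--Gallai argument with its colored (Edelstein--Kelly) counterpart. The structure theorem \autoref{thm:structure} applies verbatim to any triple $(Q,Q',\{P_\ell\}_{\ell\in\cK})$ with $Q\in\cT_i$, $Q'\in\cT_j$ ($i\neq j$), and $\{P_\ell\}\subseteq\cT_k$ the remaining set, and by \autoref{cla:gup-4} we may take $|\cK|\leq 4$. Thus every cross-set pair falls into one of the three cases of the structure theorem, and our task is to bound $\dim(\spn{\cup_i\cT_i})$ uniformly.

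Fixing $\delta=1/100$, the first step is to partition each $\cT_i$ by dominant behavior: let $\calP_1^{(i)}$ be those $Q\in\cT_i$ that, together with some other set $\cT_j$, satisfy \autoref{thm:structure}\ref{case:span} but not \autoref{thm:structure}\ref{case:rk1} with at least a $\delta$-fraction of $\cT_j$; let $\calP_3^{(i)}$ be those satisfying \autoref{thm:structure}\ref{case:2} with a $\delta$-fraction; and let the rest be ``$Q_o$-dominated'' via \autoref{thm:structure}\ref{case:rk1}. For the $\calP_3^{(i)}$ part, a colored adaptation of \autoref{cla:VforP3} applies: since every polynomial satisfying case~\ref{case:2} lies inside an ideal generated by two linear forms, iteratively building a subspace $V$ of dimension $O(1/\delta)$ places $\cup_i\calP_3^{(i)}\subset\ideal{V}$. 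For $\calP_1^{(i)}$, after removing the $\calP_3$-part and quotienting by $V$ via a generic projection $T_{\vaa,V}$ from \autoref{def:z-mapping}, the case~\ref{case:span} structure turns each cross-set pair into a triple of vectors in a colored configuration; \autoref{thm:partial-EK-robust} (the robust partial Edelstein--Kelly theorem) then bounds the dimension of the projected set by $O(1/\delta^3)$, and \autoref{cla:z-map-dimension} lifts this back to a constant bound on $\dim(\cup_i\calP_1^{(i)})$.

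The hard part, as in the proof of \autoref{thm:main-sg-intro}, is the $Q_o$-dominated case, i.e.\ the colored analogue of \autoref{thm:Q-dom-gen}. Here a single polynomial $Q_o$ (say in $\cT_1$) has the property that most elements of $\cT_2\cup\cT_3$ are of the form $Q_o+a_ib_i$ for linear forms $a_i,b_i$. The proof would split, as in \autoref{sec:q-dom}, into the high-rank regime $\rank_s(Q_o)\geq\rkq$ and the low-rank regime. In each regime one applies \autoref{thm:structure} to pairs $(Q_o+a_ib_i,\,Q_o+a_jb_j)$ coming from different sets, obtaining an analogue of \autoref{cla:span-rank1-hr}: the third polynomial $Q_k$ forced by the hypothesis now lives in the specific remaining set, and the resulting algebraic identities on $\{a_i,b_i,a_j,b_j\}$ are unchanged. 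One then runs the colored analogue of \autoref{cla:stisfy-sg-lines-hr} showing that the linear forms $\{a_i,b_i\}$, colored by the set their associated quadratic comes from, satisfy the conditions of the colored Sylvester--Gallai theorem (\autoref{thm:EK}) modulo a constant-dimensional subspace $V$, and concludes via \autoref{cla:V-li} together with \autoref{cla:z-map-dimension}.

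The main obstacle is that the intersection arguments of \autoref{cla:linear-spaces-intersaction } must be replaced by a colored version (a variant of \autoref{cla:colored-linear-spaces-intersaction}) in order to deduce that the spaces $V_i=\spn{a_i,b_i}$, now split across two colors, still span a bounded-dimensional subspace after accounting for a single common direction. A second difficulty is that in the low-rank regime one must simultaneously track case~\ref{case:2} appearing as the third polynomial of a cross-color pair; handling this requires the colored analogue of \autoref{cla:Q-and-rank-2} and \autoref{cla:m-2k-in-ideal-v}, where the pigeonhole steps producing two polynomials from a single set $\cI$ must be run on the largest color class, using the assumption that each $\cT_i$ is large. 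With these colored analogues in place, the same final step--projecting via $T_{\vaa,V}$ and applying the colored \autoref{thm:EK} to the induced configuration of linear forms--yields the constant dimension bound on $\spn{\cup_i\cT_i}$.
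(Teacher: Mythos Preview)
The paper does not actually prove this statement. In the compiled version of the paper, the colored (Edelstein--Kelly) version is explicitly left open: it appears in \autoref{sec:discussion} as \autoref{thm:main-ek-conc}, a \emph{conjecture}, with the remark ``We believe that tools similar to the tools developed in this paper should suffice to verify this conjecture.'' The material surrounding the label \texttt{thm:main-ek-intro} lives inside \verb|\ifEK ... \fi| blocks that are never activated, and that material is itself an unfinished draft (many claims stated without proof, duplicated labels, and a ``Long story Short'' subsection sketching cases). So there is no paper proof to compare against.

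Your proposal is a reasonable high-level outline and, in fact, tracks closely what the authors were attempting in those suppressed sections: split into the case where every polynomial is ``good'' (handled by a robust/partial Edelstein--Kelly theorem, their \autoref{thm:partial-EK-robust} and \autoref{cor:EK-robust}) versus the case where some $Q_o$ is ``bad'' and dominates via \autoref{thm:structure}\ref{case:rk1}, then run colored analogues of the $(Q_o,m_1,m_2)$-set analysis. But your write-up, like the paper's hidden draft, is a plan rather than a proof. Two concrete places where real work is needed and where the paper's draft visibly stalls: (i) in the colored $Q_o$-dominated case one must control polynomials in the \emph{same} set as $Q_o$, for which no direct pairwise hypothesis is available---the paper's draft introduces an auxiliary polynomial $P$ and a trichotomy (\autoref{cla:V-and-P}) whose third branch is not cleanly closed; (ii) the colored version of the low-rank case requires handling \autoref{thm:structure}\ref{case:2} across colors, and the paper's draft for this (e.g.\ \autoref{cla:q1-q2-then-q3_1-ideal}) is incomplete. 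Your sketch names these difficulties but does not resolve them either.
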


\begin{remark}
The requirement that the polynomials are homogeneous is not essential as homogenization does not affect the property $\prod_{k\in \cK} Q_k\in\sqrt{(Q_i,Q_j)}$. Moreover, \autoref{cla:gup-4} allows us to assume that there is a subset $\cK$ of the set such that  $|\cK| \leq 4$, and $\prod_{k\in \cK} Q_k \in \sqrt{\ideal{Q_1,Q_2}}$.
\end{remark}

\begin{remark}
	For $i\in[3]$ we denote $\cT_i = \cQ_i \cup \calL_i$,  where $\cQ_i$ is the set of irreducible quadratics in $\cT_i$, and $\calL_i$ is the set of squares of linear forms in $\cT_i$.
\end{remark}

\begin{proof}[Proof of \autoref{thm:main-ek-intro}]
Let $\delta =\frac{1}{100}$ \shir{to fix later}, 
 Denote $m_i = |\cQ_i|$, $r_i = |\calL_i|$.
 
 \begin{definition}
 	Let $Q_o\in \cQ_i$ be an irreducible quadratic. Denote the other two sets by $\cQ_k,\cQ_j$.  We say that $Q_o$ is \emph{bad for $\cQ_j$} if the both of the following conditions hold:
 		\begin{enumerate}
 			\item $Q_o$ satisfies \autoref{thm:structure}\ref{case:2} with less then $\delta m_j$ of the polynomials in $\cQ_j$
 			\item  $m_j > m_k$ and $Q_o$ satisfies \autoref{thm:structure}\ref{case:span} with less then $\delta m_j$ of the polynomials in $\cQ_j$
 		\end{enumerate}
 	We say that a polynomial  $Q_o\in \cQ_i$ is \emph{bad} if $Q_o$ is bad for $\cQ_j$ and for $\cQ_k$.
 \end{definition} 
The intuition behind this definition is that if a polynomial in s not bad ("good"), than it either satisfies  \autoref{thm:structure}\ref{case:span} with at least $\delta$ of the polynomials of the bigger set, or it saisfies \autoref{thm:structure}\ref{case:span} with at least $\delta$ of the polynomials in one of the other sets.If all the polynomials are good, we can use \autoref{thm:partial-EK-robust} to bound the dimenstion of the set. Similarly to the proof of \autoref{thm:main-sg-intro}.

\subsection{There is at most one bad polynomial for each $\cQ_j$ or There are no bad polynomials.}

Denote 
\begin{itemize}
\item 	$\cA_1 = \lbrace Q_o\in \cQ_1 \mid Q_o\text{ satisfies \autoref{thm:structure}\ref{case:span} with at least } \delta \text{ of the polynomials in {\bf the larger} of the other sets} \rbrace$

\item 	$\cC_1 = \lbrace Q_o\in \cQ_1 \mid Q_o\text{ satisfies \autoref{thm:structure}\ref{case:2} with at least } \delta \text{ of the polynomials in {\bf one} of the other sets} \rbrace$
\end{itemize}

Note that if a polynomial $Q_o\in \cQ_1$ is not bad then it is in $\cA_1 \cup
\cC_1$.
We define similarly $\cA_2,\cA_3,\cC_2,\cC_3$. Finally we note that is this case there are at most $3$ polynomials that are  in $\cup_{j \in [3]} Q_j \setminus \cup_{j\in [3]}\cA_j\cup \cC_{j}$.

We next prove the existence of a linear space of linear forms, $V_j$, such that $\dim(V_j)=O(1)$ and $\cC_j \subset \ideal{V_j}$.
The intuition is based on \autoref{rem:4-2-dim}.

We repeat the following process for each set $\cT_i$. To simplify notation we describe it for $\cT_1$. Define $\cC_1^2 = \lbrace Q_o\in \cQ_1 \mid | Q_o\text{ satisfies \autoref{thm:structure}\ref{case:2} with at least } \delta \text{ of the polynomials in }\cQ_2\rbrace$ and $\cC_1^3 = \lbrace Q_o\in \cQ_1 \mid | Q_o\text{ satisfies \autoref{thm:structure}\ref{case:2} with at least } \delta \text{ of the polynomials in }\cQ_3\rbrace$. It holds that $\cC_1 = \cC_1^2 \cup \cC_1^3$. The following iterative process will imply the existence of $V_1$ such that $\dim(V_1) = O(1)$ and $\cC_1^2 \subseteq \ideal{V_1}$.
	
Set $V_1 = \{\vec{0}\}$, and $\cC_1'=\emptyset$. At each step consider any $Q_o\in \cC_1^2$ such that $Q_o\notin \ideal{V_1}$ and set $V_1 =\MS(Q) + V_1$, and $\cC_1'=\cC_1' \cup \{Q_o\}$. Repeat this process as long as possible. We  show next that this process must end after at most $\frac{3}{\delta}$ steps. In particular, $|\cC_1'| \leq \frac{3}{\delta}$.

For $Q_i \in \cC_1'$, define $\cB_i =\{Q_o\in \cQ_2\mid Q,Q_i \text{ satisfiy \autoref{thm:structure}\ref{case:2}}\}$. 
$Q_i$ satisfies \autoref{thm:structure}\ref{case:2} with at least $\delta$ of the polynomials in one of $\cQ_2$.
Thus $|\cB_{i}|\geq \delta m_3$, and as by \autoref{cla:3-case3} each $Q_o\in \cQ_2$ belongs to at most $3$ different sets, it follows by double counting that all the polynomials in $\cQ_2$ are covered after $3/\delta$ steps.
Repeat this process for the polynomials in $\cC_1^3$, increasing $V_1$ and $\cC_1'$ as needed. We conclude that after those processes, $|\cC_1'|\leq 6/\delta$.
As in each step we add at most $4$ linearly independent linear forms to $V_1$, we obtain $\dim(V_1)\leq \frac{24}{\delta}$.

We can now repeat this process for $\cC_2, \cC_3$ and define $V = V_1+V_2+V_3$,. It holds that $\cup_{j\in[3]}\cC_j \subset \ideal{V}$.

Assume without loss of generality $m_1\leq m_2 \leq m_3$. Set $\cI= \emptyset$ and $\cB_j=V \cap \cQ_j$   We now describe another iterative process for the polynomials in $\cup_{j\in[3]}\cA_j$, as before we start with $\cA_1$. Throughout the process we will keep the property that $\cB_i = \lbrace Q_o\in \cQ_i \mid Q_o\in \spn{( \cup_{j \in [3]} \cQ_j \cap \ideal V )\cup \cI}\rbrace  $.
In each step pick any $P \in\cA_1$ such that $P\notin \spn{ (\cup_{j \in [3]} \cB_j )\cup \cI}$ and  $P$ satisfies \autoref{thm:structure}\ref{case:span} with at least $\frac{\delta}{3}m_2$ polynomials in $\cB_2$, and add it to $\cI$. Then, we add to $\cB_j$ all the polynomials $P'\in \cQ_j$ that satisfy $P' \in \spn{( \cup_{j \in [3]} \cQ_j \cap \ideal V )\cup \cI}$. Note, that we always maintain that  $\cB_j \subset \spn{(\cup_j\cQ_j\cap \ideal{V})  \cup \cI}$.

We continue this process as long as we can.
Next, we prove that at the end of the process we have that $|\cI| \leq 3/\delta$.

\begin{claim}\label{cla:2nd-proc}
	In each step we added to $\cB_3$ at least $\frac{\delta}{3}m_3$ new polynomials from $\cA_3$. In particular, $|\cI| \leq 3/\delta$.
\end{claim}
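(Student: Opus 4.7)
Fix a step and let $P$ be the polynomial just added to $\cI$. By relabeling we may assume $P\in\cA_1$. Because $m_1\le m_2\le m_3$, the defining property of $\cA_1$ produces a set
\[
\mathcal{N} \;:=\; \{\,Q\in\cQ_3 : P,Q \text{ satisfy }\autoref{thm:structure}\ref{case:span}\,\}
\]
with $|\mathcal{N}|\ge\delta m_3$, together with an injection $\mathcal{N}\to\cT_2$ sending each $Q$ to the unique witness $R_Q\in\spn{P,Q}\cap\cT_2$ (uniqueness by pairwise linear independence). Adding $P$ to $\cI$ forces $Q\in\mathcal{N}$ into $\cB_3^{\mathrm{new}}$ precisely when $R_Q\in\cB_2^{\mathrm{old}}$, since then $Q=\alpha P+\beta R_Q\in \spn{\{P\}\cup\cB_2^{\mathrm{old}}}$.

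The first task is to count how many $Q\in\mathcal{N}$ satisfy both $R_Q\in\cB_2^{\mathrm{old}}$ and $Q\notin\cB_3^{\mathrm{old}}$. The selection rule of the process guarantees this almost by design: $P$ was admitted because it satisfies \autoref{thm:structure}\ref{case:span} with a sufficient fraction of the already-absorbed set on the \emph{larger} partner side. Reading the threshold of the process as applying to the larger of the two non-source sets (which for $P\in\cA_1$ is $\cQ_3$, consistent with the definition of $\cA_1$), the admission rule supplies at least $\frac{\delta}{3}m_3$ pairs $(Q,R_Q)$ with $R_Q\in\cB_2^{\mathrm{old}}$; the injectivity of $Q\mapsto R_Q$ then converts these into that many distinct new elements of $\cB_3^{\mathrm{new}}\setminus\cB_3^{\mathrm{old}}$.

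The second task is to show that almost all these fresh additions actually lie in $\cA_3$. Write $\cQ_3\setminus\cA_3\subseteq \cC_3\cup\mathrm{Bad}$, where $\mathrm{Bad}$ denotes the bad polynomials of $\cQ_3$; by the counting at the top of this section $|\mathrm{Bad}|=O(1)$. Moreover, every polynomial of $\cC_3$ was absorbed into $\ideal{V}\cap\cQ_3\subseteq\cB_3^{\mathrm{initial}}$ by the first phase of the argument that constructed $V$, so no element of $\cC_3$ can be \emph{new} to $\cB_3$ at this stage. Hence at most $O(1)$ of the $\ge\frac{\delta}{3}m_3$ fresh additions can fail to lie in $\cA_3$, which for $m_3$ larger than an absolute constant (the other case being trivial) yields the stated bound.

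The ``in particular'' follows at once: each step introduces at least $\frac{\delta}{3}m_3$ previously-unseen elements of $\cA_3\subseteq\cQ_3$ into $\cB_3$, and since $|\cA_3|\le m_3$ the process must halt after at most $3/\delta$ steps, giving $|\cI|\le 3/\delta$. The hard part of this plan is the first task: the process's admission criterion is stated in terms of $\cB_2$ and $m_2$, whereas the claim demands an $m_3$-denominator. The natural fix is to read the $\frac{\delta}{3}m_j$ threshold as referring to the \emph{larger} of the two non-source sets (matching the asymmetric definition of $\cA_i$); making this consistent with the preceding text is the only genuine bookkeeping obstacle.
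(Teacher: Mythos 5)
Your counting runs in the opposite direction from the paper's, and that reversal is exactly where the gap sits. The paper's proof starts from the admission rule itself: $P$ was added to $\cI$ because it satisfies \autoref{thm:structure}\ref{case:span} with at least $\frac{\delta}{3}m_2$ polynomials $Q$ that are \emph{already in} $\cB_2$. For each such $Q$ the spanned witness $P'$ lies in the third colour class and cannot already be in $\cB_3$ (else $P\in\spn{Q,P'}\subseteq\spn{\cB_2\cup\cB_3}$, contradicting the admission of $P$); moreover two distinct $Q_1,Q_2\in\cB_2$ cannot share a witness (else $P\in\spn{Q_1,Q_2}\subseteq\spn{\cB_2}$, same contradiction). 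So $\cB_3$ gains at least $\frac{\delta}{3}m_2$ distinct elements, and the paper closes by comparing $m_2$ with $m_3$. You instead start from the neighbourhood $\mathcal{N}\subseteq\cQ_3$ supplied by the definition of $\cA_1$ and need the witnesses $R_Q$ of at least $\frac{\delta}{3}m_3$ of these $Q$ to land in $\cB_2^{\mathrm{old}}$. Nothing in the hypotheses gives you that: membership in $\cA_1$ only says $P$ has many span-partners in $\cQ_3$ and says nothing about where their witnesses in $\cT_2$ sit, while the admission rule constrains $\cB_2$ with an $m_2$-threshold. You acknowledge this and propose to repair it by reinterpreting the admission threshold, but that is a modification of the construction, not a proof of the claim about the process as written. (To be fair, the paper's own closing inequality $\frac{\delta}{3}m_2\ge\frac{\delta}{3}m_3$ clashes with its stated normalization $m_1\le m_2\le m_3$, so you have correctly located a real bookkeeping defect in this section; still, the intended mechanism is the $\cB_2\to\cQ_3$ direction, not yours.)

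Two smaller points. The map $Q\mapsto R_Q$ is not injective ``by pairwise linear independence'': a shared witness only forces $P\in\spn{Q_1,Q_2}$, which for $Q_1,Q_2\in\cQ_3$ lying outside $\cB_3$ is not a contradiction --- the contradiction is only available in the paper's direction, where $Q_1,Q_2\in\cB_2$. (Your count does not actually rely on this injectivity, so it is a harmless misstatement, but worth flagging.) Second, the paragraph arguing that the fresh additions lie in $\cA_3$ rather than merely in $\cQ_3$ is unnecessary for the ``in particular'' conclusion: termination after $3/\delta$ steps only needs $\cB_3\subseteq\cQ_3$ to grow by $\frac{\delta}{3}m_3$ per step, and routing through the bad polynomials introduces an $O(1)$ loss that weakens the stated bound for no gain.
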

\begin{proof}
	Consider what happens when we add some polynomial $P$ to $\cI$. By the description of our process, $P$ satisfies \autoref{thm:structure}\ref{case:span} with at least $\frac{\delta}{3}m_2$ polynomials in $\cB_2$. Any $Q_o\in  \cB_2$, that satisfies \autoref{thm:structure}\ref{case:span} with $P$,  must span with $P$ a polynomial  $P'\in\cQ_3 \setminus \cB_3$, as otherwise we would have that $P\in \spn{\cB_2 \cup \cB_3}$, in contradiction to the way our process works.
	Furthermore, for each such $Q_o\in \cB_2$ the polynomial $P'$ is unique. Indeed, if there was a $ P'\in\cQ_3 \setminus \cB_3$ and $Q_1,Q_2\in  \cB$ such that $P'\in \spn{Q_1,P}\cap\spn{Q_2,P}$ then by pairwise independence we would conclude that $P\in \spn{Q_1,Q_2}\subset \spn{\cB_2}$, in contradiction. Thus, when we add $P$ to $\cI$ we add at least $\frac{\delta}{3}m_2 \geq \frac{\delta}{3}m_3$ polynomials from $\cQ_3$  to $\cB_3$. In particular, the process terminates after at most $3/\delta$ steps and thus $|\cI|\leq 3/\delta$.
\end{proof}

Repeat this 
	process for $\cQ_2$ and $\cQ_3$. After this we have $\cI$ of size $O(1/\delta)$. For every $j$, denote $\cA'_j = \cQ_j \setminus \cB_j$. It holds that each  polynomial in $\cA'_j$ satisfies \autoref{thm:structure}\ref{case:span} with at least $\delta /3$ of the polynomials in the larger of the other $\cA'_i$'s. Using \autoref{thm:partial-EK-robust} with $\delta/3$ and $c = 3$ we obtain that $\dim(\cup_j \cA'_j) \leq 1/\delta^3$ so we can add a basis for it to $\cI$ and obtain that $\cQ_j \subset \spn{(\cQ_j \cap \ideal{V} )\cup \cI}$.

We are not done yet as the dimension of $\cup_{j \in [3]} \cQ_j \cap\ideal{V}$ and $\cup_{j \in [3]} \calL_j$ is not a constant.  Nevertheless, we next show how to use \hyperref[thm:EK]{Edelstien-Kelly} to bound the dimension of $\cup_{j \in [3]} \cQ_j$ and $\cup_{j \in [3]} \calL_j$ given that  $\cQ_j\subset \spn{(\cQ_j\cap \ideal{V}) \cup \cI}$
. To achieve this we introduce yet another iterative process:  For each $P\in \cup_{j \in [3]} \cQ_j \setminus \ideal{V}$, if there is a quadratic polynomial, $0 \neq L$, such that $\rank_S(L) \leq 2$ and  $P + L \in \ideal{V}$, then we set $V = V+\MS(L)$ (this increases the dimension of $V$ by at most $4$). Since this operation increases $\dim\left( \ideal{V}\cap \cup_{j \in [3]} \cQ_j \right)$ we can remove one polynomial from $\cI$, and thus decrease its size by $1$, and still maintain the property that $\cQ_j\subset \spn{(\cQ_j \cap \ideal{V})\cup \cI}$. We repeat this process until either $\cI$ is empty, or none of the polynomials satisfies the condition of the process. By the upper bound on $|\cI|$ the dimension of $V$ grew by at most  $4|\cI|= O( 1/{\delta^3})$ and thus it remains of dimension $O( 1/{\delta^3})=O(1)$. 
At the end of the process we have that $ \cQ_j  \subset \spn{(\cQ_j \cap \ideal{V} )\cup \cI}$,
and that every polynomial in $P \in \cup_{j \in [3]} \cQ_j \setminus \ideal{V}$ has $\rank_s(P) > 2$, even if we set all linear forms in $V$ to zero. 
Now, as we did in the the proof of \autoref{thm:main-sg-intro} we can bound the dimension of $\cup_{j \in [3]} \cQ_j \cap \ideal{V} \cup \cup_{j \in [3]} \calL_j  $ by taking a $T_{V,\vaa}$ mapping and applying \autoref{thm:partial-EK-robust}.

\subsection{There are at least two bad polynomials for some $\cQ_j$ or There is a bad polynomial}

Assume without loss of generality that there is a bad polynomials for $\cQ_3$ and denote it by $Q_o\in \cQ_1 \cup \cQ_2$ without loss of generality assume $Q_o\in \cQ_2$. 

In this case, we know that there are at least $1-2\delta$ polynomials in $\cQ_3$ that are of the form $\alpha_j Q_o+ a_jb_j$, for some $a_j,b_j$ linear forms and $\alpha_j \neq 0$, after resealing, we can assume $\alpha_j = 1$.

For $i\in [3]$, denote $\cQ_i = \cQ^1_i \cup \cQ^2_i$, similarly to \autoref{def:Q-dom}, that means $\cQ^1_i = \lbrace Q_j \in \cQ_i \mid Q_j = Q+ a_jb_j \rbrace$ and $\cQ^2_i = \lbrace Q_j \in \cQ_i \mid Q_j = \gamma Q+ L, \rank_s(L) \geq 2\rbrace$, we may assume that $\cQ^1_i \cap \cQ^2_i = \emptyset$ by setting $\cQ^2_i = \cQ^2_i \setminus \cQ^1_i$.
By our assumptions so far, $\cQ_3^1 \geq (1-2\delta)m_3$.
From the definition of a bad polynomial, we can also assume that $m_3\geq m_1$.

\begin{definition}\label{def:Q-dom-col}
	We sat that $\cup_{j\in[3]} \cT_j$ is a $???$ if $ \cup_{j\in[3]} \cT_j$ is a set of quadratic polynomials that satisfies the conditions of \autoref{thm:main-ek-conc} and the structure of the discussion above.
\end{definition} 

\begin{theorem}\label{thm:Q-dom-gen-col}
	Let $\cup_{j\in[3]} \cT_j$ satisfy the conditions of \autoref{def:Q-dom-col}, then $\dim(\cup_{j\in[3]} \cT_j) = O(1)$.
\end{theorem}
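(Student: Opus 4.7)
The plan is to mirror the two-case structure of the non-colored analog \autoref{thm:Q-dom-gen}, splitting on whether $\rank_s(Q_o) \geq \rkq$ (the high rank case) or $\rank_s(Q_o) < \rkq$ (the low rank case). In each case I will follow the corresponding single-set argument from \autoref{sec:q-dom} closely, replacing every appeal to ordinary or robust Sylvester-Gallai by its colored counterpart (\autoref{thm:EK}, \autoref{thm:partial-EK-robust}, \autoref{cor:EK-robust}), and I will have to track which of $\cT_1,\cT_2,\cT_3$ each polynomial produced by \autoref{thm:structure} belongs to.

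For the high rank case my first step is to note that since $\rank_s(Q_o) \geq \rkq$, \autoref{obs:case3} rules out \autoref{thm:structure}\ref{case:2} whenever one of the polynomials in the pair has the form $Q_o + a_jb_j$. Applying \autoref{thm:structure} to pairs $Q_i = Q_o + a_ib_i$, $Q_j = Q_o + a_jb_j$ drawn from different $\cQ_l^1$'s, the rank of $Q_o$ forces the same ``nice'' coefficient identities as in \autoref{cla:span-rank1-hr}, only now the third polynomial they span (or the reducible quadratic they yield) must come from the third set. Combining \autoref{cla:intersection} with \autoref{cla:linear-spaces-intersaction } as in the proof of \autoref{cla:z-exist}, I expect to obtain a constant-dimensional space $V \supseteq \MS(Q_o)$ such that for most indices $j$ in any of the three sets, one of $a_j,b_j$ is a linear combination of the other modulo $V$. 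The next step is a colored analog of \autoref{cla:stisfy-sg-lines-hr}: when $a_i, a_j$ come from different sets and are independent modulo $V$, the structure theorem, together with the colored vanishing hypothesis, produces a polynomial in the \emph{third} set whose associated linear form lies in $\spn{a_i, a_j, V} \setminus (\spn{a_i, V} \cup \spn{a_j, V})$. At this point the projected linear forms satisfy the hypothesis of the colored Sylvester-Gallai theorem \autoref{thm:EK} modulo $V$, yielding the desired $O(1)$ bound.

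For the low rank case, since $\MS(Q_o)$ is itself of dimension $O(\rkq)$, I would build a constant-dimensional space $V \supseteq \MS(Q_o)$ along the lines of \autoref{cla:m-2k-in-ideal-v} and \autoref{cla:mod-V}: take a fixed $P \in \cQ_1^2 \cup \cQ_2^2$ (or use the common normal form $Q_o + a_jb_j$ for elements of $\cQ_3^1$) and pair it with elements of the three sets, using \autoref{thm:structure} to force $\spn{a_j, b_j}$ either into $V$ or into $\spn{V, c_j}$ for a single linear form $c_j$. Once this normal form is established across the three sets (up to the bounded exceptional polynomials in $\cQ_l^2$ and in $\cQ_3 \setminus \cQ_3^1$), the colored analog of \autoref{cla:V-li} takes over: apply a projection $T_{\vaa, V}$ as in \autoref{def:z-mapping}, observe that each image splits as $z$ times a linear form $c_P$, and note that the collection $\{c_P\}$ partitioned by source set inherits a colored Sylvester-Gallai configuration modulo $z$. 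Classical Edelstein-Kelly bounds this dimension, and \autoref{cla:z-map-dimension} lifts the bound back to $\MS(\cup_j \cT_j)$.

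The main obstacle I anticipate is the bookkeeping in the colored analogs of \autoref{cla:sg-mod-v} and \autoref{cla:stisfy-sg-lines-hr}. In the single-set proofs every application of \autoref{thm:structure} produces a third polynomial inside the same collection, but here I need the third polynomial to come from the correct set, namely the one distinct from the two sets containing the pair. The bounded number of exceptional polynomials---those in $\cQ_l^2$ and the at most $2\delta m_3$ elements of $\cQ_3 \setminus \cQ_3^1$---cannot always be coerced into the correct color, so the final step should invoke the ``with exceptions'' robust colored theorem \autoref{cor:EK-robust} rather than the clean \autoref{thm:EK}. Making the case analysis tight, especially when \autoref{thm:structure}\ref{case:rk1} applied to two polynomials in distinct sets only witnesses a reducible quadratic (without immediately naming a polynomial in the third set), and handling the colored asymmetry when $Q_o$ lies in $\cQ_2$ rather than in the ``dominant'' set $\cQ_3$, is where I expect the bulk of the technical difficulty to lie.
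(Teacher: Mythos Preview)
Your high-level plan matches the paper's: split on $\rank_s(Q_o)$ versus $\rkq$, then colorize the single-set arguments of \autoref{sec:q-dom} and finish with Edelstein--Kelly in place of Sylvester--Gallai. The paper's treatment of \autoref{thm:Q-dom-gen-col} is itself only a sketch (it sits inside a conditionally-compiled block, and the colored statement reappears as an open \autoref{thm:main-ek-conc} in \autoref{sec:discussion}), so a full comparison is not possible; but at the level of outlines there are two concrete pieces of machinery in the paper's sketch that your plan does not account for.

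First, before doing any casework the paper isolates two ``bootstrap'' claims specific to the colored setting: if two of the three sets already lie in $\spn{Q_o,\C[V]_2}$ then so does the third (\autoref{cla:Q-and-C[V]}), and a propagation lemma (\autoref{cla:Q-C[V]-a_i}) saying that a single polynomial $P=Q_o+ab$ with $\spn{a,b}\cap V=\{0\}$ in one set forces the other two sets into $\spn{Q_o,\C[V']_2}$ with $V'=V+\spn{a,b}$. These let the paper short-circuit several branches of the argument without re-running a colored \autoref{cla:stisfy-sg-lines-hr}; your plan to directly colorize \autoref{cla:stisfy-sg-lines-hr} would have to reprove something equivalent inside the case analysis.

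Second, when some $\cQ_i^2\neq\emptyset$ the paper does not simply colorize \autoref{cla:small-space-span}/\autoref{cla:mod-V}. Its \autoref{lem:Q3-in-V} and the subsequent \autoref{cla:V-and-P} allow for an extra ``special'' polynomial $P\in\cQ_2$ with the property that no combination of $P,Q_o,\C[V]$ has $\rank_s<3$, and polynomials may fall into $\spn{P,Q_o,\C[V]}$ rather than having the $Q_o+Q'+c(\epsilon c+v)$ shape. This third alternative is absent from your plan; in the single-set proof it never arises because the roles of the sets are symmetric, but here the asymmetry between $\cQ_3$ (where most polynomials are $Q_o+a_jb_j$) and $\cQ_1,\cQ_2$ forces it. Your anticipated ``main obstacle'' paragraph touches on this asymmetry, but you should expect an additional structural case rather than merely extra bookkeeping for \autoref{cor:EK-robust}.
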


Similarly to the proof of \autoref{thm:Q-dom-gen}, we split the proof of \autoref{thm:Q-dom-gen-col} into two cases, one where $\rank_s(Q) > \rkq$ and the other when $\rank_s(Q) \leq \rkq$. 

\subsection{$Q_o$ is of high rank}
In this section we prove the following claim:

\begin{claim}\label{cla:Q-dom-hr-col}
	Let $\cup_{j\in[3]} \cT_j$ satisfy the conditions of \autoref{def:Q-dom-col}, and assume further that $\rank_s(Q) > \rkq$. Then $\dim(\cup_{j\in[3]} \cT_j) = O(1)$.
\end{claim}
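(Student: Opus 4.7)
The strategy closely mirrors the uncolored high-rank argument from \autoref{cla:strong-Q-dom-hr-main}, but pushed through in the three-colored setting. First I would reduce to the case $m_i^2 = 0$ for every $i$, i.e.\ every polynomial in $\cup_j \cT_j$ is either of the form $Q_o + a b$ (possibly with $a = b$, which absorbs the squares in $\calL_j$) or is a ``rank-2 correction'' of $Q_o$. Exactly as in \autoref{cla:no-rank-3} and \autoref{cla:small-space-span}, the high rank assumption $\rank_s(Q_o) \geq \rkq$ combined with the structure theorem forces each ``rank-2 correction'' $P_i = \gamma_i Q_o + L_i$ to have $\rank_s(L_i)=2$, and \autoref{cla:V-map-Q-dom} then produces a $4$-dimensional space $V$ so that, after factoring out $V$, every polynomial has the form $\alpha Q_o + c(\epsilon c + v)$ with $v \in V$. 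From this point the $O(1)$-dimensional residual can be absorbed, and I may assume that every $Q$ in every color class $\cT_i$ is written as $\alpha_Q Q_o + a_Q b_Q$ (with $\alpha_Q \in \{0,1\}$, and $a_Q = b_Q$ when $\alpha_Q = 0$) modulo a fixed $O(1)$-dimensional space.

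Next I would prove the colored analog of \autoref{cla:span-rank1-hr}: given $Q_i \in \cT_s$ and $Q_j \in \cT_t$ from distinct color classes, \autoref{obs:case3} rules out \autoref{thm:structure}\ref{case:2}, so only \ref{case:span} and \ref{case:rk1} can occur. The high rank of $Q_o$ forces the coefficient of $Q_o$ in any linear dependence to match, giving either $\alpha a_ib_i + \beta a_jb_j = a_k b_k$ for some $Q_k$ in the third color (Case \ref{case:span}) or $a_ib_i - a_jb_j = cd$ for linear forms $c,d$ (Case \ref{case:rk1}). From the EK hypothesis, for every such cross-color pair one of these two conclusions must hold, and in fact one of them with a witness $Q_k$ living in the \emph{third} color class.

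The crucial step, analogous to \autoref{cla:z-exist}, is to produce a single small space $V$ of linear forms ($\dim V \leq O(1)$) such that for every $Q$ in every $\cT_i$, the pair $(a_Q, b_Q)$ satisfies $b_Q = \epsilon_Q a_Q + v_Q$ with $v_Q \in V$. The argument is exactly the two-case dichotomy of \autoref{cla:V_i intersection-hr} and \autoref{cla:linear-spaces-intersaction }: either the union of the planes $V_Q = \spn{a_Q, b_Q}$ is already $3$-dimensional (so take $V$ to be that union), or all the $V_Q$'s share a common line (take $V$ to be that line). Nothing in that combinatorial lemma is color-sensitive, so it transfers verbatim. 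Having fixed this $V$, I would write $\calS_i = \{a_Q : Q \in \cT_i\}$ and prove the colored SG-type statement: for any two $a \in \calS_s$, $a' \in \calS_t$ with $s \neq t$ and $a,a'$ independent modulo $V$, there exists $a'' \in \calS_u$ ($u$ the third color) with $a'' \in \spn{a,a',V} \setminus (\spn{a,V} \cup \spn{a',V})$. This is proved by the same three-case analysis as \autoref{cla:stisfy-sg-lines-hr}, using the colored EK hypothesis to guarantee that the spanning polynomial $Q_k$ belongs to the missing color class.

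Once the colored SG property modulo $V$ is established, I would quotient $V$ out and apply \autoref{thm:partial-EK-robust} (or the classical \autoref{thm:EK}) to the projected sets $\overline{\calS_1}, \overline{\calS_2}, \overline{\calS_3}$, concluding that $\dim(\spn{\cup_i \calS_i}) \leq \dim(V) + O(1) = O(1)$. Since $\MS(Q_o) \subseteq V$ and every polynomial in $\cup_i \cT_i$ lies in $\spn{Q_o} + \C\text{-algebra generated by } V \cup \bigcup_i \calS_i$, the dimension of $\cup_i \cT_i$ itself is $O(1)$, which is the claim.

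The main obstacle, as in the uncolored proof, is the colored SG step: one must rule out the possibility that the witness $Q_k$ promised by the EK hypothesis either lives in the ``wrong'' color class or has its $a_k$ trapped inside $\spn{a,V} \cup \spn{a',V}$. Both issues are handled by the same rank/irreducibility argument used in \autoref{cla:stisfy-sg-lines-hr}: if $a_k$ were to lie in one of the two forbidden subspaces, subtracting the equations $Q_k = \alpha Q_j + c e$ and $Q_i = Q_j + cd$ would force the linear form $c$ into $\spn{a,V}$, contradicting the fact that $c$ depends non-trivially on both $a$ and $a'$ (the analog of \autoref{lem:span-c-hr}). The colored bookkeeping --- tracking which of the three classes the witness lands in --- is the one extra piece of care required beyond the uncolored argument.
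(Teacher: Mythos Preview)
Your plan has the right shape but contains a genuine gap in the step you describe as ``color-insensitive''. The intersection property behind \autoref{cla:V_i intersection-hr} comes from applying \autoref{cla:span-rank1-hr} to a pair $Q_i,Q_j$; in the Edelstein--Kelly setting the hypothesis only gives you \autoref{thm:structure} for pairs from \emph{different} color classes, so you only know $\dim(V_{Q}\cap V_{Q'})\geq 1$ when $Q,Q'$ come from distinct $\cT_s,\cT_t$. The uncolored \autoref{cla:linear-spaces-intersaction } requires pairwise intersection among \emph{all} the $2$-planes, which you do not have. The paper addresses exactly this by proving a separate colored version (\autoref{cla:colored-linear-spaces-intersaction}), whose conclusion is weaker: there is a vector $w$ and a subspace $U$ with $\dim(U)\leq 4$ such that every $V^j_i$ either contains $w$ or is contained in $U$. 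This still yields the form $b_j=\epsilon_j a_j+v_j$ with $v_j$ in a bounded-dimensional $V$, but it is not the uncolored lemma ``transferred verbatim''.

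A second, softer issue is your reduction to ``$m_i^2=0$ for all $i$''. The uncolored claims you invoke (\autoref{cla:no-rank-3}, \autoref{cla:small-space-span}, \autoref{cla:V-map-Q-dom}) rely on a fixed $P\in\cQ_2$ interacting with \emph{every} $Q_j\in\cQ_1$; in the colored situation $Q_o\in\cQ_2$ is bad only for $\cQ_3$, so a priori only $\cQ_3$ is mostly of the form $Q_o+a_jb_j$, while $\cQ_1,\cQ_2$ have no such guarantee. The paper does not attempt this reduction. Instead it first proves two ``absorption'' claims (\autoref{cla:Q-and-C[V]} and \autoref{cla:Q-C[V]-a_i}) showing that once two of the three sets lie in $\spn{Q_o,\C[V]_2}$ the third does too, and then splits into cases: when all $\cQ_i^2=\emptyset$ it runs the argument you sketched (with the colored intersection lemma); when some $\cQ_i^2\neq\emptyset$ it uses a dedicated analysis (\autoref{lem:Q3-in-V}) to build the small space $V$ directly, rather than reducing to the empty case. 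Your EK-mod-$V$ endgame is correct, but the path to producing $V$ needs these colored ingredients.
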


Before starting the proof of this theorem, we describe a few important cases that will be useful throughout the proofs:

\begin{claim}\label{cla:Q-and-C[V]}
	Let $Q_o$ be an irreducible quadratic polynomial such that $\rank_s(Q)\geq \rkq$ and let $V$ be a linear space of linear forms such that $\dim(V) \leq \rkq/2$. Assume that there are two sets, $\cT_i = \cQ_i\cup \calL_i$ and $\cT_j = \cQ_j\cup \calL_j$ such that $\cT_i, \cT_j \subset \spn{Q,\C[V]_2}$ and $\cQ_i, \cQ_j \not \subset \C[V]_2$. Then the third set, $\cT_k$ satisfies $\cT_k \subset \spn{Q,\C[V]_2}$. This implies that $\dim(\cup \cT_i) = O(1)$.   
\end{claim}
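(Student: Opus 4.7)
The plan is to fix any $R \in \cT_k$ and show $R \in \spn{Q_o, \C[V]_2}$; once this is established for every such $R$, we conclude $\cup_i \cT_i \subseteq \spn{Q_o, \C[V]_2}$ and hence $\dim(\spn{\cup_i \cT_i}) \leq 1 + \dim(\C[V]_2) = O(\dim(V)^2) = O(1)$. Since $\cQ_j \not\subseteq \C[V]_2$, we can pick an irreducible $P_j \in \cQ_j$ of the form $P_j = \alpha_j Q_o + L_j$ with $\alpha_j \neq 0$ and $L_j \in \C[V]_2$. The Edelstein--Kelly hypothesis then gives $\prod_{P \in \cT_i} P \in \sqrt{\ideal{R, P_j}}$, so we can invoke \autoref{thm:structure} with $A = R$, $B = P_j$, and the multiplicands drawn from $\cT_i$. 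A preliminary remark: any $\ell^2 \in \calL_i \cup \calL_j$ satisfies $\ell \in V$, because writing $\ell^2 = \alpha Q_o + L$ with $L \in \C[V]_2$ and comparing ranks (using $\rank_s(Q_o) \geq \rkq$ versus $\rank_s(L) \leq \lceil \dim(V)/2 \rceil$) forces $\alpha = 0$, whence $\ell^2 \in \C[V]_2$ and $\ell \in V$.

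The first step is to rule out \autoref{thm:structure}\ref{case:2}: it would require $P_j \in \ideal{c, d}$, hence $\rank_s(P_j) \leq 2$, yet $\rank_s(P_j) \geq \rank_s(\alpha_j Q_o) - \rank_s(L_j) \geq \rkq - \lceil \dim(V)/2 \rceil \gg 2$. Next, in \autoref{thm:structure}\ref{case:span} some $P \in \cT_i$ lies in $\spn{R, P_j}$; the coefficient of $R$ cannot vanish (else $P$ would be a scalar multiple of $P_j$, violating pairwise linear independence across the three sets), so $R$ is a linear combination of $P$ and $P_j$, both of which lie in $\spn{Q_o, \C[V]_2}$, and we are done.

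The crux is \autoref{thm:structure}\ref{case:rk1}: we obtain $\alpha R + \beta P_j = cd$ with both $\alpha, \beta$ necessarily nonzero (else one of $R, P_j$ would be reducible, which is either outright impossible or, when $R = \ell^2$, handled by the same argument below). It suffices to prove $c, d \in V$, for then $cd \in \C[V]_2$ and $R = \alpha^{-1}(cd - \beta P_j) \in \spn{Q_o, \C[V]_2}$. The key observation is that $P_j|_{c = 0}$ has rank at least $\rkq - 1 - \lceil \dim(V)/2 \rceil \gg 1$, so it remains irreducible; thus $\ideal{c, P_j}$ is prime, and symmetrically so is $\ideal{d, P_j}$. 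Consequently $\sqrt{\ideal{R, P_j}} = \sqrt{\ideal{cd, P_j}} = \ideal{c, P_j} \cap \ideal{d, P_j}$, so some $P \in \cT_i$ lies in the prime $\ideal{c, P_j}$, giving a decomposition $P = \gamma P_j + c e$ for a linear form $e$.

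Writing $P = \gamma' Q_o + L'$ with $L' \in \C[V]_2$ and substituting yields $(\gamma' - \gamma\alpha_j) Q_o = \gamma L_j - L' + ce$. If $\gamma' - \gamma\alpha_j \neq 0$ then $Q_o$ would have rank at most $\rank_s(L_j) + \rank_s(L') + 1 \leq \dim(V) + 1 < \rkq$, contradicting $\rank_s(Q_o) \geq \rkq$. Hence $\gamma' = \gamma\alpha_j$ and $ce = L' - \gamma L_j \in \C[V]_2$. Unique factorization in $\C[\vx]$ then forces $c \in V$ (when $ce \neq 0$; the case $e = 0$ gives $P = \gamma P_j$, violating pairwise linear independence, and the case $P = \ell^2 \in \calL_i$ is covered by the preliminary remark). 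Running the parallel argument with $\ideal{d, P_j}$ gives $d \in V$, completing the proof that $R \in \spn{Q_o, \C[V]_2}$. The main obstacle is the careful bookkeeping that every rank inequality retains a margin of $\Omega(\rkq)$ after restrictions and cancellations, and that each degenerate subbranch (vanishing scalars, squares in $\calL_i$) is dispatched either by pairwise linear independence or by the high-rank hypothesis on $Q_o$.
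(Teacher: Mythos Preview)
Your proof is correct and follows essentially the same approach as the paper's own argument: pick an irreducible $P_j \in \cQ_j$ with a nonzero $Q_o$-coefficient, apply the structure theorem to the pair $(R,P_j)$, rule out case~(iii) by the high rank of $P_j$, dispatch case~(i) by span containment, and in case~(ii) use the irreducibility of $P_j|_{c=0}$ to find $P = \gamma P_j + ce$ in $\cT_i$ and conclude via a rank comparison that $ce \in \C[V]_2$, whence $c,d \in V$. Your treatment is in fact slightly more complete than the paper's: you explicitly handle $R \in \calL_k$ and the degenerate subcases ($e=0$, $P \in \calL_i$) via the preliminary remark, whereas the paper's proof as written only addresses $Q_i \in \cQ_3$.
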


\begin{proof}
	Without loss of generality assume $\cQ_1,\cQ_2 \subset \spn{Q, \C[V]}$.
	Let $Q_i \in \cQ_3$ 
	 and $P_j\in \cQ_2\setminus \C[V]_2$. As $\rank_s(Q) > \rkq$ it follows that $\rank_s(Q_i) >2$ and therefore $Q_i, P$ so not satisfy \autoref{thm:structure}\ref{case:2}. If $Q_i,P_j$ satisfy \autoref{thm:structure}\ref{case:span} then there is $T_k \in \cQ_1$ such that $Q_i \in \spn{P_j,T_k} \subseteq \spn{Q,\C[V]}$ and the statement holds. 
	 If $Q_i,P_j$ satisfies \autoref{thm:structure}\ref{case:rk1}, which means that there are linear forms $c$ and $d$ such that $Q_i = \beta P_j+cd$. Moreover we know that there is $T_k \in \cQ_1$, and a linear form $e$ such that $T_k = \alpha P_j +ce$. 
	 From  pairwise linear independence we know that $e \neq 0$. Thus $T_k -\alpha P_j = ec$, and as $T_k -\alpha P_j \in \spn{Q,\C[V]_2}$ is a polynomial of rank $1$, it follows that $T_k -\alpha P_j \in \C[V]_2$, and therefore $\MS(T_k -\alpha P_j) \subseteq V$ and thus $c,e\in V$. The same argument holds for $d$, ans thus we deduce that $c,d \in V$ and as $Q_i = \betaP_j +cd$ then $Q_i \in \spn{Q_o+ \C[V]}$ as we needed.
\end{proof}

\begin{claim}\label{cla:Q-C[V]-a_i}
	Let $Q_o$ be an irreducible quadratic polynomial such that $\rank_s(Q)\geq \rkq$ and let $V$ be a linear space of linear forms such that $\dim(V) \leq \rkq/2$. Assume that there are two sets, $\cT_i,\cT_j$ such that every polynomial $Q_t \in \cT_i\cup \cT_j$ is of the form $Q_t =\alpha_t Q_o+ Q'_t + c_t(\epsilon_t c_t+v_t)$ for $\alpha_t, \epsilon_t \in \C, Q'_t \in \C[V]_2$ and linear forms $c_t, v_t$, where $v_t \in V$. If there is  $P \in \cQ_k$ and linear forms $a,b$ such that $P = Q_o+ ab$, $\dim(\spn{a,b}) = 2$, and $\spn{a,b}\cap V = \emptyset$. Denote $V' = V + \spn{a,b}$ then $\cT_i,\cT_j \subset \spn{Q, \C[V']_2}$.
\end{claim}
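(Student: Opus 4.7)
Given any $Q_t \in \cT_i \cup \cT_j$, say $Q_t \in \cT_i$, I want to show $Q_t \in \spn{Q_o, \C[V']_2}$. Since $Q'_t \in \C[V]_2 \subseteq \C[V']_2$ and $v_t \in V \subseteq V'$, it suffices to place $c_t$ in $V'$, in which case $c_t(\epsilon_t c_t + v_t)$ will lie in $\C[V']_2$. If $c_t \in V$ there is nothing to prove, so assume $c_t \notin V$. The plan is to apply \autoref{thm:structure} to the pair $(P, Q_t)$, using the colored hypothesis on $\cT_j$, and to exploit the very large rank of $Q_o$ together with the smallness of $\dim(V)$ in order to force linear relations among coefficients.

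\textbf{Ruling out case \ref{case:2} and matching the $Q_o$ coefficient.} If $P \in \ideal{c,d}$ for linear forms $c,d$, then $Q_o = P - ab \in \ideal{c,d} + \spn{ab}$ would give $\rank_s(Q_o) \leq 3$, contradicting $\rank_s(Q_o) \geq \rkq$. Hence case \ref{case:2} is impossible. In both remaining cases one writes a linear combination $\alpha P + \beta Q_t$ equal either to a product $c'd'$ (case \ref{case:rk1}) or to some $R = \alpha_R Q_o + Q'_R + c_R(\epsilon_R c_R + v_R) \in \cT_j$ (case \ref{case:span}). Isolating the $Q_o$ term on one side and noting that the remaining right-hand side has rank at most $\dim(V) + O(1) \leq \rkq/2 + O(1) < \rkq$, the high rank of $Q_o$ forces its coefficient to vanish, yielding a clean identity involving only the ``small-rank'' parts of $P$, $Q_t$, and possibly $R$.

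\textbf{Case \ref{case:rk1}.} Here the identity becomes $c'd' = \alpha ab + \beta Q'_t + \beta c_t(\epsilon_t c_t + v_t)$ with $\beta \neq 0$ (else $P \sim c'd'$ is reducible while $P = Q_o + ab$ has rank $\geq \rkq - 2$). Reducing modulo $\ideal{V}$ (using $v_t, Q'_t \in \C[V]$) gives $c'd' \equiv \alpha ab + \beta \epsilon_t c_t^2 \pmod{V}$, a rank-$\leq 1$ congruence in $\C[\vx]/\ideal{V}$. Since $\spn{a,b} \cap V = \{0\}$, the images $\bar a, \bar b$ stay linearly independent, and a case split on $\alpha$ and $\epsilon_t$ using unique factorization in the quotient ring (\autoref{fact:ufd}) forces $\bar c_t \in \spn{\bar a, \bar b}$, i.e., $c_t \in V + \spn{a,b} = V'$.

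\textbf{Case \ref{case:span}.} After matching $Q_o$-coefficients we obtain
\[
c_R(\epsilon_R c_R + v_R) - \beta c_t(\epsilon_t c_t + v_t) \;=\; \beta Q'_t - Q'_R + \alpha ab \;\in\; \C[V']_2.
\]
Reducing modulo $\ideal{V}$ kills $v_R$ and $v_t$ and leaves $\epsilon_R \bar c_R^{\,2} - \beta \epsilon_t \bar c_t^{\,2} \in \spn{\bar a \bar b}$ in $\C[\vx]/\ideal{V}$. When both $\epsilon_R, \epsilon_t \neq 0$, factoring as a difference of squares inside the UFD quotient and applying \autoref{fact:ufd} immediately yields $\bar c_R, \bar c_t \in \spn{\bar a, \bar b}$, hence $c_t \in V'$. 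The main obstacle is the degenerate subcase $\epsilon_R = \epsilon_t = 0$ (and the intermediate one where exactly one vanishes), where the relation becomes $c_R v_R - \beta c_t v_t \in \spn{ab} + \C[V]_2$. Here I plan to use \autoref{cla:rank-2-in-V} together with $\spn{a,b} \cap V = \{0\}$ and pairwise linear independence to show that either the relation is trivial (handled by replacing $Q_t$ with another neighbor via the colored hypothesis) or it again forces $c_t \in V'$; if necessary I can apply the structure theorem a second time to $(R, Q_t)$ or to $(P, R)$ to break the symmetry. This case analysis is the technical heart of the claim, but in every subcase the combination of the rank bound on $Q_o$, the transversality $\spn{a,b} \cap V = \{0\}$, and unique factorization modulo $V$ pins $c_t$ down to $V'$.
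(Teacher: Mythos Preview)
Your overall strategy matches the paper's proof: apply \autoref{thm:structure} to the pair $(P,Q_t)$, rule out case~\ref{case:2} by rank, match the $Q_o$-coefficient by rank, and reduce modulo $V$ using that $\bar a,\bar b$ remain linearly independent. Two points deserve correction.

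\textbf{The ``main obstacle'' in case~\ref{case:span} is vacuous.} In your equation $\epsilon_R\bar c_R^{\,2}-\beta\epsilon_t\bar c_t^{\,2}=\alpha\,\bar a\bar b$ (mod $V$) you have $\alpha\neq 0$ (else $R\sim Q_t$) and $\bar a\bar b\neq 0$ is a product of two \emph{independent} forms, hence not a square. If $\epsilon_R=\epsilon_t=0$ the left side is $0$, contradiction; if exactly one vanishes the left side is a square while the right is not, again a contradiction. So $\epsilon_R,\epsilon_t\neq 0$ automatically, and your difference-of-squares factorisation finishes the case cleanly. The paper states exactly this (``As $\dim(\spn{a,b})=2$ and $\spn{a,b}\cap V=\emptyset$ it holds that $\epsilon_i,\epsilon_j\neq 0$''); no second application of the structure theorem or ``replacing $Q_t$ by a neighbour'' is needed.

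\textbf{A genuine gap: $Q_t\in\calL$.} In your case~\ref{case:rk1} you argue $\beta\neq 0$, but you do not control $\alpha$. If $Q_t=c_t^2$ is a square then the \emph{only} reducible combination is $0\cdot P+Q_t$ (any combination with $\alpha\neq 0$ would force $\rank_s(P)\le 2$), so your mod-$V$ identity degenerates to $c'd'\equiv_V\beta\bar c_t^{\,2}$, which says nothing about $c_t$ relative to $a,b$. Here the paper goes beyond the bare case split of \autoref{thm:structure}: since $\rank_s(Q_o)\ge\rkq$, $P$ stays irreducible modulo $c_t$, so from $\prod_k Q_k\in\sqrt{\ideal{P,c_t^2}}$ one extracts a polynomial $T$ in the remaining structured set with $T=\gamma P+c_t e$ and $\gamma\neq 0$. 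Writing $T=\alpha_T Q_o+T'+c_T(\epsilon_Tc_T+v_T)$ and matching $Q_o$-coefficients gives an equation of the same shape as your case~\ref{case:span}, and the same mod-$V$ argument (now with $c_t$ playing one of the roles) yields $c_t\in V'$. You should add this step; the ``case split on $\alpha$'' you allude to does not cover it.
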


\begin{proof}
Again without loss of generality assume $i=1,j=2,k=3$. 
\autoref{obs:case3} implies that $P$ does not satisfy \autoref{thm:structure}\ref{case:2} with any other polynomial.

Let $Q_i \in \cT_1 \cup \cT_2$. If $c_i \in V$ then the statement holds, thus, from now on we assume $c_i\notin V$.  If $Q_i,P$ satisfy \autoref{thm:structure}\ref{case:span} then there is $T_j$ in the third set such that $Q_i \in \spn{P, T_j}$. Plugging in the structure of the polynomials, we have that there $\beta,\gamma \in \C^\times$ such that:
\[
\alpha_i Q_o+ Q'_i + c_i(\epsilon_ic_i + v_i) = \beta (Q+cd) + \gamma (\alpha_j Q_o+ P'_j + c_j(\epsilon_j c_j + v_j)) 
\]

From the assumption that $\rank_s(Q)> \rkq$ we obtain that $\alpha_i -\beta-\gamma\alpha_j = 0$ and
\[ Q'_i  -\gamma  P'_j +c_i(\epsilon_ic_i + v_i) = \beta cd + \gamma c_j(\epsilon_j c_j + v_j)\]

Looking at the equation modulo $V$ we obtain that 
\[\beta cd \equiv_{V}\epsilon_ic_i^2 -\gamma \epsilon_j c_j^2 \]

As $\dim(\spn{c,d})=2$ and $\spn{c,d}\cap V = \emptyset$ it holds that $\epsilon_i,\epsilon_j\neq 0$ and without loss of generality $c \equiv_{V} c_i +c_j$, $d \equiv_{V} c_i -c_j$. It follows that $c_i,c_j \in \spn{V,c,d}$.

 If $Q_i,P$ satisfy \autoref{thm:structure}\ref{case:rk1} then there are $\gamma,\beta\in \C$ and linear forms $e$ and $f$, such that $\gamma Q_i + \beta P = ef$. In this case, we will split the proof whether $\beta =0$ or $\beta \neq 0$.
 
 \begin{itemize}
 	\item $Q_i \in \cQ_1 \cup \cQ_2$. In this case $\gamma, \beta \neq 0$. Therefore, 
 	\[\beta (Q+ab)+\gamma(\alpha_i Q_o+ Q'_i+ c_i(\epsilon_ic_i + v_i)) = ef\]
 	\[-(\alpha_i\gamma + \beta) Q_o=\gamma  Q'_i + \gamma c_i(\epsilon_ic_i + v_i)) -ef + \beta ab \]
 	
 	This implies $-(\alpha_i\gamma + \beta) = 0$ and 
 	
 	\[\gamma  Q'_i  + \beta ab =ef -\gamma c_i(\epsilon_ic_i + v_i)) \]
 	
 	\autoref{cla:ind-rank} implies that if $Q_i \neq 0$ then  $\rank_s(\gamma  Q'_i  + \beta ab) \geq 2$ and therefore $c_i \in \MS(\gamma  Q'_i  + \beta ab) \subseteq \spn{V,a,b}$. If $Q_i =0$ and $c_i \notin \spn{V,a,b}$ then $\rank_s(\beta ab + \gamma c_i(\epsilon_ic_i + v_i))) \geq 2$ in contradiction. In both cases we conclude $c_i \in \spn{V,a,b}$.
 	
 	\item  $Q_i \in \calL_1 \cup \calL_2$. In this case $ \beta = 0$. Similarly to the proof of \autoref{thm:main-sg-intro} there is $T_j$ in the third set, and a linear form $d$, such that $T_j = \beta P + c_id$
 	
 		\[\alpha_i Q_o+ T'_j+ c_j(\epsilon_jc_j + v_j)) = \beta (Q+ab)+ c_id\]
 		
 		this is a similar structure to the previous item, and therefore we can deduce $c_i \in \spn{a,b,V}.$
 \end{itemize} 
It follows that for every $Q_i \in \cT_1\cup\cT_2$, $c_i \in \spn{V,a,b}$. And thus by denoting $V' = \spn{V,a,b}$, it holds that $\cT_1\cup\cT_2 \subset \spn{Q,\C[V']_2}.$
 \end{proof}

\begin{remark}\label{cla:small-dim-ab}
	\autoref{cla:Q-and-C[V]} implies that in the conditions of \autoref{cla:Q-C[V]-a_i} $\dim(\cup T_i) = O(1)$.
\end{remark}

\subsubsection{For every $i \in [3]$, $\cQ_i^2 = \emptyset$.}

This case is very similar to the proof in \autoref{sec:m2=0-hr}. Set $V=\emptyset$, if $\cup T_i$ satisfies the conditions of \autoref{cla:Q-C[V]-a_i} then the theorem holds. Thus we can assume without loss of generality that there is $A_i \in \cQ_1^2, B_j \in \cQ_1^2$ such that $A_i = Q_o+ a_ib_i$, $B_j = Q_o+ a_jb_j$ and $\dim(a_i,b_i)= \dim(a_j,b_j) = 2$. 
We can use \autoref{cla:colored-linear-spaces-intersaction} to obtain that there is a linear space of linear forms, $V$, such that $\dim(V) \leq 4$ and every polynomial in $\cQ_i$ is of the form $Q_j = Q_o+ a_j(\epsilon_j a_j + v_j)$ for linear forms $a_j$ and $v_j$, where $v_j \in V$. Using a similar argument to \autoref{cla:stisfy-sg-lines-hr}, we can deduce that the sets $S_i = \lbrace a \mid \exists P\in \cQ_i \text{ such that } P = Q_o+ a(\epsilon a + v)\rbrace \cup \calL_i$ satisfy the conditions of  \hyperref[thm:EK]{Edelstein-Kelly theorem} mod $V$,and conclude that $\dim(\cup T_i) = O(1)$.

\subsubsection{For some $i\in [3]$, $\cQ_i^2 \neq \emptyset$.}

In this subsection we prove \autoref{cal:Q-dom-hr-col}. Thecase where for every $i\in [3]$, , $\cQ_i^2 = \emptyset$ was proved in the previous section, and thus we can assume that for some $i\in [3]$, $\cQ_i^2 \neq \emptyset$.
 
\begin{lemma}\label{lem:Q3-in-V}
	If $\cQ^2_2\neq \emptyset$
	then there is a linear space of linear forms, where  $\dim(V)=O(1)$ and $1-10\delta$ of the polynomials  $C_i\in \cQ^1_3$ satisfy $a_i,b_i\in V$.  
\end{lemma}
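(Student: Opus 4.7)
The plan is to fix any $P\in\cQ_2^2$ and, invoking the coloured analog of Claim~\ref{cla:no-rank-3}, write $P=\gamma Q_o+L$ with $\rank_s(L)=2$. Set $V\eqdef\MS(L)$, so $\dim V\le 4$. I will argue that any $C_i=Q_o+a_ib_i\in\cQ_3^1$ fails to satisfy $\{a_i,b_i\}\subseteq V$ only in a very restricted sub-case of Theorem~\ref{thm:structure}\ref{case:span}, and that this sub-case occurs for at most $|\cQ_1^2|$ distinct indices $i$, which will in turn be $\le 10\delta\,m_3$ by the ``bad polynomial'' hypothesis on $Q_o$.

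For each $C_i\in\cQ_3^1$ apply the structure theorem to $(P,C_i)$; the polynomial guaranteed by Case~\ref{case:span} lies in $\cT_1$. Case~\ref{case:2} is ruled out by Observation~\ref{obs:case3} since $\rank_s(C_i)>2$. If $(P,C_i)$ satisfy Case~\ref{case:rk1}, say $\alpha P+\beta C_i=ef$, comparing $Q_o$-coefficients forces $\alpha\gamma+\beta=0$ and hence $\alpha L+\beta a_ib_i=ef$ with $\alpha\ne 0$; both sides are minimal $\rank_s{=}2$ representations of the same quadratic, so Claim~\ref{cla:irr-quad-span} gives $V=\spn{a_i,b_i,e,f}$ and in particular $a_i,b_i\in V$. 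If Case~\ref{case:span} holds via $Q_k\in\calL_1$, then $\alpha P+\beta C_i$ is a product of two linear forms and the preceding argument applies. If $Q_k=Q_o+a_kb_k\in\cQ_1^1$, then the same rank argument gives $\alpha L+\beta a_ib_i=a_kb_k$ with $\alpha\neq 0$ (else $C_i\sim Q_k$), again a minimal rank-$2$ representation of $\alpha L$, so $\MS(L)=\spn{a_i,b_i,a_k,b_k}$ and $a_i,b_i\in V$.

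The only remaining possibility is that $(P,C_i)$ satisfies Case~\ref{case:span} (but not Case~\ref{case:rk1}) with the spanned polynomial $Q_k$ lying in $\cQ_1^2$; call such a $C_i$ \emph{bad}. Suppose two bad $C_i,C_j$ span the \emph{same} $Q_k\in\cQ_1^2$ with $P$, i.e.\ $Q_k=\alpha P+\beta C_i=\alpha' P+\beta' C_j$. If $\alpha=\alpha'$ then $\beta C_i=\beta'C_j$ contradicts pairwise linear independence, so $\alpha\ne\alpha'$ and therefore $P\in\spn{C_i,C_j}$. Substituting $C_i=Q_o+a_ib_i$, $C_j=Q_o+a_jb_j$ and matching the $Q_o$-coefficient forces $L=\mu a_ib_i+\nu a_jb_j$ with $\mu,\nu\ne 0$; as this is a minimal rank-$2$ representation, $\{a_i,b_i,a_j,b_j\}\subseteq\MS(L)=V$, contradicting badness. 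Hence distinct bad indices yield distinct members of $\cQ_1^2$, so the number of bad $C_i$'s is at most $|\cQ_1^2|$.

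To finish I use the hypothesis that $Q_o$ is a bad polynomial. Running the same Case (ii)/(iii) dichotomy (with $Q_o$ itself playing the role of $P$) shows that any polynomial in $\cQ_1$ that satisfies Theorem~\ref{thm:structure}\ref{case:rk1} with $Q_o$ lies in $\cQ_1^1$; by the bad-polynomial assumption only $O(\delta)\,m_1$ polynomials of $\cQ_1$ fall outside this case, so $|\cQ_1^2|\le O(\delta)\,m_1\le O(\delta)\,m_3\le 10\delta\,m_3$, which gives the claimed bound with the $\dim V\le 4$ subspace constructed above. The main obstacle is this last step: pinning down the quantitative bound on $|\cQ_1^2|$ requires carefully chasing the comparison between $m_1$ and $m_3$ encoded in the definition of ``bad polynomial'' and verifying that the rank-$2$ analysis of Claim~\ref{cla:no-rank-3} transfers correctly to the coloured setting; everything else is a direct adaptation of the high-rank arguments from Section~\ref{sec:q-dom}.
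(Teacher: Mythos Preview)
Your structural outline (fix $P\in\cQ_2^2$, set $V=\MS(L)$, run the case analysis on each pair $(P,C_i)$, and show the only escape route is that $P,C_i$ span some $Q_k\in\cQ_1^2$, with distinct $C_i$ giving distinct $Q_k$) is exactly the skeleton the paper uses. The gap is in your final step, where you try to close the argument by bounding $|\cQ_1^2|$.

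Your claimed bound $|\cQ_1^2|\le O(\delta)\,m_1$ does not follow from the ``bad polynomial'' hypothesis. Recall the definition: $Q_o$ being bad for $\cQ_1$ bounds the number of polynomials in $\cQ_1$ satisfying \autoref{thm:structure}\ref{case:2} with $Q_o$, and bounds the \autoref{thm:structure}\ref{case:span} count \emph{only when $m_1>m_3$}. But the ambient setup explicitly assumes $m_3\ge m_1$, so the Case~\ref{case:span} bound for $\cQ_1$ is unavailable; and since $\rank_s(Q_o)>\rkq$, the Case~\ref{case:2} bound is vacuous. Nothing in the hypotheses prevents $|\cQ_1^2|$ from being as large as $m_1$. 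This is precisely the ``main obstacle'' you flag, and your proposed resolution of it does not work.

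The paper closes this gap by iterating your own argument once more rather than bounding $|\cQ_1^2|$. For each of the polynomials $A'_i\in\cQ_1^2$ produced by a bad $C_i$, it applies the structure theorem to the pair $(A'_i,Q_o)$: Case~\ref{case:rk1} is excluded by the very definition of $\cQ_1^2$, Case~\ref{case:2} is excluded by rank, so they must span some $Q'_i\in\cT_3$. One then checks $Q'_i\in\cQ_3^2$ and that distinct bad $C_i$ (more precisely, those with $a_j\notin\spn{a_i,V}$) yield distinct $Q'_i$. The contradiction now comes from $|\cQ_3^2|\le 2\delta m_3$, which \emph{is} controlled by the hypothesis that $Q_o$ is bad for $\cQ_3$.

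A secondary issue: you invoke a ``coloured analog of Claim~\ref{cla:no-rank-3}'' to get $\rank_s(L)=2$, but the paper neither states nor proves such an analog (the non-coloured proof relies on $m_1>5m_2+2$, which has no direct counterpart here). The paper instead splits into two cases according to whether every combination $\alpha P+\beta Q_o$ has $\rank_s>3$ or not, and runs a separate argument with a different choice of $V$ in the high-rank case. Your analysis covers only the low-rank branch.
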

\begin{proof}
	
	Consider $P\in \cQ^2_2$, then $P = \gamma Q_o+ L$, where $\rank_s(L) \geq 2$. 
	We will split the proof into two cases:
	
	\textbf{Case 1:} For every $\alpha, \beta \in \C$, $\rank_s(\alpha P + \beta Q)> 3$.

	Consider $V = \spn{a_1,b_1}$. If $1-10 \delta m_3$  polynomials in $\cQ^1_3$ satisfy that $a_i,b_i\in V$ then we are done. If not there are at least $6 \delta m_3$  polynomials $Q_i \in \cQ^1_3$ such that with out loss of generality, $a_i \notin V$. Let $C_i \in \cQ^1_3$ such that $a_i \notin V$. 
	
	\begin{itemize}
		\item  If $C_i, P$ satisfy $\autoref{thm:structure}\ref{case:rk1}$ then there are linear forms, $c,d$ such that $\alpha C_i + \beta P = cd$ and therefore $\alpha Q_o+ \beta P = cd - \alpha a_i(\epsilon_ia_i + v_i)$. This is a contradiction as every linear combination of $P$ and $Q_o$ is of rank greater than $2$.
		
		\item $C_i, P$ satisfy $\autoref{thm:structure}\ref{case:2}$
		
		It follows that $\rank(P) = 2$ and thus $P = 0\cdot Q_o+ P$ and thus $L=P$ and $\rank_s(L) = 2$ in contradiction to our assumption.

	\end{itemize}
	It follows that  $C_i, P$ must satisfy $\autoref{thm:structure}\ref{case:span}$. Then there is a polynomial  $A' \in \cQ_1$ such that \[A' = \alpha Q_i + \beta P = (\alpha + \beta \gamma)Q_o+ \alpha a_i(\epsilon_ia_i + v_i) + \beta L,\]
	
	If $A' \in \cQ^1_1\cup \calL_1$ then there are linear forms such that $A' = \alpha'Q+ cd$. Applying it to the equation we obtain
	\[(\alpha'-\alpha)Q_o-\beta P = \alpha a_i(\epsilon_ia_i + v_i) + cd\]
	This is a contradiction as every linear combination of $P$ and $Q_o$ is of rank greater than $2$. It follows that $P'_i \in Q^1_2$. We will next show that for $Q_j \in \cQ^1_3$ such that $a_j \notin \spn{a_i,V}$,
	it cannot be the case that $P'_i\in \spn{P,Q_j}$. Assume towards a contradiction that $P'_i= \beta_i P + \alpha_i Q_i = \beta_j P + \alpha_ jQ_j$, and thus either $P \in \spn{Q_i,Q_j}$ which contradicts the assumption that $\rank_s(L) > 2$ or $Q_i \sim Q_j$ in contradiction to the fact that $a_j \notin \spn{V,a_i}$. By a similar argument it  holds that $P'_i,Q$ can not satisfy \autoref{thm:structure}\ref{case:2} or \autoref{thm:structure}\ref{case:rk1}. Thus there must by $Q'_i \in \cQ_3$ such that $Q'_i \in \spn{P'_i,Q}$. It follows that $Q'_i \in \cQ_3^2$ and that if $a_j \notin \spn{a_i,V}$ then $Q'_j \neq Q'_i$,  but then $|\cQ_3^2| > 6\delta _3$ in contradiction.

	\textbf{Case 2:} $\rank_s(L)\leq 3$
	
	$V = V + \MS(L)$. If $1-10 \delta m_3$  polynomials in $\cQ^1_3$ satisfy that $a_i\in V$ then we are done. If not there are at least $6 \delta m_3$  polynomials $Q_i \in \cQ^1_3$ such that $a_i \notin V$. Consider the possible cases of \autoref{thm:structure} that such $Q_i$ and $P$ satisfy.
	\begin{itemize}

		\item  $Q_i, P$ satisfy $\autoref{thm:structure}\ref{case:rk1}$ then there are linear forms, $c,d$ such that $\alpha Q_i + \beta P = cd$ and therefore $\alpha Q_o+ \beta P = cd - \alpha a_i(\epsilon_ia_i + v_i)$. As every linear combination of $P$ and $Q_o$ is of rank greater than $1$, (as if $rank(Q)\leq \rkq$ then $\MS(Q) \subseteq V$), we obtain that $c,d,a_i,v_i$ in $V$, in contradiction.
		
		\item $Q_i, P$ satisfy 
		\autoref{thm:structure}\ref{case:2}, then there are two linear forms $c$ and $d$ such that  $Q_j,P\in \sqrt{\ideal{c,d}}$, this implies that $\lbrace c,d \rbrace \subset\MS(P) \subseteq V$. 
		If $Q=Q_i-a_i (\epsilon_ia_i + v_i)$ is not zero modulo $c$ and $d$, then we obtain that $Q_o\equiv_{c,d} -a_i (\epsilon_ia_i + v_i)$. Thus, there are linear forms $v_1,v_2\in\MS(Q)$ such that $a_i \equiv_{c,d} v_1$ and $\epsilon_ia_i + v_i \equiv_{c,d} v_2$. In particular, as $\MS(Q)\cup\{c,d\}\subset V$ it follows that $a_i\in V$, in contradiction.
		If $Q_o$ is zero modulo $c$ and $d$, then $Q_{j},Q$ satisfy \autoref{thm:structure}\ref{case:2} and by our assumption that $Q_o$ is bad for $\cQ_3$ it follows that there are atmost $\delta$ such $Q_i$.

	\end{itemize} 
	
	It follows that $Q_i, P$ must satisfy $\autoref{thm:structure}\ref{case:span}$. Then there is a polynomial  $P' \in \cQ_1$ such that \[P' = \alpha Q_i + \beta P = (\alpha + \beta \gamma)Q_o+ \alpha a_i(\epsilon_ia_i + v_i) + \beta L,\]
	
	If $P' \in \cQ^1_1$ then there are linear forms such that $P' = Q+ cd$. Applying it to the equation we obtain
	\[(1-\alpha) -\beta P = \alpha a_i(\epsilon_ia_i + v_i) + cd\]
	As every linear combination of $P$ and $Q_o$ is of rank greater than $1$, (as if $rank(Q)\leq \rkq$ then $\MS(Q) \subseteq V$), we obtain that $c,d,a_i,v_i$ in $V$, in contradiction. Therefore $P' \in \cQ^1_2$. We will next show that for $Q_j \in \cQ^1_3$ such that $a_j \notin \spn{a_i,V}$,
	it cannot be the case that $P'\in \spn{P,Q_j}$. Assume towards a contradiction that $P' = \beta_i P + \alpha_i Q_i = \beta_j P + \alpha_ jQ_j$, and thus either $P \in \spn{Q_i,Q_j}$ which contradicts the assumption that $a_i,a_j \notin V$ or $Q_i \sim Q_j$ in contradiction to the fact that $a_j \notin \spn{V,a_i}$.
	
	So far, we have $6\delta m_3$ polynomials $P_i \cQ^1_2$, $P_i \in \spn{P,Q_i}$. Consider the different cases of \autoref{thm:structure} that $Q, P_i$ satisfy:
	\begin{itemize}
		\item $P_i, Q$ satisfy $\autoref{thm:structure}\ref{case:rk1}$
		It follows that $P_i \in \cQ^1_1$, in contradiction.
		
		\item $P_i, Q$ satisfy $\autoref{thm:structure}\ref{case:2}$
		This means that there are linear forms, $c,d$ such that $P_i, Q_o\in \ideal{c,d}$
		In this case, as $\rank_s(Q) \leq \rkq$, we have that $\MS(Q) \in V$. Therefore $c,d \in V$
		As $P_i = \alpha Q_i + \beta P = (\alpha + \beta \gamma)Q_o+ \alpha a_i(\epsilon_i a_i + v_i) + \beta L$ it follows that $P_i -(\alpha + \beta \gamma)Q_o= \alpha a_i(\epsilon_i a_i + v_i) + \beta L \in \ideal{c,d}$, and therefore $\rank_s(L) \leq 3$ and $\MS(L) \in V$ (TODO: it might be the case that L is of rank 3, need to change cases), Thus as before $\epsilon_i = 0$. Moreover if $L \not\equiv_{c,d} 0$ then $a_i \in V$, in contradiction, or $L \equiv_{c,d} 0$ and then $Q_i,Q$ satisfy \autoref{thm:structure} in contradiction.
		
		\item $P_i, Q$ satisfy $\autoref{thm:structure}\ref{case:span}$. In this case they must span a polynomial in $\cQ^2_3$ as otherwise $P_i \in \cQ^1_1$. Thus there is a polynomial $R_i \in \cQ^2_3$ such that $R_i \in \spn{P_i,Q} = \spn{P, a_i(\epsilon_i a_i + v_i)}$. Let $Q_j \in \cQ^1_3$ such that $a_j \notin \spn{V,a_i}$ and consider the corresponding $P_j$. If $R_i \in \spn{Q, P_j} = \spn{P, a_i(\epsilon_j a_j + v_j)}$ which means that $R_i \in \spn{P, a_i(\epsilon_i a_i + v_i)} \cap \spn{P, a_j(\epsilon_j a_j + v_j)} = \spn{P}$ in contradiction to pairwise linearly independence. Thus 
		\[|Q^3_2| \geq \#\{P_i\}=\#\{Q_i\mid a_i\notin V\}\geq 6\delta m_3,\] in contradiction to our assumption that $|Q^3_2| \leq 2\delta m_3$.
		
		It follows that at least $1-10 \delta m_3$ of the polynomials in $\cQ_3$ are of the form $Q+\C[V]$, as we wanted to show.
	\end{itemize} 
\end{proof}

\newpage

\subsection{Long story Short}
First we will describe a few important cases that will be useful throughout the proofs:

\begin{claim}\label{cla:Q-and-C[V]}
	Let $Q_o$ be an irreducible quadratic polynomial such that $\rank_s(Q)\geq \rkq$ and let $V$ be a linear space of linear forms such that $\dim(V) \leq \rkq/2$. Assume that there are two sets, $\cQ_i,\cQ_j$ such that $\cQ_i, \cQ_j \subset \spn{Q,\C[V]_2}$ and $\cQ_i, \cQ_j \not \subset \C[V]_2$. Then the third set, $\cQ_k$ satisfies $\cQ_k \subset \spn{Q,\C[V]_2}$.  
\end{claim}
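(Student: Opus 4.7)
The plan is as follows. Fix any $R\in\cQ_k$; the goal is to show $R\in\spn{Q_o,\C[V]_2}$. First I would exploit the hypothesis that some $P\in\cQ_j$ lies outside $\C[V]_2$. Writing $P=\beta Q_o+L_P$ with $L_P\in\C[V]_2$, the irreducibility of $P$ together with $P\notin\C[V]_2$ forces $\beta\neq 0$, so by \autoref{cla:rank-mod-space}
\[
\rank_s(P)\;\geq\;\rank_s(P|_{V=0})\;=\;\rank_s(\beta Q_o|_{V=0})\;\geq\;\rkq-\dim(V)\;\geq\;\rkq/2\;\gg\;2.
\]
Thus $P$ has very high rank; in particular $P|_{c=0}$ remains irreducible for any single linear form $c$.

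Next I would apply \autoref{thm:structure} to the pair $(R,P)$, with the colored Edelstein-Kelly hypothesis supplying a product $\prod_{T\in\cQ_i\cup\calL_i}T\in\sqrt{\ideal{R,P}}$. The rank lower bound on $P$ immediately rules out case~\ref{case:2} (which would force $P\in\ideal{c,d}$, hence $\rank_s(P)\leq 2$). In case~\ref{case:span} some $T\in\cQ_i\cup\calL_i$ lies in $\spn{R,P}$; since both $T$ and $P$ already belong to $\spn{Q_o,\C[V]_2}$ by hypothesis, we immediately get $R\in\spn{T,P}\subseteq\spn{Q_o,\C[V]_2}$.

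The main case, and the main obstacle, is case~\ref{case:rk1}: here some nontrivial combination $R-\gamma P=cd$ factors into two linear forms, so $\sqrt{\ideal{R,P}}=\sqrt{\ideal{cd,P}}$. Because $P|_{c=0}$ is irreducible, the prime-factorisation argument used in the proof of \autoref{thm:structure} shows that some multiplicand $T_o$ of the product must satisfy $T_o=\alpha P+c\,e$ for a scalar $\alpha$ and a linear form $e\neq 0$ (by pairwise linear independence). The key observation is that $c\,e=T_o-\alpha P$ now lies in $\spn{Q_o,\C[V]_2}$ and has rank at most $1$. Writing $c\,e=\mu Q_o+L'$ with $L'\in\C[V]_2$, the bound $\rank_s(Q_o)\geq\rkq\gg\dim(V)+1$ forces $\mu=0$, so $c\,e\in\C[V]_2$, and unique factorisation inside the subring of polynomials defined over $V$ forces $c,e\in V$. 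Running the symmetric argument — swapping the roles of $c$ and $d$, and hence producing a possibly different multiplicand $T_o'=\alpha'P+d\,f$ — yields $d\in V$ as well. Therefore $cd\in\C[V]_2$, and $R=\gamma P+cd\in\spn{Q_o,\C[V]_2}$, as required.

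The hard part will be case~\ref{case:rk1}, specifically the dual extraction of both $c\in V$ and $d\in V$ from the single product relation $\prod_{T}T\in\sqrt{\ideal{cd,P}}$. One must either argue by symmetry (repeating the analysis after exchanging $c$ and $d$) or identify two distinct multiplicands in the product, one certifying $c\in V$ and another certifying $d\in V$; getting this bookkeeping right, together with the high-rank input that keeps $P$ irreducible modulo a single linear form, is the crux of the argument.
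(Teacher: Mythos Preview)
Your proposal is correct and follows essentially the same route as the paper's proof: pick $P\in\cQ_j\setminus\C[V]_2$, use the rank bound on $P$ to rule out case~\ref{case:2}, dispatch case~\ref{case:span} by the span containment, and in case~\ref{case:rk1} write $R=\gamma P+cd$, extract a multiplicand $T_o=\alpha P+ce$ from the irreducibility of $P|_{c=0}$, deduce $ce=T_o-\alpha P\in\C[V]_2$ from the rank of $Q_o$, hence $c\in V$, and then repeat the argument with $c$ and $d$ interchanged to get $d\in V$. The symmetry step you flag as the ``hard part'' is exactly what the paper does as well (it simply writes ``the same argument holds for $d$''), so your bookkeeping concern is well placed but the resolution you propose is the correct one.
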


\begin{claim}\label{cla:C[V]}
	Let $V$ be a linear space of linear forms. Assume that there are two sets, $\cQ_i,\cQ_j$ such that $\cQ_i, \cQ_j \subset \C[V]_2$. Then the third set, $\cQ_k$ satisfies $\cQ_k \subset \C[V]_2$. 
\end{claim}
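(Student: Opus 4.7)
The plan is to fix an arbitrary $T_k \in \cQ_k$ and show $T_k \in \C[V]_2$. For any $P_1 \in \cQ_i$, the colored condition gives $\prod_{T \in \cT_j} T \in \sqrt{\ideal{T_k, P_1}}$, so one of the three cases of \autoref{thm:structure} must hold. In case (i) some $T \in \cT_j$ lies in $\spn{T_k, P_1}$; writing $T = \alpha T_k + \beta P_1$, pairwise linear independence across the two distinct sets forces $\alpha \neq 0$, and since $T, P_1 \in \C[V]_2$ we immediately obtain $T_k = (T - \beta P_1)/\alpha \in \C[V]_2$, finishing the proof in this branch. It therefore remains to handle cases (ii) and (iii) of \autoref{thm:structure}.

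For case (ii), we have $\alpha T_k + \beta P_1 = cd$ for some linear forms $c,d$; irreducibility of $P_1$ rules out $\alpha = 0$ (else $\beta P_1 = cd$ would factor $P_1$), so $T_k = (cd - \beta P_1)/\alpha$ and it suffices to prove $c, d \in V$. My plan is to exploit the freedom to vary $P_1$: applying the same case analysis to a second $P_1' \in \cQ_i$ yields $\alpha' T_k + \beta' P_1' = c'd'$, and eliminating $T_k$ between the two identities produces $\alpha \beta' P_1' - \alpha' \beta P_1 = \alpha c'd' - \alpha' cd$, whose left-hand side lies in $\C[V]_2$. Applying \autoref{cla:rank-2-in-V} to this identity then forces $\spn{c,d}\cap V \neq \{\vec{0}\}$, and iterating with a third $P_1'' \in \cQ_i$ together with \autoref{cla:irr-quad-span} and \autoref{cla:ind-rank} pins down $c, d \in V$, hence $T_k \in \C[V]_2$. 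For case (iii), $T_k, P_1$ and some $T \in \cT_j$ all lie in $\ideal{c, d}$; since $P_1$ and $T$ are irreducible quadratics in $\C[V]_2$, the decompositions $P_1 = c\ell_1 + d m_1$ and $T = c \ell_2 + d m_2$ together with \autoref{cla:rank-2-in-V} allow us to replace $c,d$ by linear forms in $V$, placing $T_k \in \ideal{V}$; a further pairing argument, now applied to pairs from $\cQ_j$, then upgrades this to $T_k \in \C[V]_2$.

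The main obstacle is case (ii): the structure theorem by itself does not locate $c$ and $d$, so the argument hinges on having enough independent choices in $\cQ_i$ (and dually in $\cQ_j$) to extract rigidity by eliminating $T_k$ across several relations. When both $\cQ_i$ and $\cQ_j$ are too small for this counting to close, the plan is to fall back on the complementary regime of \autoref{cla:Q-and-C[V]}, in which a high-rank polynomial plays the role of $Q_o$ and the analysis of \autoref{sec:q-dom} applies. Once $T_k \in \C[V]_2$ is established for every $T_k \in \cQ_k$, the claim follows.
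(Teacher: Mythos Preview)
Your handling of case~(ii) is where the proposal diverges from the paper and where it has a genuine gap. The paper does \emph{not} vary $P_1$ and eliminate $T_k$ across several relations. Instead, from the single identity $\alpha T_k+\beta P_1=cd$ it argues as follows: suppose $c\notin V$. Since $P_1\in\C[V]_2$, the polynomial $P_1$ does not involve $c$ at all, hence $P_1|_{c=0}=P_1$ is still irreducible. Now $\sqrt{\ideal{T_k,P_1}}=\sqrt{\ideal{cd,P_1}}$, and the product of the polynomials in $\cT_j$ lies in this radical; setting $c=0$ forces one factor $T\in\cT_j$ to be divisible by $P_1$, i.e.\ $T=\gamma P_1+ce$ for some linear form $e$. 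Pairwise linear independence gives $e\neq 0$, so $ce=T-\gamma P_1\in\C[V]_2$, whence $c\in V$ --- a contradiction. The same argument with $d$ in place of $c$ gives $d\in V$, and then $T_k=(cd-\beta P_1)/\alpha\in\C[V]_2$. This requires only one $P_1$ and no counting.

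Your elimination argument, by contrast, only yields $\spn{c,d}\cap V\neq\{0\}$ via \autoref{cla:rank-2-in-V}, and ``iterating with a third $P_1''$'' does not tighten this: the linear forms $c,d$ are fixed by the original relation with $P_1$, and pairing with $P_1''$ produces a new pair $c'',d''$ together with the same conclusion $\spn{c,d}\cap V\neq\{0\}$ --- you never isolate $c$ and $d$ separately. Moreover, nothing guarantees that $T_k$ falls into case~(ii) with two or three different elements of $\cQ_i$; each new $P_1'$ may land in case~(i) or~(iii). Finally, the fallback to \autoref{cla:Q-and-C[V]} when $\cQ_i,\cQ_j$ are small is inapplicable: that claim requires a high-rank polynomial $Q_o$ with $\rank_s(Q_o)\ge\rkq$, which is not part of the present hypothesis. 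The key idea you are missing is to exploit the radical $\sqrt{\ideal{cd,P_1}}$ directly rather than trying to triangulate from multiple span relations.
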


\begin{claim}\label{cla:Q-C[V]-a_i}
	Let $Q_o$ be an irreducible quadratic polynomial such that $\rank_s(Q)\geq \rkq$ and let $V$ be a linear space of linear forms such that $\dim(V) \leq \rkq/2$. Assume that there are two sets, $\cQ_i,\cQ_j$ such that every polynomial $Q_t \in \cQ_i\cup \cQ_j$ is of the form $Q_t =\alpha_t Q_o+ Q'_t + c_t(\epsilon_t c_t+v_t)$ for $\alpha_t, \epsilon_t \in \C, Q'_t \in \C[V]_2$ and linear forms $c_t, v_t$, where $v_t \in V$. If there is  $P \in \cQ_k$ and linear forms $a,b$ such that $P = Q_o+ ab$, $\dim(\spn{a,b}) = 2$, and $\spn{a,b}\cap V = \emptyset$. Denote $V' = V + \spn{a,b}$ then $\cQ_i,\cQ_j \subset \spn{Q, \C[V']_2}$.
\end{claim}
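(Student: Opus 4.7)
The plan is to show, for every $Q_t \in \cQ_i \cup \cQ_j$, that the linear form $c_t$ in the given representation lies in $V' = V + \spn{a,b}$; combined with $\MS(Q'_t), v_t \in V \subseteq V'$ this yields $Q_t \in \spn{Q_o, \C[V']_2}$. Fix such a $Q_t$ and suppose by contradiction that $c_t \notin V'$. I would apply the colored vanishing hypothesis to the pair $(P, Q_t)$ (where $P \in \cQ_k$): there is a product of at most four polynomials from the remaining set (by \autoref{cla:gup-4}) lying in $\sqrt{\ideal{P, Q_t}}$. Feeding this into \autoref{thm:structure} yields three cases.

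\textbf{Ruling out \autoref{thm:structure}\ref{case:2}.} If $P, Q_t \in \sqrt{\ideal{e,f}}$ for linear forms $e,f$, then $\rank_s(P) \leq 2$. But $P = Q_o + ab$ with $\dim\spn{a,b}=2$ and $\spn{a,b}\cap V = \emptyset$, so \autoref{cla:ind-rank} (applied after restricting $Q_o$ to variables disjoint from $a,b$) combined with $\rank_s(Q_o) \geq \rkq$ forces $\rank_s(P) \gg 2$, a contradiction.

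\textbf{Handling \autoref{thm:structure}\ref{case:rk1}.} Suppose $\alpha P + \beta Q_t = ef$ for scalars $\alpha,\beta$ not both zero and linear forms $e,f$. Expanding and isolating $Q_o$,
\[
(\alpha + \beta \alpha_t)\, Q_o \;=\; ef \;-\; \alpha\, ab \;-\; \beta Q'_t \;-\; \beta\, c_t(\epsilon_t c_t + v_t).
\]
The right-hand side has $\rank_s$ bounded by $\dim V + 4 \leq \rkq/2 + 4 < \rkq$, so necessarily $\alpha + \beta \alpha_t = 0$. Reducing the remaining identity modulo $V$ kills $Q'_t$ and $v_t$, leaving
\[
\alpha\, ab + \beta \epsilon_t c_t^2 \;\equiv_V\; ef.
\]
By \autoref{fact:ufd}, the irreducible factors on the two sides match modulo $V$. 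Depending on whether $\epsilon_t = 0$ or not, this together with \autoref{cla:rank-2-in-V} forces $c_t$ to lie in $\spn{a,b} + V = V'$, contradicting $c_t \notin V'$.

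\textbf{Handling \autoref{thm:structure}\ref{case:span}.} Suppose $\alpha P + \beta Q_t = T_s$ for some $T_s = \alpha_s Q_o + Q'_s + c_s(\epsilon_s c_s + v_s)$ in the remaining set. Isolating $Q_o$ and applying the same rank argument gives $\alpha + \beta \alpha_t = \alpha_s$, and reducing modulo $V$,
\[
\alpha\, ab + \beta \epsilon_t c_t^2 \;\equiv_V\; \epsilon_s c_s^2.
\]
\autoref{fact:ufd} together with $\dim\spn{a,b}=2$ and $c_t \notin V'$ forces $\epsilon_s \neq 0$ and $c_s \in \spn{a,b,c_t}+V$ with a genuine dependence on $c_t$. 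In particular $c_s \notin V'$ either. Now one bootstraps: the polynomial $T_s$ lies in $\cQ_i \cup \cQ_j$ (by relabeling if necessary $T_s$ and $Q_t$ are in the two hypothesized sets), so we may feed the pair $(P, T_s)$ back into the same dichotomy. Iterating, the $c$-forms involved span a two-dimensional space (modulo $V$) generated by $a,b$, which forces $c_t \in V'$ after all; the alternative is that some iteration hits \autoref{thm:structure}\ref{case:span} with a polynomial from the same set that is a scalar multiple of $Q_t$, giving the desired linear dependence directly, or we invoke \autoref{cla:Q-and-C[V]} to close the loop.

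The main obstacle will be the bookkeeping inside \autoref{thm:structure}\ref{case:span}: one must track which of the three sets $T_s$ lands in and run the argument symmetrically, because the colored hypothesis does not let us choose the set freely. A secondary nuisance is the degenerate regime $\epsilon_t = 0$ (so $Q_t = \alpha_t Q_o + Q'_t + c_t v_t$ with $v_t \in V$): here the modulo-$V$ equations collapse and one must do a slightly different UFD argument, noting that $c_t v_t \equiv_V \alpha ab - \epsilon_s c_s^2$ still forces $c_t \in V'$ because $v_t \in V$ and $\spn{a,b}\cap V = \emptyset$.
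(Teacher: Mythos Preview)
Your overall strategy---case-analyse via the structure theorem, kill the $Q_o$-coefficient by a rank bound, then work modulo $V$---matches the paper's, and your treatment of \autoref{thm:structure}\ref{case:2} is correct. The genuine gap is in \autoref{thm:structure}\ref{case:span}.

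There your mod-$V$ identity reads $\epsilon_s c_s^{2}\equiv_V \alpha\,ab+\beta\epsilon_t c_t^{2}$ with $\alpha,\beta\neq 0$. If $c_t\notin V'$ and $\epsilon_t\neq 0$, then modulo $V$ the forms $a,b,c_t$ are linearly independent, so the right-hand side has $\rank_s=2$ (\autoref{cla:ind-rank}), while the left-hand side has $\rank_s\le 1$. That is already a contradiction: there is \emph{no} $c_s$ making the identity hold, so there is nothing to ``bootstrap''. Your proposed iteration with $(P,T_s)$ is thus based on a false premise, and as written it neither terminates nor produces the asserted ``two-dimensional space generated by $a,b$''. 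The paper never iterates: it notes (equivalently) that the right side factors as a difference of squares, matches the two linear factors with $a,b$ via unique factorisation modulo $V$, and reads off directly that $c_t\in\spn{a,b,V}=V'$.

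A second, smaller issue: in your $\epsilon_t=0$ discussion you write $c_t v_t\equiv_V \alpha ab-\epsilon_s c_s^{2}$, but $v_t\in V$ forces $c_t v_t\equiv_V 0$, so the reduction carries no information about $c_t$. In case~\ref{case:rk1} one must argue before reducing: once $ef\equiv_V\alpha ab$ gives $e,f\in V'$, the full equation yields $\beta c_t v_t=ef-\alpha ab-\beta Q'_t\in\C[V']_2$, and since $0\neq v_t\in V'$ this forces $c_t\in V'$ (if $v_t=0$ then $c_t$ never appears in $Q_t$). In case~\ref{case:span} with $\epsilon_t=0$ the mod-$V$ equation becomes $\epsilon_s c_s^{2}\equiv_V\alpha ab$, which is impossible because $ab$ is not a square modulo $V$; so this sub-case simply cannot occur and one is forced into case~\ref{case:rk1}.
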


\begin{claim}\label{cla:C[V]-a_i}
		Let $V$ be a linear space of linear forms. Assume that there are two sets, $\cQ_i,\cQ_j$ such that every polynomial $Q_t \in \cQ_i\cup \cQ_j$ is of the form $Q_t = Q'_t + c_t(\epsilon_t c_t+v_t)$ for $ \epsilon_t \in \C, Q'_t \in \C[V]_2$ and linear forms $c_t, v_t$, where $v_t \in V$. If there is  $P \in \cQ_k$ and linear forms $a,b$ such that $P -ab \in \C[V]_2$, $\dim(\spn{a,b}) = 2$, and $\spn{a,b}\cap V = \emptyset$. Denote $V' = V + \spn{a,b}$ then $\cQ_i,\cQ_j \subset  \C[V']_2$.
\end{claim}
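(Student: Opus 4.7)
The plan is to mirror the proof of \autoref{cla:Q-C[V]-a_i} almost verbatim, but simplified because there is no high-rank polynomial $Q_o$ floating around. Fix an arbitrary $Q_t \in \cQ_i \cup \cQ_j$, written as $Q_t = Q'_t + c_t(\epsilon_t c_t + v_t)$ with $Q'_t \in \C[V]_2$ and $v_t \in V$. The goal is to show $c_t \in V' := V + \spn{a,b}$, for then $\MS(Q_t) \subseteq V'$ and hence $Q_t \in \C[V']_2$. The case $c_t \in V$ is immediate, so I would assume $c_t \notin V$. Since $P \in \cQ_k$ sits in a different colored class than $Q_t$, the colored EK hypothesis yields a product of polynomials from the remaining class inside $\sqrt{\ideal{P, Q_t}}$, and then \autoref{thm:structure} applied to $(P, Q_t)$ forces one of three structural alternatives, which I would examine case by case.

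In the span case there is $T_s$ in the third class with $Q_t = \beta P + \gamma T_s$; pairwise linear independence forces $\beta \neq 0$. Writing $T_s = T'_s + c_s(\epsilon_s c_s + v_s)$ (or $T_s = c_s^2$) and reducing the identity modulo $\ideal{V}$ yields
\[
\epsilon_t c_t^2 \;\equiv_V\; \beta\, ab + \gamma\epsilon_s c_s^2.
\]
Since $\spn{a,b}\cap V = \{0\}$ and $\dim\spn{a,b}=2$, the right-hand side mod $V$ has a genuine rank-$2$ contribution $\beta ab$ that cannot be cancelled by $\gamma\epsilon_s c_s^2$ unless $\spn{c_s,c_t}$ mod $V$ coincides with $\spn{a,b}$ mod $V$; comparing minimal spaces via \autoref{cla:irr-quad-span} then gives $c_t \in V'$. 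The rank-$1$ case is essentially the same: if $\gamma Q_t + \beta P = ef$ with $\beta \neq 0$, reducing mod $V$ yields $ef \equiv_V \beta ab - \gamma\epsilon_t c_t^2$, and the same minimal-space comparison delivers $c_t, e, f \in V'$.

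Case \autoref{thm:structure}\ref{case:2} can be ruled out. If $Q_t, P \in \sqrt{\ideal{e,f}} = \ideal{e,f}$, then $P \in \ideal{e,f}$ projects to $\bar a \bar b \in \ideal{\bar e,\bar f}$ in $\C[x]/\ideal{V}$; since $\ideal{\bar e,\bar f}$ is prime, one of $\bar a,\bar b$ lies in $\spn{\bar e,\bar f}$, forcing $e$ or $f$ (and the other, after a short bookkeeping step identical to the one used in the proof of \autoref{cla:stisfy-sg-lines}) into $V'$. Combined with $Q_t \in \ideal{e,f}$, this yields $c_t \in V'$ as well.

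The main obstacle will be the degenerate sub-case $\beta = 0$ of the rank-$1$ case, which happens precisely when $Q_t$ is itself a square $c_t^2 \in \calL$: the identity $\gamma Q_t = ef$ then only tells us $c_t \sim e \sim f$ and gives no information relating $c_t$ to $V'$. To pin $c_t$ down I would apply the colored EK hypothesis a second time, pairing $Q_t = c_t^2$ with a \emph{different} polynomial $P'' \in \cQ_k$ whose behavior with respect to $\spn{a,b}$ is sufficiently generic; since a single reducible $Q_t$ cannot be the degenerate partner of \emph{every} polynomial in $\cQ_k$, at least one such $P''$ will trigger a non-degenerate case of \autoref{thm:structure} and propagate $c_t \in V'$ from the span / rank-$1$ analysis above. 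This auxiliary invocation is the only substantive new ingredient beyond the proof of \autoref{cla:Q-C[V]-a_i}.
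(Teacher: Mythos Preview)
The paper actually states \autoref{cla:C[V]-a_i} without proof; the intended argument is the obvious adaptation of the proof of \autoref{cla:Q-C[V]-a_i}, which is exactly what you set out to do. Your treatment of the span case, the rank-$1$ case with $\beta\neq 0$, and \autoref{thm:structure}\ref{case:2} is fine and matches that template. (For \ref{case:2} the clean way to phrase it: if $P,Q_t\in\ideal{e,f}$ and $P$ is irreducible, then $e,f\in\MS(P)\subseteq V'$; if also $Q_t$ is irreducible and $c_t\notin V'$, then $e,f\in\MS(Q_t)\subseteq V+\spn{c_t}$, forcing $e,f\in V'\cap(V+\spn{c_t})=V$ and hence $P\in\ideal{V}$, which contradicts $ab\notin\ideal{V}$. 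If $Q_t=c_t^2$ then $c_t\in\spn{e,f}\subseteq V'$ directly.)

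The genuine gap is your handling of the degenerate $\beta=0$ case, i.e.\ $Q_t=c_t^2\in\calL$. Your plan is to ``pair $Q_t$ with a different $P''\in\cQ_k$'', but the hypothesis gives you exactly \emph{one} polynomial $P\in\cQ_k$ with the special form $P-ab\in\C[V]_2$; you have no structural information about any other element of $\cQ_k$, so the span/rank-$1$ analysis cannot be rerun against a generic $P''$. The fix is the one the paper uses in the proof of \autoref{cla:Q-C[V]-a_i}: stay with $P$ but exploit that $\sqrt{\ideal{P,c_t^2}}=\ideal{P,c_t}$ is \emph{prime}. Indeed, assuming $c_t\notin V'$ we have $c_t\notin\MS(P)\subseteq V'$, so $P|_{c_t=0}$ is still irreducible; hence some $T_s$ from the \emph{remaining} set (which is $\cT_i$ or $\cT_j$, not $\cT_k$) satisfies $T_s=\alpha P+c_t e$. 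Now $T_s$ \emph{does} have the structure $T'_s+c_s(\epsilon_s c_s+v_s)$, and reducing $T'_s-\alpha P'=\alpha ab+c_t e-c_s(\epsilon_s c_s+v_s)$ modulo $V$ yields $\alpha ab\equiv_V \epsilon_s c_s^2-c_t e$. Since $\bar a,\bar b$ are independent in the quotient and $\alpha\neq 0$ (else $T_s\sim Q_t$, contradicting pairwise independence), a direct coefficient comparison in the variables $\bar a,\bar b,\bar c_t$ shows that no square $\epsilon_s\bar c_s^2$ can absorb a genuine $\bar a\bar b$ term unless $\bar c_t\in\spn{\bar a,\bar b}$, i.e.\ $c_t\in V'$.
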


\begin{remark}
Now we can explain how, together with these claims, we can use the proof of \autoref{thm:main-sg-intro} to finish the theorem in the case where $\cQ_1^2= \cQ_3^2 =\cQ_3^2 = \emptyset$. Follow along the guidelines of the proof where every polynomial is of the form $\alpha Q_o+ ab$. If $\rank_s(Q) \leq \rkq$ then set $V = \MS(Q)$. Define the linear spaces $V_i = \spn{a_i,b_i}$. If there are two sets such that all the $V_i$'s that comes from those sets are either of dimension $1$ or intersect with $V$, then we can use on of \autoref{cla:C[V]-a_i} or \autoref{cla:Q-C[V]-a_i} to get that the same holds for the third set, or that the theorem holds, by increasing $V$ and using \autoref{cla:C[V]} or \autoref{cla:Q-and-C[V]}.

 Now, we can use \autoref{cla:colored-linear-spaces-intersaction } to deduce that there is a linear space pf linear forms $U$, such that $\dim(U) \leq 5$ such that without loss of generality every $V_i$ has a basis of the form $a_i, \epsilon a_i + u_i$. Set $V = V + U$. Follow the proof of \autoref{thm:main-sg-intro} to deduce that the $a_i$'s satisfy the E.K properties modulo $V$. Use E.K to bound the dimension of the $a_i$'s in order to conclude the proof in this case.
\end{remark}

From now on, assume that for some $i\in [3]$, $\cQ_i^2 \neq \emptyset$. As before,  We will split the proof to the case where $\rank_s(Q) \geq \rkq$ and  $\rank_s(Q) \leq \rkq$.

\subsubsection{$\rank_s(Q) >\rkq$}
We use \autoref{cla:Q-Q'-V} to get a linear space of linear form $V$ such that $1-4\delta$ of the polynomials in $\cQ_3$ are of the form $Q_o+ a_i(\epsilon_i a_i + v_i)$.
\begin{claim}\label{cla:Q3-in-V}
	If $\cQ^2_2\neq \emptyset$
	then there is a linear space of linear forms, where  $\dim(V)=O(1)$ and $1-10\delta$ of the polynomials  $Q_i\in \cQ^1_3$ satisfy $a_i\in V$.  
\end{claim}

We will later continue the proof assuming that such $V$ exists. But for now we will explain how to handle the case where $\cQ^2_2= \emptyset$.

 \begin{claim}\label{lem:Q3-in-V-case2}
	If $\cQ^2_2= \emptyset$ and $\cQ_1^2 \neq \emptyset$,
	then there is a linear space of linear forms, where  $\dim(V)=O(1)$ and $1-10\delta$ of the polynomials  $Q_i\in \cQ^1_3$ satisfy $a_i\in V$.  
\end{claim}

Ant thus we can continue with the proof in the case that such $V$ exists. Before that, let us explain what to do when $\cQ^2_2= \cQ^2_1=\emptyset$.

\begin{claim}\label{lem:Q3-in-V-case2}
	If $\cQ_1^2 = \cQ^2_2= \emptyset$ and $\cQ_3^2 \neq \emptyset$, and $\rank_s(Q) > \rkq$
	then there is a linear space of linear forms, $V$, where  $\dim(V)=O(1)$ and $\cQ_1, \cQ_2\subset \spn{Q, \C[V]}$.  
\end{claim}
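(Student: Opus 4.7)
My plan is to fix a witness polynomial $P = \gamma Q_o + L \in \cQ_3^2$ (with $\rank_s(L) \geq 2$) and use it to force every polynomial in $\cQ_1 \cup \cQ_2$ into a very restricted normal form relative to the subspace $V_0 := \MS(L)$; the target $V$ will then be obtained by adjoining a bounded number of extra linear directions on top of $V_0$.

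The first step is to apply \autoref{thm:structure} to each pair $(P, Q_i)$ with $Q_i = Q_o + a_i b_i \in \cQ_1$, taking the product from $\cQ_2 = \cQ_2^1 \cup \calL_2$ (by \autoref{cla:gup-4}, a subset of size at most $4$ suffices). Because $\rank_s(Q_o) \geq \rkq$, the polynomials $P$ and $Q_i$ have large $\rank_s$, so \autoref{thm:structure}\ref{case:2} cannot hold. In the remaining two cases the third polynomial $Q_k \in \cQ_2$ takes one of the shapes $Q_o + a_k b_k$ or $\ell^2$, or the linear combination $\alpha P + \beta Q_i$ factors as $cd$; matching the $Q_o$-coefficients (forced by the high rank) reduces each case to the identity
\begin{equation*}
\alpha L + \beta\, a_i b_i \;=\; R_i,
\end{equation*}
where $R_i \in \{a_k b_k,\; \ell^2,\; cd\}$ has $\rank_s \leq 1$. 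This forces $\beta \neq 0$ and $\rank_s(L) \leq 2$, hence $\rank_s(L) = 2$ and $\dim V_0 \leq 4$. Since $\MS(\beta a_i b_i - R_i) = \MS(\alpha L) \subseteq V_0$, \autoref{cla:rank-2-in-V} yields $\spn{a_i, b_i} \cap V_0 \neq \{0\}$; after renaming I can write $b_i = \epsilon_i a_i + u_i$ with $u_i \in V_0$. The symmetric argument applied to pairs $(P, Q_j)$ with $Q_j \in \cQ_2$ produces the same normal form $Q_j = Q_o + a_j(\epsilon_j a_j + u_j)$ with $u_j \in V_0$.

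The second step is to bound the span of the $a_i$'s and $a_j$'s modulo $V_0$. I apply \autoref{thm:structure} to each pair $(Q_i, Q_j) \in \cQ_1 \times \cQ_2$, with product from $\cQ_3$. Case \ref{case:2} is again ruled out by rank. In case \ref{case:span} the resulting polynomial $Q_t \in \cQ_3 \cup \calL_3$ lies in $\spn{Q_i, Q_j}$, and expanding modulo $V_0$ yields $a_t b_t \equiv \alpha \epsilon_i a_i^2 + \beta \epsilon_j a_j^2 \pmod{V_0}$ (or $\ell^2 \equiv \cdots$ if $Q_t \in \calL_3$), which, provided $a_i$ and $a_j$ are independent modulo $V_0$, deposits a nontrivial linear combination of $a_i, a_j$ into a third-color set $\calS_3$ consisting of linear forms coming from $\cQ_3 \cup \calL_3$. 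In case \ref{case:rk1} we have $\alpha Q_i + \beta Q_j = cd$; because $\rank_s(Q_j) \gg 2$ the ideals $\ideal{c, Q_j}$ and $\ideal{d, Q_j}$ are prime, so $\sqrt{\ideal{cd, Q_j}} = \ideal{c, Q_j} \cap \ideal{d, Q_j}$, and the radical hypothesis forces some $Q_t \in \cQ_3$ of the form $\alpha' Q_j + c e$; reducing modulo $V_0$ and using that $cd \equiv \alpha(\epsilon_i a_i^2 - \epsilon_j a_j^2) \pmod{V_0}$ factors, I again extract a nontrivial element of $\calS_3$ inside $\spn{a_i, a_j}$ modulo $V_0$.

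The final step is to interpret the three families $\calS_1 = \{a_i \bmod V_0\}_{\cQ_1} \cup \{\ell \bmod V_0\}_{\calL_1}$, $\calS_2$ (defined analogously from $\cQ_2$ and $\calL_2$), and $\calS_3$ as a partial colored Edelstein--Kelly configuration modulo $V_0$, and then invoke \autoref{thm:partial-EK-robust} — or its defect-tolerant variant \autoref{cor:EK-robust} to absorb the bounded number of degenerate pairs that collapse into $V_0$ — to conclude $\dim \spn{\calS_1 \cup \calS_2 \cup \calS_3} = O(1)$ modulo $V_0$. Taking $V$ to be $V_0$ enlarged by lifts of a basis of this quotient span gives $\dim V = O(1)$, and since each $Q_i - Q_o = a_i(\epsilon_i a_i + u_i)$ (and similarly each $Q_j - Q_o$) is then a quadratic in the linear forms of $V$, the desired inclusion $\cQ_1 \cup \cQ_2 \subset \spn{Q_o, \C[V]_2}$ follows. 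The hard part will be the case \ref{case:rk1} branch of Step~2: verifying that the $Q_t$ produced via the primality decomposition actually contributes a genuinely new direction to $\calS_3$ (rather than collapsing into $V_0$, or being linearly dependent on $a_i, a_j$ in the trivial way) is the technical crux and mirrors the hardest step of the non-colored analogue developed in \autoref{sec:q-dom}.
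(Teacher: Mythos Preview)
Your Step~1 already contains the whole proof, but you stop one observation short and then take a long detour through Steps~2 and~3 that the paper does not need.

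Concretely: from $\alpha L + \beta\, a_i b_i = R_i$ with $\alpha,\beta\neq 0$ and $\rank_s(L)=2$, you correctly note $\MS(\beta a_ib_i - R_i)=\MS(\alpha L)=V_0$ and then invoke \autoref{cla:rank-2-in-V} to get only $\spn{a_i,b_i}\cap V_0\neq\{0\}$. But much more is true here: the identity $\alpha L = R_i - \beta a_i b_i$ exhibits a two–term representation of a rank-$2$ quadratic, hence a \emph{minimal} representation, and \autoref{cla:irr-quad-span} then gives $\{a_i,b_i\}\cup\{\text{factors of }R_i\}\subseteq \MS(\alpha L)=V_0$. In particular $a_i,b_i\in V_0$ outright, so $Q_i - Q_o = a_ib_i\in\C[V_0]_2$ and $Q_i\in\spn{Q_o,\C[V_0]_2}$. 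The same holds for every $Q_j\in\cQ_2$, and you are done with $V=V_0$, $\dim V\le 4$.

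This is exactly the paper's argument: pick $P\in\cQ_3^2$, observe (using $\cQ_1^2=\cQ_2^2=\emptyset$) that $\rank_s(L)=2$, set $V=\MS(L)$, and for each $P_j\in\cQ_1\cup\cQ_2$ the structure theorem (case~\ref{case:2} excluded by rank) yields an equation of the above shape, forcing $a_j,b_j\in V$. Your Steps~2--3 --- the $\calS_1,\calS_2,\calS_3$ configuration and the appeal to \autoref{thm:partial-EK-robust}/\autoref{cor:EK-robust} --- are therefore unnecessary; moreover you yourself flag the case~\ref{case:rk1} branch of Step~2 as unresolved, so the proposal as written is not complete even on its own terms. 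The fix is simply to strengthen the conclusion of Step~1.
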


Now we can use \autoref{cla:Q-and-C[V]} to finish the proof of the theorem.

The only case left to handle is the case where there is a linear space of linear forms $V$ such that $\dim(V) \leq \rkq /2$ and at least $1-10\delta$ of the polynomials in $\cQ_3$ are in $\spn{Q,\C[V]_2}$. 

In this case, there might be a polynomial $P$ in $\cQ_2$ such that no linear combination of $Q_o$ and $P$ is of $\rank_s$ smaller than $3$ and the following holds:

\begin{claim}\label{cla:V-and-P}
	Assume $\rank_s(Q) \geq \rkq$,
	then there exists a $O(1)$-dimensional linear space $V$, 
	such that for every $P_i\in \cQ_2$ one of the following holds
	\begin{enumerate}
		\item \label{cla:V-and-P:item:V} $P_i$ is defined over $V$.
		\item \label{cla:V-and-P:item:V-c} There is a quadratic polynomial $P'_i$ and a linear form $v_i$ that are defined over $V$, and  a linear form $c_i$, such that $P_i = Q_o+P'_i + c_i(\epsilon_i c_i + v_i )$
		\item \label{cla:V-and-P:item:span} $P_i \in \spn{Q,\C[V], P}$
	\end{enumerate} 
\end{claim}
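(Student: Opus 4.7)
My plan is to split the proof according to the rank of $Q_o$, following the natural dichotomy between the case $\rank_s(Q_o)\geq \rkq$ (for some large enough absolute constant) and $\rank_s(Q_o)<\rkq$. In the high-rank regime, $Q_o$ is rigid: any polynomial of the form $Q_o+R$ inherits rank at least $\rkq-\rank_s(R)$, so \autoref{thm:structure}\ref{case:2} is automatically ruled out for the polynomials $Q_j=Q_o+a_jb_j$, and whenever a linear combination of two $Q_o$-based polynomials has low rank the coefficient of $Q_o$ is pinned down (so $\alpha+\beta=0$ or $\alpha=\alpha_k$). In the low-rank regime, $\MS(Q_o)$ itself is $O(1)$-dimensional, so we can absorb it into a working subspace $V$ and then apply \autoref{thm:structure} modulo $V$. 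Within each rank regime I would further split on $m_2=0$ vs.\ $m_2\neq 0$; the $m_2=0$ case is cleaner because every polynomial in $\cQ$ has the form $Q_o+a_jb_j$ (or $\ell^2$), and the general case $m_2\neq 0$ will be reduced to the $m_2=0$ case via the constraint $m_1>5m_2+2$.

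For the high-rank case with $m_2=0$, I would define $V_i=\spn{a_i,b_i}$ and apply \autoref{thm:structure} to pairs $Q_i,Q_j$: as noted, only \ref{case:span} and \ref{case:rk1} survive, and both yield an algebraic identity of the form $\alpha a_ib_i+\beta a_jb_j=a_kb_k$ or $a_ib_i-a_jb_j=cd$, from which \autoref{cla:intersection} gives $\dim(V_i\cap V_j)\geq 1$ whenever both have dimension $2$. Then \autoref{cla:linear-spaces-intersaction } forces either $\dim(\bigcup V_i)\leq 3$ or $\dim(\bigcap V_i)=1$, yielding a $O(1)$-dimensional space $V$ such that, up to swapping $a_j\leftrightarrow b_j$, every $b_j$ equals $\epsilon_j a_j+v_j$ with $v_j\in V$. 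I would then show that modulo $V$ the set $\{a_i\}$ satisfies the \hyperref[thm:SG-linforms]{Sylvester-Gallai} hypothesis: given two independent $a_i,a_j$ (mod $V$), apply \autoref{thm:structure} to $Q_i,Q_j$ and analyze the three cases to extract some $a_k$ in $\spn{a_i,a_j,V}\setminus(\spn{a_i,V}\cup\spn{a_j,V})$, which requires a delicate argument using the rank of $Q_o$ to lift any ambiguity (the irreducibility of $Q_o$ modulo a single linear form lets us locate the surviving factor in any product lying in the radical $\sqrt{\ideal{Q_i,Q_j}}$). For the high-rank case with $m_2\neq 0$, I would first show via a pigeonhole argument using $m_1>5m_2+2$ that every $P_i\in\cQ_2$ has the form $\gamma_i Q_o+L_i$ with $\rank_s(L_i)=2$; then using one such $P=\gamma Q_o+L$ I would set $V=\MS(L)$ and prove that at most $m_2$ polynomials $Q_j\in\cQ_1$ can have $\{a_j,b_j\}\not\subseteq V$, after which an iteration similar to the $m_2=0$ case, now carried out modulo the enlarged $V$, produces the desired SG configuration.

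For the low-rank case my strategy is to reduce to a general-purpose lemma: if there is an $O(1)$-dimensional space $V$ such that every $P\in\tilde\cQ$ satisfies either $P\in\ideal{V}$ or $\MS(P)\subseteq\spn{V,a_P}$ for some linear form $a_P$, then applying the projection map $T_{\vaa,V}$ from \autoref{def:z-mapping} collapses every polynomial to something reducible with a $z$-factor, and the radical condition together with the primality of $\ideal{\gamma_1 z+\delta_1 a_{P_1},\gamma_2 z+\delta_2 a_{P_2}}$ forces the $a_P$'s to satisfy \hyperref[thm:SG-linforms]{Sylvester-Gallai} modulo $z$; applying this for $\dim(V)$ linearly independent choices of $\vaa$ and using \autoref{cla:z-map-dimension} yields $\dim(\tilde\cQ)=O(\dim(V)^2)$. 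So the task becomes constructing such a $V$. For $m_2=0$ I would take $V=\MS(Q_o)+W$, where $W$ comes from the same $V_i$-intersection analysis as above (now restricted to the $V_i$'s transverse to $\MS(Q_o)$), and verify via \autoref{cla:span-rank1} — a version of the high-rank analysis but adjusted using \autoref{cla:ind-rank} to handle ranks rather than the rigidity of $Q_o$ — that every $a_j,b_j$ ends up in $\spn{V,\text{one direction}}$. For $m_2\neq 0$ I would pick $P_1=\gamma Q_o+L_1$ (rank $2$ by the same pigeonhole as before) and start with $V=\MS(Q_o)+\MS(L_1)$; the case analysis shows at most $2m_2$ indices $j$ fail to satisfy $a_j,b_j\in V$, and then a separate argument covering the $\cQ_2$ polynomials (using \autoref{cla:rank-2-in-V} to locate a shared linear factor in $cd-ef$ when $P_i$ spans with two different $Q_j$'s) shows the remaining polynomials satisfy the hypothesis of the projection-lemma.

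The main obstacle I expect is the interaction in the low-rank, $m_2\neq 0$ case: the polynomials in $\cQ_2$ can interact with the $Q_o+a_jb_j$ in ways that mix all three cases of \autoref{thm:structure}, and ensuring that the ``bad'' indices (those $j$ for which $a_j,b_j\not\in V$, or those $P_i$ not caught by $\ideal{V}$ or by the $c_i^2$-form) is bounded requires the $m_1>5m_2+2$ counting and a careful pigeonhole over which polynomial spans $P_i$ with $P_1$. A secondary challenge, in the high-rank SG step, is ruling out the degenerate possibility $a_k\in\spn{a_i,V}\cup\spn{a_j,V}$; here the argument must subtract the two defining equations $Q_k=\alpha Q_i+\cdots$ and $Q_k=\beta Q_j+\cdots$ and exploit the assumption $a_j\notin\spn{a_i,V}$ to derive a contradiction on the rank-$1$ combination.
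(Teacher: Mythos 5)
Your proposal is, in substance, a proof plan for the single-set theorem (\autoref{thm:Q-dom-gen}), and the part of it that is relevant to this claim does track the paper's single-set argument closely: the pigeonhole via $m_1>5m_2+2$ giving $P_i=\gamma_iQ_o+L_i$ with $\rank_s(L_i)=2$ (this is \autoref{cla:no-rank-3}), setting $V=\MS(L)$ for one such polynomial and showing most $a_j,b_j$ land in $V$ (\autoref{cla:small-space-span}), and the step where a polynomial satisfying \autoref{thm:structure}\ref{case:rk1} with two members of $\cI$ yields $\MS(cd-ef)\subseteq V$ and hence, by \autoref{cla:rank-2-in-V}, the form $c(\epsilon c+v)$. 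That is exactly how \autoref{cla:V-map-Q-dom} is proved. Note also that roughly half of your plan (the entire low-rank analysis) is irrelevant to this statement, which assumes $\rank_s(Q_o)\ge\rkq$ from the outset.

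The genuine gap is that the claim you were asked to prove is the \emph{colored} (Edelstein--Kelly) variant, not the single-set one: here $\cQ_1,\cQ_2,\cQ_3$ are three color classes, $\cI$ is a large subset of $\cQ_3$ consisting of polynomials $Q_o+a_ib_i$ with $a_i,b_i\in V$, and when $P_i\in\cQ_2$ satisfies \autoref{thm:structure}\ref{case:span} with some $Q_j\in\cI$, the spanned polynomial lies in the \emph{third} class $\cQ_1$. Your key pigeonhole therefore fails: in the single-set setting the spanned polynomials all fall into a set of size $m_2$ with $m_1>5m_2+2$, so two distinct $Q_j,Q_{j'}$ must span the same polynomial with $P_i$, forcing $P_i\in\spn{Q_j,Q_{j'}}$ and hence $\rank_s(P_i-\gamma_iQ_o)=2$; in the colored setting they fall into $\cQ_1$, whose size is not controlled relative to $|\cI|$, so they may all be distinct and no collision is forced. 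This is precisely why the claim carries the third alternative $P_i\in\spn{Q,\C[V],P}$, which your plan never produces: the paper's proof instead counts the distinct $T_i\in\cQ_1$ with $T_i\in\spn{P_i,Q_i}\subseteq\spn{P_i,Q,\C[V]}$, concludes that $P_i$ together with $Q_o$ and the polynomials over $V$ spans a $1-O(\delta)$ fraction of $\cQ_1$, and then collapses all such exceptional $P_i$ into the span of one fixed $P\in\cQ_2$ (any two of them must share a spanned element of $\cQ_1$, and $P$ is normalized so that no nontrivial combination of $P$, $Q_o$, and a quadratic over $V$ has small rank). By implicitly asserting that every $P_i\in\cQ_2$ ends up in cases (1) or (2), your argument proves a stronger statement than is true in this setting; you need the extra counting argument over $\cQ_1$ to recover alternative (3).
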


We extend the structure for $\cQ_3,\cQ_1$ as well. Now, we only need to bound the dimension of the $c_i$'s. 

\begin{remark}
	If $P$ exists then all of the $c_i$'s are in $V$. 
\end{remark}

Finally we prove that the $c_i$'s satisfy the E.K property modulo $V$, and as before, we finish the proof.

\subsubsection{$\rank_s(Q) \leq\rkq$}
This proof outlines will be similar to the high rank case, but in this case we can assume $\MS(Q) \subseteq V$, and we do need to address the possibility of case 3 of the structure theorem. 

We use \autoref{cla:Q-Q'-V} to get a linear space of linear form $V$ such that $1-4\delta$ of the polynomials in $\cQ_3$ are of the form $Q_o+ a_i(\epsilon_i a_i + v_i)$.
\begin{claim}\label{cla:Q3-in-V}
	If $\cQ^2_2\neq \emptyset$
	then there is a linear space of linear forms, where  $\dim(V)=O(1)$ and $1-10\delta$ of the polynomials  $Q_i\in \cQ^1_3$ satisfy $a_i\in V$.  
\end{claim}

We will later continue the proof assuming that such $V$ exists. But for now we will explain how to handle the case where $\cQ^2_2= \emptyset$.

\begin{claim}\label{lem:Q3-in-V-case2}
	If $\cQ^2_2= \emptyset$ and $\cQ_1^2 \neq \emptyset$,
	then there is a linear space of linear forms, where  $\dim(V)=O(1)$ and $1-10\delta$ of the polynomials  $Q_i\in \cQ^1_3$ satisfy $a_i\in V$.  
\end{claim}

Ant thus we can continue with the proof in the case that such $V$ exists. Before that, let us explain what to do when $\cQ^2_2= \cQ^2_1=\emptyset$.

\begin{lemma}\label{lem:Q3-in-V-case2}
	If $\cQ_1^2 = \cQ^2_2= \emptyset$ and $\cQ_3^2 \neq \emptyset$, and $\rank_s(Q) \leq \rkq$
	then there is a linear space of linear forms, where  $\dim(V)=O(1)$ and $\cQ_1, \cQ_2\subset \ideal{V}$.  
\end{lemma}

\begin{lemma}\label{cla:q1-q2-then-q3_1-ideal}
	
	If $\cQ_1^2 = \cQ^2_2= \emptyset$, $\rank_s(Q) \leq \rkq$. By the \autoref{lem:Q3-in-V-case2} we know that there  is a linear space of linear forms, where  $\dim(V)=O(1)$ and $\cQ_1, \cQ_2\subset \ideal{V}$. Then either there is there are two linear forms, $a$ and $b$ such that $\cQ_1, \cQ_2 \subset \C[V,a,b]$ or $\cQ_3 \subset \ideal{V}$
	
\end{lemma}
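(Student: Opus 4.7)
\medskip

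My plan is to exploit the construction of $V$ from \autoref{lem:Q3-in-V-case2} together with a careful case analysis of \autoref{thm:structure} applied to pairs involving a presumed polynomial $P \in \cQ_3 \setminus \ideal{V}$. First, I would verify that the construction of $V$ in \autoref{lem:Q3-in-V-case2} (which is designed to contain $\MS(Q_o)$ when $\rank_s(Q_o) \leq \rkq$) yields $\MS(Q_o) \subseteq V$, hence $Q_o \in \ideal{V}$. Since $\cQ_1^2 = \cQ_2^2 = \emptyset$, every $Q \in \cQ_1 \cup \cQ_2$ has the form $Q = Q_o + a_Q b_Q$, and combined with $Q \in \ideal{V}$ we obtain $a_Q b_Q \in \ideal{V}$. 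Primality of $\ideal{V}$ then forces $a_Q \in V$ or $b_Q \in V$, and we may assume without loss of generality that $b_Q \in V$ for all such $Q$.

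Next, suppose for contradiction both conclusions fail: there exists $P \in \cQ_3$ with $P \notin \ideal{V}$, and the images $\{\bar a_Q\}$ in $\C[\vx]/\ideal{V}$ span at least a $3$-dimensional subspace. For each $Q \in \cQ_1$, apply \autoref{thm:structure} to $(Q,P)$ with the four-polynomial product drawn from $\tilde{\cQ}_2$ (valid by \autoref{cla:gup-4}). Case~\ref{case:span} is immediately ruled out: it would produce $R \in \tilde{\cQ}_2$ with $R = \alpha Q + \beta P$, $\beta\neq 0$ (by pairwise linear independence with $Q$), and since $Q,R \in \ideal{V}$ this yields $P \in \ideal{V}$, a contradiction. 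The symmetric argument rules out case~\ref{case:span} for pairs $(R,P)$ with $R \in \cQ_2$. Hence every such pair satisfies case~\ref{case:rk1} or case~\ref{case:2}.

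The main step is extracting dimensional constraints from the remaining two cases, reducing modulo $V$. In case~\ref{case:rk1} we have $\alpha_Q Q + \beta_Q P = c_Q d_Q$; reducing modulo $V$ gives $\beta_Q P \equiv c_Q d_Q \pmod V$, forcing $P \bmod V$ to have rank at most $1$, say $P \equiv \bar c \bar d \pmod V$ with $\bar c,\bar d$ linear forms in a fixed complement $U$ of $V$. Lifting and using $b_Q \in V$, I isolate the $U\cdot V$ component of $\alpha_Q(Q_o + a_Q b_Q) = c_Q d_Q - \beta_Q P$, obtaining a rank-$1$ tensor identity $\alpha_Q a_Q^U \otimes b_Q = \lambda_Q \bar c \otimes v_Q^d + \mu_Q \bar d \otimes v_Q^c - \beta_Q P^{UV}$ in $U\otimes V$. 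Projecting modulo $\spn{\bar c,\bar d}$ and invoking the rank-$1$ structure of the LHS pins down both $\bar a_Q$ and $b_Q$ up to scalars, yielding $\bar a_Q \in \spn{\bar c, \bar d}$ modulo $V$. In case~\ref{case:2}, $Q,P \in \ideal{c_Q,d_Q}$; since $P\notin \ideal{V}$ at least one of $c_Q,d_Q$ lies outside $V$, and the inclusion $P \in \ideal{c_Q,d_Q}$ mod $V$ forces the factorization $P \bmod V = \bar c \bar d$ to satisfy $\bar c, \bar d \in V+\spn{c_Q,d_Q}$. The inclusion $Q \in \ideal{c_Q,d_Q}$ combined with $b_Q \in V$ and $\MS(Q_o) \subseteq V$ then restricts $\bar a_Q$ to $\spn{\bar c, \bar d}$ modulo $V$ via a \autoref{cla:rank-2-in-V}-style argument. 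By symmetry the same bound holds for $R \in \cQ_2$. Hence $\{\bar a_Q : Q \in \cQ_1 \cup \cQ_2\} \subseteq \spn{\bar c,\bar d}$ modulo $V$, contradicting the assumed $3$-dimensional span and proving $\cQ_1,\cQ_2 \subseteq \C[V,\bar c,\bar d]$.

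The main obstacle will be the tensor-rank analysis in case~\ref{case:rk1} when the projection of $P^{UV}$ to $(U/\spn{\bar c,\bar d}) \otimes V$ is nonzero: a naive argument yields a $3$-dimensional bound on $\{\bar a_Q\}$ instead of $2$, and one has to exploit the rigidity of the rank-$1$ decomposition $a_Q^U \otimes b_Q$ of a fixed tensor to force $b_Q$ and $\bar a_Q$ to be (respectively) scalar multiples of the same pair of vectors across all such $Q$. A secondary subtlety is uniformizing the pairs $(\bar c, \bar d)$ extracted in case~\ref{case:rk1} versus case~\ref{case:2} into a single pair governing the whole of $\cQ_1 \cup \cQ_2$; this is handled by observing that both cases give factorizations of the same fixed quadratic $P \bmod V$, whose factorization is unique up to scalars in the UFD $\C[\vx]/\ideal{V}$.
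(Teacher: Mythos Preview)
Your high-level strategy coincides with the paper's: assume $\cQ_3 \not\subset \ideal{V}$, fix $P\in\cQ_3\setminus\ideal V$, apply \autoref{thm:structure} to each pair $(Q,P)$ with $Q\in\cQ_1\cup\cQ_2$, rule out case~(i), and squeeze $\bar a_Q$ into a fixed two-dimensional space from cases~(ii) and~(iii). The execution, however, is quite different. The paper uses from the outset that any such $P$ has the explicit form $P=Q_o+ab$ with $Q_o\in\C[V]$ (the preceding lemma forces $P\in\cQ_3^1$), and splits on whether $\spn{a,b}$ meets $V$ trivially. When $\spn{a,b}\cap V=\{0\}$, \autoref{cla:ind-rank} gives $\rank_s(P)\ge 3$, so case~(iii) is impossible; case~(ii) then reads $ab=a_jv_j+cd$, and \autoref{cla:intersection} yields $a_j\in\spn{a,b,V}$ in one line. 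When $b=a+v$ with $v\in V$, both remaining cases are dispatched by a short computation modulo $V$. No tensor machinery is needed.

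Your abstract $U\otimes V$ argument is valid, but the ``main obstacle'' you flag is illusory in this setting: since $P=Q_o+ab$ with $Q_o\in\C[V]$, the $U$-support of $P^{UV}$ already lies in $\spn{a^U,b^U}=\spn{\bar c,\bar d}$, so the troublesome projection to $(U/\spn{\bar c,\bar d})\otimes V$ vanishes automatically, and $\alpha_Q\, a_Q^U\otimes v_Q\in\spn{\bar c,\bar d}\otimes V$ gives $a_Q^U\in\spn{\bar c,\bar d}$ directly. Two small points: in ruling out case~(i) you assume $R\in\cQ_2$, but $R$ could lie in $\calL_2$ (where $R\in\ideal V$ is not given); this is harmless since then $R$ is reducible and case~(ii) applies, but it should be noted. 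And your case~(iii) claim ``$\bar c,\bar d\in V+\spn{c_Q,d_Q}$'' only follows cleanly when $\bar c=\bar d$; fortunately case~(iii) is vacuous when $\bar c,\bar d$ are independent modulo $V$ (by the rank argument above), so this suffices.
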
 

Now we can either use \autoref{cla:C[V]}, or projection-mapping to finish.

\begin{claim}\label{cla:V-and-P-lr}
	Assume $\rank_s(Q) \leq \rkq$,
	then there exists a $O(1)$-dimensional linear space $V$, 
	such that for every $P_i\in \cQ_2$ one of the following holds
	\begin{enumerate}
		\item \label{cla:V-and-P:item:V} $P_i \in \ideal{V}$.
		\item \label{cla:V-and-P:item:V-c} There is a quadratic polynomial $P'_i$ and a linear form $v_i$ that are defined over $V$, and  a linear form $c_i$, such that $P_i = P'_i + c_i(\epsilon_i c_i + v_i )$
		\item \label{cla:V-and-P:item:span} At least $1-10\delta$ of the polynomials in $\cQ_1$ are in the span of $P_i, Q, \C[V]$.
	\end{enumerate} 
\end{claim}

\begin{claim}\label{cla:Q_1-strac-lr}
	In the current settings every polynomial $T_i \in \cQ_1$ one of the following holds:
	\begin{enumerate}
		\item \label{cla:Q_1-strac:item:V} $T_i\in \ideal{V}$
		\item \label{cla:Q_1-strac:item:V-c} There is a quadratic polynomial $T'_i$ and a linear form $v_i$ that are defined over $V$, and  a linear form $c_i$, such that $T_i =T'_i + c_i(\epsilon_i c_i + v_i )$
		\item \label{cla:Q_1-strac:item:span} $T_i \in \spn{P,\C[V]}$.
	\end{enumerate}   
\end{claim}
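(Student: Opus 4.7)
The plan is to mirror the analysis of \autoref{cla:V-and-P-lr}, but to use a polynomial $P\in\cQ_2$ (together with the already-established structure on $\cQ_2$ and $\cQ_3$) as the pivot, in place of $Q_o$. First, I would deal with the trivial sub-case: if no $P\in\cQ_2$ satisfies item~\ref{cla:V-and-P:item:span} of \autoref{cla:V-and-P-lr}, then every element of $\cQ_2$ lies in $\ideal{V}$ or has the $c$-form of item~\ref{cla:V-and-P:item:V-c}. In that situation one can run essentially the same argument with the roles of $\cQ_1$ and $\cQ_2$ swapped (using the large family of polynomials in $\cQ_3^1$ of the form $Q_o+a_jb_j$ with $a_j\in V$, provided by \autoref{cla:Q3-in-V} / \autoref{lem:Q3-in-V-case2}, as pivots for $T\in\cQ_1$), and the desired trichotomy for $\cQ_1$ follows directly. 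So the main case is when a $P\in\cQ_2$ satisfying item~\ref{cla:V-and-P:item:span} does exist; fix one such $P$.

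Next, fix $T\in\cQ_1$. If $T\in\spn{P,\C[V]}$ we are in case~3 of the claim, so assume otherwise. Since $\cT_1,\cT_2,\cT_3$ satisfy the hypothesis of \autoref{thm:main-ek-intro}, the product of the polynomials in $\cQ_3\cup\calL_3$ lies in $\sqrt{\ideal{T,P}}$, so by \autoref{thm:structure} (combined with \autoref{cla:gup-4}) one of three things happens:
\begin{enumerate*}[label=(\alph*)]
\item some $Q'\in\cT_3$ lies in $\spn{T,P}$, or
\item a non-trivial combination $\alpha T+\beta P=cd$ is reducible, or
\item there exist linear forms $c,d$ with $T,P\in\ideal{c,d}$.
\end{enumerate*}

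I would then case-analyze these outcomes, exploiting the structure we already have. In case (c), since $P$ is of the form described by \autoref{cla:V-and-P-lr}, the pair $c,d$ must essentially live inside $\MS(P)\subseteq V+\spn{c_P}$; combined with the fact that $\MS(Q_o)\subseteq V$ (low-rank case), this forces $c,d\in V$ and hence $T\in\ideal{V}$, which is case~1. In case (b), writing $P=P'_P+c_P(\epsilon_Pc_P+v_P)$ (or $P\in\ideal{V}$), the identity $\alpha T+\beta P=cd$ modulo $V$ becomes (up to rescaling) $\alpha T\equiv cd-\beta\epsilon_Pc_P^2\pmod V$; by \autoref{cla:rank-2-in-V} one of $c,d$ lies in $\spn{c_P,V}$, which lets one peel off a $c(\epsilon c+v)$ summand from $T$ and puts $T$ in case~2. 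Case (a) is the substantial one: $Q'=\alpha T+\beta P\in\cT_3$, and by the structural result on $\cQ_3$ (obtained by applying \autoref{cla:V-and-P-lr} symmetrically to $\cQ_3$, using the fact that most of $\cQ_3$ is already of the form $Q_o+a_jb_j$ with $a_j,b_j\in V$), $Q'$ itself lies in $\ideal{V}$, has the $c$-form, or lies in $\spn{P,\C[V]}$. Each of these three possibilities transfers to $T$ via $T=\tfrac{1}{\alpha}(Q'-\beta P)$: if $Q'\in\ideal{V}$ and $P$ is of the $c$-form, then $T\equiv -\tfrac{\beta}{\alpha}P\pmod V$ which makes $T$ of the $c$-form (case~2) or gives $T\in\spn{P,\C[V]}$ (case~3); if $Q'$ has its own $c$-form with linear form $c_{Q'}$, then modulo $V$ the equality $\alpha T + \beta P \equiv c_{Q'}(\epsilon_{Q'}c_{Q'}+v_{Q'})$ together with the structure of $P$ produces a $c$-form for $T$ via \autoref{cla:rank-2-in-V}; and if $Q'\in\spn{P,\C[V]}$, then so is $T$ (case~3).

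The step I expect to be the main obstacle is case~(a) with a $c$-form $Q'$: one must argue that the linear form $c_T$ emerging for $T$ is actually of the allowed shape $c_T(\epsilon_T c_T+v_T)$ with $v_T\in V$, rather than introducing a new direction outside $\spn{V,c_P,c_{Q'}}$. This will require checking, via \autoref{fact:ufd} and \autoref{cla:ind-rank} in the quotient ring $\C[\vx]/\ideal{V}$, that the rank-$2$ quadratic $Q'-\beta P$ admits a factorization compatible with the existing $c$-forms; the key point is that modulo $V$ the polynomials $P$ and $Q'$ become products of two linear forms each, which reduces the problem to a computation inside a low-dimensional quotient where \autoref{cla:rank-2-in-V} can be applied cleanly. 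Once this step is carried out for every $T\in\cQ_1$, the three cases of the claim are exhausted and the proof is complete.
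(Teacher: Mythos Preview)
Your overall strategy diverges from the paper's in a way that creates a real gap. The paper does \emph{not} pivot through the special polynomial $P\in\cQ_2$; instead, it always pivots through the large family $\cI=\{Q_i=Q_o+a_ib_i\in\cQ_3^1:a_i,b_i\in V\}$ (with $|\cI|\ge(1-10\delta)m_3$), regardless of whether $P$ exists. For $T\in\cQ_1$: if $T$ satisfies \autoref{thm:structure}\ref{case:2} with any $Q_i\in\cI$ then $T\in\ideal{V}$ since $\MS(Q_i)\subseteq V$; if $T$ satisfies \autoref{thm:structure}\ref{case:rk1} with two distinct $Q_i,Q_j\in\cI$ then the difference $\alpha Q_i-\beta Q_j\in\C[V]$ combined with \autoref{cla:rank-2-in-V} gives the $c$-form; and if $T$ satisfies \autoref{thm:structure}\ref{case:span} with some $Q_i\in\cI$, then $T\in\spn{Q_i,P'}$ for some $P'\in\cQ_2$, and one simply reads off the conclusion from the already-proved trichotomy for $P'$ (\autoref{cla:Q_2-strac-lr}), using that $Q_i\in\C[V]$. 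What you call the ``trivial sub-case'' is in fact the entire argument.

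The concrete error in your ``main case'' is that you repeatedly assume $P$ has the $c$-form $P=P'_P+c_P(\epsilon_Pc_P+v_P)$ or lies in $\ideal{V}$. By construction $P$ is chosen from $\cJ$, the set of polynomials in $\cQ_2$ satisfying \emph{only} item~\ref{cla:V-and-P:item:span} of \autoref{cla:V-and-P-lr} and neither of the other two; after the adjustment to $V$ made right after \autoref{cla:V-and-P-lr}, every linear combination of $P$ with $\C[V]$ has $\rank_s\ge 3$. So in your case~(c) there is no $c_P$ to speak of, and $\MS(P)\not\subseteq V+\spn{c_P}$; the correct observation is simply that $P\in\ideal{c,d}$ would force $\rank_s(P)\le 2$, which is impossible. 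More seriously, in your case~(b) the identity $\alpha T+\beta P=cd$ cannot be analysed modulo $V$ the way you propose, because $P\bmod V$ is not $\epsilon_Pc_P^2$ but a genuine rank-$\ge 3$ quadratic, and there is no evident way to peel off a single $c(\epsilon c+v)$ term from $T$ using only this one equation. Pivoting through the many ``flat'' polynomials in $\cI$ avoids this difficulty entirely, because their pairwise differences already live in $\C[V]$.
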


We want to show the existence of such $V$ regardless to the structure of $\cQ_2$ thus, let us assume that $\cQ_2^2 = \emptyset$, but $\cQ_1^2 \neq \emptyset$.

 \begin{lemma}\label{lem:Q3-in-V-case2}
 	If $\cQ^2_2= \emptyset$ and $\cQ_1^2 \neq \emptyset$,
 	then there is a linear space of linear forms, where  $\dim(V)=O(1)$ and $1-10\delta$ of the polynomials  $Q_i\in \cQ^1_3$ satisfy $a_i\in V$.  
 \end{lemma}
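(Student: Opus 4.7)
The plan is to adapt the argument from the preceding lemma (the $\cQ_2^2 \neq \emptyset$ case) to our symmetric situation, exploiting the additional assumption $\cQ_2^2 = \emptyset$ to close off the most delicate branch of the case analysis for free. First I pick any $P \in \cQ_1^2$ and write $P = \gamma Q_o + L$ with $\rank_s(L) \geq 2$. I split into two sub-cases based on the existence of a low-rank linear combination of $P$ and $Q_o$: if every non-trivial combination $\alpha P + \beta Q_o$ has $\rank_s > 3$, I set $V = \spn{a_1, b_1}$ for a fixed reference $C_1 = Q_o + a_1 b_1 \in \cQ_3^1$; otherwise I set $V = \MS(L) + \MS(L')$ for some witnessing low-rank combination $L'$. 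In either case $\dim(V) = O(1)$.

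Next, for each $C_i = Q_o + a_i b_i \in \cQ_3^1$ with $\{a_i,b_i\} \not\subseteq V$, I apply \autoref{thm:structure} to the pair $(P, C_i)$, which lie in the distinct sets $\cT_1, \cT_3$, so that a product of polynomials from $\cT_2$ lies in $\sqrt{\ideal{P, C_i}}$. I will rule out \autoref{thm:structure}\ref{case:rk1} by exactly the rank argument used in the earlier claim: a non-trivial reducible combination $\alpha P + \beta C_i = cd$ forces $(\alpha\gamma + \beta)Q_o = cd - \alpha L - \beta a_i b_i$, and since $\rank_s(Q_o) \geq \rkq$ the coefficient must vanish, after which \autoref{cla:rank-2-in-V} forces $\{a_i, b_i\} \subseteq V$. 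The occurrences of \autoref{thm:structure}\ref{case:2} with $P$ will be bounded by the hypothesis that $Q_o$ is bad for $\cQ_3$, since they imply $Q_o$ and $C_i$ satisfy \autoref{thm:structure}\ref{case:2} together, contributing at most $\delta m_3$ bad indices and accounting for the $10\delta$ slack.

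The remaining and simplest case is \autoref{thm:structure}\ref{case:span}: there exists $A' \in \cT_2 = \cQ_2 \cup \calL_2$ with $A' = \alpha C_i + \beta P$, for $\alpha,\beta \neq 0$. The key new observation is that our hypothesis $\cQ_2^2 = \emptyset$ forces $A' \in \cQ_2^1 \cup \calL_2$, so $A'$ is either $\alpha' Q_o + cd$ with $\dim\spn{c,d} \leq 2$ or $A' = \ell^2$. Substituting gives
\begin{equation*}
(\alpha + \beta\gamma - \alpha')\,Q_o \;=\; cd - \alpha\, a_i b_i - \beta L,
\end{equation*}
and the large-rank assumption on $Q_o$ forces $\alpha + \beta\gamma - \alpha' = 0$, leaving a rank-at-most-$4$ identity. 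Applying \autoref{cla:ind-rank} and \autoref{cla:rank-2-in-V} to this identity then forces $\spn{a_i, b_i}$ to meet $V$ (and in fact be contained in it after absorbing $\MS(L)$ into $V$), contradicting the assumption that $\{a_i,b_i\} \not\subseteq V$. Hence at most $\delta m_3$ polynomials in $\cQ_3^1$ (from the residual \autoref{thm:structure}\ref{case:2} branch) can fail the conclusion, yielding the claimed bound of $(1-10\delta)m_3$.

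The main obstacle will be the bookkeeping in the \autoref{thm:structure}\ref{case:2} branch when $\gamma \neq 0$: one must verify that if $P, C_i \in \ideal{c,d}$, then $Q_o \in \ideal{c,d}$ modulo a piece living inside $\MS(L) \subseteq V$, so that $Q_o$ itself satisfies \autoref{thm:structure}\ref{case:2} with $C_i$ and the "bad for $\cQ_3$" hypothesis applies. A secondary technical point is correctly choosing $V$ in the sub-case where a low-rank combination $\alpha P + \beta Q_o$ exists, so that this combination absorbs the potentially larger minimal space of $L$ while keeping $\dim(V) = O(1)$; this is routine but must be stated carefully to make the reduction to the earlier rank-$\geq \rkq$ arguments go through cleanly.
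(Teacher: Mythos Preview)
Your proposal is correct and follows essentially the same approach as the paper: pick $T\in\cQ_1^2$, write $T=\gamma Q_o+L$, split on whether every non-trivial combination of $T$ and $Q_o$ has large $\rank_s$, and for each $C_i\in\cQ_3^1$ with $a_i\notin V$ run the three cases of \autoref{thm:structure}; the decisive simplification over \autoref{lem:Q3-in-V} is exactly the one you isolate, namely that $\cQ_2^2=\emptyset$ forces the spanned polynomial in \autoref{thm:structure}\ref{case:span} to lie in $\cQ_2^1\cup\calL_2$, so it is of the form $\alpha'Q_o+cd$ and the same rank identity that kills \autoref{thm:structure}\ref{case:rk1} kills this branch too. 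One small point worth making explicit: in your first sub-case (all non-trivial combinations of $P$ and $Q_o$ have $\rank_s>3$) every branch of the structure theorem is contradicted for \emph{every} $C_i$, so that sub-case is in fact vacuous --- this is what the paper's cryptic ``thus there cannot be such $T$'' is recording --- and all the work happens in the second sub-case, where your handling of \autoref{thm:structure}\ref{case:2} via $c,d\in\MS(P)\subseteq V$ and the ``bad for $\cQ_3$'' hypothesis matches the paper.
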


\begin{proof}
	
	Consider $T\in \cQ^1_2$, then $T = \gamma Q_o+ L$, where $\rank_s(L) \geq 2$. 
	We will split the proof into two cases:
	
	\textbf{Case 1:}  $\rank_s(L) >2$.

	Consider $V$ as appears previously.
	If $1-10 \delta m_3$  polynomials in $\cQ^1_3$ satisfy that $a_i\in V$ then we are done. If not there are at least $6 \delta m_3$  polynomials $Q_i \in \cQ^1_3$ such that $a_i \notin V$. Let $Q_i \in \cQ^1_3$ such that $a_i \notin V$. 
	
	\begin{itemize}
		\item  $Q_i, T$ satisfy $\autoref{thm:structure}\ref{case:rk1}$ then there are linear forms, $c,d$ such that $\alpha Q_i + \beta T = cd$ and therefore $\alpha Q_o+ \beta T = cd - \alpha a_i(\epsilon_ia_i + v_i)$. This is a contradiction ss every linear combination of $T$ and $Q_o$ is of rank greater than $2$.
		
		\item $Q_i, P$ satisfy $\autoref{thm:structure}\ref{case:2}$
		
		It follows that $\rank(T) = 2$ and thus $T = 0\cdot Q_o+ T$ and thus $L=T$ and $\rank_s(T) = 2$ in contradiction to our assumption.

	\end{itemize}
	It follows that  $Q_i, P$ must satisfy $\autoref{thm:structure}\ref{case:span}$. Then there is a polynomial  $P' \in \cQ_2$ such that \[P' = \alpha Q_i + \beta T = (\alpha + \beta \gamma)Q_o+ \alpha a_i(\epsilon_ia_i + v_i) + \beta L,\]
	
	As $\cQ_2^2 = \emptyset$ it folds that $P' \in \cQ^1_1$ then there are linear forms such that $P' = Q+ cd$. Applying it to the equation we obtain
	\[(1-\alpha)Q_o-\beta T = \alpha a_i(\epsilon_ia_i + v_i) + cd\]
	This is a contradiction as every linear combination of $T$ and $Q_o$ is of rank greater than $2$. Thus there cannot be such $T$.
	
	\textbf{Case 2:} $\rank_s(L)= 2$
	
	$V = V + \MS(L)$. If $1-10 \delta m_3$  polynomials in $\cQ^1_3$ satisfy that $a_i\in V$ then we are done. If not there are at least $6 \delta m_3$  polynomials $Q_i \in \cQ^1_3$ such that $a_i \notin V$. Consider the possible cases of \autoref{thm:structure} that such $Q_i$ and $T$ satisfy.
	\begin{itemize}

		\item  $Q_i, T$ satisfy $\autoref{thm:structure}\ref{case:rk1}$ then there are linear forms, $c,d$ such that $\alpha Q_i + \beta T = cd$ and therefore $\alpha Q_o+ \beta T = cd - \alpha a_i(\epsilon_ia_i + v_i)$. As every linear combination of $T$ and $Q_o$ is of rank greater than $1$, (as if $rank(Q)\leq \rkq$ then $\MS(Q) \subseteq V$), we obtain that $c,d,a_i,v_i$ in $V$, in contradiction.
		
		\item $Q_i, T$ satisfy 
		\autoref{thm:structure}\ref{case:2}, then there are two linear forms $c$ and $d$ such that  $Q_i,T\in \sqrt{\ideal{c,d}}$, this implies that $\lbrace c,d \rbrace \subset\MS(T) \subseteq V$. 
		If $Q=Q_i-a_i (\epsilon_ia_i + v_i)$ is not zero modulo $c$ and $d$, then we obtain that $Q_o\equiv_{c,d} -a_i (\epsilon_ia_i + v_i)$. Thus, there are linear forms $v_1,v_2\in\MS(Q)$ such that $a_i \equiv_{c,d} v_1$ and $\epsilon_ia_i + v_i \equiv_{c,d} v_2$. In particular, as $\MS(Q)\cup\{c,d\}\subset V$ it follows that $a_i\in V$, in contradiction.
		If $Q_o$ is zero modulo $c$ and $d$, then $Q_{j},Q$ satisfy \autoref{thm:structure}\ref{case:2} and by our assumption that $Q_o$ is bad for $\cQ_3$ it follows that there are atmost $\delta$ such $Q_i$.

	\end{itemize} 
	
	It follows that $Q_i, T$ must satisfy $\autoref{thm:structure}\ref{case:span}$. Then there is a polynomial  $P' \in \cQ_2$ such that \[P' = \alpha Q_i + \beta T = (\alpha + \beta \gamma)Q_o+ \alpha a_i(\epsilon_ia_i + v_i) + \beta L,\]
	
	Again, as $\cQ^2_2 = \emptyset$ it follows that $P' \in \cQ^1_1$ then there are linear forms such that $P' = Q+ cd$. Applying it to the equation we obtain
	\[(1-\alpha) -\beta T = \alpha a_i(\epsilon_ia_i + v_i) + cd\]
	As every linear combination of $T$ and $Q_o$ is of rank greater than $1$, (as if $rank(Q)\leq \rkq$ then $\MS(Q) \subseteq V$), we obtain that $c,d,a_i,v_i$ in $V$, in contradiction.
		It follows that at least $1-10 \delta m_3$ of the polynomials in $\cQ_3$ are of the form $Q+\C[V]$, as we wanted to show.

\end{proof}

Finally we will handle the case that $\cQ_1^2 = \cQ_2^2 = \emptyset$. First by considering the case that $\cQ_3^2 \neq \emptyset$ in this case we can just prove that the dimension of $\cQ$ is $O(1)$.

\begin{lemma}\label{lem:Q3-in-V-case2}
 	If $\cQ_1^2 = \cQ^2_2= \emptyset$ and $\cQ_3^2 \neq \emptyset$, and $\rank_s(Q) > \rkq$
 	then there is a linear space of linear forms, where  $\dim(V)=O(1)$ and $\cQ_1, \cQ_2\subset \spn{Q, \C[V]}$.  
 \end{lemma}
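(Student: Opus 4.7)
My plan is to control the directions appearing in $\cQ_1 \cup \cQ_2$ modulo a bounded-dimensional subspace. Since $\cQ_1^2 = \cQ_2^2 = \emptyset$, every $Q_i \in \cQ_1 \cup \cQ_2$ has the form $Q_i = Q_o + a_i b_i$, and the target conclusion $\cQ_1, \cQ_2 \subset \spn{Q_o, \C[V]_2}$ is equivalent to $a_i, b_i \in V$ for all such $i$. I would begin by fixing a witness $P \in \cQ_3^2$ and, by the analog of \autoref{cla:no-rank-3}, choosing its representative so that $P = \gamma Q_o + L$ with $\rank_s(L) = 2$ exactly; set $V_0 = \MS(L)$, so $\dim V_0 \leq 4$.

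The first step is to apply \autoref{thm:structure} to the pair $(P, Q_i)$ for each $Q_i \in \cQ_1 \cup \cQ_2$, using the third of the three sets as the provider of the product in the radical. By \autoref{obs:case3}, case~\ref{case:2} is ruled out. In case~\ref{case:rk1}, writing $\alpha P + \beta Q_i = cd$ and using $\rank_s(Q_o) > \rkq$ forces the $Q_o$-coefficient to vanish, leaving $L = \gamma\, a_i b_i + \alpha^{-1} cd$; since $\rank_s(L) = 2$ this is a minimal rank-$2$ decomposition of $L$, so \autoref{cla:irr-quad-span} yields $\{a_i, b_i\} \subset \MS(L) = V_0$. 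In case~\ref{case:span}, a polynomial $Q_k = \alpha P + \beta Q_i$ lies in whichever of $\cQ_1, \cQ_2$ does not contain $Q_i$; since $\cQ_k^2 = \emptyset$, $Q_k = Q_o + a_k b_k$, and the same rank calculation gives $\alpha \gamma + \beta = 1$ and $a_k b_k = \alpha L + \beta\, a_i b_i$ with $\alpha, \beta \neq 0$. Passing to the quotient $\C[\vx]/\langle V_0 \rangle$ (a UFD) yields $\bar a_k \bar b_k = \beta\, \bar a_i \bar b_i$, and unique factorization (\autoref{fact:ufd}) matches the multisets $\{\bar a_i, \bar b_i\}$ and $\{\bar a_k, \bar b_k\}$ up to scalars. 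Analyzing the original (not just modular) equation I would further show that one can in fact take $a_k \equiv \lambda a_i$ and $b_k \equiv \mu b_i \pmod{V_0}$ with $\lambda\mu = \beta$ (possibly after swapping $a_k$ and $b_k$).

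Call $Q_i$ \emph{bad} if $\{a_i, b_i\} \not\subseteq V_0$; every bad $Q_i$ is paired, via the case~\ref{case:span} argument, with a bad $Q_k$ in the opposite of the two sets $\cQ_1, \cQ_2$ having the same pair of direction classes modulo $V_0$. The remaining task is to bound the dimension (modulo $V_0$) of all bad directions. I would pick two bad $Q_i \in \cQ_1$ and $Q_j \in \cQ_2$ whose direction pairs are not contained in a common line modulo $V_0$, and apply \autoref{thm:structure} to $(Q_i, Q_j)$ with $\cQ_3$ as the source of the radical product. Case~\ref{case:2} is excluded by rank. Case~\ref{case:rk1} produces $a_i b_i - a_j b_j$ of rank $1$; reducing modulo $V_0$ gives $\bar a_i \bar b_i - \bar a_j \bar b_j$ of rank $\leq 1$ in the quotient, which forces the pairs $\{\bar a_i, \bar b_i\}$ and $\{\bar a_j, \bar b_j\}$ to share a direction. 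Case~\ref{case:span} produces some $Q_l \in \cQ_3$ in $\spn{Q_i, Q_j}$, whose direction pair (if $Q_l \in \cQ_3^1$) sits in $\spn{\bar a_i, \bar b_i, \bar a_j, \bar b_j}$ and, when re-fed into \autoref{thm:structure} against $Q_i$ or $Q_j$ with the remaining set providing the radical product, pushes the bounded-dimensional structure back into $\cQ_1 \cup \cQ_2$. Adding each newly discovered ``bad'' direction into $V_0$ and invoking \autoref{cla:Q-C[V]-a_i} to absorb it cleanly, I would argue that this enlargement process terminates after $O(1)$ steps, producing the desired $V$ of dimension $O(1)$ with $\cQ_1, \cQ_2 \subset \spn{Q_o, \C[V]_2}$.

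The main obstacle will be the last paragraph: turning the local conclusion ``any two bad pairs share a direction modulo $V_0$'' into a uniform bound on the span of all bad directions. The trouble is that case~\ref{case:span} between a bad $Q_i \in \cQ_1$ and a bad $Q_j \in \cQ_2$ naturally produces a polynomial in $\cQ_3$, not a further polynomial of $\cQ_1 \cup \cQ_2$, so the propagation has to go through a second application of \autoref{thm:structure} between the newly produced $\cQ_3$-polynomial and an existing one in $\cQ_1$ or $\cQ_2$. Carefully tracking which case of \autoref{thm:structure} is forced at each step, ruling out degeneracies (e.g.\ when many bad polynomials in one set share a single direction but those in the other set have unrelated ones, or when $Q_l$ lands in $\cQ_3^2$ rather than $\cQ_3^1$ and therefore does not itself have a ``direction pair''), and closing the induction by appealing to \autoref{cla:Q-C[V]-a_i} and \autoref{cla:Q-and-C[V]} at the right moment, is where the argument will be most delicate.
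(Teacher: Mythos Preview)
Your setup and the handling of cases~\ref{case:2} and~\ref{case:rk1} are correct and match the paper. The gap is in your treatment of case~\ref{case:span}: you already have everything needed to conclude immediately, exactly as you did in case~\ref{case:rk1}, and the entire ``bad directions'' detour in your last two paragraphs is unnecessary.

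Concretely, in case~\ref{case:span} you derive $a_kb_k = \alpha L + \beta\,a_ib_i$ with $\alpha,\beta\neq 0$. Rearranging gives
\[
\alpha L \;=\; a_kb_k - \beta\,a_ib_i.
\]
The left-hand side has $\rank_s = 2$ (since $\alpha\neq 0$ and $\rank_s(L)=2$), and the right-hand side is a two-term representation of it, hence a \emph{minimal} representation. By \autoref{cla:irr-quad-span} this forces $\spn{a_i,b_i,a_k,b_k}=\MS(\alpha L)=V_0$, so in particular $a_i,b_i\in V_0$ and $Q_i\in\spn{Q_o,\C[V_0]_2}$. This is literally the same rank-$2$ minimal-space argument you invoked in case~\ref{case:rk1}; you simply stopped one step short and only drew the weaker modular conclusion $\bar a_k\bar b_k=\beta\,\bar a_i\bar b_i$.

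The paper's proof is exactly this: pick $P\in\cQ_3^2$, set $V=\MS(L)$, and observe that in both surviving cases of \autoref{thm:structure} the equation you obtain expresses $L$ (up to a nonzero scalar) as a difference of two rank-$1$ terms, forcing all four linear forms into $\MS(L)$. There are no ``bad'' polynomials, no second application of the structure theorem, and no need for \autoref{cla:Q-C[V]-a_i} here.
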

\begin{proof}
	Let $Q_i \in \cQ_3^2$ it in not hard to see that as $\cQ_1^2 = \cQ^2_2 = \emptyset$ it must be the case that there is $\alpha_i$ such that $Q_i = \alpha_i Q+ L$ where rank $\rank_s(L) = 2$. Set $V = V + \MS(L)$. Let $P_j \in \cQ_1\cup \cQ_2$ and consider what possible cases of \autoref{thm:structure} that $Q_i$ satisfy with $P_j$. It is not possible that $Q_i$ satisfies \autoref{thm:structure}\ref{case:2} , by the rank condition on $Q_o$, and if $P_j,Q_i$ satisfy \autoref{thm:structure}\ref{case:span} or \autoref{thm:structure}\ref{case:rk1} then $P_j \in \spn{Q,\C[V]}$. 
\end{proof}
\begin{lemma}\label{cla:q1-q2-then-q3}
	If there exists a vector space of linear forms $V$, such that $\dim(V)<\rkq/2$ such that  $\cQ_i,\cQ_j \subset \spn{Q, \C[V]}$  then $\cQ_k \subset \spn{Q,\C[V]}$.
\end{lemma}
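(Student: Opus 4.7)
The plan is to essentially reprise the proof of \autoref{cla:Q-and-C[V]}, but to check that nothing in that argument used the squares $\calL_i$ that were present there. Given an arbitrary $Q_i\in\cQ_k$, I want to show $Q_i\in\spn{Q_o,\C[V]_2}$. The idea is to pair $Q_i$ with a polynomial $P_j$ from one of the two ``known'' sets that actually has a non-trivial $Q_o$-component, so that $P_j=\alpha_jQ_o+R_j$ with $R_j\in\C[V]_2$ and $\alpha_j\neq 0$. If no such $P_j$ exists in either $\cQ_i$ or $\cQ_j$, then $\cQ_i,\cQ_j\subset\C[V]_2$ and we are in the situation handled by \autoref{cla:C[V]}, which immediately gives $\cQ_k\subset\C[V]_2\subset\spn{Q_o,\C[V]_2}$; otherwise, pick such a $P_j$ from (without loss of generality) $\cQ_j$.

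Now I apply \autoref{thm:structure} to the pair $P_j,Q_i$. Case~\ref{case:2} is impossible: it would imply that $P_j$ lies in an ideal generated by two linear forms, so $\rank_s(P_j)\leq 2$; but $P_j=\alpha_jQ_o+R_j$ with $\rank_s(Q_o)\geq\rkq$ and $\rank_s(R_j)\leq \dim(V)<\rkq/2$, so by \autoref{cla:rank-mod-space} applied in reverse (or by subadditivity of $\rank_s$) the rank of $P_j$ is at least $\rkq/2>2$. In Case~\ref{case:span}, the colored hypothesis provides a $T_\ell\in\cQ_i$ with $Q_i\in\spn{P_j,T_\ell}$; since both $P_j\in\cQ_j\subset\spn{Q_o,\C[V]_2}$ and $T_\ell\in\cQ_i\subset\spn{Q_o,\C[V]_2}$, we conclude $Q_i\in\spn{Q_o,\C[V]_2}$ as desired.

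The only case that actually uses the high rank of $Q_o$ in a non-trivial way is Case~\ref{case:rk1}: there exist linear forms $c,d$ and scalars $\alpha,\beta$ with $\alpha Q_i+\beta P_j=cd$. The colored hypothesis also produces a $T_\ell\in\cQ_i$ and a linear form $e$ with $T_\ell=\gamma P_j+ce$ (applied to the vanishing $cd$ together with $P_j$). Then $T_\ell-\gamma P_j=ce$ is a rank-$\leq 1$ element of $\spn{Q_o,\C[V]_2}$. The key subclaim is: \emph{any rank-$\leq 2$ element of $\spn{Q_o,\C[V]_2}$ already lies in $\C[V]_2$}, because writing it as $\mu Q_o + S$ with $S\in\C[V]_2$, subadditivity plus $\rank_s(Q_o)\geq\rkq$ and $\rank_s(S)\leq 2\dim(V)<\rkq$ forces $\mu=0$. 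Hence $ce\in\C[V]_2$, so after renormalizing $c,e\in V$; repeating the same argument with $d$ (using $cd=\alpha Q_i+\beta P_j\in\spn{Q_o,\C[V]_2}$ together with the fact that one factor is already in $V$) yields $d\in V$ and so $cd\in\C[V]_2$. Thus $\alpha Q_i=cd-\beta P_j\in\spn{Q_o,\C[V]_2}$, and since $\alpha\neq 0$ (otherwise $P_j$ would be reducible, contradicting irreducibility of an irreducible quadratic of rank $>2$) we obtain $Q_i\in\spn{Q_o,\C[V]_2}$.

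The main obstacle I anticipate is bookkeeping in the rank-$1$ case: one must verify that the colored hypothesis really does furnish a ``third-set'' polynomial of the form $\gamma P_j+ce$ (rather than just asserting membership in $\sqrt{\ideal{P_j,Q_i}}$), and that the subclaim ``rank-$\leq 2$ elements of $\spn{Q_o,\C[V]_2}$ lie in $\C[V]_2$'' really propagates to factor-by-factor membership in $V$; both steps are routine but require care with the constants and with distinguishing the subcases $\alpha=0$ vs.\ $\alpha\neq 0$ (and the analogous subcases for $P_j$). Once this is done, combining the lemma with the previous step of the argument gives $\cQ_1\cup\cQ_2\cup\cQ_3\subset\spn{Q_o,\C[V]_2}$, which has dimension $O(1)$.
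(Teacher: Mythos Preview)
Your approach is essentially the paper's: pick $P_j$ from a known set with a genuine $Q_o$-component, rule out \autoref{thm:structure}\ref{case:2} by rank, handle \autoref{thm:structure}\ref{case:span} by expressing $Q_i$ in the span of two known-set polynomials, and in \autoref{thm:structure}\ref{case:rk1} use the high rank of $P_j$ modulo $c$ to extract a third-set polynomial $T_\ell=\gamma P_j+ce$, then apply your subclaim (rank-$\le 1$ elements of $\spn{Q_o,\C[V]_2}$ lie in $\C[V]_2$) to force $c,e\in V$.

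There is one small circularity in your write-up. To get $d\in V$ you appeal to ``$cd=\alpha Q_i+\beta P_j\in\spn{Q_o,\C[V]_2}$'', but this uses $Q_i\in\spn{Q_o,\C[V]_2}$, which is exactly what you are trying to prove. The fix (and what the paper does when it says ``the same argument holds for $d$'') is to rerun the extraction step with the roles of $c$ and $d$ swapped: since $\sqrt{\ideal{Q_i,P_j}}=\sqrt{\ideal{cd,P_j}}$ and $P_j$ stays irreducible modulo $d$ as well, setting $d=0$ produces another third-set polynomial $T_{\ell'}=\gamma' P_j + d f$; then $df=T_{\ell'}-\gamma' P_j\in\spn{Q_o,\C[V]_2}$ has rank $\le 1$, so $d,f\in V$ by your subclaim. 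With that correction your argument goes through and matches the paper's.
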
 
\begin{proof}
	Without loss of generality assume $\cQ_1,\cQ_2 \subset \spn{Q, \C[V]}$.
	Let $Q_i \in \cQ_3$. If there is $P_j\in \cQ_2$ such that $Q_i,P_j$ satisfy \autoref{thm:structure}\ref{case:span} then there is $T_k \in \cQ_1$ such that $Q_i \in \spn{P_j,T_k} \subseteq \spn{Q,\C[V]}$ and the statement holds. Thus for every $P_j \in \cQ_2$, $Q_i,P_j$ satisfies \autoref{thm:structure}\ref{case:rk1}, which means that there are linear forms $c$ and $d$ such that $Q_i = P_j+cd$. Moreover we know that there is $T_k \in \cQ_1$, and a linear form $e$ such that $T_k = \alpha P_j +ce$. It follows  that $Q_o+ T'_k = \alpha(Q+ P'_j) + ce$ as $\dim(V)<\rkq/2$ it holds that $\rank_s(T'_k-P'_J -ce)< \rkq$ and thus $\alpha = 1$. From pairwise linear independence we know that $ T'_k-P'_J \neq 0$ and thus $c,e \in LS(T'_k-P'_J )\subseteq V$. The same argument holds for $d$, ans thus we deduce that $c,d \in V$ and as $Q_i = P_j +cd$ then $Q_i \in \spn{Q_o+ \C[V]}$ as we needed.
\end{proof}

Thus if $\rank_s(Q) > \rkq$ then the only case left to handle is the case where $\cQ^1_2=\cQ^2_2=\cQ^2_3 = \emptyset$, in this case, the proof follows from the proof of \autoref{thm:Q-dom-gen} in the case where $m_2 = 0$.

If $\rank_s(Q) \leq \rkq$ then $\MS(Q) \subset V$, we prove the following lemma.

\begin{lemma}\label{lem:Q3-in-V-case2}
	If $\cQ_1^2 = \cQ^2_2= \emptyset$ and $\cQ_3^2 \neq \emptyset$, and $\rank_s(Q) \leq \rkq$
	then there is a linear space of linear forms, where  $\dim(V)=O(1)$ and $\cQ_1, \cQ_2\subset \ideal{V}$.  
\end{lemma}
\begin{proof}
	Let $Q_i \in \cQ_3^2$ it in not hard to see that as $\cQ_1^2 = \cQ^2_2 = \emptyset$ it must be the case that there is $\alpha_i$ such that $Q_i = \alpha_i Q+ L$ where rank $\rank_s(L) = 2$. Set $V = V + \MS(L)$. Let $P_j \in \cQ_1\cup \cQ_2$ and consider what possible cases of \autoref{thm:structure} that $Q_i$ satisfy with $P_j$. If $P_j,Q_i$ satisfy \autoref{thm:structure}\ref{case:span} then there is $T_k$ in the third set, such that $T_k = \alpha P_j + \beta Q_i$. As $T_k,P_j \in \cQ_1^2 \cup \cQ_2^2$ it holds that there are $a_k,b_k,a_j,b_j$ such that $T_k = \alpha_k Q+ a_kb_k$ and $P_j = \alpha_j Q+ a_jb_j$ and thus $(\alpha_k -\beta \alpha_j)Q_o- \alpha Q_i = \beta a_jb_j - a_kb_k$ as any linear combination of $Q,Q_i$ is of rank greater then $2$, it must be that $\rank_s(\beta a_jb_j - a_kb_k) = 2$ and $\lbrace a_k,b_k, a_j,b_j \rbrace \subset V$, and in particular $\MS(P_j) \subseteq V$ and $P_j \in \ideal{V}$
	
	If $P_j,Q_i$ satisfy \autoref{thm:structure}\ref{case:rk1} then there are linear forms, $c,d$ such that $P_j =\alpha Q_i + cd$, and by a similar argument as before, it holds that $\alpha Q_i -Q_o= a_jb_j -cd$ and thus $\lbrace  a_j,b_j \rbrace \subset V$ and in particular $\MS(P_j) \subseteq V$ and $P_j \in \ideal{V}$.
	
	If $P_j,Q_i$  satisfy \autoref{thm:structure}\ref{case:rk1} then there are linear forms $c,d\in \MS(Q_i) \subseteq V$ such that $P_j,Q_i \in \ideal{c,d}\subseteq \ideal{V}$.
	
\end{proof}

\begin{lemma}\label{cla:q1-q2-then-q3_2-ideal}
	If $\cQ_1^2 = \cQ^2_2= \emptyset$, $\rank_s(Q) \leq \rkq$. By the \autoref{lem:Q3-in-V-case2} we know that there  is a linear space of linear forms, where  $\dim(V)=O(1)$ and $\cQ_1, \cQ_2\subset \ideal{V}$.
\end{lemma}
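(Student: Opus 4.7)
The plan is to analyse each $Q_i \in \cQ_3$ by invoking \autoref{thm:structure} against the polynomials in $\cQ_1 \cup \cQ_2$, which by hypothesis all lie in $\ideal{V}$. Since $\cQ_1^2 = \cQ_2^2 = \emptyset$, every $P \in \cQ_1 \cup \cQ_2$ has the form $P = \alpha_P Q_o + a_P b_P$ for some linear forms $a_P, b_P$ (or $P = a_P^2$ if $P \in \calL_i$). Combined with $\MS(Q_o) \subseteq V$ and $P \in \ideal{V}$, one easily checks (just as in the proof of \autoref{cla:m-2k-in-ideal-v}) that in fact $\MS(P) \subseteq V$, so $P \in \C[V]_2$. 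Thus the effective hypothesis is $\cQ_1 \cup \cQ_2 \subseteq \C[V]_2$.

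Next I would fix $Q_i \in \cQ_3$ and run through the three cases of \autoref{thm:structure} against a $P_j \in \cQ_1 \cup \cQ_2$. In \autoref{thm:structure}\ref{case:span}, $Q_i \in \spn{P_j, R}$ for some $R$ in one of the other two sets; both $P_j, R \in \C[V]_2$, so $Q_i \in \C[V]_2 \subseteq \ideal{V}$. In \autoref{thm:structure}\ref{case:2}, $Q_i, P_j \in \ideal{c,d}$ with $c, d \in \MS(P_j) \subseteq V$, so again $Q_i \in \ideal{V}$. The delicate case is \autoref{thm:structure}\ref{case:rk1}: $\alpha Q_i + \beta P_j = cd$ only puts $cd$ in $\spn{Q_i, P_j}$, and the linear forms $c,d$ need not lie in $V$.

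To handle case~\ref{case:rk1}, I would apply pigeonhole: if $Q_i$ satisfies case~\ref{case:rk1} with two different polynomials $P_{j_1}, P_{j_2} \in \cQ_1 \cup \cQ_2$ (where, say, $P_{j_1}$ comes from $\cQ_1$ and $P_{j_2}$ comes from $\cQ_2$ to exploit the colored setting), then subtracting the two equations eliminates $Q_i$ and yields a nontrivial relation $\beta_1 P_{j_1} - \beta_2 P_{j_2} = c_1 d_1 - c_2 d_2$ with left-hand side in $\C[V]_2$. By \autoref{cla:rank-2-in-V} this forces $\spn{c_1, d_1} \cap V$ and $\spn{c_2, d_2} \cap V$ to be nontrivial, and up to relabeling we may write $d_t = \epsilon_t c_t + v_t$ with $v_t \in V$. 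Consequently $\alpha Q_i = cd - \beta P_j \in \ideal{V, c}$ for a single additional linear form $c$. Then, repeating the construction of \autoref{cla:V-and-P-lr} and \autoref{cla:Q_1-strac-lr}, the family of such additional forms $\{c_i\}_{Q_i \in \cQ_3}$ will be shown to satisfy the conditions of the projective \hyperref[thm:EK]{Edelstein--Kelly theorem} modulo $V$, hence spans an $O(1)$-dimensional space, so that $\cQ_3 \subseteq \ideal{V'}$ for some $V' \supseteq V$ of bounded dimension. Alternately, when $Q_i \in \cQ_3$ satisfies case~\ref{case:rk1} with at most one polynomial of $\cQ_1 \cup \cQ_2$, it satisfies case~\ref{case:span} with a $(1 - O(\delta))$ fraction of them and we directly get $Q_i \in \ideal{V}$.

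Once $\cQ_1 \cup \cQ_2 \cup \cQ_3 \subseteq \ideal{V'}$ with $\dim(V') = O(1)$, we finish exactly as in the proof of \autoref{cla:strong-Q-dom-lr-main}: apply the projection mapping $T_{\vaa, V'}$ of \autoref{def:z-mapping} with a generic $\vaa$, observe that each $T_{\vaa,V'}(P)$ is reducible, and verify that the associated linear forms satisfy the conditions of the colored Sylvester--Gallai theorem (\autoref{thm:EK}) modulo $z$. Then \autoref{cla:z-map-dimension} lifts the $O(1)$ dimension bound back to $\MS(\tilde\cQ)$. The main obstacle I anticipate is the case~\ref{case:rk1} bookkeeping across the three colors: the colored structure must be exploited so that the products $cd$ appearing from $\cQ_1$-vs-$\cQ_3$ and $\cQ_2$-vs-$\cQ_3$ interactions can be compared (via a polynomial from the third set) in order to bound $\dim(\spn{\{c_i\}})$, rather than the forms $c_i$ multiplying out of control.
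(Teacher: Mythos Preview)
The lemma's conclusion (implicit in the draft statement but clear from the paper's proof and from the label \texttt{cla:q1-q2-then-q3\_2-ideal}) is only that $\cQ_3^2 \subset \ideal{V}$; the companion lemma \texttt{cla:q1-q2-then-q3\_1-ideal} handles $\cQ_3^1$ separately. You are attacking the much larger target $\cQ_3 \subset \ideal{V'}$ and then the full dimension bound, which is why you run into the ``delicate'' case~\ref{case:rk1}.

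More importantly, the paper's argument avoids case~\ref{case:rk1} entirely by choosing the \emph{right} polynomial to pair $Q_i$ with: instead of an arbitrary $P_j\in\cQ_1\cup\cQ_2$, use $Q_o$ itself (recall $Q_o\in\cQ_2$ and $\MS(Q_o)\subseteq V$). By definition of $\cQ_3^2$ (Property~\ref{item:Qdom>2} of \autoref{def:Q-dom}), no nontrivial linear combination of $Q_o$ and $Q_i$ is reducible, so \autoref{thm:structure}\ref{case:rk1} is impossible for the pair $(Q_o,Q_i)$. The remaining two cases are immediate: in case~\ref{case:span} some $P_j\in\cQ_1$ is spanned with $Q_o$, and both lie in $\ideal V$; in case~\ref{case:2} the two linear forms lie in $\MS(Q_o)\subseteq V$. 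That is the whole proof.

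There is also a concrete error in your first step: from $P=Q_o+a_Pb_P\in\ideal V$ and $\MS(Q_o)\subseteq V$ you only get $a_Pb_P\in\ideal V$, hence (say) $b_P\in V$, but \emph{not} $a_P\in V$. So $\cQ_1\cup\cQ_2\subseteq\C[V]_2$ is false in general; the paper exploits exactly this weaker structure $P_j=Q_o+a_jv_j$ with $v_j\in V$ in the next lemma. Your subsequent case~\ref{case:span} argument (``both $P_j,R\in\C[V]_2$'') therefore breaks down, and the pigeonhole/Edelstein--Kelly machinery you invoke for case~\ref{case:rk1} is unnecessary for this particular lemma.
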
 
\begin{proof}
First note that by the construction of $V$, we have that $\MS(Q) \subseteq V$. Let $Q_i \in \cQ_3^2$, it holds that $Q,Q_i$ do not satisfy \autoref{thm:structure}\ref{case:rk1}. If $Q,Q_i$  satisfy \autoref{thm:structure}\ref{case:span} then there is $P_j  \in \cQ_1$ such that $Q_i \in \spn{Q,P_j}$ and as $P_j,Q_o\in \ideal{V}$ it holds that so does $Q_i$. If $Q,Q_i$  satisfy \autoref{thm:structure}\ref{case:2} then there are two linear forms, $c$ and $d$ such that $Q,Q_i \in \ideal{c,d}$, it holds that $c,d \in \MS(Q) \subseteq V$ and thus $P_i \in \ideal{V}$.
	
\end{proof}

\begin{lemma}\label{cla:q1-q2-then-q3_1-ideal}
	
		If $\cQ_1^2 = \cQ^2_2= \emptyset$, $\rank_s(Q) \leq \rkq$. By the \autoref{lem:Q3-in-V-case2} we know that there  is a linear space of linear forms, where  $\dim(V)=O(1)$ and $\cQ_1, \cQ_2\subset \ideal{V}$. Then either there is there are two linear forms, $a$ and $b$ such that $\cQ_1, \cQ_2 \subset \C[V,a,b]$ or $\cQ_3 \subset \ideal{V}$

\end{lemma}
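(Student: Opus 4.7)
The plan is to argue contrapositively: assuming $\cQ_3 \not\subset \ideal{V}$, I will produce linear forms $a,b$ with $\cQ_1 \cup \cQ_2 \subset \C[V,a,b]_2$. The preceding lemma \autoref{cla:q1-q2-then-q3_2-ideal} shows $\cQ_3^2 \subset \ideal{V}$, so any witness $Q_\ell \in \cQ_3 \setminus \ideal{V}$ must lie in $\cQ_3^1$, giving $Q_\ell = Q_o + ab$. Since $\MS(Q_o) \subseteq V$ implies $Q_o \in \ideal{V}$, the condition $Q_\ell \notin \ideal{V}$ forces $ab \notin \ideal{V}$, and unique factorization in $\C[x]$ then gives $a, b \notin V$. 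Set $V' = \spn{V,a,b}$; the goal is $\cQ_1 \cup \cQ_2 \subset \C[V']_2$.

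The reduction is clean. Any $P \in \cQ_1$ has the form $P = Q_o + cd$; since $P, Q_o \in \ideal{V}$, we get $cd \in \ideal{V}$, so, after relabeling, $d \in V$. As $Q_o$ uses only linear forms in $V \subseteq V'$ and $d \in V'$, the condition $P \in \C[V']_2$ reduces to $c \in V'$. The analogous reduction handles $\cQ_2$ symmetrically. So it suffices to show, for every such $P$, that $c \in \spn{V,a,b}$.

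The core step is a case analysis via the structure theorem \autoref{thm:structure} applied to the pair $(P, Q_\ell)$, with witness set $\cK \subseteq \cT_2$ bounded by \autoref{cla:gup-4}. The key technical device is to reduce modulo $\ideal{V}$: in the quotient $Q_o$ and every element of $\ideal{V}$ vanish, while $\bar{Q_\ell} = \bar a \bar b \neq 0$, and \autoref{fact:ufd} supplies unique factorization. In the span case, some $R \in \cT_2$ equals $\alpha P + \beta Q_\ell$. If $R \in \cQ_2 \subseteq \ideal{V}$, reduction mod $V$ gives $0 \equiv \beta \bar a \bar b$, forcing $\beta = 0$ and $R \sim P$, contradicting pairwise independence. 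If $R = r^2 \in \calL_2$, the identity $\bar r^2 \equiv \beta \bar a \bar b$ combined with \autoref{fact:ufd} either forces $\bar a \sim \bar b$ (the degenerate sub-case, handled directly) or $\beta = 0$; in the latter, $r^2 = \alpha(Q_o + cd)$ and the constraint $\MS(Q_o) \subseteq V$ pin $c \in V \subseteq V'$. In the rank-$1$ case $\alpha P + \beta Q_\ell = ef$, reducing mod $V$ gives $\bar e \bar f \equiv \beta \bar a \bar b$, so by \autoref{fact:ufd}, $e, f \in \spn{V, a, b}$; rearranging $\alpha cd = ef - \beta ab - (\alpha+\beta)Q_o$ then places the right-hand side in $\C[V']_2$, giving $cd \in \C[V']_2$, whence $c \in V'$. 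In case \autoref{thm:structure}\ref{case:2}, where $P, Q_\ell \in \ideal{e,f}$, a rank-on-restriction argument forces $\rank_s(Q_o) \leq 3$; the mod-$V$ analysis of $Q_\ell \in \ideal{e,f}$ puts $a$ (or $b$) in $\spn{V,e,f}$, and the identity $ab - cd \in \ideal{e,f}$, combined with the witness $R \in \cT_2$ lying in $\sqrt{\ideal{e,f}}$ and with $d \in V$, constrains $c$ to $\spn{V,a,b}$.

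The main obstacle is the low-rank regime $\rank_s(Q_o) \leq \rkq$, where the rank-matching trick used in the high-rank arguments of \autoref{sec:q-dom} (to force identities like $\alpha+\beta=0$ by comparing ranks on both sides) is unavailable. The workaround, as described above, is to replace rank matching by modular analysis in the quotient by $\ideal{V}$, where $Q_o$ vanishes outright and the remaining structure is governed cleanly by \autoref{fact:ufd}. Two secondary subtleties arise: first, the degenerate configuration $\spn{a,b} \cap V \neq \{0\}$, in which $V'$ is only one-dimensional above $V$ -- the conclusion is robust since it only requires the existence of some two linear forms $a,b$, and the argument adapts with $b$ absorbing the missing direction; second, case \autoref{thm:structure}\ref{case:2} only arises when $\rank_s(Q_o) \leq 3$, and in this narrow regime the witness $R \in \cT_2$ from the structure theorem is what closes the argument.
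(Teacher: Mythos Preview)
Your overall strategy matches the paper's: fix $Q_\ell = Q_o + ab \in \cQ_3 \setminus \ideal{V}$, reduce each $P = Q_o + cd \in \cQ_1 \cup \cQ_2$ (with $d \in V$) to showing $c \in V' = \spn{V,a,b}$, and run the structure theorem on the pair $(P, Q_\ell)$. Your handling of cases~\ref{case:span} and~\ref{case:rk1} via reduction modulo $V$ is correct and essentially the paper's argument, with one cosmetic slip: in case~\ref{case:span} with $R = r^2 \in \calL_2$, the alternative $\beta = 0$ gives $r^2 = \alpha P$ with $P$ irreducible, an immediate contradiction; your attempt to extract $c \in V$ from that sub-case is unnecessary (and the reasoning you give for it does not work), but the sub-case is vacuous anyway.

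The real gap is in case~\ref{case:2}. The paper does not push the mod-$V$ analysis here, nor does it invoke any witness $R \in \cT_2$; neither idea closes the argument. Instead the paper splits on whether $\spn{a,b}\cap V=\{0\}$. In the generic sub-case this intersection is trivial, so $\rank_s(Q_\ell)=\rank_s(Q_o)+1\geq 3$ by \autoref{cla:ind-rank}, and $Q_\ell$ cannot lie in any $\ideal{e,f}$: case~\ref{case:2} is vacuous. In the degenerate sub-case (say $b=\epsilon a+v$ with $v\in V$), both $P$ and $Q_\ell$ are irreducible of $\rank_s=2$, so $e,f\in \MS(P)\cap\MS(Q_\ell)\subseteq \spn{V,c}\cap\spn{V,a}$; since $e,f$ cannot both lie in $V$ (else $Q_\ell\in\ideal{V}$), one of them forces $c\in\spn{V,a}\subseteq V'$. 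Your sketch has the containment reversed --- you place $a$ in $\spn{V,e,f}$, which is true but not what is needed --- and the appeal to a third-set witness $R$ does not by itself constrain $c$.
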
 
\begin{proof}
	First, note that as every $P_j\in \cQ_1 \cup \cQ_2$ is of the form $P_j = Q_o+ a_jb_j$ and we have that $P_j,Q_o\in \ideal{V}$ it holds that without loss of generality $b_j \in V$. Thus, we can assume that for every $ P_j\in \cQ_1 \cup \cQ_2$ there is a linear form $a_j$ and a linear form $v_j \in V$ such that $P_j = Q_o+a_jv_j$.
	
	Let $Q_i \in \cQ_1^3$, then $Q_i = Q_o+ ab$. If $a\in V$ or $b\in V$ then $Q_i \in \ideal{V}$. Thus, it is either that $\dim(a,b) = 2$ and $\spn{a,b} \cap V = \vec{0}$ or that $a\notin V$ and there is $v\in V$ such that without loss of generality $b = a + v$.

	If $\dim(a,b) = 2$ and $\spn{a,b} \cap V = \vec{0}$. We will prove that $\cQ_1, \cQ_2 \subset \C[V,a,b]$.
	
	Let $P_j = Q_o+ a_jv_j \in \cQ_1 \cup \cQ_2$. $P_j,Q_i$ can not satisfy \autoref{thm:structure}\ref{case:span} as if $Q_i$ is spanned by two polynomials in $\ideal{V}$ it follows that $Q_i \in \ideal{V}$ in contradiction to our assumption.
	$P_j,Q_i$ can not satisfy \autoref{thm:structure}\ref{case:2}  as by \autoref{cla:ind-rank} it holds that $\rank_s(Q_i) > \rank_s(Q) \geq 2$.
	
	Thus $P_j,Q_i$ satisfy \autoref{thm:structure}\ref{case:rk1} and thus there are two linear forms $c$ and $d$ such that 
	$Q_i = \alpha P_j + cd$ and so $(1-\alpha)Q_o+ ab = \alpha a_jv_j + cd$. As before, it means that $1-\alpha = 0$ and that $\dim(a,b,a_j,v_j)\leq 3$ and thus either $a_j \sim v_j$ or that $a_j \in \spn{a,b,V}$. In both cases $P_j \in \C[a,b V]$, as we wanted to show.
	
	For the other case, if $a\notin V$ and there is $v\in V$ such that $b = a + v$. We will prove that $\cQ_1, \cQ_2 \subset \C[V,a]$.
	
	Let $P_j = Q_o+ a_jv_j \in \cQ_1 \cup \cQ_2$. As before, $P_j,Q_i$ can not satisfy \autoref{thm:structure}\ref{case:span} as if $Q_i$ is spanned by two polynomials in $\ideal{V}$ it follows that $Q_i \in \ideal{V}$ in contradiction to our assumption.
	
	If $P_j,Q_i$ satisfy \autoref{thm:structure}\ref{case:2}  then there are two linear forms, $c$ and $d$ such that $c,d \in \MS(P_j)\cap \MS(Q_i) \subseteq \spn{V,a_j} \cap \spn{V,a}$. First, note that it can not be that $c,d \in V$, or else, $Q_i \in \ideal{c,d}\subseteq \ideal{V}$ in contradiction to our assumption. Thus without loss of generality there are $u,u_j \in V$ and $\alpha, \alpha_j \in \C^\times$ such that $c = \alpha_j a_j + u_j = \alpha a + u$ and thus $a_j \in \spn{a, V}$ as we wanted.
	
	If $P_j,Q_i$ satisfy \autoref{thm:structure}\ref{case:rk1} then there are two linear forms, $c$ and $d$, and $\alpha\in \C^\times$ such that $P_j = \alpha Q_i + cd$
	thus, when setting $V$ to $0$, we obtain that $0 \equiv_{V} \alpha a_i^2 + cd$ and therefore, there are $u_1,u_2 \in V$ and $\gamma_1,\gamma_2 \in C^\times $ such that $c = \gamma_1 a + u_1$ and $d = \gamma_2 a + u_2$. Thus we have that $a_jv_j = (alpha -1)Q_o+ a(a+v) +cd$. As $\MS((\alpha -1)Q_o+ a(a+v) +cd)\subseteq \spn{V,a}$ it holds that $a_j \in \spn{V,a}$ as we wanted.
	
	This concludes the proof of \autoref{cla:q1-q2-then-q3_1-ideal}.

\end{proof}

Combining  \autoref{cla:q1-q2-then-q3_2-ideal}  and  \autoref{cla:q1-q2-then-q3_1-ideal} we know that there is a linear space of linear forms $V$ such that $\dim(V) = O(1)$ and one of the following holds. Either  $\cQ_1,\cQ_2,\cQ_3 \subseteq \ideal{V}$, or $\cQ_1,\cQ_2\in \C[V]$. We will show that in both cases it holds that $\dim(\cup_i \cQ_i) = O(1)$.

\begin{lemma}
	Let $V$ be linear space of linear forms. If $\cQ_1, \cQ_2 \subset \C[V]$, then $\cQ_3 \subset \C[V]$.
\end{lemma}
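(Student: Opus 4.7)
The plan is to fix an arbitrary $R\in\cQ_3$ and prove $R\in\C[V]$ by exploiting the colored hypothesis in both directions. First I would pair $R$ with each $P_2\in\cQ_2$ and apply the three-set condition, which by \autoref{cla:gup-4} yields a subset $\cK\subseteq\cT_1$ of size at most four whose product lies in $\sqrt{\ideal{R,P_2}}$. Then \autoref{thm:structure} gives three possible configurations for the pair $(R,P_2)$: either (i) some $T\in\cK$ lies in $\spn{R,P_2}$, or (ii) a nontrivial combination $\alpha R+\beta P_2$ factors as a product of two linear forms $cd$, or (iii) there are linear forms $c,d$ with $R,P_2\in\ideal{c,d}$.

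In case (i) with $T\in\cQ_1$, since $P_2,T\in\C[V]$ and pairwise linear independence forces $R=\gamma T+\delta P_2$ with $\gamma,\delta\neq 0$, I immediately conclude $R\in\C[V]$ and am done. In case (iii), using that $P_2\in\C[V]$ is an irreducible quadratic, the linear forms $c,d$ must lie in $\MS(P_2)\subseteq V$; hence $R\in\ideal{c,d}\subseteq\ideal{V}$. The remaining cases, namely (i) with $T=\ell^2\in\calL_1$ and case (ii), present $R$ as an explicit combination $R=(cd-\beta P_2)/\alpha$ in which the linear forms $c,d$ (possibly equal to $\ell$) are a priori not forced to lie in $V$.

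The next step is to vary $P_2$ over $\cQ_2$ and combine the resulting rank-one relations. Two elements $P_2,P_2'\in\cQ_2$ that both land in case (ii) yield $cd-c'd'=\beta P_2-\beta'P_2'\in\C[V]$, so $\MS(cd-c'd')\subseteq V$, and \autoref{cla:rank-2-in-V} forces $\spn{c,d}\cap V\neq\{0\}$; after relabeling, $d\in V$. Applying the same argument to a third $P_2''\in\cQ_2$, or to a symmetric pairing of $R$ with elements of $\cQ_1$, then forces $c\in V$ as well, giving $R\in\C[V]$. The case (i) with $T\in\calL_1$ is handled identically after replacing $cd$ by $\ell^2$. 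For the polynomials $R$ that are merely in $\ideal{V}$ arising from case (iii), the symmetric pairing of $R$ with $\cQ_1$ and the same rank-$2$ reasoning promote $R\in\ideal{V}$ to $R\in\C[V]$; alternatively, applying a generic projection $T_{\vaa,V}$ as in \autoref{def:z-mapping} reduces the question to a statement about linear forms, where \autoref{fact:ufd} applies directly.

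The main obstacle will be verifying that the case analysis is exhaustive and that pairwise linear independence across $\cQ_1\cup\cQ_2$ always supplies enough independent witnesses $c,d,c',d',\ldots$ to pin the relevant linear forms into $V$. A secondary subtlety is the treatment of polynomials in $\calL_1\cup\calL_2$ (squares of linear forms $\ell^2$ with $\ell$ possibly outside $V$): these will require the same combinatorial argument but starting from a square in place of a general rank-one product, and will need the additional observation that two distinct squares $\ell^2,(\ell')^2$ appearing in such relations differ by an element of $\C[V]$, which via \autoref{cla:rank-2-in-V} again forces $\ell,\ell'\in V$.
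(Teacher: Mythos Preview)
Your case analysis and the handling of case~(i) with $T\in\cQ_1$ are fine. The real gap is in case~(ii), at the step ``after relabeling, $d\in V$''. From two witnesses $P_2,P_2'$ you correctly get $\MS(cd-c'd')\subseteq V$ and hence $\spn{c,d}\cap V\neq\{0\}$ via \autoref{cla:rank-2-in-V}, but this does \emph{not} allow a relabeling with one factor in $V$: the factorization of $\alpha R+\beta P_2=cd$ is unique up to scalars, so the pair $(c,d)$ is rigid. What you actually deduce is that $c$ and $d$ become proportional modulo $V$, i.e.\ $R\equiv \gamma\,\bar c^{\,2}\pmod{\ideal V}$ for some $\bar c\notin V$. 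A third $P_2''$ (or a symmetric pairing with $\cQ_1$) only reproduces the same conclusion---all the products $c_id_i$ reduce to the same square modulo $\ideal V$---so the iteration stalls at $R\in\C[V]+c\cdot V+\C\cdot c^2$ and never forces $c\in V$. The projection $T_{\vaa,V}$ does not rescue this either, since it separates $\ideal V$ from its complement but not $\C[V]$ from $\ideal V$.

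The paper closes case~(ii) by a different device: it invokes the colored hypothesis a \emph{second} time. From $\alpha R+\beta P_j=cd$ one has $\sqrt{\ideal{R,P_j}}=\sqrt{\ideal{cd,P_j}}$, and the three-set condition produces polynomials from $\cT_1$ whose product lies in this radical. If $c\notin V$ then $P_j|_{c=0}=P_j$ is still irreducible (since $P_j\in\C[V]$), so some $T_k\in\cQ_1$ must satisfy $T_k=\gamma P_j+ce$ for a linear form $e$. Then $ce=T_k-\gamma P_j\in\C[V]$, which forces $c\in V$---a contradiction; symmetrically $d\in V$. This second pass through the radical, anchoring a $\cT_1$-polynomial to $P_j$ modulo the \emph{linear} form $c$, is exactly the leverage that comparing multiple $P_2$'s cannot supply.
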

\begin{proof}
	Let $Q_i \in \cQ_3$ and $P_j \in \cQ_2$. If $Q_i,P_j$ satisfy \autoref{thm:structure}\ref{case:span} then $Q_i\in \C[V]$.  If $Q_i,P_j$ satisfy \autoref{thm:structure}\ref{case:rk1}  then there are two linear forms, $c$ and $d$, and $\alpha\in \C^\times$ such that $Q_i = \alpha P_j + cd$. Moreover, there is $T_k\in \cQ_1$ such that $T_i  = P_i + ce$ and thus $e,c \in V$. Similarly we can deduce $d \in V$ and obtain that $Q_i \in \C[V]$.
\end{proof}

\subsection{$\rank_s(Q) >\rkq$}
For this section we assume that $\rank_s(Q) > \rkq$ then, by the previous argument we have a linear space of linear forms $V$ such that $(1-10\delta)m_3$ of the polynomials in $\cQ_1^3$ are in $\spn{Q,\C[V]}$ not that so far we have that $\dim(V) < \rkq / 5$ though out this proof we might increase the dimension of $V$, by adding more linear forms, but it will remain that $\dim(V) < \rkq /2 $.

\begin{claim}\label{cla:V-and-P}
	Assume $\rank_s(Q) \geq \rkq$,
	then there exists a $O(1)$-dimensional linear space $V$, 
	such that for every $P_i\in \cQ_2$ one of the following holds
	\begin{enumerate}
		\item \label{cla:V-and-P:item:V} $P_i$ is defined over $V$.
		\item \label{cla:V-and-P:item:V-c} There is a quadratic polynomial $P'_i$ and a linear form $v_i$ that are defined over $V$, and  a linear form $c_i$, such that $P_i = Q_o+P'_i + c_i(\epsilon_i c_i + v_i )$
		\item \label{cla:V-and-P:item:span} At least $1-10\delta$ of the polynomials in $\cQ_1$ are in the span of $P_i, Q, \C[V]$.
	\end{enumerate} 
\end{claim}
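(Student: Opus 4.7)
I would fix an arbitrary $P_i \in \cQ_2$ and analyse its interactions with the large ``structured'' subset
$\cR := \{Q_j \in \cQ_3 : Q_j = Q_o + a_jb_j,\ a_j,b_j \in V\}$
given by \autoref{cla:Q3-in-V} (or by \autoref{lem:Q3-in-V-case2} when $\cQ_2^2 = \emptyset$), which has size at least $(1-10\delta)m_3$. An analogue of \autoref{cla:no-rank-3} (whose proof goes through verbatim, using now polynomials in $\cR$ to play the role of $\cQ_1$) lets me write $P_i = \gamma_i Q_o + L_i$ with $\rank_s(L_i) \le 2$ and $\gamma_i \in \{0,1\}$. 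Since $a_j,b_j \in V$ and $\dim(V) \le \rkq/2$, \autoref{cla:rank-mod-space} gives $\rank_s(Q_j) \ge \rank_s(Q_o) - 2 \cdot\dim(V) \gg 2$, so $(P_i,Q_j)$ can never satisfy \autoref{thm:structure}\ref{case:2}. Hence each $Q_j \in \cR$ falls into \autoref{thm:structure}\ref{case:span} or \autoref{thm:structure}\ref{case:rk1} with $P_i$, and the proof splits on how many land in \ref{case:rk1}.

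If at least two polynomials $Q_j, Q_{j'} \in \cR$ satisfy \autoref{thm:structure}\ref{case:rk1} with $P_i$, write $\alpha P_i + \beta Q_j = cd$ and $\alpha' P_i + \beta' Q_{j'} = c'd'$. Matching the $Q_o$-coefficients and invoking $\rank_s(Q_o) \ge \rkq$ against a rank-$\le 5$ right-hand side forces $\alpha \gamma_i + \beta = 0$; this rules out $\gamma_i = 0$ (which would force $\beta = 0$, violating pairwise independence) and, after rescaling, gives $L_i = a_jb_j + cd = a_{j'}b_{j'} + c'd'$. Subtracting yields $cd - c'd' = a_{j'}b_{j'} - a_jb_j \in \ideal{V}$, so $\MS(cd - c'd') \subseteq V$. \autoref{cla:rank-2-in-V} then produces, without loss of generality, $d = \epsilon_i c + v_i$ with $v_i \in V$, and hence
$P_i = Q_o + a_jb_j + c(\epsilon_i c + v_i)$,
which is precisely condition~\ref{cla:V-and-P:item:V-c} with $P'_i := a_jb_j$ and $c_i := c$ (the degenerate $c \in V$ case collapses into condition~\ref{cla:V-and-P:item:V}).

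In the complementary sub-case, at least $|\cR| - 1 \ge (1-10\delta)m_3 - 1$ polynomials $Q_j \in \cR$ satisfy \autoref{thm:structure}\ref{case:span} with $P_i$, each producing a witness $T_k \in \cT_1$ with $T_k = \alpha P_i + \beta Q_j \in \spn{P_i,Q_o,\C[V]_2}$. A standard pairwise-independence argument (modelled on \autoref{cla:2nd-proc}) shows that if two distinct $Q_j, Q_{j'}$ produce the same $T_k$, then $P_i \in \spn{Q_j,Q_{j'}} \subseteq \spn{Q_o,\C[V]_2}$, which places $P_i$ into condition~\ref{cla:V-and-P:item:V} (if $\gamma_i = 0$) or the degenerate form of condition~\ref{cla:V-and-P:item:V-c} (if $\gamma_i = 1$, taking $c_i = 0$). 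Otherwise the $T_k$'s are distinct; since at most one can be a square (any such square gives a rank-1 combination $\alpha P_i + \beta Q_j$, which would already place us in the previous sub-case), we obtain at least $(1-10\delta)m_3 - 2$ distinct elements of $\cQ_1$ inside the constant-dimensional space $\spn{P_i,Q_o,\C[V]_2}$. Using the hypothesis $m_3 \ge m_1$ built into the ``$Q_o$ is bad for $\cQ_3$'' definition, this count exceeds $(1-10\delta)m_1$ (after a harmless adjustment of the constant), yielding condition~\ref{cla:V-and-P:item:span}.

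The main obstacle I expect is the clean separation between the three output cases at the boundary: specifically the degenerate situations where $c$ or $d$ already lies in $V$, where one of the linear combinations produces a square in $\calL_1$ rather than an element of $\cQ_1$, and where $\gamma_i$ toggles between $0$ and $1$. All of these are absorbed by the rank gap $\rank_s(Q_o) \ge \rkq \gg \dim(V)$, but that gap must be invoked carefully and repeatedly so that every rank comparison (against $L_i$, $a_jb_j$, $cd$, and products thereof) goes through and the final counts remain compatible with the $(1-10\delta)$ threshold.
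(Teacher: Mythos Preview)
Your proposal is correct and follows essentially the same route as the paper's proof: both fix $P\in\cQ_2$, work against the structured set $\cI=\{Q_j\in\cQ_3^1:a_j,b_j\in V\}$ of size $\ge(1-10\delta)m_3$ (your $\cR$), use two instances of \autoref{thm:structure}\ref{case:rk1} together with \autoref{cla:rank-2-in-V} to land in item~\ref{cla:V-and-P:item:V-c}, and otherwise run the pairwise-independence pigeonhole on the witnesses $T_k\in\cT_1$, invoking $m_1\le m_3$ to reach item~\ref{cla:V-and-P:item:span}.

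Two minor remarks. First, the paper inserts one extra layer you skip: after obtaining the distinct $T_i\in\spn{P,Q_i}$ it also examines which case of \autoref{thm:structure} the pair $(T_i,Q_o)$ satisfies, trying to fall back to items~\ref{cla:V-and-P:item:V}--\ref{cla:V-and-P:item:V-c} in yet more sub-cases before defaulting to item~\ref{cla:V-and-P:item:span}. This step is not needed for the trichotomy as stated, so your shorter argument suffices. Second, you are actually more careful than the paper on one point: you note that any $T_k$ landing in $\calL_1$ would witness \autoref{thm:structure}\ref{case:rk1} for $(P,Q_j)$ and hence was already excluded, so the distinct $T_k$'s genuinely lie in $\cQ_1$; the paper's proof silently writes $T_i\in\cQ_1$ without this justification.
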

\begin{proof}
	From \autoref{lem:Q3-in-V} there is a linear space of linear forms, $V$ such that $\dim(V) = O(1)$ and the set $\cI = \lbrace Q_i = Q+a_ib_i\in \cQ^1_3 \mid a_i,b_i\in V \rbrace$ satisfies that $|\cI| \geq (1-10\delta) m_3$ .
	Consider $P\in \cQ_2$. If there are $Q_i,Q_j\in \cI$ such that $Q_i,P$ and $Q_j,P$ satisfy $\autoref{thm:structure}\ref{case:rk1}$, then, there are linear forms $c,d,e$ and $f$ such that $P = \alpha Q_i + cd= \beta Q_j + ef$ and thus, As $\rank_s(Q) \geq \rkq$ it follows that $\alpha = \beta$ and $\{0\} \neq \MS(cd-ef) \subseteq V$ and therefore from \autoref{cla:rank-2-in-V} without loss of generality $d \in \spn{c,V}$ and \autoref{cla:V-and-P:item:V-c} holds.
	
	Thus we are left with the case that $P$ satisfies \autoref{thm:structure}\ref{case:span} with all the polynomials in $\cI$. If there is $P' \in \cQ_1$ and $Q_i,Q_j \in I$ such that $P' \in \spn{P,Q_i},\spn{P,Q_j}$ then by pairwise linearly independence it follows that $P \in \spn{Q_i,Q_j}$ and then either \autoref{cla:V-and-P:item:V-c} or \autoref{cla:V-and-P:item:V} hold.
	
	Thus for every $Q_i \in \cI$ there is a different $T_i \in \cQ_1$ such that $T_i \in \spn{P,Q_i}$. Consider the possible cases of \autoref{thm:structure} that $T_i,Q$ might satisfy. If there are $i,j$ such that $T_i,Q$ and $T_j,Q$ satisfy  \autoref{thm:structure}\ref{case:rk1} then by a similar argument \autoref{cla:V-and-P:item:V-c} holds.
	
	If $T_i, Q$ satisfy \autoref{thm:structure}\ref{case:span} then and there is $Q_j \in \cI$ such that $Q_j \in \spn{Q,T_i}$ it follows that $P\in \spn{Q_j,Q_i,Q}$ and \autoref{cla:V-and-P:item:V-c}  holds with $c = 0$.
	
	If non of the previous cases holds, it must be that there are at-least $(1-10\delta)m_3$ different $T_i \in \cQ_1$. As $m_1 < m_3$ it holds that $P, Q, \C[V]$ span at least $1-10\delta$ of the polynomials in $\cQ_1$ and \autoref{cla:V-and-P:item:span}  holds.

\end{proof}

Denote by $\cJ$ the set of all polynomials in $\cQ_2$ that satisfy  \autoref{cla:V-and-P:item:span}  of \autoref{cla:V-and-P}, and not ant of the other items. Let $P\in \cJ$,
then for every $P'\in \cJ$ it must be that there is a polynomial $T_i\in \cQ_1$ such that $T_i \in \spn{P,Q,\C[V]}\cap  \spn{P',Q,\C[V]}$ and therefore $ P' \in \spn{P,Q,\C[V]}$. If there is a linear combination of $P, Q$ and $ \C[V]$ of $\rank_s <3$ then from \autopageref{cla:lin-rank-r}, we can add at most $4$ linear forms to $V$ to have that $P \in \spn{Q,\C[V]}$ and therefore $\cJ \subset \spn{Q,\C[V]}$. Thus from now on, if $\cJ \neq \emptyset$ we can assume that there is a polynomial $P\in \cQ_2$ such that every linear combination of  $P, Q$ and $ \C[V]$ of $\rank_s \geq3$, and $\cJ \subset \spn{P,Q,\C[V]}$. In particular we can assume that $P$ does not satisfy \autoref{thm:structure}\ref{case:2} with any other polynomial.  We summarize this discussion in the following corollary. 

\begin{corollary}\label{cla:Q_2-strac}
		In the current settings every polynomial $P_i \in \cQ_2$ one of the following holds:
	\begin{enumerate}
		\item \label{cla:Q_2-strac:item:V} $T_i$ is defined over $V$.
		\item \label{cla:Q_2-strac:item:V-c} There is a quadratic polynomial $P'_i$ and a linear form $v_i$ that are defined over $V$, and  a linear form $c_i$, such that $P_i =  Q_o+P'_i + c_i(\epsilon_i c_i + v_i )$
		\item \label{cla:Q_2-strac:item:span} $P_i \in \spn{P,Q,\C[V]}$.
	\end{enumerate} 
\end{corollary}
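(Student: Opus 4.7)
The plan is to derive this corollary directly from \autoref{cla:V-and-P} together with the discussion immediately preceding the statement about the set $\cJ$. Apply \autoref{cla:V-and-P} to every $P_i \in \cQ_2$. If $P_i$ satisfies item~\ref{cla:V-and-P:item:V} or item~\ref{cla:V-and-P:item:V-c} of that claim, then item~\ref{cla:Q_2-strac:item:V} or item~\ref{cla:Q_2-strac:item:V-c} of the corollary holds automatically with the same data. The only work is therefore to handle $P_i \in \cJ$, i.e.\ those polynomials satisfying item~\ref{cla:V-and-P:item:span} but neither of the other two items.

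Fix any $P \in \cJ$ (if $\cJ = \emptyset$ there is nothing to prove). By the definition of $\cJ$, the set $\cA_P \subseteq \cQ_1$ of polynomials lying in $\spn{P, Q_o, \C[V]}$ has size at least $(1-10\delta) m_1$. For any other $P_i \in \cJ$, the analogous set $\cA_{P_i}$ also has size at least $(1-10\delta)m_1$. Since $2(1-10\delta)m_1 > m_1$ for $\delta < 1/20$, there exists a common polynomial $T \in \cA_P \cap \cA_{P_i}$. Because $T \in \spn{P, Q_o, \C[V]}$ and $T \in \spn{P_i, Q_o, \C[V]}$, pairwise linear independence of $\tilde\cQ$ implies that either $P_i \in \spn{P, Q_o, \C[V]}$, which gives item~\ref{cla:Q_2-strac:item:span} directly, or $T \in \spn{Q_o, \C[V]}$; in the latter case $\MS(T)$ is already contained in $\spn{V}$ (modulo $Q_o$), so $T$ actually contributes nothing new to the span and the pigeonhole step can be iterated with another $T$ from the large set. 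This is the only point that requires care, and the argument works because at most $O(1)$ polynomials of $\cQ_1$ can be of the form $Q_o + (\text{quadratic over } V)$ — any more would force a low-rank linear combination that we have already excluded.

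It remains to deal with the possibility that some nontrivial linear combination $\alpha P + \beta Q_o + R$, where $R \in \C[V]_2$, has $\rank_s \le 2$. In that case \autoref{cla:lin-rank-r} tells us there is a bounded-dimensional linear space containing $\MS$ of all such combinations, so we may add at most $O(1)$ linear forms to $V$ (keeping $\dim(V) < \rkq/2$) and conclude that $P \in \spn{Q_o, \C[V]}$, which collapses item~\ref{cla:Q_2-strac:item:span} into item~\ref{cla:Q_2-strac:item:V} or item~\ref{cla:Q_2-strac:item:V-c} for all of $\cJ$. Thus, after this possible enlargement of $V$, each $P_i \in \cQ_2$ falls into exactly one of the three stated cases.

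The main (minor) obstacle is ensuring that the pigeonhole argument produces a $T$ which is genuinely useful — that is, one which is not itself absorbed into $\spn{Q_o, \C[V]}$. As indicated above, this is handled by \autoref{cla:lin-rank-r}: any excess of $\cQ_1$-polynomials lying in $\spn{Q_o, \C[V]}$ would produce low-rank linear combinations of $Q_o$ and polynomials over $V$, forcing an enlargement of $V$ that brings us into item~\ref{cla:Q_2-strac:item:V-c}. Once this bookkeeping is done, the corollary follows without further computation.
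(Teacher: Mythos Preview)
Your overall strategy matches the paper's: the corollary is just a summary of \autoref{cla:V-and-P} together with the short discussion about $\cJ$ that precedes it, so items~\ref{cla:Q_2-strac:item:V} and~\ref{cla:Q_2-strac:item:V-c} are inherited verbatim and only $P_i\in\cJ$ needs work. Your third paragraph, handling the possible low-rank combination by enlarging $V$, is exactly what the paper does.

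The gap is in your second paragraph. You correctly notice a subtlety the paper glosses over---that a common $T\in\cA_P\cap\cA_{P_i}$ might lie in $\spn{Q_o,\C[V]}$ with zero $P_i$-coefficient, in which case it tells you nothing about $P_i$. But your proposed fix, that ``at most $O(1)$ polynomials of $\cQ_1$ can be of the form $Q_o+(\text{quadratic over }V)$,'' is not justified and is not true in general: there is no reason a large fraction of $\cQ_1$ cannot sit inside $\spn{Q_o,\C[V]}$, and the ``low-rank combination already excluded'' refers only to combinations of $P$, $Q_o$, and $\C[V]$, not to differences of elements of $\cQ_1$. Your appeal to \autoref{cla:lin-rank-r} does not apply here either, since that claim concerns linear combinations of two fixed quadratics.

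The clean resolution---implicit in the paper---is to use what the \emph{proof} of \autoref{cla:V-and-P} actually produces, not merely its statement. There, the $T_j\in\cQ_1$ witnessing item~\ref{cla:V-and-P:item:span} for $P_i$ are obtained as $T_j\in\spn{P_i,Q_j}$ for distinct $Q_j\in\cI$; since $T_j$ and $Q_j$ lie in different sets they are distinct, so the $P_i$-coefficient is automatically nonzero. Hence any common $T$ in the two index sets has nonzero coefficient on both $P$ and $P_i$, and the conclusion $P_i\in\spn{P,Q_o,\C[V]}$ follows immediately without any counting of $\cQ_1\cap\spn{Q_o,\C[V]}$.
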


\begin{claim}\label{cla:Q_1-strac}
	In the current settings every polynomial $T_i \in \cQ_1$ one of the following holds:
	\begin{enumerate}
		\item \label{cla:Q_1-strac:item:V-c} There is a quadratic polynomial $T'_i$ and a linear form $v_i$ that are defined over $V$, and  a linear form $c_i$, such that $T_i = \alpha_i Q_o+T'_i + c_i(\epsilon_i c_i + v_i )$
		\item \label{cla:Q_1-strac:item:span} $T_i \in \spn{P,Q,\C[V]}$.
	\end{enumerate}   
\end{claim}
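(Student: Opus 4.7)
The plan is to mirror the proof of \autoref{cla:V-and-P} by considering, for an arbitrary $T_i \in \cQ_1$, which case of \autoref{thm:structure} the pair $T_i, Q_j$ satisfies as $Q_j$ ranges over the set $\cI$ from \autoref{lem:Q3-in-V}. Recall that $|\cI|\ge (1-10\delta)m_3$ and each $Q_j = Q_o + a_jb_j$ with $a_j,b_j\in V$. Because $\rank_s(Q_o)\ge \rkq$ and $\dim(V)=O(1)$, \autoref{cla:rank-mod-space} together with \autoref{cla:ind-rank} forces $\rank_s(Q_j)>2$, so \autoref{thm:structure}\ref{case:2} is impossible for the pair $T_i,Q_j$. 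It therefore suffices to dispose of \autoref{thm:structure}\ref{case:rk1} and \autoref{thm:structure}\ref{case:span}.

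For \autoref{thm:structure}\ref{case:rk1}, I will show that whenever $T_i$ satisfies this case with two distinct $Q_j, Q_{j'} \in \cI$, the claim's first alternative already follows. Indeed, from $\alpha T_i + \beta Q_j = cd$ and $\alpha' T_i + \beta' Q_{j'} = ef$ (with all scalars nonzero, since the $Q_j$ are irreducible of rank $>2$ and the polynomials are pairwise independent), eliminating $T_i$ yields
\[
(\alpha'\beta - \alpha\beta')Q_o \;=\; \alpha'cd - \alpha ef - \alpha'\beta a_jb_j + \alpha\beta'a_{j'}b_{j'},
\]
whose right-hand side has rank at most $8 \ll \rkq$. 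Hence $\alpha'\beta = \alpha\beta'$ and $\MS(\alpha'cd - \alpha ef) \subseteq V$; pairwise independence of $Q_j, Q_{j'}$ rules out $\alpha'cd = \alpha ef$, so \autoref{cla:rank-2-in-V} yields $\spn{c,d}\cap V \ne \{\vec 0\}$, and we may write $d = \epsilon c + v$ with $v\in V$. Plugging back into $\alpha T_i = cd - \beta Q_o - \beta a_jb_j$ expresses $T_i$ in the form $\alpha_i Q_o + T'_i + c_i(\epsilon_i c_i + v_i)$ with $T'_i = -(\beta/\alpha)a_jb_j \in \C[V]$, $v_i \in V$, and $c_i\sim c$, matching Item~\ref{cla:Q_1-strac:item:V-c}.

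In the complementary scenario, $T_i$ satisfies \autoref{thm:structure}\ref{case:rk1} with at most one element of $\cI$, so at least one $Q_j \in \cI$ yields only \autoref{thm:structure}\ref{case:span}. Let $R_j \in \cT_2$ satisfy $\alpha T_i + \beta Q_j = R_j$. If $R_j = \ell^2 \in \calL_2$, substitution immediately yields $T_i = -(\beta/\alpha)Q_o + \ell^2/\alpha - (\beta/\alpha)a_jb_j$, which is Item~\ref{cla:Q_1-strac:item:V-c} with $c_i = \ell$. Otherwise $R_j \in \cQ_2$, and by \autoref{cla:V-and-P} one of three alternatives holds: $R_j \in \C[V]$; or $R_j = Q_o + R'_j + c_{R_j}(\epsilon_{R_j}c_{R_j} + v_{R_j})$ with $R'_j \in \C[V]$ and $v_{R_j}\in V$; or $R_j \in \spn{P, Q_o, \C[V]}$. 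In the first two cases, substituting into $\alpha T_i = R_j - \beta Q_o - \beta a_jb_j$ and noting $a_jb_j \in \C[V]$ puts $T_i$ into Item~\ref{cla:Q_1-strac:item:V-c} form (with trivial $c_i$-term in the first alternative, and $c_i\sim c_{R_j}$ in the second). In the third case, $Q_j \in \spn{Q_o, \C[V]}$ gives $T_i \in \spn{R_j, Q_j} \subseteq \spn{P, Q_o, \C[V]}$, matching Item~\ref{cla:Q_1-strac:item:span}.

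The main obstacle I anticipate is the ``two-$Q_j$'' rank-1 step, which hinges on (i) the rank of $Q_o$ being large enough to absorb the sum of four rank-$\le 2$ quadratics on the right-hand side of the displayed identity, and (ii) a clean application of \autoref{cla:rank-2-in-V} to convert the modulo-$V$ congruence into the concrete containment $\spn{c,d}\cap V \ne \{\vec 0\}$. Once this step is available, the remaining span analysis is largely bookkeeping that parallels the end of the proof of \autoref{cla:V-and-P}, with the only care required in the degenerate sub-cases where $R_j \in \C[V]$ or $R_j\in\calL_2$, corresponding to a trivial or purely quadratic $c_i$-term in the target decomposition.
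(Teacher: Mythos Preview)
Your proposal is correct and follows essentially the same route as the paper's proof: both fix $T\in\cQ_1$, use the large set $\cI\subseteq\cQ_3^1$ of polynomials $Q_j=Q_o+a_jb_j$ with $a_j,b_j\in V$, rule out \autoref{thm:structure}\ref{case:2} via the high rank of $Q_j$, handle the case where $T$ satisfies \autoref{thm:structure}\ref{case:rk1} with two members of $\cI$ by the elimination-plus-\autoref{cla:rank-2-in-V} argument, and in the remaining case read off the structure of $T$ from that of the spanned polynomial in the second set. One small citation slip: when you invoke the three alternatives for $R_j\in\cQ_2$, the third alternative you state (``$R_j\in\spn{P,Q_o,\C[V]}$'') is the content of the corollary \autoref{cla:Q_2-strac}, not of \autoref{cla:V-and-P} itself, whose Item~\ref{cla:V-and-P:item:span} only says that $R_j$ spans many polynomials of $\cQ_1$; the passage from the latter to the former is exactly the discussion preceding \autoref{cla:Q_2-strac}. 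Also, your separate treatment of $R_j\in\calL_2$ is harmless but superfluous, since a square in $\spn{T_i,Q_j}$ would already put the pair in \autoref{thm:structure}\ref{case:rk1}, which you have excluded for that $Q_j$.
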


\begin{proof}
	As before, from \autoref{lem:Q3-in-V} there is a linear space of linear forms, $V$ such that $\dim(V) = O(1)$ and the set $\cI = \lbrace Q_i = Q+a_ib_i\in \cQ^1_3 \mid a_i,b_i\in V \rbrace$ satisfies that $|\cI| \geq (1-10\delta) m_3$ .
	
	Let $T \in \cQ_1$, and $Q_i\neq Q_j\in \cI$. If $T$ satisfies \autoref{thm:structure}\ref{case:rk1} with both $Q_i$ and $Q_j$ then by a similar argument to the proof of \autoref{cla:V-and-P} we have that \autoref{cla:Q_1-strac:item:V-c} holds.
	
	If $T,Q_i$ satisfy \autoref{thm:structure}\ref{case:span} then there is a polynomial in $P' \in \cQ_2$ such that $T \in \spn{Q_i,P'}$. If $P'$ satisfies \autoref{cla:Q_2-strac:item:V-c} or \autoref{cla:Q_2-strac:item:V} of  \autoref{cla:Q_2-strac} then $T_i$ satisfies \autoref{cla:Q_1-strac:item:V-c} of the claim statement.
	If $P'$ satisfies \autoref{cla:Q_2-strac:item:span} of \autoref{cla:Q_2-strac} then $T$ satisfies \autoref{cla:Q_1-strac:item:span} of the claim statement. 
\end{proof}

\begin{claim}\label{cla:Q_3-strac}
	In the current settings every polynomial $Q_i \in \cQ_3$ one of the following holds:
	\begin{enumerate}
		\item \label{cla:Q_3-strac:item:V-c} There is a quadratic polynomial $Q'_i$ and a linear form $v_i$ that are defined over $V$, and  a linear form $c_i$, such that $Q_i = \alpha_i Q_o+Q'_i + c_i(\epsilon_i c_i + v_i )$
		\item \label{cla:Q_3-strac:item:span} $Q_i \in \spn{P,Q,\C[V]}$.
	\end{enumerate}   
\end{claim}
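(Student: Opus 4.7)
The plan is to mirror the arguments of Claim~\ref{cla:V-and-P} and Claim~\ref{cla:Q_1-strac}, now pairing each $Q_i \in \cQ_3$ with polynomials in $\cQ_1$ (whose structure has already been classified). The base case is immediate: if $Q_i \in \cI$ then $Q_i = Q_o + a_i b_i$ with $a_i, b_i \in V$, so item~\ref{cla:Q_3-strac:item:V-c} holds trivially by taking $c_i = 0$. So from now on I assume $Q_i \in \cQ_3 \setminus \cI$, and my task is to locate the linear form $c_i$ or, failing that, to place $Q_i$ in $\spn{P, Q_o, \C[V]}$.

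Since $\rank_s(Q_o) \ge \rkq$, the high-rank argument underlying Observation~\ref{obs:case3} will rule out Theorem~\ref{thm:structure}\ref{case:2} for every pair $(Q_i, T_j)$ with $T_j \in \cQ_1$, so each such pair must satisfy either \ref{case:span} (spanning a polynomial in $\cQ_2$) or \ref{case:rk1} (spanning a product of two linear forms). Claim~\ref{cla:Q_1-strac} then tells me that each $T_j$ is either of type~\ref{cla:Q_1-strac:item:V-c}, i.e.\ $T_j = \alpha_j Q_o + T'_j + c_j(\epsilon_j c_j + v_j)$ with $T'_j, v_j$ over $V$, or of type~\ref{cla:Q_1-strac:item:span}, i.e.\ $T_j \in \spn{P, Q_o, \C[V]}$.

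The heart of the argument is the following case split, parallel to the proof of Claim~\ref{cla:V-and-P}. First, if there are distinct $T_j, T_k \in \cQ_1$ of type~\ref{cla:Q_1-strac:item:V-c} such that $(Q_i, T_j)$ and $(Q_i, T_k)$ both satisfy case~\ref{case:rk1}, then I get $\alpha_j Q_i + \beta_j T_j = e_j f_j$ and $\alpha_k Q_i + \beta_k T_k = e_k f_k$. Eliminating $Q_i$ between these two equations and invoking the large rank of $Q_o$ forces the $Q_o$-coefficients to cancel; the resulting quadratic identity, together with Claim~\ref{cla:rank-2-in-V}, identifies a common linear form $c_i$ (essentially $e_j$ modulo $V$) such that $e_k$ and $c_j, c_k$ all lie in $\spn{c_i, V}$, placing $Q_i$ in the form required by item~\ref{cla:Q_3-strac:item:V-c}. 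If instead one of $T_j, T_k$ is of type~\ref{cla:Q_1-strac:item:span}, then the relation $\alpha Q_i + \beta T_j = ef$ already places $Q_i$ in $\spn{P, Q_o, \C[V], ef}$, and a second pairing either forces $ef \in \C[V]$ (giving item~\ref{cla:Q_3-strac:item:span}) or promotes $e$ to the role of $c_i$.

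Finally, if case~\ref{case:rk1} fails for almost every $T_j$, then $(Q_i, T_j)$ satisfies case~\ref{case:span} for most $T_j \in \cQ_1$, producing polynomials $P_{i,j} \in \cQ_2$ with $P_{i,j} \in \spn{Q_i, T_j}$. Because $|\cQ_1|$ dominates $|\cQ_2|$, pigeonhole yields distinct $T_j, T_{j'}$ with the same $P_{i,j} = P_{i,j'}$, forcing $Q_i \in \spn{T_j, T_{j'}}$; combining the known structures of $T_j, T_{j'}$ again produces item~\ref{cla:Q_3-strac:item:V-c} or \ref{cla:Q_3-strac:item:span}. The main obstacle I expect is orchestrating the combinatorial case split cleanly: there are four type-combinations for the pair $(T_j, T_k)$, and each must be cross-referenced with which case of Theorem~\ref{thm:structure} the pair $(Q_i, T_j)$ satisfies, exactly as in the lengthy analyses of Claims~\ref{cla:V-and-P} and \ref{cla:Q_1-strac}.
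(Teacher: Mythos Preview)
Your approach diverges from the paper's in a way that creates a real gap. The paper does \emph{not} pair $Q_i\in\cQ_3\setminus\cI$ against all of $\cQ_1$; instead it pairs $Q_i$ against the single distinguished polynomial $Q_o\in\cQ_2$. This is much cleaner: if $(Q_i,Q_o)$ falls in case~\ref{case:span}, then $Q_i\in\spn{Q_o,T_k}$ for some $T_k\in\cQ_1$, and since $Q_o$ already appears as a ``free'' summand in both structural forms of \autoref{cla:Q_1-strac}, $Q_i$ inherits $T_k$'s type verbatim. If $(Q_i,Q_o)$ falls in case~\ref{case:rk1}, then $Q_i=Q_o+cd$, and the paper then either pairs with $P$ (using that no combination of $P,Q_o,\C[V]$ has $\rank_s<3$ to force case~\ref{case:span} and rule out the mixed type), or, when $P$ does not exist, proves a separate lemma showing $\cQ_1,\cQ_2\subset\spn{Q_o,\C[V+\spn{c,d}]}$ and finishes via \autoref{cla:Q-and-C[V]}.

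Your route of pairing with $T_j\in\cQ_1$ runs into two concrete problems. First, the pigeonhole step ``$|\cQ_1|$ dominates $|\cQ_2|$'' is not available: the ambient hypotheses only give $m_1\le m_3$, and $m_1$ may be much smaller than $m_2$, so you cannot force a collision among the $P_{i,j}\in\cQ_2$. Second, even granting a collision, landing in $Q_i\in\spn{T_j,T_{j'}}$ (or in $\spn{T_j,P'}$ in the case~\ref{case:rk1} branch) leaves you with combinations like $\gamma P+(\text{terms over }V)+c_j(\epsilon_j c_j+v_j)$ or two unrelated $c$-terms $c_j,c_{j'}$, neither of which fits item~\ref{cla:Q_3-strac:item:V-c} nor item~\ref{cla:Q_3-strac:item:span} without further argument. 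The paper sidesteps exactly this by always keeping one leg of the span equal to $Q_o$ (or to $P$, whose rigidity then kills the mixed case). Also, your blanket dismissal of case~\ref{case:2} via \autoref{obs:case3} is not safe here: $T_j$ of type~\ref{cla:Q_1-strac:item:V-c} can have $\alpha_j=0$ and hence low rank, so case~\ref{case:2} is not automatically excluded for the pair $(Q_i,T_j)$.
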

\begin{proof}
	As before, from \autoref{lem:Q3-in-V} there is a linear space of linear forms, $V$ such that $\dim(V) = O(1)$ and the set $\cI = \lbrace Q_i = Q+a_ib_i\in \cQ^1_3 \mid a_i,b_i\in V \rbrace$ satisfies that $|\cI| \geq (1-10\delta) m_3$ . Every polynomial in $I$ satisfies \autoref{cla:Q_3-strac:item:V-c} of this claim.
	Let $Q' \in \cQ_3\setminus \cI$. If $Q,Q'$ satisfy \autoref{thm:structure}\ref{case:span} then it must be the case that there is $T_i\in \cQ_1$ such that $Q' \in \spn{T_i,Q}$ and thus $Q'$ holds the same structure as $T_i$.
	
	If $Q,Q'$ satisfy \autoref{thm:structure}\ref{case:rk1} then there are linear forms $c$ and $d$ such that $Q= Q'+cd$. If $P$ exists then it must be that $Q',P$ satisfy \autoref{thm:structure}\ref{case:span} as we know that $P$ does not satisfy \autoref{thm:structure}\ref{case:2} with any polynomial, and if there were $e$ and $f$ such that $Q_o+ cd=Q' = \alpha P+ef$ in contradiction to the fact that every linear combination of $P$ and $Q_o$ is of $\rank_s >2$. Thus there is $T_i\in \cQ_1$ such that $Q' \in \spn{P,T_i}$ if $T_i$ satisfies \autoref{cla:Q_1-strac:item:span} of \autoref{cla:Q_1-strac} then so does $Q'$. If $T_i$ satisfies \autoref{cla:Q_1-strac:item:V-c} of \autoref{cla:Q_1-strac} Then we have that $Q+ cd = \beta P + \gamma(\alpha_i Q_o+ T'_i + c_i(\epsilon_ic_i +v_i)$ and again, we have a linear combination of $P,Q, \C[V]$ or $\rank_s =2$ in contradiction. 
	
	We are left with the case that $P$ does not exists. To handle this case we prove the following lemma:
	\begin{lemma}
		If $Q' = Q+cd$ such that $\dim(\spn{c,d})=2$ and $\spn{c,d} \cap V = \emptyset$ then $\cQ_2, \cQ_1 \subseteq \spn{Q, \C[V + \spn{c,d}]}$
			\end{lemma} 
		
		\begin{proof}
			First note that as $P$ does not exists, every polynomial in $\cQ_2$ either satisfies \autoref{cla:Q_2-strac:item:V-c} or \autoref{cla:Q_2-strac:item:V} of \autoref{cla:Q_2-strac} and every polynomial in $\cQ_1$ satisfies \autoref{cla:Q_1-strac:item:V-c} of \autoref{cla:Q_2-strac}.
		Consider $P_i\in \cQ_2$. If $P_i$  satisfies \autoref{cla:Q_2-strac:item:V} of \autoref{cla:Q_2-strac} then it is in $\spn{Q, C[V + \spn{c,d}]}$. Otherwise $P_i$  satisfies \autoref{cla:Q_2-strac:item:V-c} of \autoref{cla:Q_2-strac} and thus there are $P'_i \in \C[V]$, linear forms $c_i, v_i$, such that $v_i \in V$ and $\epsilon_i \in \{0,1\} $ such that $P_i = Q_o+ P'_i + c_i(\epsilon_i c_i + v_i)$. If $c_i \in V$ then the statement holds, thus, from now on we assume $c_i\notin V$.   If $P_i,Q'$ satisfy \autoref{thm:structure}\ref{case:span} then there is $T_j \in \cQ_1$ such that $P_i \in \spn{Q', T_j}$. Furthermore we have that there are $T'_j \in \C[V]$, linear forms $c_j, v_j$, such that $v_j \in V$ and $\epsilon_j, \alpha_j \in \{0,1\} $ such that $T_j = \alpha_j Q_o+ P'_j + c_j(\epsilon_j c_j + v_j)$. Therefore there are $\gamma, \beta$ such that 
		\begin{align*}
			P_i &= \gamma Q' +cd + \beta T_j\\
			Q_o+ P'_i + c_i(\epsilon_i c_i + v_i)&= \gamma (Q+cd) + \beta(\alpha_j Q_o+ P'_j + c_j(\epsilon_j c_j + v_j))\\
			(1-\gamma - \beta)Q_o& = \gamma cd + \beta T'_j + \beta  c_j(\epsilon_j c_j + v_j) - P'_i - c_i(\epsilon_i c_i + v_i)\\
			P'_i - \beta T'_j + c_i(\epsilon_i c_i + v_i) &=  \gamma cd  + \beta  c_j(\epsilon_j c_j + v_j)
		\end{align*}
		If $\rank_s(P'_i - \beta T'_j + c_i(\epsilon_i c_i + v_i)) = 2$ then we get that $c,d \in \spn{V,c_i}$ in contradiction to the assumption that $\{c,d\} \cap V = \emptyset$. Consider the equation when taking $V$ to $0$ and obtain that 
		$ c_i(\epsilon_i c_i + v_i) \equiv_V  \gamma cd  + \beta  c_j(\epsilon_j c_j + v_j)$
		We know that $cd \not \equiv_{V} 0$ and thus not both sides of the equation vanish. If either $\epsilon_i$ or $\epsilon_j$ are $0$, it holds that $c \sim d$ mod $V$, which again, is a contradiction. $c_i^2-\beta c^2_j \equiv_V \gamma cd$ and therefore $c_i,c_j \in \spn{V,c,d}$ and the structure holds for $P_i$. To conclude we can deduce that in every case, $P_i \in \spn{Q, \C[V,c,d]} $. Very similarly the statement holds for every $T_i \in \cQ_1$ and therefore $\cQ_2,\cQ_1 \subset  \spn{Q, \C[V,c,d]} $. 
		\end{proof}

	We can use this lemma together with \autoref{cla:q1-q2-then-q3} to deduce the desired structure if $Q'$ or that the general theorem holds.
	\end{proof}

Considering \autoref{cla:Q_3-strac}, \autoref{cla:Q_2-strac}, \autoref{cla:Q_1-strac}, for $i \in [3]$, denote by $\calS_i = \lbrace c \mid \exists Q_j \in \cQ_i, Q_j = Q+ Q'_j + c(\epsilon c + v)\rbrace$. In order to bound $\dim (\cQ)$ we need to bound $\dim(\cup_i \calS_i)$. We do so by proving that $\cup_i \calS_i$ satisfy the conditions of \hyperref[thm:EK]{Edelstein-Kelly theorem} mod $V$.

\begin{remark}
If $P$ exists then $\calS_1, \calS_2,\calS_3 \subseteq V$. 
\end{remark}
\begin{proof}
	Assume towards a contradiction that there is $Q_i = \alpha_i Q+ Q'_i + c_i(\epsilon_i c_i + v_i)$ where $c_i \notin V$. Consider $P_j = \alpha_j Q_o+ \beta_j P + P'_j$ where $\beta_j \neq 0$. If $P$ exists we can assume that there is such $P_j$ is a different set from $Q_i$, and therefore we can consider the possible cases of \autoref{thm:structure} that $P_j,Q_i$ satisfy.
	From the rank bound on $Q_o$ there can not satisfy \autoref{thm:structure}\ref{case:2}. If $Q_i,P_j$ satisfy \autoref{thm:structure}\ref{case:rk1} then there are two linear forms, $c$ and $d$ such that $P_j = \gamma Q_i + cd$ and therefore 
	\[\alpha_j Q_o+ \beta_j P + P'_j = \gamma(\alpha_i Q+ Q'_i + c_i(\epsilon_i c_i + v_i)) + cd.\]
	And therefore
		\[(\alpha_j-\gamma\alpha_i) Q_o+ \beta_j P + P'_j - \gamma Q'_i = \gamma c_i(\epsilon_i c_i + v_i) + cd.\]
		
	This is a contradiction to the fact that there is no linear combination of $P,Q,\C[V]$ of rank smaller than $2$.
	
	If $Q_i,P_j$ satisfy \autoref{thm:structure}\ref{case:span} then there is a polynomial in the third set, $T_k$ such that $T_k \in \spn{Q_i,P_j}$. Consider the  two possible structures for $T_k$. If $T_k = \alpha_k Q_o+ T'_k + c_k(\epsilon_k c_k + v_k)$ and $T_k = \alpha  Q_i + \beta P_j$ then by a similar argument there is a linear combination of $P,Q,\C[V]$ of rank smaller than $2$ in contradiction.
	
	 If $T_k = \alpha_k Q_o+\beta_k P + T'_k$, where $\beta_k \neq 0$ and $T_k = \alpha  Q_i + \beta P_j$ then we get that 
	 \[\alpha_k Q_o+\beta_k P + T'_k = \beta(\alpha_j Q_o+ \beta_j P + P'_j) + \alpha(\alpha_i Q+ Q'_i + c_i(\epsilon_i c_i + v_i))\]
	 
	 Therefore,
	  \[(\alpha_k - \beta\alpha_j -\alpha\alpha_i) Q_o+ (\beta_k - \beta\beta_j) P + T'_k - \beta P'_j - \alpha Q'_i =  \alpha c_i(\epsilon_i c_i + v_i)\]
	  
	  This is a linear combination that will result in a contradiction unless $\alpha_k - \beta\alpha_j -\alpha\alpha_i = \beta_k - \beta\beta_j = 0$. In this case we obtain that $T'_k - \beta P'_j - \alpha Q'_i =  \alpha c_i(\epsilon_i c_i + v_i)$ and as $\MS(T'_k - \beta P'_j - \alpha Q'_i)\subseteq V$ it holds that $c_i \in V$, in contradiction.
	 
\end{proof}
Thus, If $P$ exists, we are done, as $\cQ_1,\cQ_2,\cQ_3 \subset \spn{Q,P,\C[V]}$.

\begin{remark}
	If there are $i\neq j \in [3]$ such that $S_i = S_j = \emptyset$, then it holds that $\cQ_i,\cQ_j \in \spn{Q,\C[V]}$ and we can use \autoref{cla:q1-q2-then-q3} to deduce that $\cup_{i=1}^3 \cQ_i \subset \spn{Q,\C[V]}$.
\end{remark}

\begin{claim} \label{cla:ek-mod-v}
	Let $c_i\in \calS_i$ and $c_j \in \calS_j$ be such that $c_i \notin V$ and $c_j \notin \spn{c_i,V}$. Then, there is $c_k$ in the third set, $\calS_k$, such that $c_k \in \spn{c_i,c_j,V}$ and $c_k \notin \spn{c_i,V}\cup \spn{c_j,V}$.
\end{claim}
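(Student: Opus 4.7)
The plan is to mimic the argument of \autoref{cla:sg-mod-v}, but now invoking the colored Edelstein-Kelly hypothesis rather than the uncolored one. First I would fix $Q_i \in \cT_i$ and $Q_j \in \cT_j$ as polynomials witnessing that $c_i \in \calS_i$ and $c_j \in \calS_j$; by \autoref{cla:Q_1-strac}, \autoref{cla:Q_2-strac} and \autoref{cla:Q_3-strac}, each has the form $\alpha Q_o + Q' + c(\epsilon c + v)$ (with $\alpha=0$, $Q'=0$, $\epsilon=1$, $v=0$ covering the squares in $\calL$). By the hypothesis of \autoref{thm:main-ek-intro} together with \autoref{cla:gup-4}, there is a subset $\cK \subseteq \cT_k$ with $|\cK| \leq 4$ such that $\prod_{Q \in \cK} Q \in \sqrt{\ideal{Q_i, Q_j}}$. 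The key gain over the uncolored setting is that any polynomial we extract from $\cK$ automatically lies in $\cT_k$, which is precisely what we need.

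Next I would apply \autoref{thm:structure} to $A = Q_i$, $B = Q_j$, and $\{Q_k\}_{k \in \cK}$. Since $\rank_s(Q_o) \geq \rkq$ and $c_i \notin V$, $c_j \notin \spn{c_i, V}$, both $Q_i$ and $Q_j$ have rank much larger than $4$, so \autoref{thm:structure}\ref{case:2} cannot hold between $Q_i$ and $Q_j$. Hence only \autoref{thm:structure}\ref{case:span} and \autoref{thm:structure}\ref{case:rk1} are possible. In Case~\ref{case:span}, some $Q_k \in \cK$ lies in $\spn{Q_i, Q_j}$; expanding $Q_k = \alpha Q_i + \beta Q_j$ and using that $\rank_s(Q_o) \geq \rkq$ forces the coefficients of $Q_o$ on both sides to match, leaving an identity of the form $Q'_k - \alpha Q'_i - \beta Q'_j = \alpha c_i(\epsilon_i c_i + v_i) + \beta c_j(\epsilon_j c_j + v_j) - c_k(\epsilon_k c_k + v_k)$ (with the obvious modification if $Q_k = c_k^2$). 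Applying \autoref{lem:ef} to this equation yields $c_k \in \spn{c_i, c_j, V} \setminus (\spn{c_i, V} \cup \spn{c_j, V})$, which is what we want.

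In Case~\ref{case:rk1}, a nontrivial combination $\alpha Q_i + \beta Q_j = ef$ is reducible. The high rank of $Q_o$ again forces $\alpha\alpha_i + \beta\alpha_j = 0$, and \autoref{lem:ef} places $e$ (say) in $\spn{c_i, c_j, V}$ with nontrivial dependence on both $c_i$ and $c_j$ modulo $V$. Then $\sqrt{\ideal{Q_i, Q_j}} = \sqrt{\ideal{ef, Q_j}}$, and since $Q_j$ remains irreducible after setting $e = 0$ (by the rank bound), one of the $Q_k \in \cK$ must factor as $\gamma Q_j + ed$ for a linear form $d$ and scalar $\gamma$. If $\gamma = 0$ then $Q_k = c_k^2$ with $c_k \sim e$, which satisfies the claim. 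Otherwise the same computation as in Case~(ii) of \autoref{cla:sg-mod-v}, carried out modulo $c_j$, shows $c_k \in \spn{c_i, V} + \spn{c_j}$, and then subtracting $Q_k - Q_i = e(d-f)$ rules out $c_k \in \spn{c_i, V}$ (otherwise $e \in \spn{c_i, V}$, contradicting that $e$ depends on $c_j$). So again $c_k$ satisfies the claim.

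The main obstacle I expect is the bookkeeping in Case~\ref{case:rk1}, specifically ruling out the degenerate possibility $c_k \in \spn{c_i, V} \cup \spn{c_j, V}$: this requires the contradiction obtained by computing $Q_k - Q_i$ (or $Q_k - Q_j$) and using the nontrivial dependence of $e$ on both $c_i$ and $c_j$ modulo $V$, exactly as in \autoref{cla:sg-mod-v}. A secondary technicality is the mixed case in which $Q_i$ or $Q_j$ itself comes from $\calL$; this only requires setting $\alpha_i = 0$, $Q'_i = 0$ (and analogously for $j$) in the formulas above, and the argument goes through unchanged thanks to the rank condition on $Q_o$. Once \autoref{cla:ek-mod-v} is established, the \hyperref[thm:EK]{Edelstein-Kelly theorem} applied to the images of $\calS_1, \calS_2, \calS_3$ modulo $V$ bounds $\dim(\spn{\bigcup_i \calS_i \cup V}) = O(1)$, which, combined with \autoref{cla:Q_1-strac}, \autoref{cla:Q_2-strac}, \autoref{cla:Q_3-strac}, finishes \autoref{cla:Q-dom-hr-col}.
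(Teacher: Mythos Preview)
Your proposal is correct and follows essentially the same route as the paper's proof: pick witnesses $Q_i,Q_j$ for $c_i,c_j$, apply \autoref{thm:structure} (ruling out case~\ref{case:2} by rank), and in each of the two remaining cases use \autoref{lem:ef} together with the rank bound on $Q_o$ to extract a $c_k\in\calS_k$ with the required nontrivial dependence. Your treatment of case~\ref{case:rk1} via the identity $Q_k-Q_i=e(d-f)$ is exactly the argument from \autoref{cla:sg-mod-v}; the paper phrases this step slightly differently (arguing directly that $\MS\bigl(Q'_k-\delta Q'_j-\delta c_j(\epsilon_j c_j+v_j)\bigr)\subseteq\spn{V,c_j}$ forces $c_k$ to depend nontrivially on $c_i$), but the two are equivalent.

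One point you gloss over: the structural claims \autoref{cla:Q_1-strac}--\autoref{cla:Q_3-strac} allow a third alternative, namely $Q_k\in\spn{P,Q_o,\C[V]}$, and the paper's proof spends a paragraph ruling this out (any such $Q_k=\gamma Q_i+\delta Q_j$ would produce a linear combination of $P,Q_o,\C[V]$ of $\rank_s\le 2$, contradicting the standing assumption on $P$). In fact the remark immediately preceding \autoref{cla:ek-mod-v} already shows that if $P$ exists then every $c\in\bigcup_i\calS_i$ lies in $V$, so under the hypothesis $c_i\notin V$ the polynomial $P$ does not exist and this third alternative is vacuous---your omission is therefore harmless, but it is worth saying so explicitly. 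Also, the sets $\calS_i$ in the paper are built only from $\cQ_i$, not from $\calL_i$, so your remark about witnesses coming from $\calL$ is not needed here.
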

\begin{proof}
	Let $Q_i \in \cQ_i$ and $Q_j \in \cQ_j$ such that $Q_i = \alpha_i Q_o+ Q'_i + c_i(\epsilon_i c_i + v_i)$ and   $Q_j = \alpha_j Q_o+ Q'_j + c_j(\epsilon_j c_j + v_j)$. 
	
	If $Q_i,Q_j$ satisfy \autoref{thm:structure}\ref{case:span} then there is $Q_k \in \cQ_k$ such that $Q_k \in \spn{Q_i,Q_j}$. If $Q_k \in \spn{Q,P,\C[V]}$ then there are $\alpha_k, \beta_k \in \C^\times$ and $Q'_k \in \C[V]$ such that $Q_k = \alpha_k Q_o+ \beta_k P + Q'_k$. Moreover, there are $\gamma, \delta \in \C^\times$ such that $Q_k = \gamma Q_i + \delta Q_j$. Combining those to equations together we obtain that,
	\begin{align*}
	\alpha_k Q_o+ \beta_k P + Q'_k &= \gamma(\alpha_i Q_o+ Q'_i + c_i(\epsilon_i c_i + v_i)) + \delta(\alpha_j Q_o+ Q'_j + c_j(\epsilon_j c_j + v_j))\\
	 \gamma c_i(\epsilon_i c_i + v_i) + \delta c_j(\epsilon_j c_j + v_j) &= (\alpha_k-  \gamma\alpha_i - \delta\alpha_j) Q_o+ \beta_k P+ Q'_k - \gamma Q'_i -\delta Q'_j 
	\end{align*}
	
	As $\beta_k \neq 0$ and $Q'_k - \gamma Q'_i -\delta Q'_j  \in \C[V]$ we obtain that there is a non trivial linear combination of $P,Q,\C[V]$ of $rank_s = 2$ in contradiction to our assumption.
	
	Therefore, we can assume that there is $c_k \in \calS_k$ such that $Q_k = \alpha_k Q_o+ Q'_k + c_k(\epsilon_k c_k+ v_k)$ and as before,
	\begin{align*}
	\alpha_k Q_o+ Q'_k + c_k(\epsilon_k c_k+ v_k) &= \gamma(\alpha_i Q_o+ Q'_i + c_i(\epsilon_i c_i + v_i)) + \delta(\alpha_j Q_o+ Q'_j + c_j(\epsilon_j c_j + v_j))\\
	(\alpha_k-  \gamma\alpha_i - \delta\alpha_j) Q_o&= \gamma c_i(\epsilon_i c_i + v_i) + \delta c_j(\epsilon_j c_j + v_j) - c_k(\epsilon_k c_k+ v_k) - Q'_k + \gamma Q'_i +\delta Q'_j 
	\end{align*}
	As $\dim(V)< \rkq /2$ and $Q'_k + \gamma Q'_i +\delta Q'_j  \in \C[V]$ it holds that $\rank_s(Q'_k + \gamma Q'_i +\delta Q'_j )\leq \rkq/4$ and thus $\rank_s(\gamma c_i(\epsilon_i c_i + v_i) + \delta c_j(\epsilon_j c_j + v_j) - c_k(\epsilon_k c_k+ v_k) - Q'_k + \gamma Q'_i +\delta Q'_j)< \rkq$. Thus it must be that $\alpha_k-  \gamma\alpha_i - \delta\alpha_j = 0$ and $\gamma c_i(\epsilon_i c_i + v_i) + \delta c_j(\epsilon_j c_j + v_j) - c_k(\epsilon_k c_k+ v_k) = Q'_k - \gamma Q'_i -\delta Q'_j $. Now, we can apply \autoref{lem:ef} to deduce that $c_k$ satisfy the claim statement.
	
	If $Q_i,Q_j$ satisfy \autoref{thm:structure}\ref{case:rk1} then there are linear forms $e$ and $f$ and $\gamma\neq 0$ such that $Q_i = \gamma Q_j + ef$ and therefore 
	\[(\alpha_i-\gamma\alpha_j)Q_o= \gamma Q'_j - Q'_i + \gamma c_i(\epsilon_i c_i + v_i) - c_j(\epsilon_j c_j + v_j)  + ef.\]
	As before it means that $ \alpha_i-\gamma\alpha_j = 0$ and 
	\[Q'_i-\gamma Q'_j = \gamma c_i(\epsilon_i c_i + v_i) - c_j(\epsilon_j c_j + v_j)  + ef.\]
	By applying \autoref{lem:ef} we can deduce, with out loss of generality that $e \in \spn{c_i,c_j,V}$ and $e\notin \spn{c_i, V} \cup \spn{c_j,V}$.
	
	By our assumption that $\rank_s(Q)\geq \rkq$ it follows that $Q_j$ is irreducible even after setting $e=0$. It follows that if a product of irreducible quadratics satisfy $$\prod_k A_k \in \sqrt{\ideal{Q_i,Q_j}} = \sqrt{\ideal{ef,Q_j}}$$ then, after setting $e=0$, some $A_k$ is divisible by ${Q_j}|_{e=0}$. Thus, there is a multiplicand that is equal to $ \delta Q_j + ed$ for some linear form $d$ and scalar $\delta$. In particular, there must be a polynomial $Q_k \in \cQ_k$, such that  $Q_k = \delta Q_j + ed$.
	
	Similarly to the previous case, it cannot be that case that $Q_k \in \spn{Q,P,\C[V]}$ and thus it must be that $Q_k = \alpha_k Q_o+ Q'_k + c_k(\epsilon_k c_k + v_k)$ and therefore $\alpha_k Q_o+ Q'_k + c_k(\epsilon_k c_k + v_k) = \delta (\alpha_j Q_o+ Q'_j + c_j(\epsilon_j c_j + v_j)) + ed$. As before $\alpha_k -\delta\alpha_j = 0$ and therefore $Q'_k - \delta Q'_j - \delta c_j(\epsilon_j c_j + v_j)=  ef - c_k(\epsilon_k c_k + v_k)$.  As $\MS(Q'_k - \delta Q'_j - \delta c_j(\epsilon_j c_j + v_j)) \subseteq \spn{V,c_j}$, and  $e \in \spn{c_i,c_j,V}$ and $e\notin \spn{c_i, V} \cup \spn{c_j,V}$. It must hold that $c_k  \in \spn{c_i,c_j,V}$ and $c_k\notin \spn{c_i, V} \cup \spn{c_j,V}$ as we needed. This finished the proof of \autoref{cla:ek-mod-v}.
\end{proof}

Now we can use \hyperref[thm:EK]{Edelstein-Kelly theorem} to get that $\dim(\cup_i S_i) \leq 4$ then we can set $V = V + \cup_i S_i$ and we have that $\cQ\subset \spn{Q, \C[V]}$ and therefore $\dim(\cQ) = O(1)$ as we wanted.

\subsection{$\rank_s(Q)\leq \rkq$}

For this section we assume that $\rank_s(Q) \leq \rkq$ then, by the previous argument we have a linear space of linear forms $V$ such that $\MS(Q) \subseteq V$ and $(1-10\delta)m_3$ of the polynomials in $\cQ_1^3$ are in $\C[V]$.

\begin{claim}\label{cla:V-and-P-lr}
	Assume $\rank_s(Q) \leq \rkq$,
	then there exists a $O(1)$-dimensional linear space $V$, 
	such that for every $P_i\in \cQ_2$ one of the following holds
	\begin{enumerate}
		\item \label{cla:V-and-P:item:V} $P_i \in \ideal{V}$.
		\item \label{cla:V-and-P:item:V-c} There is a quadratic polynomial $P'_i$ and a linear form $v_i$ that are defined over $V$, and  a linear form $c_i$, such that $P_i = P'_i + c_i(\epsilon_i c_i + v_i )$
		\item \label{cla:V-and-P:item:span} At least $1-10\delta$ of the polynomials in $\cQ_1$ are in the span of $P_i, Q, \C[V]$.
	\end{enumerate} 
\end{claim}
\begin{proof}
	From \autoref{lem:Q3-in-V} there is a linear space of linear forms, $V$ such that $\dim(V) = O(1)$ and the set $\cI = \lbrace Q_i = Q+a_ib_i\in \cQ^1_3 \mid a_i,b_i\in V \rbrace$ satisfies that $|\cI| \geq (1-10\delta) m_3$ .
	Consider $P\in \cQ_2$. If there is $Q_i \in \cI$ such that $Q_i,P$ satisfy $\autoref{thm:structure}\ref{case:2}$ then $P \in \ideal{V}$. Otherwise, if there are $Q_i,Q_j\in \cI$ such that $Q_i,P$ and $Q_j,P$ satisfy $\autoref{thm:structure}\ref{case:rk1}$, then, there are linear forms $c,d,e$ and $f$ such that $P = \alpha Q_i + cd= \beta Q_j + ef$ and thus, As $\alpha Q_i-\beta Q_j = ef-cd$. From pairwise linear independence it follows that $\alpha Q_i-\beta Q_j \neq 0$ and $\{0\} \neq \MS(cd-ef) \subseteq V$ and therefore from \autoref{cla:rank-2-in-V} without loss of generality $d \in \spn{c,V}$ and \autoref{cla:V-and-P:item:V-c} holds.
	
	Thus we are left with the case that $P$ satisfies \autoref{thm:structure}\ref{case:span} with all all but one of the polynomials in $\cI$. If there is $P' \in \cQ_1$ and $Q_i,Q_j \in I$ such that $P' \in \spn{P,Q_i},\spn{P,Q_j}$ then by pairwise linearly independence it follows that $P \in \spn{Q_i,Q_j}$ and then  \autoref{cla:V-and-P:item:V} hold.
	
	Thus for every $Q_i \in \cI$ there is a different $T_i \in \cQ_1$ such that $T_i \in \spn{P,Q_i}$. 
than there are at-least $(1-10\delta)m_3$ different $T_i \in \cQ_1$. As $m_1 < m_3$ it holds that $P, Q, \C[V]$ span at least $1-10\delta$ of the polynomials in $\cQ_1$ and \autoref{cla:V-and-P:item:span}  holds.
	
\end{proof}

Denote by $\cJ$ the set of all polynomials in $\cQ_2$ that satisfy  \autoref{cla:V-and-P:item:span}  of \autoref{cla:V-and-P}, and not any of the other items. Let $P\in \cJ$,
then for every $P'\in \cJ$ it must be that there is a polynomial $T_i\in \cQ_1$ such that $T_i \in \spn{P,\C[V]}\cap  \spn{P',\C[V]}$ and therefore $ P' \in \spn{P,\C[V]}$. If there is a linear combination of $P, Q$ and $ \C[V]$ of $\rank_s <3$ then we can add at most $4$ linear forms to $V$ to have that $P \in \spn{Q,\C[V]}$ and therefore $\cJ \subset \spn{Q,\C[V]}$. Furthere more if there is a linear form $c$ such that $P -c^2 \in \ideal{V}$ then we can ass $c$ to $V$ and get that $\cJ \subset \ideal{V}$. Thus from now on, if $\cJ \neq \emptyset$ we can assume that there is a polynomial $P\in \cQ_2$ such that every linear combination of  $P$ and $ \C[V]$ of $\rank_s \geq3$, that every linear combination of $P$ with $\ideal{V}$ is of $\rank_s > 1$ and $\cJ \subset \spn{P,\C[V]}$. 

\begin{corollary}\label{cla:Q_2-strac-lr}
	In the current settings every polynomial $P_i \in \cQ_2$ one of the following holds:
	\begin{enumerate}
		\item \label{cla:Q_2-strac:item:V} $T_i\in \ideal{V}$ .
		\item \label{cla:Q_2-strac:item:V-c} There is a quadratic polynomial $P'_i$ and a linear form $v_i$ that are defined over $V$, and  a linear form $c_i$, such that $P_i =  P'_i + c_i(\epsilon_i c_i + v_i )$
		\item \label{cla:Q_2-strac:item:span} $P_i \in \spn{P,\C[V]}$.
	\end{enumerate} 
\end{corollary}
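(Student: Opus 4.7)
The plan is to derive this corollary as a direct cleanup of Claim~\ref{cla:V-and-P-lr} combined with the informal pigeonhole argument given in the paragraph immediately preceding the statement. First, I would apply Claim~\ref{cla:V-and-P-lr} to each $P_i \in \cQ_2$. If $P_i$ satisfies case (1) (resp.\ case (2)) of that claim, then case (1) (resp.\ case (2)) of the present corollary already holds. So the only polynomials that require additional analysis are those that satisfy case (3) of Claim~\ref{cla:V-and-P-lr} but neither of the other two; denote the set of such polynomials by $\cJ$, and assume $\cJ \neq \emptyset$, fixing some $P \in \cJ$.

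Next, I would show that $\cJ \subset \spn{P, \C[V]}$, by a two-polynomial pigeonhole. For any $P' \in \cJ$, both $P$ and $P'$ span, together with $Q$ and $\C[V]$, at least $(1-10\delta) m_1$ polynomials of $\cQ_1$. Since $2(1-10\delta) > 1$, there must exist a polynomial $T \in \cQ_1$ such that $T \in \spn{P,Q,\C[V]} \cap \spn{P',Q,\C[V]}$. Pairwise linear independence combined with the fact that $P \not\sim P'$ then forces $P' \in \spn{P,Q,\C[V]}$, i.e.\ $P' = \alpha P + \beta Q + R$ for some $R \in \C[V]$ and scalars $\alpha, \beta$.

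Now I would perform the two rank-absorption steps that normalize the picture. If there is a non-trivial linear combination of $P,Q,\C[V]$ of $\rank_s < 3$, then (by Fact~\ref{cor:containMS}) its minimal space consists of at most $4$ linear forms; adding them to $V$ keeps $\dim(V) = O(1)$ and makes $P$ itself lie in case~(2) of the corollary, so every $P' \in \cJ$ also reduces to case (1) or (2) via the relation $P' = \alpha P + \beta Q + R$. Similarly, if there is a linear form $c$ with $P - c^2 \in \ideal{V}$, then adding $c$ to $V$ places every $P' \in \cJ$ into $\ideal{V}$, i.e.\ into case (1). In the remaining regime every linear combination of $P$ with $\C[V]$ has $\rank_s \geq 3$ and every linear combination of $P$ with $\ideal{V}$ has $\rank_s > 1$; in that regime the relation $P' = \alpha P + \beta Q + R$ above, together with $\MS(Q) \subseteq V$, collapses to $P' \in \spn{P, \C[V]}$, giving case~(3) of the corollary.

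The only point that needs care is to verify that the rank-absorption steps can be carried out without destroying the hypothesis of Claim~\ref{cla:V-and-P-lr}. Since each absorption enlarges $V$ by at most $4$ linear forms and can trigger at most once before we reach the ``generic'' regime, the enlarged $V$ still has dimension $O(1)$, and re-applying Claim~\ref{cla:V-and-P-lr} with this enlarged $V$ only moves polynomials from case~(3) into cases~(1) or~(2), never the other way around. Hence after at most a bounded number of enlargements of $V$ the trichotomy stabilizes, proving the corollary.
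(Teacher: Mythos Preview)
Your proposal is correct and follows essentially the same route as the paper: apply Claim~\ref{cla:V-and-P-lr}, isolate the set $\cJ$ of polynomials satisfying only case~(3), use pigeonhole on the $(1-10\delta)m_1$ spanned polynomials in $\cQ_1$ to conclude $\cJ \subset \spn{P,\C[V]}$, and then absorb low-rank combinations into $V$. The only cosmetic difference is that the paper drops $Q$ from the span immediately (since $\MS(Q_o)\subseteq V$ in the low-rank setting), whereas you carry $Q$ along and collapse at the end; and the paper leaves the bounded-iteration point implicit while you spell it out.
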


\begin{claim}\label{cla:Q_1-strac-lr}
	In the current settings every polynomial $T_i \in \cQ_1$ one of the following holds:
	\begin{enumerate}
			\item \label{cla:Q_1-strac:item:V} $T_i\in \ideal{V}$
		\item \label{cla:Q_1-strac:item:V-c} There is a quadratic polynomial $T'_i$ and a linear form $v_i$ that are defined over $V$, and  a linear form $c_i$, such that $T_i =T'_i + c_i(\epsilon_i c_i + v_i )$
		\item \label{cla:Q_1-strac:item:span} $T_i \in \spn{P,\C[V]}$.
	\end{enumerate}   
\end{claim}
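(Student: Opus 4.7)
The plan is to mirror the architecture of \autoref{cla:Q_2-strac-lr}, using the abundance of polynomials in $\cI \eqdef \{Q_i \in \cQ_3 \mid \MS(Q_i) \subseteq V\}$ (which has size $\geq (1-10\delta)m_3$ by \autoref{lem:Q3-in-V}) to probe the structure of an arbitrary $T \in \cQ_1$. Fix such $T$. Since we are in the Edelstein--Kelly setting, for each $Q_i \in \cI$ the pair $(T,Q_i)$ falls into one of the three cases of \autoref{thm:structure} with the third polynomial drawn from $\cT_2 = \cQ_2 \cup \calL_2$. I would split the argument along these three cases and additionally along the structural trichotomy that \autoref{cla:Q_2-strac-lr} imposes on $\cQ_2$.

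First I would handle the case that $T$ and some $Q_i \in \cI$ satisfy \autoref{thm:structure}\ref{case:2}, i.e.\ $T, Q_i \in \ideal{c,d}$ for linear forms $c,d$. Since $Q_i$ is an irreducible quadratic with $\MS(Q_i) \subseteq V$, I claim $c,d \in V$: if both $c,d$ were independent of $V$ then $Q_i|_{c=d=0} = Q_i \neq 0$, contradicting $Q_i \in \ideal{c,d}$; if exactly one (say $c$) lies in $V$, then $Q_i|_{c=0}$ is a nonzero polynomial in $V/\spn{c}$ divisible by $d$, forcing $d$ to be congruent modulo $c$ to a form in $V$ and hence $d \in V$. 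Therefore $T \in \ideal{V}$, yielding structure~(1). Next, if $T$ satisfies \autoref{thm:structure}\ref{case:rk1} with two distinct $Q_i, Q_j \in \cI$, write $T = \gamma_i Q_i + cd = \gamma_j Q_j + ef$ and subtract to obtain $\gamma_i Q_i - \gamma_j Q_j = ef - cd$, whose $\MS$ lies in $V$. Irreducibility of $T$ together with pairwise linear independence of $Q_i, Q_j$ rules out $\MS(ef - cd) = \{0\}$, so \autoref{cla:rank-2-in-V} applies and, up to relabeling, $d = \epsilon c + v$ with $v \in V$. Then $T - c(\epsilon c + v) = \gamma_i Q_i \in \C[V]$, giving structure~(2).

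In the remaining case, $T$ satisfies \autoref{thm:structure}\ref{case:span} with all but at most one $Q_i \in \cI$, producing for each such $i$ a polynomial $P_{j(i)} \in \cQ_2 \cup \calL_2$ with $T \in \spn{Q_i, P_{j(i)}}$. If two distinct $Q_i, Q_{i'} \in \cI$ yield the same $P_{j(i)}$ (up to scalar), pairwise independence forces $T \in \spn{Q_i, Q_{i'}} \subseteq \C[V]$, whence $\MS(T) \subseteq V$ and $T \in \ideal{V}$. Otherwise the assignment $i \mapsto P_{j(i)}$ is injective, and I invoke \autoref{cla:Q_2-strac-lr} (or directly handle the $\calL_2$ case $P_{j(i)} = \ell^2$). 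Writing $T = \alpha Q_i + \beta P_{j(i)}$ with $\alpha Q_i \in \C[V]$, the three possibilities for $P_{j(i)}$ transfer directly to $T$: if $P_{j(i)} \in \ideal{V}$ then $T \in \ideal{V}$ (structure~(1)); if $P_{j(i)} = P'_{j(i)} + c(\epsilon c + v)$ with $P'_{j(i)} \in \C[V]$, $v \in V$ (or $P_{j(i)} = \ell^2$), then $T - \beta c(\epsilon c + v) \in \C[V]$ (structure~(2)); and if $P_{j(i)} \in \spn{P, \C[V]}$, then $T \in \C[V] + \spn{P, \C[V]} = \spn{P, \C[V]}$ (structure~(3)).

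The main obstacle I anticipate is verifying Step~1 cleanly -- the fact that $c, d$ can be forced into $V$ when $T, Q_i \in \ideal{c,d}$ and $Q_i$ is irreducible in $\C[V]$. The argument is not difficult but requires carefully tracking how the ideal $\ideal{c,d}$ interacts with the subring $\C[V]$ and using the irreducibility of $Q_i$ to preclude degenerate factorizations; the other two steps are essentially bookkeeping, transferring known structural information from $\cQ_3$ and $\cQ_2$ into $\cQ_1$ via \autoref{thm:structure}.
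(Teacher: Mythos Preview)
Your proposal is correct and follows essentially the same architecture as the paper's proof: handle \autoref{thm:structure}\ref{case:2} with some $Q_i\in\cI$ to get $T\in\ideal{V}$, handle \autoref{thm:structure}\ref{case:rk1} with two distinct $Q_i,Q_j\in\cI$ via subtraction and \autoref{cla:rank-2-in-V} to get structure~(2), and in the remaining \autoref{thm:structure}\ref{case:span} case transfer the trichotomy of \autoref{cla:Q_2-strac-lr} from the spanned polynomial $P'\in\cT_2$ back to $T$.

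Two minor remarks. First, your Step~1 worry is easily dispatched: the clean case split is on $\dim(\spn{c,d}\cap V)$ rather than on whether $c$ or $d$ individually lies in $V$; dimension $0$ gives $Q_i|_{c=d=0}=Q_i\neq 0$, dimension $1$ forces a linear factor of the irreducible $Q_i$, and dimension $2$ is the desired $c,d\in V$. Second, the injectivity/pigeonhole argument you insert in Step~3 is unnecessary: the paper simply picks any single $Q_i$ satisfying \autoref{thm:structure}\ref{case:span} with $T$, takes the resulting $P'\in\cT_2$, and reads off which item of \autoref{cla:Q_2-strac-lr} (or the trivial $\calL_2$ case $P'=\ell^2$) it satisfies---one such $Q_i$ already suffices to determine the structure of $T$.
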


\begin{proof}
	As before, from \autoref{lem:Q3-in-V} there is a linear space of linear forms, $V$ such that $\dim(V) = O(1)$ and the set $\cI = \lbrace Q_i = Q+a_ib_i\in \cQ^1_3 \mid a_i,b_i\in V \rbrace$ satisfies that $|\cI| \geq (1-10\delta) m_3$ .
	
	Let $T \in \cQ_1$. If there is $Q_i \in \cQ_3$ such that  $T$ satisfies \autoref{thm:structure}\ref{case:2} with  $Q_i$ then $T\in \ideal{V}$. Otherwise, if there are  $Q_i\neq Q_j\in \cI$. If $T$ satisfies \autoref{thm:structure}\ref{case:rk1} with both $Q_i$ and $Q_j$ then by a similar argument to the proof of \autoref{cla:V-and-P} we have that \autoref{cla:Q_1-strac:item:V-c} holds.
	
	If $T,Q_i$ satisfy \autoref{thm:structure}\ref{case:span} then there is a polynomial in $P' \in \cQ_2$ such that $T \in \spn{Q_i,P'}$. If $P'$ satisfies \autoref{cla:Q_2-strac:item:V-c}  then $T_i$ satisfies \autoref{cla:Q_1-strac:item:V-c} of the claim. If $P'$ \autoref{cla:Q_2-strac:item:V} of  \autoref{cla:Q_2-strac-lr} then $T_i$ satisfies \autoref{cla:Q_1-strac:item:V} of the claim statement.
	If $P'$ satisfies \autoref{cla:Q_2-strac:item:span} of \autoref{cla:Q_2-strac} then $T$ satisfies \autoref{cla:Q_1-strac:item:span} of the claim statement. 
\end{proof}

\begin{claim}\label{cla:Q_3-strac-lr}
	In the current settings every polynomial $Q_i \in \cQ_3$ one of the following holds:
	\begin{enumerate}
		\item $Q_i \in \ideal{V}$
		\item \label{cla:Q_3-strac:item:V-c} There is a quadratic polynomial $Q'_i$ and a linear form $v_i$ that are defined over $V$, and  a linear form $c_i$, such that $Q_i = +Q'_i + c_i(\epsilon_i c_i + v_i )$
		\item \label{cla:Q_3-strac:item:span} $Q_i \in \spn{P,\C[V]}$.
	\end{enumerate}   
\end{claim}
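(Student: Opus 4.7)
The plan is to mirror \autoref{cla:Q_3-strac}, the analogous high-rank statement, exploiting the low-rank hypothesis $\MS(Q_o)\subseteq V$ so that $Q_o$ itself already sits in $\C[V]_2\subseteq \ideal{V}$ and no longer needs to appear as a separate summand in the target normal form. The large subset $\cI\subseteq \cQ_3^1$ supplied by the EK analogue of \autoref{lem:Q3-in-V-case2} consists of polynomials $Q_i=Q_o+a_ib_i$ with $a_i,b_i\in V$, and since $Q_o\in\ideal{V}$ every such $Q_i$ lies in $\ideal{V}$, giving case~1 at once. It remains to handle an arbitrary $Q'\in\cQ_3\setminus\cI$ (including the polynomials in $\cQ_3^2$), which I split by which outcome of \autoref{thm:structure} is realised by the pair $(Q',Q_o)$, viewing $Q_o\in\cQ_2$ and $Q'\in\cQ_3$.

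In the \autoref{thm:structure}\ref{case:2} branch both polynomials lie in some $\ideal{c,d}$; writing $Q_o=c\ell_1+d\ell_2$ together with $\MS(Q_o)\subseteq V$ forces $c,d\in V$, so $Q'\in\ideal{V}$ (case~1). In the \autoref{thm:structure}\ref{case:span} branch the EK hypothesis supplies $T\in\cT_1$ with $Q'=\alpha Q_o+\beta T$; I run through the three structural alternatives for $T$ given by \autoref{cla:Q_1-strac-lr}, together with the auxiliary possibility $T=\ell^2\in\calL_1$ handled analogously. In each subcase the extra $\alpha Q_o$ lies in $\C[V]_2$ and is absorbed either into the $\ideal{V}$ piece of case~1, the $\C[V]_2$ summand of case~2 (after rewriting $\beta c_i(\epsilon_i c_i+v_i)$ as an equivalent product of that form), or the $\C[V]$ piece of case~3.

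The main obstacle is the \autoref{thm:structure}\ref{case:rk1} branch, where $Q'=\alpha'Q_o+cd$. When the distinguished $P$ exists, the quantitative hypotheses on $P$ preclude $Q',P$ from satisfying either \autoref{thm:structure}\ref{case:2} (ruled out directly by the assumption that $P$ has no case-2 partner) or \autoref{thm:structure}\ref{case:rk1}: substituting $Q'=\alpha'Q_o+cd$ into $\alpha''P+\beta''Q'=ef$ yields $\alpha''P+(\text{element of }\C[V]_2)=ef-\beta''cd$, whose right-hand side has $\rank_s\le 2$, contradicting the hypothesis that $\alpha''P+\C[V]_2$ has $\rank_s\ge 3$. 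Hence $Q',P$ satisfy case~span, producing $T\in\cT_1$ with $Q'\in\spn{P,T}$. The subcase $T\in\spn{P,\C[V]}$ yields case~3 for $Q'$ at once; the subcase $T=T'_i+c_i(\epsilon_i c_i+v_i)$ proceeds exactly as in the high-rank proof, the identity $\mu P+\nu T'_i-\alpha'Q_o=\nu c_i(\epsilon_i c_i+v_i)-cd$ having right-hand side of rank $\le 2$, so that either $\mu=0$ (forcing $Q'=\nu T$ and hence case~2) or the $P$-rank hypothesis is violated.

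The genuinely delicate subcase is $T\in\ideal{V}\setminus\C[V]_2$: the identity $\mu P=\alpha'Q_o+cd-\nu T$ reduces modulo $V$ to $\mu\cdot P|_{V=0}=\bar c\bar d$, where $\bar c,\bar d$ denote the $V^\perp$-components of $c,d$. When $c,d\in V$ (so $\bar c\bar d=0$) one obtains $Q'\in\ideal{V}$ directly (case~1); otherwise the plan is to prove, in the spirit of \autoref{cla:q1-q2-then-q3_1-ideal}, that the forms $\spn{c,d}$ arising from such exceptional $Q'$ can be uniformly absorbed into $V$—enlarging it by an $O(1)$-dimensional subspace while preserving the other properties of $V$ and the rank hypotheses on $P$—after which the previous analysis yields case~1 or case~2. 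Finally, when $\cJ=\emptyset$ and no such $P$ exists, case~3 is vacuous; every $\cQ_2$-polynomial already has case~V or case~V-c structure, and the very same case analysis, now using the case~V-c alternative in place of the span alternative, funnels $Q'$ into case~1 or case~2. The hardest bookkeeping throughout is in this last delicate subcase, where the $V$-projections of $c,d$ must be tracked carefully in order to apply \autoref{cla:rank-2-in-V} in the right form.
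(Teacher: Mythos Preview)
Your overall skeleton matches the paper's: handle the large subset $\cI$ directly, then for $Q'\in\cQ_3\setminus\cI$ branch on which case of \autoref{thm:structure} the pair $(Q',Q_o)$ satisfies, inheriting structure from some $T\in\cQ_1$ via \autoref{cla:Q_1-strac-lr} in the span branch, and in the rk1 branch passing to the pair $(Q',P)$ when $P$ exists. Your addition of the \autoref{thm:structure}\ref{case:2} branch for $(Q',Q_o)$ is a genuine improvement over the paper's text here, which is visibly copy-pasted from the high-rank argument and omits it.

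Where you go astray is in calling the subcase $T\in\ideal{V}\setminus\C[V]_2$ ``genuinely delicate'' and proposing to absorb $\spn{c,d}$ into $V$. In fact it is no harder than the $V$-$c$ subcase and is dispatched by the \emph{same} contradiction. From $Q'=\mu P+\nu T$ and $Q'=\alpha'Q_o+cd$ you get $\mu P=\alpha'Q_o+cd-\nu T$. In the low-rank setting $Q_o\in\C[V]_2\subseteq\ideal{V}$ and $T\in\ideal{V}$, so $\mu P$ differs from the rank-$1$ quadratic $cd$ by an element of $\ideal{V}$. If $\mu\neq 0$ this directly violates the standing hypothesis on $P$ (``every linear combination of $P$ with $\ideal{V}$ has $\rank_s>1$''); if $\mu=0$ then $Q'\sim T$, contradicting pairwise independence across $\cT_1,\cT_3$. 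So this subcase simply cannot occur when $P$ exists, and no absorption argument is needed.

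The place where a genuine auxiliary lemma is required is only the ``$P$ does not exist'' scenario inside the rk1 branch. There the paper does exactly what you gesture at: it proves that if $Q'=Q_o+cd$ with $\dim\spn{c,d}=2$ and $\spn{c,d}\cap V=\{0\}$, then $\cQ_1,\cQ_2\subseteq\spn{Q_o,\C[V+\spn{c,d}]_2}$, and then invokes the two-sets-force-the-third claim (\autoref{cla:q1-q2-then-q3}) to finish. Your sketch ``in the spirit of \autoref{cla:q1-q2-then-q3_1-ideal}'' is pointing at the right neighbourhood, but you should state and prove that lemma explicitly rather than leave it as bookkeeping; its proof is a short mod-$V$ computation showing that for any $P_i=P'_i+c_i(\epsilon_ic_i+v_i)$ in $\cQ_1\cup\cQ_2$ with $c_i\notin V$, the interaction with $Q'$ forces $c_i\in\spn{V,c,d}$.
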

\begin{proof}
	As before, from \autoref{lem:Q3-in-V} there is a linear space of linear forms, $V$ such that $\dim(V) = O(1)$ and the set $\cI = \lbrace Q_i = Q+a_ib_i\in \cQ^1_3 \mid a_i,b_i\in V \rbrace$ satisfies that $|\cI| \geq (1-10\delta) m_3$ . Every polynomial in $I$ satisfies \autoref{cla:Q_3-strac:item:V-c} of this claim.
	Let $Q' \in \cQ_3\setminus \cI$. If $Q,Q'$ satisfy \autoref{thm:structure}\ref{case:span} then it must be the case that there is $T_i\in \cQ_1$ such that $Q' \in \spn{T_i,Q}$ and thus $Q'$ holds the same structure as $T_i$.
	
	If $Q,Q'$ satisfy \autoref{thm:structure}\ref{case:rk1} then there are linear forms $c$ and $d$ such that $Q= Q'+cd$. If $P$ exists then it must be that $Q',P$ satisfy \autoref{thm:structure}\ref{case:span} as we know that $P$ does not satisfy \autoref{thm:structure}\ref{case:2} with any polynomial, and if there were $e$ and $f$ such that $Q_o+ cd=Q' = \alpha P+ef$ in contradiction to the fact that every linear combination of $P$ and $Q_o$ is of $\rank_s >2$. Thus there is $T_i\in \cQ_1$ such that $Q' \in \spn{P,T_i}$ if $T_i$ satisfies \autoref{cla:Q_1-strac:item:span} of \autoref{cla:Q_1-strac} then so does $Q'$. If $T_i$ satisfies \autoref{cla:Q_1-strac:item:V-c} of \autoref{cla:Q_1-strac} Then we have that $Q+ cd = \beta P + \gamma(\alpha_i Q_o+ T'_i + c_i(\epsilon_ic_i +v_i)$ and again, we have a linear combination of $P,Q, \C[V]$ or $\rank_s =2$ in contradiction. 
	
	We are left with the case that $P$ does not exists. To handle this case we prove the following lemma:
	\begin{lemma}
		If $Q' = Q+cd$ such that $\dim(\spn{c,d})=2$ and $\spn{c,d} \cap V = \emptyset$ then $\cQ_2, \cQ_1 \subseteq \spn{Q, \C[V + \spn{c,d}]}$
	\end{lemma} 
	
	\begin{proof}
		First note that as $P$ does not exists, every polynomial in $\cQ_2$ either satisfies \autoref{cla:Q_2-strac:item:V-c} or \autoref{cla:Q_2-strac:item:V} of \autoref{cla:Q_2-strac} and every polynomial in $\cQ_1$ satisfies \autoref{cla:Q_1-strac:item:V-c} of \autoref{cla:Q_2-strac}.
		Consider $P_i\in \cQ_2$. If $P_i$  satisfies \autoref{cla:Q_2-strac:item:V} of \autoref{cla:Q_2-strac} then it is in $\spn{Q, C[V + \spn{c,d}]}$. Otherwise $P_i$  satisfies \autoref{cla:Q_2-strac:item:V-c} of \autoref{cla:Q_2-strac} and thus there are $P'_i \in \C[V]$, linear forms $c_i, v_i$, such that $v_i \in V$ and $\epsilon_i \in \{0,1\} $ such that $P_i = Q_o+ P'_i + c_i(\epsilon_i c_i + v_i)$. If $c_i \in V$ then the statement holds, thus, from now on we assume $c_i\notin V$.   If $P_i,Q'$ satisfy \autoref{thm:structure}\ref{case:span} then there is $T_j \in \cQ_1$ such that $P_i \in \spn{Q', T_j}$. Furthermore we have that there are $T'_j \in \C[V]$, linear forms $c_j, v_j$, such that $v_j \in V$ and $\epsilon_j, \alpha_j \in \{0,1\} $ such that $T_j = \alpha_j Q_o+ P'_j + c_j(\epsilon_j c_j + v_j)$. Therefore there are $\gamma, \beta$ such that 
		\begin{align*}
		P_i &= \gamma Q' +cd + \beta T_j\\
		Q_o+ P'_i + c_i(\epsilon_i c_i + v_i)&= \gamma (Q+cd) + \beta(\alpha_j Q_o+ P'_j + c_j(\epsilon_j c_j + v_j))\\
		(1-\gamma - \beta)Q_o& = \gamma cd + \beta T'_j + \beta  c_j(\epsilon_j c_j + v_j) - P'_i - c_i(\epsilon_i c_i + v_i)\\
		P'_i - \beta T'_j + c_i(\epsilon_i c_i + v_i) &=  \gamma cd  + \beta  c_j(\epsilon_j c_j + v_j)
		\end{align*}
		If $\rank_s(P'_i - \beta T'_j + c_i(\epsilon_i c_i + v_i)) = 2$ then we get that $c,d \in \spn{V,c_i}$ in contradiction to the assumption that $\{c,d\} \cap V = \emptyset$. Consider the equation when taking $V$ to $0$ and obtain that 
		$ c_i(\epsilon_i c_i + v_i) \equiv_V  \gamma cd  + \beta  c_j(\epsilon_j c_j + v_j)$
		We know that $cd \not \equiv_{V} 0$ and thus not both sides of the equation vanish. If either $\epsilon_i$ or $\epsilon_j$ are $0$, it holds that $c \sim d$ mod $V$, which again, is a contradiction. $c_i^2-\beta c^2_j \equiv_V \gamma cd$ and therefore $c_i,c_j \in \spn{V,c,d}$ and the structure holds for $P_i$. To conclude we can deduce that in every case, $P_i \in \spn{Q, \C[V,c,d]} $. Very similarly the statement holds for every $T_i \in \cQ_1$ and therefore $\cQ_2,\cQ_1 \subset  \spn{Q, \C[V,c,d]} $. 
	\end{proof}

	We can use this lemma together with \autoref{cla:q1-q2-then-q3} to deduce the desired structure if $Q'$ or that the general theorem holds.
\end{proof}

\end{proof}

\fi

\section{Conclusions and future research}\label{sec:discussion}

In this work we solved \autoref{prob:sg-alg-prod} in the case where all the polynomials are irreducible and of degree at most  $2$. This result directly relates to the problem of obtaining deterministic algorithms for testing identities of  $\Sigma^{[3]}\Pi^{[d]}\Sigma\Pi^{[2]}$ circuits. As mentioned in \autoref{sec:intro}, in order to obtain PIT algorithms we need a colored version of this result. 
Formally, we need to prove the following conjecture:
\begin{conjecture}\label{thm:main-ek-conc}
Let $\cT_1,\cT_2$ and $\cT_3$ be finite sets of homogeneous quadratic polynomials over $\C$ satisfying the following properties:
\begin{itemize}
\item Each $Q_o\in\cup_i\cT_i$ is either irreducible or a square of a linear form.\footnote{We replace a linear form with its square to keep the sets homogeneous of degree $2$.}
\item No two polynomials are multiples of each other (i.e., every pair is linearly independent).
\item For every two polynomials $Q_1$ and $Q_2$ from distinct sets, whenever $Q_1$ and $Q_2$ vanish then also the product of all the polynomials in the third set vanishes. 
\end{itemize}
Then the linear span of the polynomials in $\cup_i\cT_i$ has dimension $O(1)$.
\end{conjecture}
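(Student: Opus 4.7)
The plan is to adapt the proof of \autoref{thm:main-sg-intro} to the colored setting, using \autoref{thm:structure} as the main structural tool and a colored version of the robust Sylvester-Gallai theorem (analogous to \autoref{thm:partial-EK-robust}/\autoref{cor:EK-robust}) in place of \autoref{thm:robustSG}. By \autoref{cla:gup-4} we may assume the vanishing hypothesis involves at most four polynomials from the third set. First I would set $\delta = 1/100$ and classify every polynomial $Q\in\cT_i$ by how it interacts with the larger of the other two sets: $Q$ is \emph{good} if with at least $\delta$ fraction of that set it satisfies \autoref{thm:structure}\ref{case:span} or \autoref{thm:structure}\ref{case:2}, and \emph{bad} otherwise. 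The goal is to treat the case where essentially every polynomial is good via a direct reduction to the colored Edelstein--Kelly theorem, and to isolate the bad polynomials into a $Q_o$-dominated sub-configuration handled as in \autoref{sec:q-dom}.

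In the all-good case, I would mimic \autoref{sec:quad-SG}: iteratively build a subspace $V$ of linear forms of dimension $O(1)$ so that every polynomial which satisfies \autoref{thm:structure}\ref{case:2} with $\delta$-many polynomials from some other set lies in $\ideal{V}$ (using \autoref{rem:4-2-dim} and \autoref{cla:3-case3} in a colored double-counting argument across the three sets). Then build a set $\cI$ of size $O(1)$ so that the remaining polynomials, those that typically satisfy \autoref{thm:structure}\ref{case:span} with polynomials of another set, are caught by \autoref{cor:EK-robust}, the colored robust Sylvester-Gallai theorem, applied to three sets simultaneously. After folding small-rank combinations into $V$ and applying a projection $T_{\vaa,V}$ as in \autoref{def:z-mapping}, the remaining quadratics in $\ideal{V}$ and the squares in $\cup_i\calL_i$ become products of linear forms, and one shows these satisfy a colored Sylvester-Gallai condition modulo $z$ (using \hyperref[thm:EK]{Edelstein-Kelly}), bounding the total dimension via \autoref{cla:z-map-dimension}.

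The main obstacle is the bad case, where some $Q_o\in\cT_i$ satisfies \autoref{thm:structure}\ref{case:rk1} with a $(1-2\delta)$-fraction of polynomials in another set, say $\cT_j$. This gives a colored analog of \autoref{def:Q-dom}, and the goal becomes a colored version of \autoref{thm:Q-dom-gen}. I would split according to whether $\rank_s(Q_o)\geq \rkq$ or not. For high rank, \autoref{obs:case3} eliminates \autoref{thm:structure}\ref{case:2}, and I would show, by iterating the local structure of \autoref{cla:span-rank1-hr} and \autoref{cla:V_i intersection-hr} across pairs from distinct sets, the existence of a linear space $V$ of dimension $O(1)$ such that every polynomial in $\cT_i\cup\cT_j$ has the form $\alpha Q_o + Q' + c(\epsilon c+v)$ with $Q',v$ defined over $V$; an analog of \autoref{cla:sg-mod-v} then shows the linear forms $c$ from all three sets satisfy the hypothesis of \hyperref[thm:EK]{Edelstein-Kelly} modulo $V$. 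The third set is controlled via the fact that a pair from $\cT_i,\cT_j$ forces a polynomial in $\cT_k$ through the radical membership hypothesis, so its structure is inherited. The low-rank case is handled as in \autoref{cla:strong-Q-dom-lr-main}, by showing $\MS(Q_o)\subseteq V$ and reducing to the ideal-containment setting covered by \autoref{cla:V-li}.

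The genuinely hard point, as in the uncolored proof, is proving the colored analog of \autoref{cla:m-2k-in-ideal-v}: that most polynomials in the $Q_o$-dominated sets admit $a_j,b_j\in V$. The counting here is more delicate in the colored case because the ``third polynomial'' witnessing \autoref{thm:structure}\ref{case:span} for a pair $(P,Q_j)$ must come from a specific set, not an arbitrary one, so the pigeonhole arguments require one to compare $|\cT_k|$ with $|\cT_j|$ and to use the assumption $m_1 > 5m_2+2$ in a three-set setting. I expect the rest of the proof to parallel \autoref{sec:q-dom} closely, but this combinatorial bookkeeping plus setting up the colored robust Sylvester-Gallai theorem (a colored version of \autoref{thm:robustSG} with a bounded number of exceptional points) is where the real work will lie. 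Combining all cases yields $\dim(\spn{\cup_i\cT_i})=O(1)$, completing the proof.
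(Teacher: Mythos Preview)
The statement you are attempting to prove is not a theorem in the paper: it is \emph{Conjecture}~\ref{thm:main-ek-conc}, stated in the concluding section as an open problem. The paper explicitly does not prove it; the authors only write that they ``believe that tools similar to the tools developed in this paper should suffice to verify this conjecture.'' So there is no proof in the paper to compare your proposal against.

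That said, your outline is exactly the kind of adaptation the authors have in mind, and it parallels draft material the authors appear to have worked on but did not include in the final version (e.g.\ the colored robust Edelstein--Kelly statements you cite as \autoref{thm:partial-EK-robust} and \autoref{cor:EK-robust} are not actually present in the published paper). Your high-level plan---classify good/bad polynomials relative to the larger of the other two sets, handle the all-good case via a colored robust EK theorem plus the $V$-and-$\cI$ bookkeeping of \autoref{cla:VforP3}/\autoref{cla:Iexists}, and handle the bad case via a colored analog of \autoref{thm:Q-dom-gen}---is the natural strategy. You have also correctly identified the genuine difficulty: in the colored $Q_o$-dominated case, the pigeonhole and counting arguments (e.g.\ the analogs of \autoref{cla:small-space-span} and \autoref{cla:m-2k-in-ideal-v}) become substantially more delicate because the witness polynomial for \autoref{thm:structure}\ref{case:span} must lie in a \emph{specific} third set whose size you do not control relative to the others. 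This is precisely where the uncolored argument does not transfer mechanically, and the paper does not resolve it. Your proposal is a reasonable roadmap, but it is a plan, not a proof; the conjecture remains open.
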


We believe that tools similar to the  tools developed in this paper should suffice to verify this conjecture.
Another interesting question is a robust version of this problem, which is still open.
\begin{problem}
Let $\delta \in (0,1]$. Can we bound the linear dimension (as a function of $\delta$) of a set of polynomials $Q_1,\ldots ,Q_m\in \CRing{x}{n}$ that satisfy the following property: For every $i\in [m]$ there exist at least $\delta m$ values of $j \in [m]$ such that for each such $j$ there is $\cK_j \subset [m]$, where $i,j \notin \cK_j$ and $\prod_{k\in \cK_j} Q_k \in \sqrt{\ideal{Q_i,Q_j}}$.
\end{problem}

Extending our approach to the case of more than $3$ multiplication gates (or more than $3$ sets as in 
the colored version of the Sylvester-Gallai theorem (\autoref{thm:EK})) seems more difficult. Indeed, an analog of  \autoref{thm:structure} for this case seems harder to prove in the sense that there are many more cases to consider which makes it unlikely that a similar approach will continue to work as the number of gates get larger. Another difficulty is proving an analog of  \autoref{thm:structure} for higher degree polynomials. Thus, we believe that a different proof approach may be needed in order to obtain PIT algorithms for $\Sigma^{[O(1)]}\Pi^{[d]}\Sigma\Pi^{[O(1)]}$ circuits.


In this paper we only considered polynomials over the complex numbers. However, we believe (though we did not check the details) that a similar approach should work over positive characteristic as well. Observe that over positive characteristic we expect the dimension of the set to scale like $O(\log |\cQ|)$, as for such fields a weaker version of Sylvester-Gallai theorem holds.
\begin{theorem}[Corollary $1.3$ in \cite{DBLP:journals/combinatorica/BhattacharyyaDS16}]
Let $V=\lbrace \MVar{\vec{v}}{m}\rbrace\subset \F_p^d$ be a set of $m$ vectors, no two of which are linearly dependent. Suppose that for every $i,j\in[m]$, there exists $k\in[m]$ such that $\vec{v}_i,\vec{v}_j,\vec{v}_k$ are linearly dependent. Then, for every $\epsilon>0$
\[\dim(V)\leq{\rm poly}(p/\epsilon)+(4+\epsilon)\log_{p}m \;.\]
\end{theorem}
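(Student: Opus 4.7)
The plan is to recast the Sylvester-Gallai condition as a rank deficiency statement for a design matrix over $\F_p$, and then exploit the known equivalence between such configurations and locally correctable codes (LCCs) to import the sharp $\log_p m$ dimension bound. First I would normalize: view the set as a tuple of vectors $\vec{v}_1,\ldots,\vec{v}_m\in \F_p^d$ and for every pair $(i,j)$ fix (arbitrarily) a witness $k(i,j)\in[m]$ and scalars $\alpha_{ij},\beta_{ij}\in\F_p^\times$ such that $\vec{v}_i=\alpha_{ij}\vec{v}_j+\beta_{ij}\vec{v}_{k(i,j)}$. Assemble these $\binom{m}{2}$ relations into a matrix $A\in\F_p^{\binom{m}{2}\times m}$ with exactly three nonzero entries per row, encoding the collinearity syndromes. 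Letting $M\in\F_p^{m\times d}$ be the matrix whose rows are the $\vec{v}_i$, we have $AM=0$, so $\dim(V)\leq\dim(\ker A)=m-\mathrm{rank}_{\F_p}(A)$.

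Next I would translate the rank lower bound on $A$ into a statement about codes. The design structure of $A$ means that the column span of $M$ corresponds to a linear code $\mathcal{C}\subseteq \F_p^m$ of dimension $\dim(V)$ in which every coordinate can be recovered as a linear combination of two other coordinates, for at least $m-1$ different choices of "query pair" per coordinate. This is precisely the definition of a $2$-query locally correctable (really, locally self-correctable) code over $\F_p$ with perfect locality. Known dimension upper bounds for such codes over positive characteristic — proved by adapting the Goldreich–Karloff–Schulman–Trevisan/Kerenidis–de Wolf style quantum/combinatorial arguments to the non-binary setting, and sharpened in the BDSS paper via Fourier/rank-amplification — give $\dim(\mathcal{C})\leq \mathrm{poly}(p/\epsilon)+(4+\epsilon)\log_p m$ for every $\epsilon>0$. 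The factor $4$ (rather than the obvious $2$) comes from the fact that we only have a \emph{one-sided} correction guarantee and need to symmetrize the query distribution; the $\epsilon$ slack absorbs lower-order terms in the matrix-scaling step.

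The main obstacle is the matrix-scaling / LCC rank amplification step. Over $\R$ or $\C$ the analogous problem is solved by the Barak–Dvir–Wigderson–Yehudayoff design-matrix theorem, which rests on positivity of the Gram matrix and Forster's lower bound — tools entirely unavailable in $\F_p^d$. One therefore has to replace positivity by an $\F_p$-valued entropy / spectral argument: passing to a suitable "regular" subgraph of the incidence structure (using that each $i$ participates in $\geq m-1$ witness pairs), one reduces to showing that a random codeword has min-entropy $\geq \dim(V)\log p$, and then smooths this via Fourier concentration on $\F_p$. The $\mathrm{poly}(p/\epsilon)$ term is the price of that smoothing.

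Finally I would check tightness of the reduction: the $(4+\epsilon)$ constant matches the dimension $\log_p m$ achieved by taking $V$ to be the set of all nonzero vectors of an affine subspace of $\F_p^{O(\log_p m)}$, which is a genuine Sylvester-Gallai configuration (any two points plus a third lie on a line), confirming that no proof avoiding LCC-type arguments can hope to remove the logarithm. Thus the proof reduces to (i) the clean combinatorial encoding of the hypothesis as a design matrix, (ii) the equivalence with $2$-query LCCs, and (iii) the sharp BDSS dimension bound for such codes in characteristic $p$.
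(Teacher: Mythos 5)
First, note that the paper does not prove this statement at all: it is quoted verbatim as Corollary 1.3 of \cite{DBLP:journals/combinatorica/BhattacharyyaDS16} and is used only to justify the closing remark that over positive characteristic the dimension should scale like $O(\log|\cQ|)$. So there is no in-paper proof to compare against; your proposal has to be judged as a reconstruction of the BDSS argument. Your overall framework is the right one --- that corollary really is obtained by encoding the Sylvester--Gallai hypothesis as a $2$-query local-correction structure and invoking the main lower bound of that paper for linear $2$-query LCCs over $\F_p$. But as written the proposal is circular exactly where all the work lies: your step (iii) is ``the sharp BDSS dimension bound for such codes in characteristic $p$,'' i.e., the main theorem of the very paper whose corollary you are proving. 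If you may cite that theorem, the corollary follows from the reduction alone and most of your second and third paragraphs are unnecessary; if you may not, nothing in the sketch actually establishes the $\mathrm{poly}(p/\epsilon)+(4+\epsilon)\log_p m$ bound. Moreover, the mechanism you describe for that bound is not the one used: matrix scaling, Forster's theorem and Gram-matrix positivity are precisely the real/complex tools that are unavailable over $\F_p$ (as you yourself note), and the BDSS proof does not replace them by ``min-entropy plus Fourier concentration''; it is a combinatorial, iterative compression argument, with the additive $\mathrm{poly}(p/\epsilon)$ loss coming from that iteration.

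Second, the reduction itself --- the part you could legitimately carry out from scratch --- has a gap you do not address. The hypothesis gives, for each pair $(i,j)$, one witness $k$ with $\vec v_k\in\spn{\vec v_i,\vec v_j}$; by pairwise independence each of the three vectors is then a combination of the other two with nonzero coefficients, so for fixed $i$ you do get $m-1$ pairs $(j,k(i,j))$ determining $\vec v_i$. But a $2$-query LCC needs many \emph{pairwise disjoint} correcting pairs per coordinate, and the $m-1$ pairs you list can overlap heavily (the same $k$ can serve many $j$'s). The fix is where the field size genuinely enters: a two-dimensional subspace of $\F_p^d$ contains only $p+1$ one-dimensional subspaces, so each ``line'' through $\vec v_i$ contains at most $p$ other configuration points; the $m-1$ pairs therefore split into groups of size at most $p$, from which one extracts $\Omega(m/p)$ disjoint pairs, i.e., a correction radius $\delta=\Omega(1/p)$. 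Without this step the object handed to the LCC theorem is not an LCC; with it, the $p$-dependence that lands in the $\mathrm{poly}(p/\epsilon)$ term is accounted for. (Your tightness example is also slightly off: the nonzero vectors of a subspace are not pairwise independent, so one must pass to one representative per line, which shows only that the $\log_p m$ term is necessary, not that the constant $4$ is.)
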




\bibliographystyle{customurlbst/alphaurlpp}
\bibliography{main}
\end{document}